\pdfoutput=1

\documentclass[prb,superscriptaddress,amsmath,amssymb,onecolumn]{revtex4-2}

\usepackage{graphicx,bm,amsmath}
\usepackage[usenames,dvipsnames]{xcolor}
\usepackage[colorlinks,bookmarks=false,citecolor=NavyBlue,linkcolor=Red,urlcolor=blue]{hyperref}

\usepackage[bbgreekl]{mathbbol}
\usepackage{xcolor}
\usepackage[normalem]{ulem}
\usepackage{algorithm}
\usepackage{braket}
\usepackage{algpseudocode}
\usepackage{mathtools}
\usepackage{comment}
\usepackage{amsthm}
\usepackage{enumerate}
\usepackage{enumitem}
\usepackage{tikz}
\usetikzlibrary{arrows,calc,decorations.markings,math,arrows.meta,shapes.geometric}
\usepackage{subfigure}
\usetikzlibrary{calc}
\usepackage{physics}
\usepackage{bm}
\newcommand{\kket}[1]{\lvert #1 \rangle\!\rangle}
\newcommand{\bbra}[1]{\langle\!\langle #1 \rvert}
\newcommand{\bbrakket}[2]{\langle\!\langle #1 \vert #2 \rangle\!\rangle}

\def\doi{http://dx.doi.org/}

\newcommand{\be}{\begin{equation}}
\newcommand{\ee}{\end{equation}}
\newcommand{\bec}{\begin{equation*}}
\newcommand{\eec}{\end{equation*}}
\newcommand{\bea}{\begin{eqnarray}}
\newcommand{\eea}{\end{eqnarray}}

\usepackage{enumitem}
\usepackage{calc}

\usepackage{tikz}
\usetikzlibrary{quantikz2}
\usepackage{booktabs}
\usepackage{pifont}

\usepackage{dsfont}
\newcommand{\G}{\mathcal{G}}
\newcommand{\M}{\mathcal{M}}
\newcommand{\F}{\mathcal{F}}
\newcommand{\nGMax}{\Lambda_\text{d}}
\newcommand{\nGF}{\F}
\newcommand{\nGR}{\M_R}
\newcommand{\nGLambda}{\M_\Lambda}

\newtheorem{thm}{Theorem}
\newtheorem*{thm_intro}{Theorem}
\newtheorem*{cor_intro}{Corollary}

\newtheorem{prop}{Proposition}
\newtheorem{cor}{Corollary}
\newtheorem{lem}{Lemma}
\newtheorem{defn}{Definition}

\newcommand{\titleinfo}{Fermionic non-Gaussianity via Bell sampling: monotones and efficient quantum algorithms}
\date{\today}

\begin{document}
\newcommand{\TUM}{\affiliation{Technical University of Munich, TUM School of Natural Sciences, Physics Department, James-Franck-Straße 1,
85748 Garching, Germany}}
\newcommand{\MCQST}{\affiliation{Munich Center for Quantum Science and Technology (MCQST), Schellingstraße 4, 80799 M{\"u}nchen, Germany}}

\title{\titleinfo}
\author{Poetri Sonya Tarabunga} 
\TUM \MCQST
\email{poetri.tarabunga@tum.de}

\begin{abstract}
Fermionic non-Gaussianity is an essential resource for unlocking the full computational power of fermionic quantum platforms. In this work we develop monotones and efficient quantum algorithms for fermionic non-Gaussianity, all built on the eigenvalue structure of the operator $\Lambda = \sum_{j=1}^{2n}\gamma_j\otimes\gamma_j$ defined on two copies of an $n$-mode fermionic state, accessible via Bell sampling. In particular, we introduce the \emph{bridge degree} of even pure states, a novel non-Gaussianity monotone defined as the largest eigenvalue of $\Lambda$ whose eigenspace is populated by two copies of the state. Our key technical result is that the bridge degree is non-increasing under post-selected Gaussian protocols, which yields no-go theorems for Gaussian conversion beyond the reach of previously known monotones and shows that the resource theory of fermionic non-Gaussianity is irreversible in the exact-conversion setting. Beyond this, the bridge degree exhibits several further features: it (i) is easy to compute, (ii) is efficiently witnessed through Bell sampling, (iii) lower-bounds the non-Gaussian gate complexity of state preparation, (iv) controls the non-Gaussian gate complexity of producing quantum state designs, and (v) naturally extends to mixed states via the Choi--Jamio{\l}kowski isomorphism. We further develop an approximate variant together with an efficiently measurable lower bound, yielding an experimentally certifiable lower bound on the non-Gaussian cost of approximately preparing any state, based directly on Bell-sampling data. Finally, the same eigenvalue structure underlies two Bell-sampling-based algorithmic primitives, both with polynomial sample complexity: a two-copy Gaussianity test with perfect completeness, optimal among two-copy tests sharing this property, and a test for the state $2$-design property of matchgate-invariant ensembles.
\end{abstract}

\maketitle

\tableofcontents

\section{Introduction}
\label{sec:intro}

Quantum computers hold the promise of solving problems beyond the reach of any classical computer --- a notion known as \emph{quantum advantage}~\cite{preskill2012quantum,harrow2017quantum}. Establishing this advantage requires preparing quantum states that no classical computer can efficiently simulate. Identifying the boundary between classical simulability and quantum advantage is therefore a central challenge, both theoretically and practically. A natural strategy is to study families of quantum-computation models that admit efficient classical simulation and to characterize the additional \emph{resources} required to promote them to universality~\cite{Gour2025,chitambar2019quantum}. Quantifying these resources is essential for evaluating the computational potential of any quantum information-processing platform.

Within fermionic quantum computation, the natively free operations are the fermionic Gaussian unitaries~\cite{knill2001fermionic} --- also called matchgates, or fermionic linear optics --- which act linearly on the Majorana operators and are generated by Hamiltonians quadratic in fermion operators. They are efficiently simulable on a classical computer~\cite{valiant2001quantum,terhal2002classical,jozsa2008matchgates,bravyi2005lagrangian,brod2016efficient,surace2022fgs}, and the states they generate --- the fermionic Gaussian states (FGSs) --- form an important class of quantum states ubiquitous across physics, including BCS superconducting states~\cite{Altland2010}, mean-field Hartree--Fock states~\cite{Hartree1928}, and the ground states of Kitaev's honeycomb lattice model~\cite{kitaev2006anyon}. Promoting matchgate circuits to universal quantum computation requires non-Gaussian resources, which can be supplied by injecting non-Gaussian magic states~\cite{hebenstreit2019all}. Quantifying how much non-Gaussianity a state contains is therefore both a structural question about fermionic complexity and a practical question about the cost of fermionic computation. This question is of high experimental relevance: fermionic degrees of freedom are now directly controllable across platforms such as ultracold atoms in optical lattices~\cite{Koepsell2021,Brown2019,Hartke2023,Xu2025,vijayan2020timeresolved,jordens2008mott} and digital quantum simulators of fermionic systems~\cite{GonzlezCuadra2023,Chien2026,Bojovi2026}.

The importance of non-Gaussian resources has prompted a growing body of work proposing monotones for the resource theory of fermionic non-Gaussianity~\cite{dias2024classical,cudby2023gaussian,reardonSmith2024improved,tarabunga2026,gottlieb2006properties,gottlieb2014correlations,lyu2024fermionicgaussiantesting}, chief among them the Gaussian extent, the Gaussian rank, the Gaussian fidelity~\cite{dias2024classical,cudby2023gaussian,reardonSmith2024improved}, and the relative entropy of non-Gaussianity~\cite{tarabunga2026,gottlieb2006properties,gottlieb2014correlations,lyu2024fermionicgaussiantesting}. However, every existing monotone suffers from at least one of two shortcomings: it is either (i) not efficiently computable, requiring optimization over Gaussian decompositions, or (ii) not known to carry an operational interpretation. The Gaussian rank, for instance, directly controls classical simulability but is defined through an optimal Gaussian decomposition that is not known to be efficiently obtainable; the relative entropy of non-Gaussianity, by contrast, is easy to compute from the covariance matrix but lacks an operational meaning tied to state complexity. A range of further quantifiers have been proposed~\cite{turner2017optimal,pachos2018quantifying,pachos2022quantifying,meichanetzidis2018free,coffman2025measuringnongaussianmagicfermions,sierant2025fermionicmagicresourcesquantum}, but they are not known to be monotones in the resource-theoretic sense. As a result, a fermionic non-Gaussianity monotone that is simultaneously computable and operationally meaningful is still lacking --- in stark contrast to the situation for other prominent resource theories such as entanglement~\cite{vidal2000entanglement}, coherence~\cite{baumgratz2014quantifying}, non-stabilizerness~\cite{bittel2026operational}, and bosonic non-Gaussianity~\cite{mele2025symplecticrank}.

We address this gap by introducing a new quantity, the \emph{bridge degree}. The central object is the \emph{bridge operator} $\Lambda$, originally introduced by Bravyi~\cite{bravyi2005lagrangian}~\footnote{The name bridge operator was coined later in Ref.~\cite{sierant2026theorymatchgatecommutant}.} as a Gaussianity witness on two copies of a fermionic state: $\Lambda\ket{\psi}^{\otimes 2}=0$ if and only if $\psi$ is an FGS. We elevate the role of $\Lambda$ from a binary test to a functional of the state by analyzing its full eigenvalue distribution on $\ket{\psi}^{\otimes 2}$. Crucially, this distribution is directly sampled by \emph{Bell sampling} --- a protocol that measures two copies of the state in the Bell basis, implementable as a depth-$1$ Clifford circuit on qubit platforms or as a single matchgate beam-splitter layer on fermionic platforms. Bell sampling is a powerful tool for quantum information tasks~\cite{hangleiter2024bell,huang2022quantum,huang2021information,haug2023scalable,montanaro2017learning,gross2021schur,grewal2024efficient,king2024triplyefficient,nielsen2011quantum}, often providing substantial advantage over single-copy measurements~\cite{huang2022quantum,huang2021information,chen2022exponential}, and has been demonstrated experimentally~\cite{bluvstein2024logical,huang2022quantum,haug2023scalable,islam2015measuring,Kaufman2016,Linke2018,haug2026efficient} as well as employed in numerical simulations of quantum systems~\cite{tarabunga2025bell,tarabunga2025efficient,lami2023nonstabilizerness,haug2023stabilizer,tarabunga2023many,collura2025quantummagicfermionic}.

The bridge degree of an even pure state $\psi$ is defined as the largest (in magnitude) eigenvalue of $\Lambda$ whose eigenspace is occupied by $\ket{\psi}^{\otimes 2}$, and is therefore both easy to read off from a state description and easy to witness experimentally via Bell sampling. Our central technical result is that the bridge degree is non-increasing under pure-state post-selected Gaussian protocols, which immediately yields lower bounds on the non-Gaussian cost of state preparation. As a structural consequence, we use it to establish the irreversibility of the resource theory of fermionic non-Gaussianity in the exact-conversion setting. As an operational consequence, the bridge degree lower-bounds the non-Gaussian resources required to generate quantum state designs --- ensembles of pure states that reproduce the first few statistical moments of the Haar measure.

We further introduce an approximate variant of the bridge degree, which retains monotonicity under Gaussian protocols and is lower-bounded by an efficiently measurable quantity --- the \emph{bridge fidelity} --- accessible directly from Bell sampling. We then extend the framework to mixed states via a Choi-state construction. The resulting \emph{mixed-state bridge degree} inherits the structural properties of its pure-state counterpart and is a monotone under post-selected Gaussian operations.

Finally, building on the same eigenvalue structure of the bridge operator, we develop two concrete algorithmic primitives, both efficient and experimentally accessible. First, we construct a sample-efficient \emph{Gaussianity test} with perfect completeness, optimal among all two-copy tests sharing this property, together with a tolerant variant that distinguishes states close to Gaussian from states far from it. This is directly relevant for verification: it allows an experimentalist to certify, from measurements alone, whether the state produced by a quantum device is close to or far from Gaussian. Second, for \emph{matchgate-invariant} ensembles, we show that the state $2$-design property is efficiently certifiable from polynomially many Bell samples. State designs underpin many protocols --- including randomized benchmarking~\cite{knill2008randomized}, classical shadows~\cite{huang2020predicting}, quantum cryptography~\cite{ji2018pseudorandom,aaronson2012quantum,kretschmer2021quantum}, and random circuit sampling~\cite{boixo2018characterizing, bouland2018complexity} --- so certifying that a candidate ensemble achieves the design property is a prerequisite for trusting them. Our result also addresses the recently introduced complexity-theoretic question of design certification~\cite{nakata2025computational}, for which no efficient quantum algorithm is known in general: matchgate-invariant ensembles provide a structurally natural class on which the $2$-design property becomes efficiently testable.

\begin{table*}[t]
\centering
\renewcommand{\arraystretch}{1.2}
\begin{tabular}{@{}l@{\hspace{2em}}llc@{}}
\toprule
\textbf{Quantity} & \textbf{Symbol} & \textbf{Monotone under} & \textbf{Computable} \\
\midrule
Bridge degree (Sec.~\ref{sec:maxlambda}) & $\nGMax(\psi)$ & pure-state post-sel.\ Gaussian protocols & \checkmark \\
Approximate bridge degree (Sec.~\ref{sec:approx}) & $\Lambda_{\mathrm{d},\epsilon}(\psi)$ & pure-state Gaussian protocols & \ding{55} \\
Approximate bridge fidelity (Sec.~\ref{sec:approx}) & $\nGF_\epsilon(\psi)$ & --- (lower bound for $\Lambda_{\mathrm{d},\epsilon}$) & \checkmark \\
Mixed-state bridge degree (Sec.~\ref{sec:mixed}) & $\tilde\Lambda_{\mathrm{d}}(\rho)$ & post-sel.\ Gaussian operations & \checkmark \\
Convex-roof extended bridge degree (Sec.~\ref{sec:maxlambda}) & $\nGMax^{\mathrm{cr}}(\rho)$ & post-sel.\ Gaussian protocols & \ding{55} \\
\bottomrule
\end{tabular}
\caption{Summary of the non-Gaussianity monotones and related quantities introduced in this work. The ``Monotone under'' column specifies the class of free operations under which the quantity is provably non-increasing. See Defs.~\ref{def:gaussian_operations} and~\ref{def:gaussian_protocol} for the definition of Gaussian operations and Gaussian protocols. The ``Computable'' column marks (\checkmark) quantities accessible from a state description without optimization, and (\ding{55}) those whose definition requires an optimization not known to be efficient.}
\label{tab:monotones_summary}
\end{table*}

\begin{table*}[t]
\centering
\renewcommand{\arraystretch}{1.2}
\begin{tabular}{@{}l@{\hspace{2em}}ll@{}}
\toprule
\textbf{Task} & \textbf{Sample complexity} & \textbf{Reference} \\
\midrule
Estimate $\nGF_\epsilon(\psi)$ & $\mathcal{O}\!\bigl(\Delta^{-2}\log(1/\Delta)\bigr)$ & Lemma~\ref{lem:sample_complexity_approx} \\
One-sided Gaussianity test ($F_\G=1$ vs.\ $\leq1-\!\epsilon$) & $\mathcal{O}(n^{2}/\epsilon)$  & Theorem~\ref{thm:gaussianity_test} \\
\quad for $t$-doped Gaussian states & $\mathcal{O}(t^{2}/\epsilon)$ & Theorem~\ref{thm:test_tdoped} \\
Tolerant Gaussianity test ($F_\G\!\geq\!1-\alpha$ vs.\ $\leq\!1-\beta$) & $\mathcal{O}(n^{2}/(\beta-2\alpha n^2))^{\ddagger}$ & Theorem~\ref{thm:gaussianity_test_tolerant} \\
Estimate $D(\mathcal{E})$ to additive accuracy $\epsilon$ & $\widetilde{\mathcal{O}}\bigl(\sqrt n/\epsilon^{2}\bigr)^{\S}$ & Lemma~\ref{lem:estimate_design_distance} \\
Approximate $2$-design test ($D(\mathcal{E})\!\leq\!\alpha$ vs.\ $\geq\!\alpha+\epsilon$) & $\widetilde{\mathcal{O}}\bigl(\sqrt n/\epsilon^{2}\bigr)^{\S}$ & Theorem~\ref{thm:test_design_distance} \\
Exact $2$-design test ($D(\mathcal{E})\!=\!0$ vs.\ $\geq\!\epsilon$) & $\mathcal{O}\bigl(n^{1/4}/\epsilon^{2}\bigr)^{\S}$ & Theorem~\ref{thm:nontolerant_test_design} \\
\bottomrule
\end{tabular}
\caption{Sample-efficient Bell-sampling protocols. All bounds hold with success probability at least $1-\delta$; the explicit $\log(1/\delta)$ dependence is suppressed. $F_\G$ denotes the Gaussian fidelity (see Eq.~\eqref{eq:gfid_def}). $D(\mathcal{E})$ denotes the approximation error of the state $2$-design (see Eq.~\eqref{eq:design_def_approx}). $\Delta$ denotes the gap between the cumulative bridge spectral distribution and the threshold $(1-\epsilon)^2$ (see Eq.~\eqref{eq:Delta_def}). The notation $\widetilde{\mathcal{O}}$ hides additional logarithmic factors.\\
$^{\ddagger}$ Valid for $\alpha = \mathcal{O}(\beta/n^{2}-2\alpha)$. \\
$^{\S}$ Requires the ensemble $\mathcal{E}$ to be matchgate-invariant.}
\label{tab:bell_sampling_protocols}
\end{table*}

\subsection{Overview of main results}
\label{sec:overview}

This section provides an overview of our key findings; the main quantities and the efficient protocols that we introduce are summarized in Tables~\ref{tab:monotones_summary} and~\ref{tab:bell_sampling_protocols}.
Our results fall under six themes: (i) the bridge degree as a fermionic non-Gaussianity monotone, (ii) no-go theorems and irreversibility of the resource theory, (iii) lower bounds on non-Gaussian gates for state designs, (iv) the approximate bridge degree and approximate conversion, (v) the mixed-state extension via the Choi--Jamio{\l}kowski isomorphism, and (vi) algorithmic primitives from Bell sampling. We summarize each below; formal statements and proofs are provided in the subsequent sections.

\textbf{(i) The bridge degree as a fermionic non-Gaussianity monotone (Sec.~\ref{sec:maxlambda}).}
We introduce a fermionic non-Gaussianity monotone for even pure states called the \emph{bridge degree}, built around the bridge operator,
\begin{equation} 
    \Lambda \coloneqq \sum_{j=1}^{2n} \gamma_j \otimes \gamma_j
\end{equation}
introduced by Bravyi~\cite{bravyi2005lagrangian} as a tool to detect Gaussian states on two copies of an $n$-mode fermionic system. 

We develop the theory first in the pure-state setting and extend it to mixed states via the Choi--Jamio{\l}kowski isomorphism in Sec.~\ref{sec:mixed} (theme (v) below). Here, we consider the \emph{convex} resource theory of fermionic non-Gaussianity~\cite{tarabunga2026}, where the free operations are \emph{Gaussian protocols} (Definition~\ref{def:gaussian_protocol}), consisting of Gaussian unitaries, composition with Gaussian states, partial tracing, occupation-number measurements, and conditioned operations. We also consider \emph{post-selected Gaussian protocols}, which further allow post-selection on measurement outcomes. Since Gaussian protocols become universal when augmented with magic-state injection~\cite{hebenstreit2019all}, they provide the natural class of free operations for studying the resource cost of universal fermionic quantum computation.

As a sum of $2n$ commuting Hermitian operators $\gamma_j\otimes\gamma_j$ with eigenvalues $\pm 1$, the spectrum of $\Lambda$ is $\{-2n,-2n+2,\dots,2n\}$; we denote the corresponding eigenspaces and projectors by $V_\lambda$ and $\Pi_\lambda$. For an even pure state $\psi$, the \emph{bridge spectral distribution}
\begin{equation}
    p_\psi(\lambda) \;\coloneqq\; \bra{\psi}^{\otimes 2}\Pi_\lambda\ket{\psi}^{\otimes 2}
\end{equation}
records how the two-copy state $\ket{\psi}^{\otimes 2}$ decomposes across the eigensectors of $\Lambda$ and serves as the unifying object of the entire framework. We show in Sec.~\ref{sec:eigenstructure} that $p_\psi$ is supported on the set $\mathcal{A} = 8\mathbb{Z}\cap[-2n, 2n]$, symmetric about zero, and invariant under matchgate unitaries, and that it is directly measurable via Bell sampling on two copies of $\psi$.

From this distribution, we define the bridge degree for an even pure state $\psi$ as
\begin{equation}
    \nGMax(\psi) \;\coloneqq\; \max\{\,\alpha\geq 0 \,:\, p_\psi(2\alpha) \neq 0\,\},
\end{equation}
that is, the largest eigenvalue sector of $\Lambda$ that $\ket{\psi}^{\otimes 2}$ occupies. We show that $\nGMax$: (i) takes discrete values in $\{0, 4, 8, \dots, 4\lfloor n/4\rfloor\}$; (ii) vanishes if and only if $\psi$ is a pure FGS; (iii) is additive under tensor products, $\nGMax(\psi\otimes\phi)=\nGMax(\psi)+\nGMax(\phi)$; and, most importantly, (iv) is non-increasing under post-selected Gaussian protocols. 
\begin{thm_intro}[Monotonicity of the bridge degree, see Theorem~\ref{thm:max_Lambda_monotone}]
For any even pure state $\ket{\psi}$ and any post-selected Gaussian protocol $\mathcal{E}$ such that $\mathcal{E}(\ket{\psi}\!\bra{\psi}) = \ket{\phi}\!\bra{\phi}$ is pure,
\begin{equation}
    \nGMax(\phi) \;\leq\; \nGMax(\psi).
\end{equation}
\end{thm_intro}

This is the first central result of the paper: the bridge degree is a bona fide non-Gaussianity monotone in the sense of resource theory~\cite{chitambar2019quantum,Gour2025}, and leads to a number of key applications discussed in the following. As an immediate consequence, the bridge degree lower-bounds the number of non-Gaussian gates required for state preparation by a \emph{$t$-doped Gaussian protocol} (Fig.~\ref{fig:t_doped}): a circuit built from arbitrary post-selected fermionic Gaussian protocols interleaved with $t$ non-Gaussian gates, each acting on at most four Majorana modes. 
\begin{cor_intro}[Non-Gaussian gate lower bound for state preparation; see Corollary~\ref{cor:bound_dope}]
If an even pure state $\psi$ can be prepared from the vacuum by a $t$-doped Gaussian protocol, then it must hold that
\begin{equation}
    t \;\geq\; \frac{\nGMax(\psi)}{4}.
\end{equation}
\end{cor_intro}

In particular, any state with $\nGMax(\psi) = \Omega(n)$ requires $\Omega(n)$ non-Gaussian gates to prepare on any fermionic platform with probabilistic protocols.

A particularly appealing feature of this lower bound is that the bridge degree can be witnessed directly from the Bell-sampling primitive (Algorithm~\ref{alg:bell_sampling}). Given $N$ i.i.d.\ samples $\lambda_1,\dots,\lambda_N\sim p_\psi$ obtained from Bell sampling on pairs of copies of $\psi$, the empirical estimator
\begin{equation}
    \widehat\Lambda_{\mathrm{d}} \;\coloneqq\; \tfrac{1}{2}\,\max_{k}|\lambda_k|
\end{equation}
is, by construction, always a lower bound on $\nGMax(\psi)$, and via Corollary~\ref{cor:bound_dope} on the non-Gaussian gate count $t$. In particular, even a \emph{single} Bell sample already certifies $t\geq |\lambda_1|/8$ non-Gaussian gates. %

We remark that the bridge degree admits a mixed-state extension via the standard convex-roof construction,
\begin{equation}
    \nGMax^{\,\mathrm{cr}}(\rho) \;\coloneqq\; \min_{\{p_i,\psi_i\}\,:\,\rho=\sum_i p_i\psi_i}\, \max_i  \nGMax(\psi_i),
\end{equation}
which is automatically monotone under post-selected Gaussian protocols~\cite{vidal2000entanglement,terhal2000schmidt} but generally intractable to compute. As explained below in theme (v), we develop a computationally tractable mixed-state extension --- monotone under a smaller class of operations --- via the Choi--Jamio{\l}kowski isomorphism.

\begin{figure}[t]
\centering
\definecolor{gaussblue}{HTML}{6C8EBF}      %
\definecolor{nongaussorange}{HTML}{E8A87C} %
\begin{quantikz}[column sep=0.35cm, row sep={0.5cm,between origins}]
\lstick[7,brackets=none]{\Large$\ket{\psi}\,=$\,\,\,\,\,} & \lstick{$\ket{0}$\,\,\,\,\,} & \gate[7,style={fill=gaussblue!35,minimum width=0.65cm}]{\mathcal{E}_0} & \gate[2,style={fill=nongaussorange!70,minimum width=0.65cm}]{W_1} & \gate[7,style={fill=gaussblue!35,minimum width=0.65cm}]{\mathcal{E}_1} & \gate[2,style={fill=nongaussorange!70,minimum width=0.65cm}]{W_2} & \gate[7,style={fill=gaussblue!35,minimum width=0.65cm}]{\mathcal{E}_2} & \ \ldots\ & \gate[2,style={fill=nongaussorange!70,minimum width=0.65cm}]{W_t} & \gate[7,style={fill=gaussblue!35,minimum width=0.65cm}]{\mathcal{E}_t} & \rstick[7]{$n$} \\
 & \lstick{$\ket{0}$\,\,\,\,\,} & &  &  &  &  & \ \ldots\ &  &  &  \\
 & \lstick{$\ket{0}$\,\,\,\,\,} & &  &  &  &  & \ \ldots\ &  &  &  \\
 & \lstick{$\vdots$\,\,}  \setwiretype{n} &  \setwiretype{n} &  \setwiretype{n} &  \setwiretype{n} &  \setwiretype{n} &  \setwiretype{n} &  \setwiretype{n} &  \setwiretype{n} &  \setwiretype{n} &  \\
 & \lstick{$\ket{0}$\,\,\,\,\,} & &  &  &  &  & \ \ldots\ &  &  &  \\
 & \lstick{$\ket{0}$\,\,\,\,\,} & &  &  &  &  & \ \ldots\ &  &  &  \\
 & \lstick{$\ket{0}$\,\,\,\,\,} & &  &  &  &  & \ \ldots\ &  &  &  \\
\end{quantikz}
\caption{A \emph{$t$-doped Gaussian protocol}. Blue boxes $\mathcal{E}_0,\mathcal{E}_1,\dots,\mathcal{E}_t$ denote arbitrary Gaussian protocols, possibly with post-selection; orange boxes $W_1,W_2,\dots,W_t$ are non-Gaussian gates. Here, $t$ counts the total number of non-Gaussian gates. Although the four Majorana modes on which $W_i$ acts are in principle arbitrary, we take them without loss of generality to be $\gamma_1,\gamma_2,\gamma_3,\gamma_4$ (the first two qubits), since any quadruple of Majorana modes can be mapped to these by a matchgate unitary that can be absorbed into the neighboring $\mathcal{E}_i$.}
\label{fig:t_doped}
\end{figure}

\medskip

\textbf{(ii) No-go theorems and irreversibility of the resource theory (Sec.~\ref{sec:consequences}).}
The monotonicity of the bridge degree is a powerful tool for proving no-go theorems for Gaussian conversion. In particular, additivity of $\nGMax$ under tensor products immediately implies that no Gaussian protocol can map an $n$-qubit non-Gaussian pure state to a tensor product of $m$ four-qubit non-Gaussian pure states with nonzero success probability whenever $4m > n$ (note that all even pure states on $\leq 3$ qubits are Gaussian~\cite{bravyi2005classicalcapacityfermionicproduct}, so four qubits is the smallest non-trivial block size). Remarkably, this impossibility holds even when the input state is arbitrarily far from the set of Gaussian states, while each output factor is arbitrarily close to it. This phenomenon closely parallels a recent no-go theorem for bosonic Gaussian state conversion~\cite{mele2025symplecticrank}.

\begin{cor_intro}[Impossibility of splitting non-Gaussianity by post-selected Gaussian protocols; see Corollary~\ref{cor:spr}]
Let $n<4m$, let $\psi$ be an $n$-qubit even non-Gaussian pure state, and let $\phi_1,\dots,\phi_m$ be four-qubit even non-Gaussian pure states. Then no post-selected Gaussian protocol can map $\psi$ to $\bigotimes_{i=1}^m \phi_i$ with nonzero success probability.
\end{cor_intro}

Beyond such no-go theorems for state conversions, the next natural question is whether the resource theory is \emph{reversible}: can the resources spent in converting $\rho$ to $\sigma$ be fully recovered by converting $\sigma$ back to $\rho$? For instance, entanglement theory is reversible for pure states~\cite{bennett1996concentrating} but irreversible for mixed states~\cite{vidal2001irreversibility}. We investigate this question for the asymptotic-conversion rate $n\mapsto m$ of $\rho^{\otimes n}\to\sigma^{\otimes m}$ under probabilistic Gaussian protocols (i.e., where the conversion is required to succeed exactly with some nonzero probability), and prove the following:

\begin{thm_intro}[Irreversibility of the resource theory of fermionic non-Gaussianity; see Theorem~\ref{thm:irreversibility}]
The resource theory of fermionic non-Gaussianity is irreversible under post-selected Gaussian protocols in the exact-conversion setting.
\end{thm_intro}

These results mirror the irreversibility theorem for bosonic non-Gaussianity~\cite{mele2025symplecticrank}, with the bridge degree playing the role of the symplectic rank.

\medskip

\textbf{(iii) Lower bounds on non-Gaussian gates for state designs (Sec.~\ref{sec:designs_lower}).}
A further operational consequence of bridge degree monotonicity is a lower bound on the non-Gaussian resources required to generate \emph{state designs} --- ensembles of pure states that reproduce or approximate the first $k$ statistical moments of the Haar measure~\cite{ambainis2007quantum}. Formally, an ensemble $\mathcal{E}$ of even pure states is an \emph{exact} state $k$-design if its $k$-th moment matches the Haar-averaged $k$-th moment, $\mathbb{E}_{\psi\sim\mathcal{E}}[\psi^{\otimes k}] = \mathbb{E}_{\phi\sim\mathrm{Haar}}[\phi^{\otimes k}]$, and an \emph{$\epsilon$-approximate} state $k$-design if the two ensembles are $\epsilon$-close in trace distance:
\begin{equation}
    D^{(k)}(\mathcal{E}) \;\coloneqq\; D\!\left(\,\mathbb{E}_{\psi\sim\mathcal{E}}\bigl[\psi^{\otimes k}\bigr],\;\mathbb{E}_{\phi\sim\mathrm{Haar}}\bigl[\phi^{\otimes k}\bigr]\,\right) \;\leq\; \epsilon.
\end{equation}
 Operationally, an $\epsilon$-approximate $k$-design is indistinguishable from a truly Haar-random ensemble by any quantum algorithm that makes $k$ queries to the state. Combining the bridge degree with a quantitative analysis of the Haar bridge spectral distribution yields:

\begin{thm_intro}[Non-Gaussian gate count for state $2$-designs; see Corollary~\ref{cor:exact_design} and Theorem~\ref{thm:approx_design}]
The number $t$ of non-Gaussian gates required for a state $k$-design for $k\geq2$ grows with system size:
\begin{itemize}
    \item \textbf{Exact $k$-designs} must contain at least one state requiring $t \geq \lfloor n/4 \rfloor = \Omega(n)$ non-Gaussian gates.
    \item \textbf{$\epsilon$-approximate $k$-designs} require $t = \Omega(\sqrt{n\log(1/\epsilon)})$ non-Gaussian gates with non-negligible probability over the ensemble (valid for $\epsilon \geq C_*/\sqrt n$, with $C_*$ an absolute constant).
\end{itemize}
\end{thm_intro}

This stands in sharp contrast to the qubit setting, where random Clifford circuits alone form an exact state $3$-design and approximate state $k$-designs are generated with a constant (independent of $n$) number of non-Clifford gates~\cite{leone2026nonclifford,haferkamp2022efficient,yuzhen2026designs}. On the matchgate side, a growing number of non-Gaussian gates is required already at the state $2$-design --- the first nontrivial design. The bound has an equivalent implication for a single state: any even pure state $\psi$ with $\nGMax(\psi)=o(\sqrt{n\log(1/\epsilon)})$ has its matchgate orbit at trace distance $>\epsilon$ from the Haar-random state.

\medskip

\textbf{(iv) Approximate bridge degree and approximate conversion (Sec.~\ref{sec:approx}).}
The bridge degree, being a discrete quantity, is fragile under perturbation: an arbitrarily small non-Gaussian perturbation of a Gaussian state can drive $\nGMax$ to its maximum value. We address this by introducing an \emph{$\epsilon$-approximate bridge degree}
\begin{equation}
    \Lambda_{\mathrm{d},\epsilon}(\psi) \;\coloneqq\; \min_{\phi:\, 1-F(\psi,\phi)\leq\epsilon}\nGMax(\phi),
\end{equation}
the minimum of $\nGMax$ over even pure states in a fidelity-$\epsilon$ ball around $\psi$. We show that $\Lambda_{\mathrm{d},\epsilon}$ inherits monotonicity from $\nGMax$:

\begin{thm_intro}[Monotonicity of the $\epsilon$-approximate bridge degree; see Theorem~\ref{thm:approx_monotone}]
For any even pure state $\psi$, any $\epsilon\in[0,1]$, and any Gaussian protocol $\mathcal{E}$ such that $\mathcal{E}(\psi)$ is pure,
\begin{equation}
    \Lambda_{\mathrm{d},\epsilon}\bigl(\mathcal{E}(\psi)\bigr) \;\leq\; \Lambda_{\mathrm{d},\epsilon}(\psi).
\end{equation}
\end{thm_intro}

The approximate bridge degree is not directly measurable --- its definition involves an optimization over the full set of even pure states, which is intractable. We circumvent this by introducing the \emph{bridge fidelity}
\begin{equation}
    F^\Lambda_k(\psi) \;=\; \sqrt{\textstyle\sum_{|\lambda|\leq k} p_\psi(\lambda)},
\end{equation}
the square root of the cumulative bridge spectral distribution at threshold $k$. The \emph{$\epsilon$-approximate bridge fidelity}
\begin{equation}
    \nGF_\epsilon(\psi) \;\coloneqq\; \min\{\alpha\geq 0 \,:\, F^\Lambda_{2\alpha}(\psi)\geq 1-\epsilon\}
\end{equation}
provides a computable lower bound on $\Lambda_{\mathrm{d},\epsilon}$ (Lemma~\ref{lem:relation_max_Lambda_fidelity}). As a corollary, this result yields a mixed-state non-Gaussianity witness based on fidelity: if an experimentally prepared state $\rho$ has fidelity at least $1-\epsilon$ with a known even pure target $\ket{\phi}$, then $\nGMax^{\,\mathrm{cr}}(\rho)$ is lower-bounded by $\nGF_\epsilon(\phi)$ (Corollary~\ref{cor:fidelity_witness}). Thus, a fidelity estimate with respect to a target state admitting a known classical description not only certifies the non-Gaussianity of the actual mixed laboratory state whenever $\nGF_\epsilon(\phi)>0$, but also quantitatively witnesses it by providing a lower bound on the mixed-state non-Gaussianity monotone $\nGMax^{\,\mathrm{cr}}$.

Furthermore, $\nGF_\epsilon(\psi)$ is efficiently estimable from Bell sampling using $\mathcal{O}(\Delta^{-2}(\log(1/\Delta)+\log(1/\delta)))$ samples, where $\Delta$ is the gap between the cumulative bridge spectral distribution and the threshold $(1-\epsilon)^2$ (Lemma~\ref{lem:sample_complexity_approx}, Algorithm~\ref{alg:algo_Lambda_fid}). Together with the monotonicity of $\Lambda_{\mathrm{d},\epsilon}$, this yields efficient bounds on approximate Gaussian conversion. In particular, applied to the canonical four-qubit magic state $\ket{M}=\frac{1}{2}\bigl(\ket{0000}+\ket{0101}+\ket{1010}+\ket{1111}\bigr)$ as a unit of non-Gaussian resource, they lower-bound the single-shot magic-state cost of approximately preparing an arbitrary non-Gaussian state:

\begin{thm_intro}[Magic-state cost lower bound; see Theorem~\ref{thm:lower_bound_cost}]
For any even pure state $\psi$ and any $\epsilon\geq 0$, the number of canonical magic states $\ket{M}$ required to prepare $\psi$ within trace distance $\epsilon$ by any post-selected Gaussian protocol satisfies
\begin{equation}
    \mathrm{Cost}^\epsilon(\psi) \;\geq\; \frac{\Lambda_{\mathrm{d},\epsilon}(\psi)}{4} \;\geq\; \frac{\nGF_\epsilon(\psi)}{4}.
\end{equation}
The lower bound $\nGF_\epsilon(\psi)/4$ is efficiently estimable from Bell-sampling data.
\end{thm_intro}

Since $\ket{M}$ is the resource state for the gadget construction of the SWAP gate~\cite{hebenstreit2019all}, Theorem~\ref{thm:lower_bound_cost} equivalently lower-bounds the minimum number of SWAP gates needed to prepare $\psi$ approximately. This result provides an experimentally certifiable lower bound on the non-Gaussian cost of state preparation in the approximate setting.

\medskip

\textbf{(v) Mixed-state extension via the Choi--Jamio{\l}kowski isomorphism (Sec.~\ref{sec:mixed}).}
 We extend the pure-state machinery of themes~(i), (ii), and (iv) to arbitrary even mixed states $\rho$ via the Choi--Jamio{\l}kowski isomorphism, in which an operator or density matrix is encoded in a pure quantum state on a doubled Hilbert space through its action on half of a maximally entangled Bell state $\kket{\rho} = (\rho\otimes I)\sum_i\ket{i}\otimes \ket{i}$. The bridge operator $\Lambda$ is lifted to a new operator
\begin{equation}
    \tilde{\Lambda} \;\coloneqq\; \Lambda \otimes I \;-\; I \otimes \Lambda
\end{equation}
on the doubled Choi space $\mathcal{H}_n^{\otimes 4}$, whose spectrum is again integer-valued. The resulting \emph{mixed-state bridge spectral distribution}
\begin{equation}
    \tilde p_\rho(\lambda) \;\coloneqq\; \bbra{\rho^{\otimes 2}}\tilde{\Pi}_\lambda\kket{\rho^{\otimes 2}}
\end{equation}
plays the same role as $p_\psi(\lambda)$ in the pure-state theory, and the \emph{mixed-state bridge degree}
\begin{equation}
    \tilde{\Lambda}_{\mathrm{d}}(\rho) \;\coloneqq\; \max\{\,\alpha\geq 0 \,:\, \tilde p_\rho(4\alpha)\neq 0\,\}
\end{equation}
is the natural extension of $\nGMax$ to mixed states. It inherits the full set of structural properties: it is integer-valued, faithful on the set of mixed FGSs (Lemma~\ref{lem:faithfulness_mixed}), additive under tensor products (Lemma~\ref{lem:additive_mixed}), and reduces to the pure-state bridge degree on pure inputs (Lemma~\ref{lem:pure_state_reduction}). Crucially, it remains a monotone under the (smaller) class of post-selected Gaussian operations (Definition~\ref{def:gaussian_operations}) --- those built from Gaussian unitaries, composition with Gaussian states, occupation-number measurements with post-selection, and partial tracing:

\begin{thm_intro}[Monotonicity of the mixed-state bridge degree; see Theorem~\ref{thm:mixed_max_Lambda_monotone}]
For any even state $\rho$ and any post-selected Gaussian operation $\mathcal{E}$ with $\mathcal{E}(\rho) = \sigma$,
\begin{equation}
    \tilde{\Lambda}_{\mathrm{d}}(\sigma) \;\leq\; \tilde{\Lambda}_{\mathrm{d}}(\rho).
\end{equation}
\end{thm_intro}

The Choi-based extension makes the application to Gaussian-conversion bounds, including in the approximate case, directly applicable to mixed states, with $\nGMax$ replaced by $\tilde{\Lambda}_{\mathrm{d}}$. Unlike the pure-state case, however, the mixed-state bridge spectral distribution is not directly accessible from copies of $\rho$ via Bell sampling, since the Choi state is not directly preparable from copies of $\rho$ alone. For the important special case in which mixedness arises from global depolarizing noise acting on an underlying pure state, we nevertheless show that error mitigation enables certification of the bridge degree of the ideal state (Proposition~\ref{prop:noise_robust_valid}, Algorithm~\ref{alg:noise_robust_maxlambda}).

\medskip

\textbf{(vi) Algorithmic primitives from Bell sampling (Secs.~\ref{sec:measurements} and~\ref{sec:designs_cert}).}
Many quantities introduced in the previous themes are functionals of $p_\psi$ and are therefore efficiently estimable from a single Bell-sampling primitive (Algorithm~\ref{alg:bell_sampling}). Building on this, we develop two concrete algorithmic primitives addressing central tasks in fermionic non-Gaussianity.

The first is a Gaussianity test, addressing the \emph{property-testing} question~\cite{buhrman2008quantum}: given black-box access to copies of an unknown state $\psi$, can one decide whether $\psi$ is Gaussian using as few samples as possible? The analogous question has been extensively studied across resource theories~\cite{gross2021schur,girardi2025gaussian,bittel2025optimal}. Bell sampling realizes this test with two-copy structure:

\begin{thm_intro}[Gaussianity testing; see Theorem~\ref{thm:gaussianity_test}]
There exists a Bell-sampling-based algorithm that distinguishes, with high probability:
\begin{itemize}
    \item any pure FGS, which is \textbf{always accepted} (i.e., the test has \emph{perfect completeness}); from
    \item any pure state with Gaussian fidelity at most $1-\epsilon$,
\end{itemize}
using $\mathcal{O}(n^2/\epsilon)$ copies. The protocol achieving this sample complexity consists of repeatedly performing Bell measurements on two copies of the state and accepting if and only if every observed Bell outcome corresponds to the $\Lambda$-eigenvalue $\lambda=0$.
\end{thm_intro}

For comparison with prior works, several fermionic Gaussianity tests have been proposed. The test of Ref.~\cite{bittel2025optimal} uses only single-copy measurements and has $\tilde{\mathcal{O}}(n^5/\epsilon^2)$ sample complexity, but does not satisfy perfect completeness. The test of Ref.~\cite{lyu2024fermionicgaussiantesting} satisfies perfect completeness, but requires three copies of the state at a time and has no explicit sample-complexity bound. The test of Ref.~\cite{walter2025randompurification} achieves $\mathcal{O}(n^2/\epsilon)$ sample complexity, but requires a collective measurement on an unbounded number of copies and does not satisfy perfect completeness. Our test, by contrast, satisfies perfect completeness, uses only two copies at a time, and has polynomial sample complexity $\mathcal{O}(n^2/\epsilon)$. It can thus be seen as the fermionic counterpart of the stabilizer property test of Ref.~\cite{gross2021schur} and the bosonic Gaussianity test of Ref.~\cite{girardi2025gaussian}, both of which share these properties. Moreover, it is \emph{optimal} among all two-copy tests with perfect completeness: it minimizes the false-acceptance probability on every non-Gaussian state.

We further show that the $n^2$ scaling above can be replaced by the number of non-Gaussian gates used to prepare the state:

\begin{thm_intro}[Gaussianity testing for $t$-doped Gaussian states; see Theorem~\ref{thm:test_tdoped}]
The same test distinguishes the two cases above for any state prepared by a $t$-doped Gaussian protocol, using $\mathcal{O}(t^2/\epsilon)$ copies --- a sample complexity independent of the system size $n$.
\end{thm_intro}

Since the non-Gaussian gates may be arbitrarily weak, this already covers states arbitrarily close to the Gaussian set, which should intuitively be the most difficult to distinguish from Gaussian. We are thus led to conjecture that a system-size-independent sample complexity holds for general states.

We also develop a \emph{tolerant} variant that distinguishes states with $F_\G(\psi)\geq 1-\alpha$ from states with $F_\G(\psi)\leq 1-\beta$ in the regime $\alpha = \mathcal{O}(\beta/n^2 - 2\alpha)$, using the same Bell-sampling primitive.

\begin{thm_intro}[Tolerant Gaussianity test; see Theorem~\ref{thm:gaussianity_test_tolerant}]
For any $0\leq \alpha < \beta\leq 1$ such that $\varepsilon \coloneqq \beta/n^2 - 2\alpha > 0$ and $\alpha = \mathcal{O}(\varepsilon)$, there is a Bell-sampling-based algorithm that distinguishes case (a) $F_\G(\psi)\geq 1-\alpha$ from case (b) $F_\G(\psi)\leq 1-\beta$ with success probability at least $1-\delta$ using $N = \mathcal{O}(n^2/(\beta-2\alpha n^2)\,\log(1/\delta))$ copies of $\psi$.
\end{thm_intro}

The second primitive is a Bell-sampling-based certifier for state $2$-designs. Complementing the gate-count lower bounds of theme (iii), we ask: given a candidate ensemble, can one verify from experimental data that it actually forms a state $2$-design? For matchgate-invariant ensembles --- those whose distribution is invariant under conjugation by every matchgate, including in particular every matchgate orbit of an underlying ensemble --- we show that the approximation error $D(\mathcal{E})\equiv D^{(2)}(\mathcal{E})$ reduces to a total variation distance between the ensemble-averaged and Haar bridge spectral distributions,
\begin{equation}
    p_{\mathcal{E}}(\lambda) \;\coloneqq\; \mathbb{E}_{\psi\sim\mathcal{E}}\bigl[p_\psi(\lambda)\bigr],
    \qquad
    D(\mathcal{E}) \;=\; \mathrm{TV}\!\bigl(p_{\mathcal{E}}, p_{\mathrm H}\bigr),
\end{equation}
and is directly accessible from \emph{ensemble Bell sampling} (Algorithm~\ref{alg:ensemble_bell_sampling}), which draws a fresh $\psi\sim\mathcal{E}$ at each round and runs Bell sampling on $\psi^{\otimes 2}$.

\begin{thm_intro}[Bell-sampling certification of state $2$-designs; see Lemma~\ref{lem:estimate_design_distance} and Theorems~\ref{thm:test_design_distance} and \ref{thm:nontolerant_test_design}]
For any matchgate-invariant ensemble $\mathcal{E}$:
\begin{itemize}
    \item \textbf{Approximate design testing.} Whether $D(\mathcal{E})\leq \alpha$ or $D(\mathcal{E})\geq \alpha+\epsilon$ is testable from $\widetilde{\mathcal{O}}(\sqrt n/\epsilon^2)$ Bell samples; in particular, $D(\mathcal{E})$ is estimable to additive accuracy $\epsilon$ at the same cost.
    \item \textbf{Exact design testing.} Whether $D(\mathcal{E})=0$ or $D(\mathcal{E})\geq \epsilon$ is testable from $\mathcal{O}(n^{1/4}/\epsilon^2)$ Bell samples.
\end{itemize}
\end{thm_intro}

This is striking: the corresponding decision problem for arbitrary ensembles is not known to admit an efficient quantum algorithm in general~\cite{nakata2025computational}, whereas matchgate invariance makes the $2$-design property efficiently testable from Bell sampling.

\subsection{Organization}
\label{sec:organization}

The remainder of the paper is organized as follows. Sec.~\ref{sec:preliminaries} sets notation and reviews the preliminary concepts that will be used throughout the paper. In Sec.~\ref{sec:eigenstructure}, we analyze the eigenvalue and eigenspace structure of the bridge operator $\Lambda$ --- a critical component of our approach. In Sec.~\ref{sec:maxlambda}, we introduce the bridge degree, establish its basic structural properties, and prove its monotonicity under post-selected Gaussian protocols. In Sec.~\ref{sec:consequences}, we develop the operational consequences of this monotonicity: no-go theorems for Gaussian conversion (Sec.~\ref{sec:nogo_conversion}), irreversibility of the resource theory of fermionic non-Gaussianity (Sec.~\ref{sec:irreversibility}), and lower bounds on the non-Gaussian gate count required to generate state designs in both the exact and approximate settings (Sec.~\ref{sec:designs_lower}). In Sec.~\ref{sec:approx}, we develop the $\epsilon$-approximate theory, including a computable and efficiently measurable lower bound. In Sec.~\ref{sec:mixed}, we extend the framework to mixed states via the Choi representation and introduce a mixed-state version of the bridge degree, with structural properties analogous to those of the pure-state version. In Sec.~\ref{sec:measurements}, we turn to algorithmic applications, presenting a sample-efficient Bell-sampling protocol for Gaussianity testing in both one-sided and tolerant forms. In Sec.~\ref{sec:designs_cert}, we present a sample-efficient Bell-sampling protocol to certify matchgate-invariant state $2$-designs, complementing the structural lower bounds of Sec.~\ref{sec:designs_lower}. We conclude in Sec.~\ref{sec:conclusion} with a discussion and open problems.

\section{Preliminaries}
\label{sec:preliminaries}

We consider a system of $n$ qubits with associated Hilbert space $\mathcal{H}_n \simeq \mathbb{C}^d$, where $d = 2^n$ is the dimension of the $n$-qubit space. The set of Hermitian operators on a vector space $\mathcal{V}$ is denoted by $\mathcal{B}(\mathcal{V})$. The set of quantum states is defined as
$\mathcal{D}(\mathcal{H}_n) \coloneqq \{ \rho \in \mathcal{B}(\mathcal{H}_n) \,|\, \rho \ge 0, \ \Tr(\rho) = 1 \}$.
We denote the set of integers from $1$ to $n$ by $[n] \coloneqq \{1, \dots, n\}$.

The trace distance between two quantum states $\rho, \sigma \in \mathcal{D}(\mathcal{H}_n)$ is defined as half the trace norm of their difference:
\begin{equation}
    D(\rho, \sigma) = \frac{1}{2} \left\| \rho - \sigma \right\|_1,
\end{equation}
where the trace norm of an operator $A$ is given by $\left\| A \right\|_1 \coloneqq \Tr \big( \sqrt{A^\dagger A} \big)$. The Hilbert--Schmidt inner product is defined as
\begin{equation}
    \langle A, B \rangle_{\mathrm{HS}} \coloneqq \Tr(A^\dagger B).
\end{equation}

\subsection{Majorana operators}

A system of $n$ qubits is equivalent to a system of $n$ fermionic modes via the Jordan--Wigner transformation. Throughout this paper, we use the qubit and fermionic representations interchangeably. In this mapping, the Majorana operators are defined in terms of Pauli operators as
\begin{align} \label{eq:majorana_def}
    \gamma_{2j-1} &\coloneqq \left(\prod_{k=1}^{j-1} Z_k\right) X_j, \\
    \gamma_{2j}   &\coloneqq \left(\prod_{k=1}^{j-1} Z_k\right) Y_j,
\end{align}
for $j \in [n]$.

The Majorana operators are Hermitian and traceless, and they satisfy the canonical anticommutation relations
\begin{equation}
    \{\gamma_j, \gamma_k\} = 2 \delta_{j,k} I,
\end{equation}
for all $j,k \in [2n]$.

Given an ordered subset $A \subset [2n]$, we define the associated Majorana monomial by
\begin{equation}
    \gamma_A \coloneqq i^{|A|(|A|-1)/2} \prod_{j \in A} \gamma_j,
\end{equation}
where $|A|$ denotes the cardinality of $A$, and the product is taken in increasing order. We also define $\gamma_{\varnothing} \coloneqq I$. The $4^n$ operators $\{\gamma_A\}$ are Hermitian and form an orthogonal basis of $\mathcal{B}(\mathcal{H}_n)$ with respect to the Hilbert--Schmidt inner product, referred to as the Majorana basis. Accordingly, any operator $O \in \mathcal{B}(\mathcal{H}_n)$ admits the expansion
\begin{equation}
    O = \sum_A O_A \gamma_A,
    \qquad
    O_A = 2^{-n} \Tr(O \gamma_A).
\end{equation}

An operator $O$ is said to be \emph{even} if $O_A = 0$ for all subsets $A$ with odd cardinality. This property is characterized by the parity operator
\begin{equation}
    P \coloneqq \prod_{j=1}^n Z_j,
\end{equation}
with respect to which an operator is even if and only if it commutes with $P$.

The parity operator induces a decomposition of the Hilbert space
$\mathcal{H}_n = \mathcal{H}_n^+ \oplus \mathcal{H}_n^-$,
where $\mathcal{H}_n^\pm \coloneqq \{ \ket{\psi} \in \mathcal{H}_n \mid P \ket{\psi} = \pm\ket{\psi} \}$. Namely, a pure state $\psi = \ket{\psi}\bra{\psi}$ is even if and only if $\ket{\psi} \in \mathcal{H}_n^+$ or $\ket{\psi} \in \mathcal{H}_n^-$.

\subsection{Fermionic Gaussian states and matchgates}

Let $O(2n)$ denote the group of real orthogonal $2n \times 2n$ matrices. For any $Q \in O(2n)$, a \emph{matchgate unitary}, or equivalently a \emph{fermionic Gaussian unitary}, $U_Q$ is defined through its action on the Majorana operators as
\begin{equation} \label{eq:action_matchgate}
    U_Q \gamma_i U_Q^\dagger = \sum_{j=1}^{2n} Q_{ij} \gamma_j,
\end{equation}
for all $i \in [2n]$. The unitary $U_Q$ is uniquely determined by $Q$ up to a global phase. Matchgates are generated by two-qubit $X_j X_{j+1}$ rotations, single-qubit $Z_j$ rotations, and the operator $X_n$ acting on the last qubit. The set of all matchgates on $n$ qubits forms a group, which we denote by $M_n$.

Fermionic Gaussian states are even states of the form
\begin{equation}
    \rho
    =
    U_Q
    \left(
        \bigotimes_{j=1}^n \frac{I + r_j Z_j}{2}
    \right)
    U_Q^\dagger,
\end{equation}
where $U_Q \in M_n$ and $\{r_j\}_{j=1}^{n}$ are real numbers which satisfy $|r_j| \leq 1$. A pure FGS corresponds to the case $r_j = \pm 1$ for all $j \in [n]$, and can always be written as $\ket{\psi} = U_Q \ket{0}^{\otimes n}$. Any mixed FGS arises as the reduced state of a pure FGS. We denote the set of FGSs as $\mathcal{G}_{\mathrm F}$.

Finally, we introduce the bridge operator $\Lambda$, defined in Ref.~\cite{bravyi2005lagrangian}, as
\begin{equation} \label{eq:Lambda_def}
    \Lambda \coloneqq \sum_{j=1}^{2n} \gamma_j \otimes \gamma_j.
\end{equation}
This operator plays a central role in characterizing FGSs. In particular, a pure state $\ket{\psi}$ is an FGS if and only if~\cite{dias2024classical,melo2013}
\begin{equation} \label{eq:Lambda_condition}
    \Lambda \ket{\psi}^{\otimes 2} = 0.
\end{equation}
More generally, a mixed state $\rho$ is an FGS if and only if~\cite{bravyi2005lagrangian}
\begin{equation} \label{eq:Lambda_condition_mixed}
    [\Lambda, \rho^{\otimes 2}] = 0.
\end{equation}
Another crucial property of $\Lambda$ is its invariance under matchgate unitaries: $[\Lambda, U^{\otimes 2}] = 0$ for any $U \in M_n$~\cite{bravyi2005lagrangian}.

\subsection{Resource theory of fermionic non-Gaussianity}\label{sec:resource_theory}

Quantum resource theories provide a systematic framework for quantifying the usefulness of quantum states for information-processing tasks~\cite{chitambar2019quantum,Gour2025}. A resource theory is specified by two ingredients: a set of \emph{free states} and a set of \emph{free operations}. Free states are those preparable without consuming the resource, while free operations are physical transformations that cannot generate the resource from free states. Prominent examples include the resource theories of entanglement~\cite{vidal2000entanglement}, coherence~\cite{baumgratz2014quantifying}, non-stabilizerness~\cite{veitch2014resource}, and bosonic non-Gaussianity~\cite{takagi2018convex}.

In this work we focus on the resource theory of \emph{fermionic non-Gaussianity}. The choice of free operations admits two natural variants, which we now make precise.

\begin{defn}[Gaussian operations]\label{def:gaussian_operations}
A Gaussian operation is any composition of:
\begin{itemize}[leftmargin=2em]
\item[(a)] applying a fermionic Gaussian (matchgate) unitary;
\item[(b)] composition with an FGS;
\item[(c)] taking a partial trace.
\end{itemize}
A post-selected Gaussian operation additionally allows
\begin{itemize}[leftmargin=2em]
\item[(d)] performing an occupation-number measurement with post-selection on a specific outcome.
\end{itemize}
\end{defn}

Gaussian operations are precisely the completely positive trace-preserving (CPTP) Gaussian channels~\cite{SpSc18,bravyi2005lagrangian}; they map FGSs to FGSs and have the FGSs themselves as free states. Post-selection of (d) also maps FGSs to FGSs, so the set of free states is unchanged. Crucially, however, the set of FGSs is \emph{not} convex --- a probabilistic mixture of distinct FGSs is generally not Gaussian --- so the corresponding resource theory falls outside the standard convex framework~\cite{chitambar2019quantum,Gour2025}.

Augmenting Gaussian operations with operations conditioned on classical randomness or measurement outcomes yields the \emph{protocol} versions~\cite{tarabunga2026}.

\begin{defn}[Gaussian protocols]\label{def:gaussian_protocol}
A Gaussian protocol is any composition of (a)--(c) of Definition~\ref{def:gaussian_operations} together with:
\begin{itemize}[leftmargin=2em]
\item[(d)] performing an occupation-number measurement;
\item[(e)] applying any of the above operations conditioned on classical randomness or measurement outcomes.
\end{itemize}
A post-selected Gaussian protocol additionally allows post-selection on the outcome in (d).
\end{defn}

Unlike Gaussian operations, Gaussian protocols allow classical mixing and can therefore produce probabilistic mixtures of FGSs. Such mixtures are generally non-Gaussian, since the set of FGSs is not convex~\cite{SpSc18}. This places fermionic non-Gaussianity within the convex resource-theory framework~\cite{tarabunga2026,chitambar2019quantum}, whose free states are the convex hull of the FGSs,
\begin{equation}\label{eq:convex_fgs}
    \mathrm{conv}(\mathcal{G}_{\mathrm F}) \;=\; \left\{\,\int\!d\mu(\nu)\, \sigma_\nu \,\middle|\, \sigma_\nu\in\mathcal{G}_{\mathrm F},\, \mu \text{ a probability measure}\right\},
\end{equation}
the so-called \emph{convex-Gaussian states}~\cite{melo2013}. Equivalently, $\mathrm{conv}(\mathcal{G}_{\mathrm F})$ is the convex hull of the pure FGSs, since any state $\sigma_\nu\in\mathcal{G}_{\mathrm F}$ is itself a convex combination of pure FGSs. These states can be simulated efficiently~\cite{melo2013}, as long as
one can efficiently sample from the probability measure $d\mu$. The resulting convex resource theory of fermionic non-Gaussianity parallels the convex resource theories of entanglement~\cite{vidal2000entanglement}, non-stabilizerness~\cite{veitch2014resource}, and bosonic non-Gaussianity~\cite{takagi2018convex,albarelli2018}. Like Clifford circuits, Gaussian protocols become universal upon injection of a canonical magic state~\cite{hebenstreit2019all}, making them the natural class for studying the resource cost of universal fermionic quantum computation.

\subsection{Fermionic non-Gaussianity monotones}
\label{sec:nongaussianity_monotones}

Given a function $M$ defined on the set of states, we say that $M$ is a fermionic non-Gaussianity monotone if it is \emph{non-increasing} under any Gaussian protocol $\mathcal{E}$,
\begin{equation}
	\label{eq:monotonicity_condition}
M\left[\mathcal{E}(\rho)\right]\leq M(\rho)\,.
\end{equation} 
One important use of a resource monotone is to constrain \emph{state conversion} under the free operations --- here, Gaussian protocols. Since monotonicity implies that a conversion $\rho\to\sigma$ is impossible whenever $M(\sigma)>M(\rho)$, the monotone immediately yields no-go theorems for Gaussian state conversion. A central question in a resource theory is how many copies of a target state $\sigma$ can be produced per copy of a resource state $\rho$ under the free operations. In this context, a key property is \emph{additivity} under tensor products, $M(\rho\otimes\sigma)=M(\rho)+M(\sigma)$, which turns a monotone into a quantitative tool for analyzing resource conversion: any additive non-Gaussianity monotone yields a bound on the conversion rate~\cite{chitambar2019quantum,Gour2025},
\begin{equation}\label{eq:conversion_rate_bound}
    r(\rho \to \sigma) \leq \frac{M(\rho)}{M(\sigma)},
\end{equation}
where $r(\rho\to\sigma)$ denotes the largest ratio $m/n$ at which $\rho^{\otimes n}$ can be converted into $\sigma^{\otimes m}$ using Gaussian protocol. Additivity is thus essential in reducing the many-copy problem to single-copy values, avoiding the difficult evaluation of $M$ on many copies of a state. If, in addition, $M$ is non-increasing under the larger class of post-selected Gaussian protocols, these no-go theorems and conversion bounds extend to \emph{probabilistic} Gaussian conversions, i.e., those succeeding with any nonzero probability. %

We now survey some known fermionic non-Gaussianity monotones and review some of their properties. Throughout, we focus on monotones defined on pure states $\ket{\psi}$, as this is the main focus of our work.

The \emph{Gaussian fidelity}~\cite{dias2024classical, cudby2023gaussian} is
\begin{equation}
\label{eq:gfid_def}
    F_{\G}(\ket{\psi})
    = \max_{\ket{\phi} \in \mathcal{G}_{\mathrm F}}
    |\langle\phi|\psi\rangle|^2 ,
\end{equation}
the maximal overlap between $\ket{\psi}$ and any pure FGS. The quantity $-\log F_{\G}$ is a monotone under pure-state Gaussian protocols and sub-additive under tensor products, becoming additive for tensor products of $4$-qubit GHZ states~\cite{reardonSmith2025extentmultiplicative}.

The \emph{Gaussian rank} is~\cite{dias2024classical, cudby2023gaussian}
\begin{equation}
    \label{eq:grank_def}
    \chi_{\G}(\ket{\psi})
    = \min\!\big\{ \chi\in\mathbb{N} \,\big|\,
    \exists\,\{\ket{\phi_j}\}_{j=1}^\chi\!\subset \mathcal{G}_{\mathrm F},
    \; x_j\in \mathbb{C}
    \text{ s.t. }
    \ket{\psi}=\textstyle\sum_{j=1}^\chi x_j \ket{\phi_j} \big\},
\end{equation}
the minimal number of (not necessarily orthogonal) pure FGSs needed to express $\ket{\psi}$ as a superposition; it upper-bounds the cost of exact classical simulation. Closely related is the \emph{Gaussian extent}~\cite{dias2024classical, cudby2023gaussian},
\begin{equation}
    \label{eq:gextent_def}
    \xi_\G(\ket{\psi})
    = \inf_{M\in \mathbb{N}} \,
    \inf_{\{\ket{\phi_j}\}\subset\mathcal{G}_{\mathrm F}}
    \Big\{
    \lVert x \rVert_1^2
    \,\Big|\,
    \ket{\psi} = \textstyle\sum_{j=1}^M x_j \ket{\phi_j}
    \Big\},
\end{equation}
where $\| x \|_1 = \sum_{j} |x_j|$; it is the squared $\ell^1$-norm of the coefficients, minimized over all finite FGS superpositions of $\ket{\psi}$; it upper-bounds the classical complexity of approximately simulating near-Gaussian fermionic circuits~\cite{dias2024classical, reardonSmith2024improved}. Both the Gaussian rank and the Gaussian extent are monotones under Gaussian protocols, and the Gaussian rank is additionally a monotone under post-selected Gaussian protocols. They are sub-additive under tensor products, and the Gaussian extent becomes additive for tensor products of $4$-qubit states~\cite{reardonSmith2025extentmultiplicative}.

The \emph{relative entropy of fermionic non-Gaussianity}~\cite{tarabunga2026,gottlieb2006properties,gottlieb2014correlations,lyu2024fermionicgaussiantesting} reads, for pure states,
\begin{equation} \label{eq:ngr_def}
    \nGR(\ket{\psi}) = S\!\left(\mathcal{G}(\psi)\right),
\end{equation}
where $\mathcal{G}(\psi)$ is the unique FGS with the same covariance matrix as $\ket{\psi}$ and $S$ is the von Neumann entropy. Equivalently, it is the minimum quantum relative entropy from $\ket{\psi}$ to the set of FGSs~\cite{lyu2024fermionicgaussiantesting}. It is a monotone under pure-state Gaussian protocols~\cite{tarabunga2026} and additive under tensor products.

Finally, the \emph{second moment of $\Lambda$}~\cite{tarabunga2026,coffman2025measuringnongaussianmagicfermions,sierant2025fermionicmagicresourcesquantum} is
\begin{equation} \label{eq:m_Lambda_prelim}
    \nGLambda(\psi) \;\coloneqq\; \tfrac{1}{2}\, \bra{\psi}^{\otimes 2} \Lambda^2 \ket{\psi}^{\otimes 2},
\end{equation}
which is a special case of occupation number entropy~\cite{tarabunga2026} and fermionic anti-flatness~\cite{sierant2025fermionicmagicresourcesquantum}. It is likewise a monotone under pure-state Gaussian protocols~\cite{tarabunga2026} and additive under tensor products. %

 We summarize these monotones and their properties in Tab.~\ref{tab:known_monotones}. The bridge degree introduced in this work stands out as the only quantity that is simultaneously monotone under post-selection, additive, and computable from the state description.

\begin{table*}[t]
\centering
\renewcommand{\arraystretch}{1.2}
\begin{tabular}{@{}l@{\hspace{2em}}l@{\hspace{1.5em}}ccc@{}}
\toprule
\textbf{Monotone} & \textbf{Symbol} & \textbf{Post-sel.\ monotone} & \textbf{Additive} & \textbf{Computable} \\
\midrule
Gaussian fidelity & $-\log F_\G$ & \ding{55} & \ding{55} & \ding{55} \\
Gaussian rank & $\log\chi_\G$ & \checkmark & \ding{55} & \ding{55} \\
Gaussian extent & $\log\xi_\G$ & \ding{55} & \ding{55}$^{\dagger}$ & \ding{55} \\
Relative entropy of non-Gaussianity & $\nGR$ & \ding{55} & \checkmark & \checkmark \\
Second moment of $\Lambda$ & $\nGLambda$ & \ding{55} & \checkmark & \checkmark \\
\midrule
Bridge degree (this work) & $\nGMax$ & \checkmark & \checkmark & \checkmark \\
\bottomrule
\end{tabular}
\caption{Comparison of fermionic non-Gaussianity monotones for pure states. The ``Post-sel.\ monotone'' column marks (\checkmark) monotones provably non-increasing under post-selected Gaussian protocols. The ``Additive'' column marks (\checkmark) monotones additive under tensor products. The ``Computable'' column marks (\checkmark) quantities accessible from the state description without optimization, and (\ding{55}) those whose definition requires optimization not known to be efficient. The bridge degree is the only monotone that is simultaneously monotone under post-selection, additive, and computable. $^{\dagger}$Additive for tensor products of $4$-qubit states~\cite{reardonSmith2025extentmultiplicative}.}
\label{tab:known_monotones}
\end{table*}

\subsection{Haar measure and state designs}

Let $U_P(2^n)$ denote the group of parity-preserving unitary operators acting on $n$ qubits, and let
$\mathcal{U}_n \subseteq U_P(2^n)$ be a (continuous or discrete) subgroup. The Haar measure on
$\mathcal{U}_n$ is the unique probability measure that is both left- and right-invariant under the group action.
For a positive integer $k$, we define
the $k$-fold twirling channel over $\mathcal{U}_n$ by
\begin{equation}
    \mathcal{T}_{\mathcal{U}_n}^{(k)}(O)
    \coloneqq
    \int_{\mathcal{U}_n} d\mu_H(U)\,
    U^{\otimes k}  O (U^\dagger)^{\otimes k},
\end{equation}
for any operator $O \in \mathcal{B}(\mathcal{H}_n^{\otimes k})$. 

The $k$-th order commutant of $\mathcal{U}_n$ consists of all operators
$O \in \mathcal{B}(\mathcal{H}_n^{\otimes k})$ that commute with $U^{\otimes k}$ for every
$U \in \mathcal{U}_n$:
\begin{equation}
    \mathrm{Com}_k(\mathcal{U}_n)
    \coloneqq
    \left\{
        O \in \mathcal{B}(\mathcal{H}_n^{\otimes k})
        \,\middle|\,
        (U^\dagger)^{\otimes k} O U^{\otimes k} = O,
        \ \forall U \in \mathcal{U}_n
    \right\}.
\end{equation}
This object is central to the analysis of twirling channels over a unitary group, since the twirling channel acts as the orthogonal projector onto the commutant with respect to the Hilbert--Schmidt inner product~\cite{mele2024haar,kliesch2021}. Concretely, if $\{ P_1, \dots, P_{\dim(\mathrm{Com})} \}$ is an orthonormal basis of $\mathrm{Com}_k$, then for any
    $O \in \mathcal{B}(\mathcal{H}_n^{\otimes k})$,
    \begin{equation} \label{eq:twirling_project_commutant}
        \mathcal{T}_{\mathcal{U}_n}^{(k)}(O)
        =
        \sum_{j=1}^{\dim(\mathrm{Com})}
        \Tr(P_j O)\, P_j .
    \end{equation}
This provides a practical recipe for computing the action of the twirling channel associated with a given unitary group.

We also introduce the notion of \emph{state $k$-designs}, also known as \emph{spherical $k$-designs}. Throughout this work, all notions of state design are framed on the fermionic Hilbert space, i.e. restricted to even states, unless stated otherwise. Given an ensemble of even pure states equipped with a probability measure, define the $k$-th moment
\begin{equation}
    \Psi_{\mathcal{E}}^{(k)} \;\coloneqq\; \mathbb{E}_{\psi\sim\mathcal{E}}\bigl[\psi^{\otimes k}\bigr].
\end{equation}
The ensemble $\mathcal{E}=\{\psi\}$ is a \emph{state $k$-design} if
\begin{equation}\label{eq:design_def}
    \Psi_{\mathcal{E}}^{(k)} \;=\; \Psi_{\mathrm{Haar}}^{(k)},
\end{equation}
where $\Psi_{\mathrm{Haar}}^{(k)} = \mathcal{T}^{(k)}_{U_P(n)}(\psi)$ is the $k$-th Haar moment (independent of the reference even pure state $\psi$). It is an \emph{$\epsilon$-approximate state $k$-design} in trace distance if
\begin{equation}\label{eq:design_def_approx}
    D^{(k)}(\mathcal{E}) \;\coloneqq\; D\!\left(\Psi_{\mathcal{E}}^{(k)},\; \Psi_{\mathrm{Haar}}^{(k)}\right) \;\leq\; \epsilon.
\end{equation}
 An $\epsilon$-approximate $k$-design in trace distance carries a clean operational meaning via the Holevo--Helstrom theorem: it cannot be distinguished from a Haar-random state with success probability exceeding $\tfrac{1+\epsilon}{2}$ by any quantum experiment that makes at most $k$ queries to the state. In what follows, for $k=2$ we will write $D(\mathcal{E}) \coloneqq D^{(2)}(\mathcal{E})$.

\subsection{Pauli operators and Bell sampling}

We define the single-qubit Pauli operators as
$\sigma_{r^z r^x} \coloneqq i^{r^z r^x} X^{r^x} Z^{r^z}$,
where $r^z, r^x \in \{0,1\}$. The Bell states are defined by
$\ket{\sigma_{r^z r^x}} \coloneqq i^{r^z r^x} (\sigma_{r^z r^x} \otimes I)\ket{\Phi^+}$,
where $\ket{\Phi^+} \coloneqq \frac{1}{\sqrt{2}}(\ket{00} + \ket{11})$. Explicitly, the four Bell states are
\begin{align} \label{eq:bell_states}
    \ket{\sigma_{00}} &= \tfrac{1}{\sqrt{2}}(\ket{00} + \ket{11}), &
    \ket{\sigma_{01}} &= \tfrac{1}{\sqrt{2}}(\ket{10} + \ket{01}), \\
    \ket{\sigma_{10}} &= \tfrac{1}{\sqrt{2}}(\ket{00} - \ket{11}), &
    \ket{\sigma_{11}} &= \tfrac{1}{\sqrt{2}}(\ket{01} - \ket{10}).
\end{align}
For an $n$-qubit system, we define the tensor product of Pauli operators by
$\sigma_{\mathbf{r}} \coloneqq \bigotimes_{j=1}^n \sigma_{r^z_j r^x_j}$
and the tensor-product of Bell states by
$\ket{\sigma_{\boldsymbol{r}^z \boldsymbol{r}^x}} \coloneqq \bigotimes_{j=1}^n \ket{\sigma_{r^z_j r^x_j}}$.

For two quantum states $\rho_A$ and $\rho_B$, a measurement in the Bell basis can be implemented by preparing the tensor-product state $\rho_A \otimes \rho_B$, applying a unitary $U_{\mathrm{Bell}}$ that maps the Bell basis to the computational basis, and measuring in the computational basis. As an important special case, performing a Bell measurement on two identical copies of a pure state $\ket{\psi}$ produces an outcome
$\mathbf{r}^z, \mathbf{r}^x \in \{0,1\}^n$ with probability
\begin{equation} \label{eq:pauli_dist}
    P(\mathbf{r})
    = \bra{\psi}\bra{\psi} O_{\mathbf{r}} \ket{\psi}\ket{\psi} =
    \frac{1}{2^n}
    \left|
        \bra{\psi} \sigma_{\mathbf{r}} \ket{\psi^{*}}
    \right|^2,
\end{equation}
where $O_{\mathbf{r}}$ is the projector onto a tensor product of Bell states and $\ket{\psi^{*}}$ is the complex conjugate of $\ket{\psi}$.

The unitary $U_{\mathrm{Bell}}$ admits two physically distinct implementations. On qubit-encoded platforms, the standard implementation is the depth-$1$ Clifford circuit
\begin{equation}\label{eq:U_Bell_Clifford}
    U_{\mathrm{Bell}} \;=\; \bigotimes_{i=1}^n (\mathrm{H} \otimes I_2)\,\mathrm{CNOT}_{i,\,n+i},
\end{equation}
where $\mathrm{H}$ is the Hadamard gate and $\mathrm{CNOT}_{i,\,n+i}$ is the controlled-NOT gate with qubit $i$ (from the first copy) as control and qubit $n+i$ (from the second copy) as target. On fermionic platforms, $U_{\mathrm{Bell}}$ can be realized as a depth-$1$ matchgate circuit consisting of a layer of fermionic beam splitters between corresponding modes of the two copies,
\begin{equation}\label{eq:U_Bell_matchgate}
    U_{\mathrm{Bell}} \;=\; \exp\!\Bigl(\,\tfrac{\pi}{4}\!\sum_{j=1}^{n} \gamma_{2j}\,\gamma_{2n+2j-1}\,\Bigr).
\end{equation}
Note that this operation is also a Clifford operation, corresponding to swapping the modes $\gamma_{2j}$ and $\gamma_{2n+2j-1}$. This second implementation is more natural in fermionic platforms, where matchgate operations (fermionic linear optics) are the native free operations.  In either case, the operational cost is the same --- two copies of the state and a constant-depth Gaussian-or-Clifford circuit --- and the outcome distribution is identical~\footnote{For the matchgate implementation, the Bell states are obtained up to phases.}, given by Eq.~\eqref{eq:pauli_dist}. We refer to either implementation as \emph{Bell sampling} throughout this work, with the understanding that the choice of implementation is hardware-dependent but in both cases efficient.

\section{Eigenstructure of the bridge operator}
\label{sec:eigenstructure}

The bridge operator $\Lambda$ is the central object of this work. In this section, we analyze its spectral structure in detail and identify the constraints that the physics of fermionic states imposes on it. First, we identify the spectrum of $\Lambda$ and construct an explicit eigenbasis in terms of tensor products of Bell states; this is the key fact that makes Bell sampling the natural measurement protocol for characterizing fermionic non-Gaussianity. Then, we use the exchange symmetry of $\ket{\psi}^{\otimes 2}$ and the parity symmetry of even states to narrow down the eigenvalue sectors that can actually be occupied. Finally, using that the spectral projectors of $\Lambda$ form an orthonormal basis for the second-order matchgate commutant, we introduce a matchgate-invariant probability distribution that records the weight of the two-copy state $\ket{\psi}^{\otimes 2}$ on each eigenspace of $\Lambda$. This distribution is the foundational object underlying the non-Gaussianity quantifiers developed in subsequent sections.

\subsection{Spectrum and Bell-state eigenbasis}
\label{sec:spectrum_bell}

Since $\Lambda$ is a sum of $2n$ mutually commuting Hermitian operators $\gamma_j \otimes \gamma_j$, each with eigenvalues $\mu_j \in \{\pm 1\}$, its spectrum is $\lambda = \sum_{j=1}^{2n} \mu_j \in \{-2n,-2n+2,\ldots,2n\}$. We denote by $V_\lambda$ the eigenspace with eigenvalue $\lambda$ and by $\Pi_\lambda$ the corresponding orthogonal projector. The dimension of $V_\lambda$ is $\binom{2n}{n + \lambda/2}$.

For later convenience, we introduce the binary variables
\begin{equation} \label{eq:s_def}
    s_j \coloneqq 
    \begin{cases}
        (1-\mu_j)/2, & \text{if $j$ is odd}, \\
        (1+\mu_j)/2, & \text{if $j$ is even},
    \end{cases}
\end{equation}
so that $s_j \in \{0,1\}$ for all $j$.

The following lemma shows that the eigenspaces of $\Lambda$ admit an explicit basis in terms of tensor products of Bell states. 

\begin{lem}[Bell-state eigenbasis] \label{lem:eigenbasis_bell}
Each eigenspace $V_\lambda$ of $\Lambda$ is spanned by tensor products of Bell states of the form
$\ket{\Psi_{\boldsymbol{\mu}}} \coloneqq \ket{\sigma_{\boldsymbol{r}^z \boldsymbol{r}^x}}$,
where $\boldsymbol{\mu} \in \{\pm 1\}^{2n}$ satisfies $\sum_{j=1}^{2n} \mu_j = \lambda$, and $\boldsymbol{r}^z, \boldsymbol{r}^x \in \{0,1\}^{n}$ are defined by
\begin{equation} \label{eq:r_func}
    r^z_j = \left( \bigoplus_{k=1}^{2j-2} s_k \right) \oplus s_{2j-1}, 
    \qquad
    r^x_j = s_{2j-1} \oplus s_{2j},
\end{equation}
with $\{s_j\}_{j=1}^{2n}$ defined in Eq.~\eqref{eq:s_def}. In particular, $\ket{\Psi_{\boldsymbol{\mu}}}$ is a simultaneous eigenstate of all operators $\gamma_j \otimes \gamma_j$ with eigenvalues $\mu_j$.
\end{lem}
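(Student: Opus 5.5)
The plan is a direct computation in the qubit picture. Label the two copies so that qubit $k$ of the first copy pairs with qubit $n+k$ of the second copy. From Eq.~\eqref{eq:majorana_def},
\begin{equation*}
\gamma_{2j-1}\otimes\gamma_{2j-1}=\Bigl(\prod_{k<j} Z_kZ_{n+k}\Bigr)X_jX_{n+j},
\qquad
\gamma_{2j}\otimes\gamma_{2j}=\Bigl(\prod_{k<j} Z_kZ_{n+k}\Bigr)Y_jY_{n+j}.
\end{equation*}
Each $\gamma_j\otimes\gamma_j$ is therefore a product of pair-local operators $ZZ$, $XX$, $YY$, each acting on one Bell pair $(k,n+k)$. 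All of these are diagonal in the two-qubit Bell basis. Consequently every product Bell state $\ket{\sigma_{\boldsymbol r^z\boldsymbol r^x}}$ is a simultaneous eigenvector of all $2n$ operators $\gamma_j\otimes\gamma_j$. The global phases $i^{r^zr^x}$ in the Bell states play no role in this.

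Next I read off the eigenvalues. For $\ket{\sigma_{r^zr^x}}\propto (X^{r^x}Z^{r^z}\otimes I)\ket{\Phi^+}$, I would use $ZZ\ket{\Phi^+}=XX\ket{\Phi^+}=\ket{\Phi^+}$ and commute the Paulis through. This gives
\begin{itemize}
\item $ZZ$ eigenvalue $(-1)^{r^x}$,
\item $XX$ eigenvalue $(-1)^{r^z}$,
\item $YY=-(XX)(ZZ)$ eigenvalue $-(-1)^{r^x+r^z}$.
\end{itemize}
Hence
\begin{equation*}
\mu_{2j-1}=(-1)^{\bigoplus_{k<j}r^x_k\oplus r^z_j},\qquad
\mu_{2j}=-(-1)^{\bigoplus_{k<j}r^x_k\oplus r^x_j\oplus r^z_j}.
\end{equation*}
Translating through Eq.~\eqref{eq:s_def} gives
\begin{equation*}
s_{2j-1}=\bigoplus_{k<j}r^x_k\oplus r^z_j,\qquad
s_{2j}=\bigoplus_{k<j}r^x_k\oplus r^x_j\oplus r^z_j.
\end{equation*}
It follows at once that $s_{2j-1}\oplus s_{2j}=r^x_j$. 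Summing this over blocks gives $\bigoplus_{k=1}^{2j-2}s_k=\bigoplus_{k<j}r^x_k$, and hence $r^z_j=\bigl(\bigoplus_{k\le 2j-2}s_k\bigr)\oplus s_{2j-1}$. This is exactly Eq.~\eqref{eq:r_func}, and it shows that $\ket{\Psi_{\boldsymbol\mu}}$ has eigenvalue $\mu_j$ for each $\gamma_j\otimes\gamma_j$.

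Finally, the map $(\boldsymbol r^z,\boldsymbol r^x)\mapsto\boldsymbol s\mapsto\boldsymbol\mu$ is invertible, since the relations above are triangular over $\mathbb F_2$. So the $4^n$ orthonormal product Bell states are in bijection with $\boldsymbol\mu\in\{\pm1\}^{2n}$. They form a joint eigenbasis of the commuting family $\{\gamma_j\otimes\gamma_j\}$, and therefore of $\Lambda$ with eigenvalue $\sum_j\mu_j$. Grouping these states by $\sum_j\mu_j=\lambda$ spans $V_\lambda$, and the count $\binom{2n}{n+\lambda/2}$ matches $\dim V_\lambda$. The only delicate step is the sign bookkeeping: the $YY$ sign, the Jordan--Wigner string producing the prefix parity $\bigoplus_{k<j}r^x_k$, and the odd/even convention in $s_j$. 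I would verify this on $n=1$ against Eq.~\eqref{eq:bell_states} before writing the general case.
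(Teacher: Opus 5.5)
Your proposal is correct and follows essentially the same route as the paper. Like the paper, you express each $\gamma_j\otimes\gamma_j$ through the Jordan--Wigner string as a product of pair-local $ZZ$, $XX$, $YY$ operators, read off their Bell-basis eigenvalues $(-1)^{r^x}$, $(-1)^{r^z}$, $(-1)^{r^z+r^x+1}$, and match the resulting sign pattern to Eq.~\eqref{eq:r_func} via Eq.~\eqref{eq:s_def}. The differences are minor: you compute in the direction $(\boldsymbol r^z,\boldsymbol r^x)\to\boldsymbol\mu$, recovering Eq.~\eqref{eq:inverse_map}, rather than the paper's direction, and you make explicit the triangular invertibility of $\boldsymbol\mu\leftrightarrow(\boldsymbol r^z,\boldsymbol r^x)$, which the paper uses implicitly when it asserts that $\{\ket{\Psi_{\boldsymbol\mu}}\}$ is an orthonormal basis.
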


\begin{proof}

Using the definition of the Majorana operators
[cf.~Eq.~\eqref{eq:majorana_def}], the operators
$\gamma_{2j-1}\otimes\gamma_{2j-1}$ and
$\gamma_{2j}\otimes\gamma_{2j}$ can be written as tensor products
of operators $Z\otimes Z$, $X\otimes X$, and $Y\otimes Y$ acting on
the first $j$ pairs of qubits. Since $\ket{\Psi_{\boldsymbol{\mu}}}$ is a tensor
product of Bell states, the action of these operators can be evaluated factor by factor using the identities
\begin{equation} \label{eq:pauli_eig_bell}
\begin{aligned}
    Z \otimes Z \ket{\sigma_{r^z r^x}} &= (-1)^{r^x} \ket{\sigma_{r^z r^x}}, \\
    X \otimes X \ket{\sigma_{r^z r^x}} &= (-1)^{r^z} \ket{\sigma_{r^z r^x}}, \\
    Y \otimes Y \ket{\sigma_{r^z r^x}} &= (-1)^{r^z + r^x + 1} \ket{\sigma_{r^z r^x}}.
\end{aligned}
\end{equation}
Hence, we verify that $\ket{\Psi_{\boldsymbol{\mu}}}$ is an eigenstate of $\gamma_{2j-1} \otimes \gamma_{2j-1}$ with eigenvalue
\begin{equation}
    \prod_{k=1}^{j-1} (-1)^{r^x_k} \, (-1)^{r^z_j}
    = (-1)^{s_{2j-1}} = \mu_{2j-1}.
\end{equation}
Similarly, it is an eigenstate of $\gamma_{2j} \otimes \gamma_{2j}$ with eigenvalue
\begin{equation}
    \prod_{k=1}^{j-1} (-1)^{r^x_k} \, (-1)^{r^z_j + r^x_j + 1}
    = (-1)^{s_{2j}+1} = \mu_{2j}.
\end{equation}
This shows that $\ket{\Psi_{\boldsymbol{\mu}}}$ is a simultaneous eigenstate of
$\gamma_j \otimes \gamma_j$ with eigenvalue $\mu_j$ for all $j$, and thus an eigenstate of $\Lambda$ with eigenvalue
$\lambda = \sum_j \mu_j$.
Finally, since $\{\ket{\Psi_{\boldsymbol{\mu}}}\}$ forms an orthonormal basis of $\mathcal{H}_n^{\otimes 2}$, the subset with fixed $\lambda$ spans $V_\lambda$.
\end{proof}

Lemma~\ref{lem:eigenbasis_bell} is the foundation of our measurement protocol: it establishes an explicit dictionary between the eigenvalue $\lambda$ of $\Lambda$ and the bitstring $(\boldsymbol{r}^z,\boldsymbol{r}^x)$ produced by a Bell measurement on two copies of $\ket{\psi}$. Inverting Eq.~\eqref{eq:r_func} gives
\begin{equation} \label{eq:inverse_map}
    s_{2j-1} = \Bigl(\bigoplus_{k=1}^{j-1} r^x_k\Bigr) \oplus r^z_j,
    \qquad
    s_{2j}   = \Bigl(\bigoplus_{k=1}^{j} r^x_k\Bigr) \oplus r^z_j,
\end{equation}
from which the eigenvalue is recovered as
\begin{equation}\label{eq:lambda_from_s}
    \lambda \;=\; 2\sum_{j=1}^{2n} (-1)^{j}s_j.
\end{equation}
Consequently, the probability of observing $\lambda$ in a single Bell sample on two copies of $\psi$ is
\begin{equation}
    p_\psi(\lambda) \;\coloneqq\; \bra{\psi}^{\otimes 2}\,\Pi_\lambda\,\ket{\psi}^{\otimes 2}.
\end{equation}
This distribution $p_\psi$ is the foundational object of the framework, formally introduced and analyzed in Sec.~\ref{sec:bell_weights} as the \emph{bridge spectral distribution}. The full Bell-sampling algorithm, which draws $N$ i.i.d.\ samples from $p_\psi$, is summarized in Algorithm~\ref{alg:bell_sampling}.

\begin{algorithm}[H]
\caption{Bell sampling for bridge spectral distribution}
\label{alg:bell_sampling}
\begin{flushleft}
\textbf{Input:} $N\geq 1$, and access to $2N$ identical copies of an even pure state $\psi$ on $n$ qubits. \\
\textbf{Output:} eigenvalues $(\lambda_1,\dots,\lambda_N)$ drawn i.i.d.\ from $p_\psi$. The empirical bridge spectral distribution is $\widehat p_\psi(\lambda) \coloneqq N^{-1}\sum_{k=1}^N \mathbf{1}\{\lambda_k=\lambda\}$.
\end{flushleft}
\begin{algorithmic}[1]
\For{$k=1,\dots,N$}
    \State Prepare two copies of $\psi$ on registers $A,B$ of $n$ qubits each.
    \State Apply $U_{\mathrm{Bell}}$ (Eq.~\eqref{eq:U_Bell_Clifford} or Eq.~\eqref{eq:U_Bell_matchgate}) across corresponding qubits of $A$ and $B$.
    \State Measure all $2n$ qubits in the computational basis, obtaining $(\boldsymbol{r}^z,\boldsymbol{r}^x)\in\{0,1\}^n\times\{0,1\}^n$.
    \State Compute $\boldsymbol{s}\in\{0,1\}^{2n}$ from $(\boldsymbol{r}^z,\boldsymbol{r}^x)$ via the inverse map of Eq.~\eqref{eq:inverse_map}.
    \State Set $\lambda_k \gets 2\sum_{j=1}^{2n} (-1)^{j}s_j$.
\EndFor
\State \Return $(\lambda_1,\dots,\lambda_N)$.
\end{algorithmic}
\end{algorithm}

\subsection{Symmetry constraints from the two-copy structure}
\label{sec:symmetry_constraints}

We now identify which eigenvalues of $\Lambda$ can actually appear when $\ket{\psi}$ is an even pure state. There are two independent constraints at work, both arising from symmetries of $\ket{\psi}^{\otimes 2}$.

The first constraint comes from \emph{exchange symmetry}. For any pure state $\ket{\psi}$, the two-copy state $\ket{\psi}^{\otimes 2}$ lies in the symmetric subspace $\mathrm{Sym}_2(\mathcal{H}_n) = \{\ket{v}\in\mathcal{H}_n^{\otimes 2} : \mathbb{S}\ket{v}=\ket{v}\}$ under the swap operator $\mathbb{S} = \bigotimes_{j=1}^n \mathbb{S}_j$, where $\mathbb{S}_j$ exchanges the two copies of qubit $j$. Since $\Lambda$ commutes with $\mathbb{S}$ (each term $\gamma_j\otimes\gamma_j$ is swap-invariant), each eigenspace $V_\lambda$ decomposes into symmetric and antisymmetric components. The next lemma shows that, in fact, each eigenspace is contained \emph{entirely} in one or the other, with membership determined by $\lambda \bmod 8$.

\begin{lem}[Exchange-symmetry constraint] \label{lem:symmetry_eigenspace_Lambda}
For each integer $k$, the eigenspaces of $\Lambda$ satisfy:
$V_{\lambda} \subseteq \mathrm{Sym}_2(\mathcal{H}_n)$ for $\lambda \equiv 0,2 \pmod{8}$, and $V_{\lambda} \subseteq \mathrm{Asym}_2(\mathcal{H}_n)$ for $\lambda \equiv 4,6 \pmod{8}$.
\end{lem}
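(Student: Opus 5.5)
The plan is to work entirely in the Bell-state eigenbasis of Lemma~\ref{lem:eigenbasis_bell}. We show that each basis vector $\ket{\Psi_{\boldsymbol\mu}}$ is an eigenvector of the swap $\mathbb{S}$, with sign $\pm1$ depending only on $\lambda \bmod 8$. Since these vectors span $V_\lambda$, it follows that $V_\lambda$ lies entirely in $\mathrm{Sym}_2$ or entirely in $\mathrm{Asym}_2$.

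\textbf{Swap sign of a basis vector.} For a single qubit pair, $\mathbb{S}_j$ fixes $\ket{\sigma_{00}},\ket{\sigma_{01}},\ket{\sigma_{10}}$ and negates $\ket{\sigma_{11}}$, as can be read off from Eq.~\eqref{eq:bell_states}. Hence
\begin{equation*}
\mathbb{S}\ket{\Psi_{\boldsymbol\mu}}=(-1)^{N(\boldsymbol\mu)}\ket{\Psi_{\boldsymbol\mu}},\qquad N(\boldsymbol\mu)\coloneqq\#\{j: r^z_j=r^x_j=1\}.
\end{equation*}
Write $m=\lambda/2=\sum_{j}(-1)^j s_j$, using Eq.~\eqref{eq:lambda_from_s}. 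The claim then reduces to the identity
\begin{equation*}
N(\boldsymbol\mu)\equiv \tfrac{m(m-1)}{2}\pmod 2 .
\end{equation*}
The right-hand side is even for $m\equiv 0,1 \pmod 4$, i.e.\ $\lambda\equiv 0,2 \pmod 8$, and odd for $m\equiv 2,3\pmod 4$, i.e.\ $\lambda\equiv 4,6\pmod 8$. This is exactly the statement of the lemma, and the formula is valid for negative $m$ as well.

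\textbf{Induction on the qubit pairs.} We prove the identity by induction on the number of qubit pairs, processing them in the order $j=1,\dots,n$. Let $m_{j}$ and $N_{j}$ denote the partial sums over the first $j$ pairs, with $m_0=N_0=0$. The key observation is that
\begin{equation*}
r^z_j=\Bigl(\bigoplus_{k\le 2j-2}s_k\Bigr)\oplus s_{2j-1}=(m_{j-1}\bmod 2)\oplus s_{2j-1},
\end{equation*}
because $\sum_k(-1)^k s_k\equiv\sum_k s_k \pmod 2$. In addition, $r^x_j=s_{2j-1}\oplus s_{2j}$. The inductive step then splits into four cases according to $(s_{2j-1},s_{2j})$.
\begin{itemize}
\item For $(0,0)$ and $(1,1)$: $r^x_j=0$, so neither $m$ nor $N$ changes.
\item For $(0,1)$: $m$ increases by one and $r^z_j\equiv m_{j-1}$, so $N$ increases by $[m_{j-1}\text{ odd}]$. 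This matches $\binom{m+1}{2}-\binom{m}{2}=m$.
\item For $(1,0)$: $m$ decreases by one and $r^z_j\equiv m_{j-1}+1$, so $N$ increases by $[m_{j-1}\text{ even}]$. This matches $\binom{m-1}{2}-\binom{m}{2}=-(m-1)$.
\end{itemize}

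\textbf{Expected obstacle.} The only delicate point is the bookkeeping of the Jordan--Wigner string in $r^z_j$. One must notice that the prefix parity of the $s_k$ equals $m_{j-1} \bmod 2$; this is what makes the swap sign a function of $m$ alone rather than of the full bitstring. Once that is in place, the four-case check is routine. Finally, since every $\ket{\Psi_{\boldsymbol\mu}}$ spanning $V_\lambda$ carries the same swap sign, the inclusions $V_\lambda\subseteq\mathrm{Sym}_2(\mathcal{H}_n)$ and $V_\lambda\subseteq\mathrm{Asym}_2(\mathcal{H}_n)$ follow as stated.
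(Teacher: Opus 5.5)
Your proof is correct: the inductive identity $N(\boldsymbol\mu)\equiv\binom{\lambda/2}{2}\pmod 2$ holds in all four cases. It rests on the observation that the Jordan--Wigner prefix parity in $r^z_j$ equals $m_{j-1}\bmod 2$, and together with the fact that all basis vectors of $V_\lambda$ then share one swap sign, this gives the lemma.

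The framework is the same as the paper's: the Bell eigenbasis, and swap sign $(-1)^{\#\sigma_{11}}$. The combinatorial core is organized differently. The paper lists only the $\ket{\sigma_{01}},\ket{\sigma_{11}}$ factors as a bit sequence and uses a run-length decomposition to get the closed form $\alpha=\frac{1-(-1)^m}{2}-2T$, with $T\equiv m_1$. It then tabulates $\alpha\bmod 4$ against the separate parities of the $\sigma_{01}$ and $\sigma_{11}$ counts. You instead carry a single invariant through the qubit pairs.

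What your route buys is a compact closed formula, $\mathbb{S}\,\Pi_\lambda=(-1)^{\binom{\lambda/2}{2}}\Pi_\lambda$, with no run decomposition. What the paper's route buys is the finer parity table, which it reuses in the mixed-state restriction, Lemma~\ref{lem:allowed_eig_Lambda_mixed}. That table is also recoverable from your identity: combine it with the trivial fact $\lambda/2\equiv m_0+m_1\pmod 2$, since each $r^x_j=1$ site shifts $\lambda/2$ by $\pm1$.
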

\begin{proof}
Consider the eigenbasis $\{\ket{\Psi_{\boldsymbol{\mu}}}\}$ of Lemma~\ref{lem:eigenbasis_bell}.
A Bell state $\ket{\sigma_{r^z r^x}}$ is antisymmetric under copy exchange if and only if $(r^z,r^x)=(1,1)$. Hence $\ket{\Psi_{\boldsymbol{\mu}}}$ is symmetric if and only if the number of $\ket{\sigma_{11}}$ factors is even.

Only the two Bell states 
$\ket{\sigma_{01}}$ and $\ket{\sigma_{11}}$ give nontrivial contributions to $\lambda=\sum_{j}\mu_j$, whereas 
$\ket{\sigma_{00}}$ and $\ket{\sigma_{10}}$ contribute zero at each site. We encode each eigenstate by a binary sequence $x_1,\dots,x_m$ where $x_j=0$ for $\ket{\sigma_{01}}$ and $x_j=1$ for $\ket{\sigma_{11}}$, listed in the order in which these Bell states appear among the $n$ sites. Let $m_0,m_1$ be the respective counts, with $m=m_0+m_1$. Define the alternating sum
\begin{equation}\label{eq:defT}
    T \coloneqq \sum_{j=1}^m (-1)^{j+1} x_j,
\end{equation}
which satisfies $T \equiv m_1 \pmod{2}$.

We now group the sequence into runs of equal bits, that is, consecutive sequences of (at least) one $0$ and $1$, respectively. A run of bit $b\in\{0,1\}$ starting at position $s$ with length $\ell$ contributes to $\lambda$ if and only if $\ell$ is odd, in which case the contribution is $c=2(-1)^{s+b+1}$. Let $A$ be the sum of $(-1)^{s+1}$ over all odd-length $0$-runs, and $B$ the sum of $(-1)^{s+1}$ over all odd-length $1$-runs. Then $\lambda = 2\alpha$ with $\alpha = A-B$.

Grouping the sum in Eq.~\eqref{eq:defT} over runs shows that only odd-length $1$-runs contribute, giving $T=B$. Similarly, $A =\sum_{j=1}^m (-1)^{j+1}(1-x_j) = \frac{1-(-1)^m}{2} - T$. Therefore
\begin{equation}\label{eq:alphaT}
    \alpha = A - B = \frac{1-(-1)^m}{2} - 2T.
\end{equation}

We can now determine $\alpha \bmod 4$ from the parities of $m_0$ and $m_1$:
\begin{itemize}
\item $m_1$ even, $m_0$ even: then $m$ even and $T$ even, so $\alpha\equiv 0 \pmod{4}$;
\item $m_1$ even, $m_0$ odd: then $m$ odd and $T$ even, so $\alpha\equiv 1 \pmod{4}$;
\item $m_1$ odd, $m_0$ even: then $m$ odd and $T$ odd, so $\alpha\equiv 3 \pmod{4}$;
\item $m_1$ odd, $m_0$ odd: then $m$ even and $T$ odd, so $\alpha\equiv 2 \pmod{4}$.
\end{itemize}
Since $\lambda=2\alpha$ and $\ket{\Psi_{\boldsymbol{\mu}}}$ is symmetric if and only if $m_1$ is even, the claim immediately follows.
\end{proof}

The consequence for Bell sampling is immediate: since $\ket{\psi}^{\otimes 2}$ always lies in the symmetric subspace, Lemma~\ref{lem:symmetry_eigenspace_Lambda} rules out support on any eigenvalue $\lambda \equiv 4$ or $6 \pmod{8}$. This narrows the candidates to $\lambda \equiv 0$ or $2 \pmod{8}$.

The second constraint comes from \emph{parity symmetry}: if $\ket{\psi}$ is an even state, then $P\ket{\psi} = \pm\ket{\psi}$. The next lemma describes the action of the parity operator on the Bell eigenbasis.

\begin{lem}[Parity reversal]  \label{lem:inverting_bell}
    Let $P^{(1)}= \prod_j Z_j\otimes I_j$ be the parity operator acting on the first copy. Then,
    \begin{equation}
        P^{(1)} \ket{\Psi_{\boldsymbol{\mu}}} \bra{\Psi_{\boldsymbol{\mu}}} P^{(1)} = \ket{\Psi_{-\boldsymbol{\mu}}} \bra{\Psi_{-\boldsymbol{\mu}}}. 
    \end{equation}
\end{lem}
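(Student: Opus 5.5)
Since $\ket{\Psi_{\boldsymbol{\mu}}}$ is, by Lemma~\ref{lem:eigenbasis_bell}, the (up to phase) unique simultaneous eigenvector of the commuting family $\{\gamma_j\otimes\gamma_j\}_{j=1}^{2n}$ with eigenvalues $\boldsymbol{\mu}$, the plan is to show that $P^{(1)}\ket{\Psi_{\boldsymbol{\mu}}}$ is again a simultaneous eigenvector of this family, now with eigenvalues $-\boldsymbol{\mu}$. The claim then follows because the joint eigenspaces are one-dimensional: the full set of $4^n$ sign patterns $\boldsymbol{\mu}$ labels an orthonormal basis of a $4^n$-dimensional space. So $P^{(1)}\ket{\Psi_{\boldsymbol{\mu}}}$ equals $\ket{\Psi_{-\boldsymbol{\mu}}}$ up to a phase, and the phase drops out in the projector $P^{(1)}\ket{\Psi_{\boldsymbol{\mu}}}\bra{\Psi_{\boldsymbol{\mu}}}P^{(1)}$, since $P^{(1)}$ is Hermitian.

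The key algebraic step is the anticommutation relation
\[
P^{(1)}(\gamma_j\otimes\gamma_j) = -(\gamma_j\otimes\gamma_j)P^{(1)} \quad \text{for all } j\in[2n].
\]
This holds because $P=\prod_k Z_k$ anticommutes with every single Majorana operator $\gamma_j$. Indeed, $\gamma_j$ contains exactly one $X$ or $Y$ factor (on qubit $\lceil j/2\rceil$) and otherwise only $Z$'s and identities, so $P\gamma_j=-\gamma_jP$; the second tensor factor is untouched by $P^{(1)}$. From this,
\[
(\gamma_j\otimes\gamma_j)P^{(1)}\ket{\Psi_{\boldsymbol{\mu}}} = -P^{(1)}(\gamma_j\otimes\gamma_j)\ket{\Psi_{\boldsymbol{\mu}}} = -\mu_j P^{(1)}\ket{\Psi_{\boldsymbol{\mu}}},
\]
which is exactly the eigenvalue flip required.

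There is essentially no obstacle here; the only point needing care is the uniqueness (nondegeneracy) of joint eigenvectors, and this follows from the counting argument above. Optionally, one can cross-check directly in the Bell picture. Acting with $Z$ on the first qubit of each pair sends $\ket{\sigma_{r^zr^x}}\mapsto$ (phase)$\,\ket{\sigma_{(r^z\oplus1)r^x}}$, i.e.\ it flips every $r^z_j$. Via Eq.~\eqref{eq:inverse_map}, this flips every $s_k$, and hence, by Eq.~\eqref{eq:s_def}, every $\mu_k$. This gives the same conclusion explicitly.
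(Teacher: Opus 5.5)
Your proof is correct, but your main argument differs from the paper's. The paper argues directly in the Bell basis: $Z\otimes I$ on each pair flips $r^z_j$ up to a phase. By Eq.~\eqref{eq:r_func}, flipping every $r^z_j$ while leaving every $r^x_j$ unchanged is the same as flipping every $s_k$, i.e.\ $\boldsymbol{\mu}\mapsto-\boldsymbol{\mu}$. This is exactly your optional cross-check.

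Your primary route uses only the operator identity $\{P,\gamma_j\}=0$. This makes $P^{(1)}$ anticommute with every $\gamma_j\otimes\gamma_j$, so it maps the joint $\boldsymbol{\mu}$-eigenspace onto the joint $(-\boldsymbol{\mu})$-eigenspace. You then add the nondegeneracy count: there are $4^n$ mutually orthogonal joint eigenspaces, each nonzero by Lemma~\ref{lem:eigenbasis_bell}, in a $4^n$-dimensional space. Hence each is one-dimensional, and the phase drops out of the projector because $P^{(1)}$ is Hermitian.

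The paper's computation is shorter once the explicit dictionary is in hand. It also tells you concretely which Bell string goes to which, with no counting step needed. Your argument avoids the Jordan--Wigner and Bell-string bookkeeping entirely. It also explains why the sign flip is uniform in $j$: parity is odd under every single Majorana operator. Finally, it carries over unchanged to the parity operator on the second copy, or to the factors of the Choi-space eigenbasis used in Lemma~\ref{lem:bell_weight_mixed}.
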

\begin{proof}
    $P^{(1)}$ acts on the Bell states by flipping $r^z_j \mapsto r^z_j\oplus 1$ (up to a sign) for all $j\in [n]$, which by Eq.~\eqref{eq:r_func} corresponds to $\boldsymbol{\mu} \mapsto -\boldsymbol{\mu}$.
\end{proof}

\begin{cor}  \label{cor:inverting_Pi}
    $P^{(1)} \Pi_\lambda P^{(1)} = \Pi_{-\lambda}$.
\end{cor}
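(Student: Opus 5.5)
The plan is to obtain Corollary~\ref{cor:inverting_Pi} directly from Lemma~\ref{lem:inverting_bell} by expanding the spectral projector in the Bell eigenbasis of Lemma~\ref{lem:eigenbasis_bell}. That lemma says $\{\ket{\Psi_{\boldsymbol{\mu}}}\}_{\boldsymbol{\mu}\in\{\pm1\}^{2n}}$ is an orthonormal eigenbasis of $\Lambda$ in which $\ket{\Psi_{\boldsymbol{\mu}}}$ has eigenvalue $\sum_j\mu_j$. Hence
\begin{equation}
    \Pi_\lambda \;=\; \sum_{\boldsymbol{\mu}\,:\,\sum_j\mu_j=\lambda} \ket{\Psi_{\boldsymbol{\mu}}}\bra{\Psi_{\boldsymbol{\mu}}}.
\end{equation}

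Next, conjugate by $P^{(1)}$ term by term. By Lemma~\ref{lem:inverting_bell}, each rank-one projector $\ket{\Psi_{\boldsymbol{\mu}}}\bra{\Psi_{\boldsymbol{\mu}}}$ is sent to $\ket{\Psi_{-\boldsymbol{\mu}}}\bra{\Psi_{-\boldsymbol{\mu}}}$. The relevant identity is that $Z\otimes I$ maps $\ket{\sigma_{r^z r^x}}$ to a phase times $\ket{\sigma_{(r^z\oplus1) r^x}}$, and flipping all $r^z_j$ flips every $s_k$ via Eq.~\eqref{eq:inverse_map}. Only the projector matters here, so the phases are irrelevant. The map $\boldsymbol{\mu}\mapsto-\boldsymbol{\mu}$ is a bijection from $\{\boldsymbol{\mu}:\sum_j\mu_j=\lambda\}$ onto $\{\boldsymbol{\mu}:\sum_j\mu_j=-\lambda\}$, so the reindexed sum is exactly $\Pi_{-\lambda}$.

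A basis-free cross-check is also available. Since $Z_j\otimes I$ anticommutes with $X_j\otimes X_j$ and $Y_j\otimes Y_j$ and commutes with $Z_j\otimes Z_j$, each $\gamma_k\otimes\gamma_k$, which contains exactly one $X$ or $Y$ factor at a single site, anticommutes with $P^{(1)}$. Therefore $P^{(1)}\Lambda P^{(1)}=-\Lambda$, and by functional calculus the spectral projector for eigenvalue $\lambda$ maps to the one for $-\lambda$, using $P^{(1)2}=I$. There is no real obstacle; the only point needing care is that the phases dropped in Lemma~\ref{lem:inverting_bell} do not affect rank-one projectors.
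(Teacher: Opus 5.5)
Your proposal is correct and follows the route the paper takes: the paper states the corollary as immediate from Lemma~\ref{lem:inverting_bell}, which amounts to expanding $\Pi_\lambda$ in the Bell eigenbasis of Lemma~\ref{lem:eigenbasis_bell}, conjugating each rank-one projector (so the phases drop out), and reindexing via the bijection $\boldsymbol{\mu}\mapsto-\boldsymbol{\mu}$. Your basis-free check, $P^{(1)}\Lambda P^{(1)}=-\Lambda$ followed by functional calculus with the unitary involution $P^{(1)}$, is also valid; it follows even more directly from the fact that $P$ anticommutes with every $\gamma_k$.
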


Combining the exchange-symmetry constraint (Lemma~\ref{lem:symmetry_eigenspace_Lambda}) with the parity-reversal constraint (Lemma~\ref{lem:inverting_bell}) pins down the allowed eigenvalues to a subset of the integers, as we show in the next subsection.

\subsection{Bell-state weights and eigenvalue restriction}
\label{sec:bell_weights}

We now define the two probability distributions that will play a central role throughout the paper, and use the symmetry constraints of the previous subsection to establish their key properties.
For a given pure state $\ket{\psi}$, the \emph{Bell-state weights} are
\begin{equation}
    p_\psi({\boldsymbol{\mu}}) \coloneqq  |\langle\Psi_{\boldsymbol{\mu}}|\psi\rangle^{\otimes 2}|^2 = 2^{-n}|\bra{\psi}\sigma_{\mathbf{r}}\ket{\psi^*}|^2,
\end{equation}
where $\mathbf{r}=\{\mathbf{r}^{z},\mathbf{r}^{x}\}$ is defined from $\boldsymbol{\mu}$ as in Eq.~\eqref{eq:r_func}. Moreover, the bridge spectral distribution is
\begin{equation}
    p_\psi(\lambda) \coloneqq \bra{\psi}^{\otimes 2} \Pi_\lambda \ket{\psi}^{\otimes 2} = \sum_{\boldsymbol{\mu}:\sum_j\mu_j=\lambda} p_\psi(\boldsymbol{\mu}).
\end{equation}

By Lemma~\ref{lem:eigenbasis_bell}, $p_\psi(\lambda)$ is the probability of observing eigenvalue $\lambda$ when performing a Bell measurement on two copies of $\ket{\psi}$. The parity constraint immediately yields the following reflection symmetry.

\begin{lem}[Symmetry of the Bell-state weights] \label{lem:bell_weight}
Let $\psi$ be an even pure state. For any $\boldsymbol{\mu} \in \{\pm1\}^{2n}$, it holds that
\begin{equation}
    p_\psi({\boldsymbol{\mu}})=p_\psi(-{\boldsymbol{\mu}}).
\end{equation}
\end{lem}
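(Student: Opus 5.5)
The plan is to combine the parity eigenvalue equation for $\ket{\psi}$ with Lemma~\ref{lem:inverting_bell}. Since $\psi$ is even, $P\ket{\psi}=\pm\ket{\psi}$. It follows that
\[
P^{(1)}\ket{\psi}^{\otimes 2}=(P\otimes I)\ket{\psi}\otimes\ket{\psi}=\pm\ket{\psi}^{\otimes 2}.
\]
Hence the rank-one projector $\ket{\psi}\!\bra{\psi}^{\otimes 2}$ is invariant under conjugation by $P^{(1)}$, the sign cancelling between bra and ket.

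Next, we write the Bell-state weight as an expectation value of a projector:
\[
p_\psi(\boldsymbol{\mu})=\bra{\psi}^{\otimes 2}\,\ket{\Psi_{\boldsymbol{\mu}}}\!\bra{\Psi_{\boldsymbol{\mu}}}\,\ket{\psi}^{\otimes 2}.
\]
Inserting $P^{(1)}P^{(1)}=I$ on both sides of the projector and using the invariance above gives
\[
p_\psi(\boldsymbol{\mu})=\bra{\psi}^{\otimes 2}P^{(1)}\ket{\Psi_{\boldsymbol{\mu}}}\!\bra{\Psi_{\boldsymbol{\mu}}}P^{(1)}\ket{\psi}^{\otimes 2}.
\]
By Lemma~\ref{lem:inverting_bell}, the conjugated projector is $\ket{\Psi_{-\boldsymbol{\mu}}}\!\bra{\Psi_{-\boldsymbol{\mu}}}$, so the right-hand side equals $p_\psi(-\boldsymbol{\mu})$.

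The only point that needs care is that Lemma~\ref{lem:inverting_bell} is stated at the level of projectors. This means the possible sign or phase picked up by $P^{(1)}$ acting on the Bell state $\ket{\Psi_{\boldsymbol{\mu}}}$ is irrelevant. It is also why we phrase everything with projectors rather than amplitudes: we never need the phase relating $P^{(1)}\ket{\Psi_{\boldsymbol{\mu}}}$ to $\ket{\Psi_{-\boldsymbol{\mu}}}$.

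Finally, summing over all $\boldsymbol{\mu}$ with $\sum_j\mu_j=\lambda$, or equivalently invoking Corollary~\ref{cor:inverting_Pi} directly, yields the symmetry of the bridge spectral distribution, $p_\psi(\lambda)=p_\psi(-\lambda)$.
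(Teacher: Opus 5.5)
Your proof is correct and takes essentially the same approach as the paper: it uses $P^{(1)}\ket{\psi}^{\otimes 2}=\pm\ket{\psi}^{\otimes 2}$ together with Lemma~\ref{lem:inverting_bell} to exchange $\boldsymbol{\mu}$ for $-\boldsymbol{\mu}$. Working with projectors rather than amplitudes makes explicit that the phase relating $P^{(1)}\ket{\Psi_{\boldsymbol{\mu}}}$ to $\ket{\Psi_{-\boldsymbol{\mu}}}$ plays no role, a point the paper's modulus-squared computation leaves implicit.
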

\begin{proof}
    Since $\ket{\psi}$ is an even pure state, we have $P^{(1)}\ket{\psi}^{\otimes 2}=\pm\ket{\psi}^{\otimes 2}$ (depending on whether $\ket{\psi}\in \mathcal{H}_n^+$ or $\ket{\psi}\in \mathcal{H}_n^-$). Therefore,
    \begin{equation}
        p_\psi(-{\boldsymbol{\mu}})= \left\lvert\bra{\Psi_{-\boldsymbol{\mu}}}\psi\rangle^{\otimes 2}\right\rvert^2=\left\lvert\bra{\Psi_{\boldsymbol{\mu}}} P^{(1)}\ket{\psi}^{\otimes 2}\right\rvert^2=\left\lvert\bra{\Psi_{\boldsymbol{\mu}}}\psi\rangle^{\otimes 2}\right\rvert^2=  p_\psi(\boldsymbol{\mu}),
    \end{equation}
    where we have used Lemma~\ref{lem:inverting_bell}. This concludes the proof.
\end{proof}

\begin{cor} \label{cor:pm_Lambda}
Let $\psi$ be an even pure state. It holds that $p_\psi(\lambda)=p_\psi(-\lambda)$.
\end{cor}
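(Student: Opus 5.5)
The plan is to obtain the corollary by summing the Bell-state weight symmetry of Lemma~\ref{lem:bell_weight} over each eigenspace. By definition,
\begin{equation*}
    p_\psi(-\lambda) = \sum_{\boldsymbol{\mu}:\sum_j\mu_j=-\lambda} p_\psi(\boldsymbol{\mu}).
\end{equation*}
First reindex this sum via the involution $\boldsymbol{\mu}\mapsto-\boldsymbol{\mu}$ on $\{\pm1\}^{2n}$. This map is a bijection from $\{\boldsymbol{\mu}:\sum_j\mu_j=\lambda\}$ onto $\{\boldsymbol{\mu}:\sum_j\mu_j=-\lambda\}$, since negating every entry negates the sum. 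We therefore get
\begin{equation*}
    p_\psi(-\lambda)=\sum_{\boldsymbol{\mu}:\sum_j\mu_j=\lambda}p_\psi(-\boldsymbol{\mu}).
\end{equation*}

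Next, apply Lemma~\ref{lem:bell_weight} termwise, which uses that $\psi$ is even. This replaces each $p_\psi(-\boldsymbol{\mu})$ by $p_\psi(\boldsymbol{\mu})$, and the resulting sum is exactly $p_\psi(\lambda)$.

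As a cross-check, there is an equivalent operator-level route through Corollary~\ref{cor:inverting_Pi}. Since $P^{(1)}\ket{\psi}^{\otimes 2}=\pm\ket{\psi}^{\otimes 2}$,
\begin{equation*}
    p_\psi(-\lambda)=\bra{\psi}^{\otimes 2}P^{(1)}\Pi_\lambda P^{(1)}\ket{\psi}^{\otimes 2}=(\pm1)^2\,p_\psi(\lambda).
\end{equation*}

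I do not expect a genuine obstacle. The only point to check is that the reindexing bijection preserves the eigenvalue constraint up to sign, and this is immediate.
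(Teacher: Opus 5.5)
Your proof is correct and is essentially the paper's argument. The paper states this corollary as an immediate consequence of Lemma~\ref{lem:bell_weight}, i.e.\ summing $p_\psi(\boldsymbol{\mu})=p_\psi(-\boldsymbol{\mu})$ over the fiber $\sum_j\mu_j=\lambda$ via the bijection $\boldsymbol{\mu}\mapsto-\boldsymbol{\mu}$. Your operator-level cross-check through Corollary~\ref{cor:inverting_Pi} is also valid; it is the same parity argument, phrased at the level of the projectors $\Pi_\lambda$.
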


We now combine the two constraints --- exchange symmetry (Lemma~\ref{lem:symmetry_eigenspace_Lambda}) and parity symmetry (Lemma~\ref{lem:bell_weight}) --- to pin down the set of allowed eigenvalues to $8\mathbb{Z}$.

\begin{lem}[Eigenvalue restriction] \label{lem:allowed_eig_Lambda}
 Let $\psi$ be an even pure state. If $p_\psi(\lambda)\neq 0$, then $\lambda =8k$ for some integer $k$.
\end{lem}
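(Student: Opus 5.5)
By Lemma~\ref{lem:symmetry_eigenspace_Lambda}, any eigenvalue with $\lambda\equiv 4$ or $6\pmod 8$ has $V_\lambda\subseteq\mathrm{Asym}_2(\mathcal{H}_n)$. Since $\ket{\psi}^{\otimes 2}\in\mathrm{Sym}_2(\mathcal{H}_n)$, we get $p_\psi(\lambda)=0$ for these $\lambda$. So it suffices to exclude $\lambda\equiv 2\pmod 8$. Eigenvalues are even, so this leaves only $\lambda\equiv 0\pmod 8$.

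To exclude $\lambda\equiv 2\pmod 8$, I will combine the reflection symmetry with the exchange constraint. Suppose $\lambda\equiv 2\pmod 8$. Then $-\lambda\equiv 6\pmod 8$, so Lemma~\ref{lem:symmetry_eigenspace_Lambda} gives $V_{-\lambda}\subseteq\mathrm{Asym}_2(\mathcal{H}_n)$, and hence $p_\psi(-\lambda)=0$. Corollary~\ref{cor:pm_Lambda} gives $p_\psi(\lambda)=p_\psi(-\lambda)$ for even $\psi$, so $p_\psi(\lambda)=0$ as well.

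The same argument could be phrased directly. By Corollary~\ref{cor:inverting_Pi} and $P^{(1)}\ket{\psi}^{\otimes2}=\pm\ket{\psi}^{\otimes2}$, we have $\Pi_\lambda\ket{\psi}^{\otimes 2}=\pm P^{(1)}\Pi_{-\lambda}\ket{\psi}^{\otimes 2}$, and the right-hand side vanishes because $\Pi_{-\lambda}$ projects onto an antisymmetric subspace.

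Combining the two cases, the only residues mod $8$ that can carry weight are $\lambda\equiv 0$, so $\lambda=8k$. There is no real obstacle. The one point to check is the residue bookkeeping $-2\equiv 6\pmod 8$, which must also hold for negative $\lambda$. This is fine because Lemma~\ref{lem:symmetry_eigenspace_Lambda} is stated for all integer residues; in its proof $\alpha\bmod 4$ is computed for any sign.
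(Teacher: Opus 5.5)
Your proposal is correct and matches the paper's proof: Lemma~\ref{lem:symmetry_eigenspace_Lambda} restricts the support to $\lambda\equiv 0,2 \pmod 8$, and the reflection symmetry $p_\psi(\lambda)=p_\psi(-\lambda)$ of Corollary~\ref{cor:pm_Lambda} excludes $\lambda\equiv 2\pmod 8$, since then $-\lambda\equiv 6\pmod 8$. Your alternative phrasing through Corollary~\ref{cor:inverting_Pi} is the same argument written at the level of projectors.
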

\begin{proof}
     Since $\ket{\psi}^{\otimes 2}$ lies in the symmetric subspace, Lemma~\ref{lem:symmetry_eigenspace_Lambda} implies that any eigenvalue $\lambda$ for which $p_\psi(\lambda)\neq 0$ must be of the form $\lambda \in \{8k,8k+2\}$ for some integer $k$.

     On the other hand, by Corollary~\ref{cor:pm_Lambda}, if $p_\psi(\lambda) \neq 0$, then $p_\psi(-\lambda) \neq 0$ as well. Hence, $-\lambda$ must also belong to the same set of admissible eigenvalues, i.e., $-\lambda \in \{8k,8k+2\}$. The only values of $\lambda$ consistent with both conditions are those of the form $\lambda = 8k$.
\end{proof}

Combining Lemma~\ref{lem:allowed_eig_Lambda} with the spectrum bound $|\lambda|\leq 2n$ identifies the support of the bridge spectral distribution as
\begin{equation}\label{eq:def_calA}
    \mathcal{A} \;\coloneqq\; 8\mathbb{Z}\cap[-2n,\,2n].
\end{equation}
 In particular, $|\mathcal{A}| = 2\lfloor n/4\rfloor+1 $.

\subsection{The second-order matchgate commutant}
\label{sec:commutant}

Having established the spectral structure of $\Lambda$ and the constraints on the bridge spectral distribution, we now connect these results to the second-order commutant of the matchgate group. The key observation is that the spectral projectors $\{\Pi_\lambda\}$ provide an orthogonal basis for the second-order matchgate commutant~\cite{sierant2026theorymatchgatecommutant,BrDi26}. This allows us to express the matchgate twirling channel in closed form where the resulting coefficients are precisely given by the weights $\{p_\psi(\lambda)\}$, which is accessible by Bell sampling. These weights capture the entire second moment of the matchgate orbit and therefore form a complete set of invariants at second order. 

\begin{lem}[Basis for the second-order matchgate commutant,~\cite{sierant2026theorymatchgatecommutant,BrDi26}] \label{lem:projector_ortho_basis}
    The operators $\Pi_\lambda / \sqrt{\binom{2n}{n+\lambda/2}}$ form an orthonormal basis of $\mathrm{Com}_2(M_n)$.
\end{lem}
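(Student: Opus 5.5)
Orthonormality is the easy half. The $\Pi_\lambda$ are mutually orthogonal projectors, so $\langle \Pi_\lambda,\Pi_{\lambda'}\rangle_{\mathrm{HS}}=\delta_{\lambda\lambda'}\Tr\Pi_\lambda=\delta_{\lambda\lambda'}\dim V_\lambda=\delta_{\lambda\lambda'}\binom{2n}{n+\lambda/2}$, using the dimension count of Sec.~\ref{sec:spectrum_bell}. They also lie in $\mathrm{Com}_2(M_n)$: $\Lambda$ commutes with $U^{\otimes 2}$ for all $U\in M_n$, and each $\Pi_\lambda$ is a polynomial in $\Lambda$ (Lagrange interpolation on the finite spectrum). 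The real content is therefore spanning: every $O$ commuting with all $U_Q^{\otimes 2}$ must be a function of $\Lambda$, so that $\dim\mathrm{Com}_2(M_n)=2n+1$.

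To prove spanning, I expand $O$ in the two-copy Majorana basis, $O=\sum_{A,B}c_{A,B}\,\gamma_A\otimes\gamma_B$. Under $U_Q^{\otimes 2}$ with $Q$ diagonal (entries $\pm1$), $\gamma_A\otimes\gamma_B$ picks up the sign $\prod_{j\in A\triangle B}\epsilon_j$. Such sign-flip matchgates exist for every pattern, up to the phase ambiguity, which cancels in $U\otimes U$ conjugation. Invariance under all of them forces $A=B$, so $O=\sum_A c_A\,\gamma_A\otimes\gamma_A$. Next, I use permutation matrices $Q$. These permute the Majoranas, and $\gamma_A\otimes\gamma_A$ is insensitive to the ordering signs because those signs square to one. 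Invariance then forces $c_A$ to depend only on $|A|$, giving $O=\sum_{k=0}^{2n}c_k\,\Gamma_k$ with $\Gamma_k\coloneqq\sum_{|A|=k}\gamma_A\otimes\gamma_A$. This already shows $\dim\mathrm{Com}_2(M_n)\le 2n+1$, the number of distinct $\lambda$.

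Each $\Gamma_k$ is itself a function of $\Lambda$. Writing $\gamma_A\otimes\gamma_A=\pm\prod_{j\in A}(\gamma_j\otimes\gamma_j)$, with a sign depending only on $|A|$ (coming from $i^{|A|(|A|-1)}$ and reordering), I get that $\Gamma_k$ is, up to a $k$-dependent sign, the $k$-th elementary symmetric polynomial in the commuting involutions $\gamma_j\otimes\gamma_j$. On the joint eigenvector $\ket{\Psi_{\boldsymbol\mu}}$ of Lemma~\ref{lem:eigenbasis_bell}, this polynomial evaluates to $e_k(\boldsymbol\mu)$, which depends only on $\sum_j\mu_j=\lambda$ (it is a Krawtchouk polynomial in the number of $-1$'s). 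Hence $\Gamma_k\in\mathrm{span}\{\Pi_\lambda\}$. Combined with the reverse inclusion from the first paragraph and linear independence of the $\Pi_\lambda$, this gives equality, and the normalized projectors form an orthonormal basis.

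The main obstacle is the sign bookkeeping in the middle step. I must check that $\gamma_A\otimes\gamma_A$ equals $\pm\prod_{j\in A}\gamma_j\otimes\gamma_j$ with a sign uniform in $|A|$. I also need to confirm that all of $O(2n)$, including reflections (determinant $-1$), is realized in $M_n$ as the paper defines it (via $X_n$), since the sign-flip argument needs single-index flips. If only $SO(2n)$ were available, I would instead use flips of pairs of indices together with a $2n$-cycle permutation, and check that this still forces $A=B$ for $A\triangle B\neq[2n]$. The remaining case $A\triangle B=[2n]$ would need to be killed by a further rotation argument.
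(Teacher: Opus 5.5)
Your proof is correct, and it takes a genuinely different route from the paper. The paper gives no argument and imports the lemma from the general theory of matchgate commutants in \cite{sierant2026theorymatchgatecommutant,BrDi26}, which covers $\mathrm{Com}_k(M_n)$ for every $k$. Your route is elementary and self-contained:
\begin{itemize}
\item diagonal sign flips force $A=B$ in the Majorana expansion;
\item permutations force $c_A$ to depend only on $|A|$;
\item together these give $\dim\mathrm{Com}_2(M_n)\le 2n+1$.
\end{itemize}
The $2n+1$ nonzero, mutually orthogonal $\Pi_\lambda$ already lie in the commutant, so equality follows immediately. Your last step, writing $\Gamma_k$ through Krawtchouk polynomials in $\Lambda$, is therefore optional. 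It is also easier than you feared. In the ordinary (non-graded) tensor product, the mixed-product property gives
\[
\gamma_A\otimes\gamma_A=i^{|A|(|A|-1)}\prod_{j\in A}(\gamma_j\otimes\gamma_j)=(-1)^{|A|(|A|-1)/2}\prod_{j\in A}(\gamma_j\otimes\gamma_j),
\]
with no reordering sign.

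Your argument uses only signed permutation matrices. By sandwiching $\mathrm{span}\{\Pi_\lambda\}\subseteq\mathrm{Com}_2(M_n)\subseteq\mathrm{Com}_2(G)$ for that finite subgroup $G$, you actually prove the stronger statement that $G$ already has the same second-order commutant as $M_n$. The citation route does not make this explicit. In exchange, the cited theory handles all $k$ at once, which your counting argument is not designed to do.

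On the point you left open: reflections are available. The paper defines $M_n=\{U_Q:Q\in O(2n)\}$. Concretely, $X_n$ commutes with $\gamma_1,\dots,\gamma_{2n-1}$ and anticommutes with $\gamma_{2n}$, so it implements $Q=\mathrm{diag}(1,\dots,1,-1)$. Conjugating by permutations then gives every single flip, and your main line goes through.

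Your $SO(2n)$ fallback, however, could not be completed, and you should drop it. For parity-preserving $U$, the operator $P\otimes I\propto\gamma_{[2n]}\otimes I$ commutes with $U^{\otimes 2}$. Yet $P\otimes I$ anticommutes with $\Lambda$: it maps $V_\lambda$ onto $V_{-\lambda}$ (Corollary~\ref{cor:inverting_Pi}). So it is not in $\mathrm{span}\{\Pi_\lambda\}$. The terms with $A\triangle B=[2n]$ are genuinely $SO(2n)$-invariant, so no further rotation argument can remove them. The lemma holds only because $M_n$ contains the $\det Q=-1$ component; the reflection is essential, not a technicality.
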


With an orthonormal basis of $\mathrm{Com}_2(M_n)$ in hand, we can immediately write down the matchgate twirling channel $\mathcal{T}_{M_n}^{(2)}$ in closed form using Eq.~\eqref{eq:twirling_project_commutant}. 

\begin{lem}[Matchgate twirling channel]\label{lem:twirling_Pi_decomp}
    Let $O\in \mathcal{B}(\mathcal{H}^{\otimes 2})$. It holds that
    \begin{equation} \label{eq:a}
        \mathcal{T}_{M_n}^{(2)} (O)  = \sum_{\lambda\in 2\mathbb{Z}} \Tr[\Pi_\lambda O] \frac{\Pi_\lambda}{\binom{2n}{n+\lambda/2}}.
    \end{equation}
    In addition, if $O\in \mathcal{D}(\mathcal{H}^{\otimes 2})$, then $\Tr[\Pi_\lambda O]\geq0$ and $\sum_{\lambda\in 2\mathbb{Z}} \Tr[\Pi_\lambda O]=1$.
\end{lem}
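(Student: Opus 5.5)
The plan is to obtain Eq.~\eqref{eq:a} by direct substitution into the general projection formula Eq.~\eqref{eq:twirling_project_commutant}. The key input is Lemma~\ref{lem:projector_ortho_basis}, which supplies the orthonormal basis $P_\lambda \coloneqq \Pi_\lambda/\sqrt{\binom{2n}{n+\lambda/2}}$ of $\mathrm{Com}_2(M_n)$. Here $\lambda$ ranges over the spectrum $\{-2n,-2n+2,\dots,2n\}\subset 2\mathbb{Z}$, and the coefficient is understood to vanish whenever $|\lambda|>2n$.

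Before substituting, I will check the normalization directly. The $\Pi_\lambda$ are Hermitian, mutually orthogonal spectral projectors. Hence $\langle \Pi_\lambda,\Pi_{\lambda'}\rangle_{\mathrm{HS}}=\delta_{\lambda\lambda'}\Tr\Pi_\lambda=\delta_{\lambda\lambda'}\dim V_\lambda$, and $\dim V_\lambda=\binom{2n}{n+\lambda/2}$ by the counting in Sec.~\ref{sec:spectrum_bell}. This agrees with the normalization used in Lemma~\ref{lem:projector_ortho_basis}.

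Substituting into Eq.~\eqref{eq:twirling_project_commutant} gives $\mathcal{T}^{(2)}_{M_n}(O)=\sum_\lambda \Tr(P_\lambda O)P_\lambda$. Since each $P_\lambda$ is Hermitian, $P_\lambda^\dagger=P_\lambda$, so the Hilbert--Schmidt coefficient is simply $\Tr(P_\lambda O)$. The two square-root normalizations multiply to $\binom{2n}{n+\lambda/2}^{-1}$, which yields Eq.~\eqref{eq:a}.

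For the second claim, let $O$ be a density operator, so $O\ge 0$ and $\Tr O=1$. Each $\Pi_\lambda$ is an orthogonal projector, so $\Pi_\lambda\ge 0$ and $\Tr[\Pi_\lambda O]=\Tr[\Pi_\lambda O\Pi_\lambda]\ge 0$. The projectors resolve the identity, $\sum_\lambda \Pi_\lambda=I$, because the eigenspaces $V_\lambda$ of the Hermitian operator $\Lambda$ span $\mathcal{H}_n^{\otimes 2}$ (Lemma~\ref{lem:eigenbasis_bell}). Therefore $\sum_\lambda\Tr[\Pi_\lambda O]=\Tr O=1$.

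There is no real obstacle, since all the substance is in Lemma~\ref{lem:projector_ortho_basis}, which the paper takes from the cited works. The only point needing care is bookkeeping of the index set: the sum over $2\mathbb{Z}$ is effectively finite, because $\Pi_\lambda=0$ outside the spectrum of $\Lambda$.
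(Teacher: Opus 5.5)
Your proposal is correct and follows the same route as the paper: substitute the orthonormal basis $\Pi_\lambda/\sqrt{\binom{2n}{n+\lambda/2}}$ of Lemma~\ref{lem:projector_ortho_basis} into Eq.~\eqref{eq:twirling_project_commutant}, then use $\Pi_\lambda\geq 0$ and $\sum_\lambda \Pi_\lambda = I$ to get the second claim. Your explicit normalization check via $\Tr\Pi_\lambda=\dim V_\lambda$ and your remark that the sum over $2\mathbb{Z}$ effectively runs only over $|\lambda|\leq 2n$ are harmless additions that the paper leaves implicit.
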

\begin{proof}
    Eq.~\eqref{eq:a} follows from Eq.~\eqref{eq:twirling_project_commutant}, by expanding in the orthonormal basis of Lemma~\ref{lem:projector_ortho_basis}. Since each $\Pi_\lambda$ is a projector, $\Tr[\Pi_\lambda O]\ge0$ for any positive operator $O$ and $\sum_{\lambda\in 2\mathbb{Z}} \Tr[\Pi_\lambda O]=\Tr[O]=1$.
\end{proof}

This representation has a simple but powerful consequence for distances: since the commutant is spanned by the orthogonal projectors $\{\Pi_\lambda\}$, the trace distance between positive operators in the commutant reduces to the total variation distance between their coefficients.

\begin{cor}[Trace distance in the commutant] \label{cor:trace_distance_comm}
    Let $O_1, O_2 \in \mathrm{Com}_2(M_n)$ with $O_1,O_2\in \mathcal{D}(\mathcal{H}^{\otimes 2})$. Then,
    \begin{equation}
        D(O_1,O_2) = \mathrm{TV}\!\left(\{\Tr[\Pi_\lambda O_1]\}_\lambda,\,\{\Tr[\Pi_\lambda O_2]\}_\lambda\right).
    \end{equation}
\end{cor}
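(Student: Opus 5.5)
The plan is to write both operators in the orthogonal projector basis of the commutant and read off the trace norm of their difference. Since $O_1,O_2\in\mathrm{Com}_2(M_n)$, Lemma~\ref{lem:projector_ortho_basis} lets us expand each in the basis $\{\Pi_\lambda\}$. A quick way to get the coefficients is to note that $\mathcal{T}^{(2)}_{M_n}$ acts as the identity on the commutant, because it is the orthogonal projector onto it. Lemma~\ref{lem:twirling_Pi_decomp} then gives
\begin{equation}
    O_i \;=\; \mathcal{T}^{(2)}_{M_n}(O_i) \;=\; \sum_{\lambda} q_i(\lambda)\,\frac{\Pi_\lambda}{\binom{2n}{n+\lambda/2}},
    \qquad q_i(\lambda)\coloneqq \Tr[\Pi_\lambda O_i].
\end{equation}
The same lemma ensures that each $q_i$ is a probability distribution, because $O_i$ is a density operator.

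Next I take the difference,
\begin{equation}
    O_1-O_2 \;=\; \sum_\lambda \bigl(q_1(\lambda)-q_2(\lambda)\bigr)\,\frac{\Pi_\lambda}{\dim V_\lambda}.
\end{equation}
The $\Pi_\lambda$ are mutually orthogonal projectors whose sum is the identity, since they are the spectral projectors of the Hermitian operator $\Lambda$. Hence this expression is already a spectral decomposition. The operator $O_1-O_2$ has eigenvalue $(q_1(\lambda)-q_2(\lambda))/\dim V_\lambda$ on $V_\lambda$, with multiplicity $\dim V_\lambda$.

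The trace norm is then
\begin{equation}
    \|O_1-O_2\|_1 \;=\; \sum_\lambda \dim V_\lambda\cdot\frac{|q_1(\lambda)-q_2(\lambda)|}{\dim V_\lambda} \;=\; \sum_\lambda |q_1(\lambda)-q_2(\lambda)|.
\end{equation}
Halving both sides gives $D(O_1,O_2)=\mathrm{TV}(q_1,q_2)$, which is the claim.

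I expect no real obstacle. The one point needing care is justifying that $\mathcal{T}^{(2)}_{M_n}$ fixes elements of the commutant. This holds because each term $U^{\otimes 2}O(U^\dagger)^{\otimes 2}$ in the twirl equals $O$ for $O\in\mathrm{Com}_2(M_n)$, and the Haar measure is normalized. It also holds by Eq.~\eqref{eq:twirling_project_commutant}, since that map is an orthogonal projection onto the commutant.
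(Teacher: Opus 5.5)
Your proposal is correct and follows essentially the same route the paper indicates. You expand both operators in the orthogonal projector basis $\{\Pi_\lambda\}$ of the commutant, with coefficients $\Tr[\Pi_\lambda O_i]/\dim V_\lambda$, note that $O_1-O_2$ is then already in spectral form, and read off $\|O_1-O_2\|_1=\sum_\lambda|\Tr[\Pi_\lambda O_1]-\Tr[\Pi_\lambda O_2]|$.
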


Specializing Lemma~\ref{lem:twirling_Pi_decomp} to an even pure state gives the following explicit form of the matchgate-twirled two-copy state.

\begin{lem}[Twirled two-copy state] \label{lem:two_copy_ave_matchgate}
    Let $\psi$ be an even pure state. It holds that
    \begin{equation}
        \mathcal{T}_{M_n}^{(2)} (\psi^{\otimes 2}) = \sum_{\lambda\in \mathcal{A}} p_\psi(\lambda) \frac{\Pi_\lambda}{\binom{2n}{n + \lambda/2}}.
    \end{equation}
\end{lem}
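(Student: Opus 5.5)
The plan is to specialize Lemma~\ref{lem:twirling_Pi_decomp} to $O=\psi^{\otimes 2}$ and then use Lemma~\ref{lem:allowed_eig_Lambda} to shrink the sum to $\lambda\in\mathcal{A}$.

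First, set $O=\psi^{\otimes 2}=\ket{\psi}\!\bra{\psi}^{\otimes 2}$. This is a density operator on $\mathcal{H}_n^{\otimes 2}$, so Eq.~\eqref{eq:a} gives
\begin{equation*}
\mathcal{T}_{M_n}^{(2)}(\psi^{\otimes 2})=\sum_{\lambda\in 2\mathbb{Z}}\Tr[\Pi_\lambda\psi^{\otimes 2}]\,\frac{\Pi_\lambda}{\binom{2n}{n+\lambda/2}}.
\end{equation*}
Here the sum effectively runs over $\lambda\in\{-2n,\dots,2n\}$, since $\Pi_\lambda=0$ outside the spectrum. By cyclicity of the trace, $\Tr[\Pi_\lambda\psi^{\otimes 2}]=\bra{\psi}^{\otimes 2}\Pi_\lambda\ket{\psi}^{\otimes 2}$, which is exactly the bridge spectral distribution $p_\psi(\lambda)$.

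Second, I restrict the support. Since $\psi$ is an even pure state, Lemma~\ref{lem:allowed_eig_Lambda} shows that $p_\psi(\lambda)\neq0$ only when $\lambda\in 8\mathbb{Z}$. Combined with $|\lambda|\le 2n$, every nonzero term has $\lambda\in\mathcal{A}$ as defined in Eq.~\eqref{eq:def_calA}. Dropping the vanishing terms then yields the claimed formula.

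I don't anticipate a real obstacle. The only point needing care is that the evenness hypothesis is what licenses Lemma~\ref{lem:allowed_eig_Lambda}, through the parity symmetry of Corollary~\ref{cor:pm_Lambda}; without it, sectors with $\lambda\equiv 2\pmod 8$ could survive. All the substantive input, namely the orthonormal commutant basis of Lemma~\ref{lem:projector_ortho_basis}, is already contained in Lemma~\ref{lem:twirling_Pi_decomp}.
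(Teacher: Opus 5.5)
Your proposal is correct and follows the paper's own proof. The paper likewise applies Lemma~\ref{lem:twirling_Pi_decomp} with $O=\psi^{\otimes 2}$, reads off the coefficients as $p_\psi(\lambda)$, and then uses Lemma~\ref{lem:allowed_eig_Lambda} to restrict the sum to $\lambda\in\mathcal{A}$.
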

\begin{proof}
    Applying Lemma~\ref{lem:twirling_Pi_decomp} with $O = \psi^{\otimes 2}$ gives
    \begin{equation}
        \mathcal{T}_{M_n}^{(2)} (\psi^{\otimes 2}) = \sum_\lambda p_\psi(\lambda) \frac{\Pi_\lambda}{\binom{2n}{n+\lambda/2}}.
    \end{equation}
    Lemma~\ref{lem:allowed_eig_Lambda} restricts the sum to $\lambda \in \mathcal{A}$.
\end{proof}

Lemma~\ref{lem:two_copy_ave_matchgate} shows that the entire matchgate-averaged second moment of a state is determined solely by the weights $\{p_\psi(\lambda)\}$. A first consequence is that the bridge spectral distribution is invariant under matchgate unitaries:

\begin{lem}[Matchgate invariance of the bridge spectral distribution] \label{lem:invariance_weight}
    Let $\psi$ be an even pure state, and let $\ket{\phi} = U\ket{\psi}$ for a matchgate $U$. Then, for any eigenvalue $\lambda$ of $\Lambda$,
    \begin{equation}
        p_\phi(\lambda) = p_\psi(\lambda).
    \end{equation}
\end{lem}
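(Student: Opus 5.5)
The plan is to use the commutation of $\Lambda$ with two-fold matchgate tensor powers, recalled in the preliminaries: $[\Lambda, U^{\otimes 2}] = 0$ for every $U\in M_n$. Since $\Lambda$ is Hermitian, $U^{\otimes 2}$ preserves each eigenspace $V_\lambda$. It therefore commutes with each spectral projector $\Pi_\lambda$, because $\Pi_\lambda$ is a polynomial in $\Lambda$ (Lagrange interpolation over the finite spectrum $\{-2n,\dots,2n\}$). Equivalently, by Lemma~\ref{lem:projector_ortho_basis}, each $\Pi_\lambda$ lies in $\mathrm{Com}_2(M_n)$, which gives $(U^\dagger)^{\otimes 2}\Pi_\lambda U^{\otimes 2} = \Pi_\lambda$ directly.

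Given this, the computation is one line. With $\ket{\phi}^{\otimes 2} = U^{\otimes 2}\ket{\psi}^{\otimes 2}$,
\begin{equation}
p_\phi(\lambda) = \bra{\psi}^{\otimes 2}(U^\dagger)^{\otimes 2}\,\Pi_\lambda\, U^{\otimes 2}\ket{\psi}^{\otimes 2} = \bra{\psi}^{\otimes 2}\Pi_\lambda\ket{\psi}^{\otimes 2} = p_\psi(\lambda).
\end{equation}
Two minor points need attention:
\begin{itemize}
\item A matchgate is defined only up to a global phase. This phase cancels between bra and ket, so it causes no problem.
\item $\phi$ must be even for $p_\phi$ to fall in the setting considered. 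Matchgates are parity-preserving or reversing (they map $P$ to $\pm P$, since $P\propto \gamma_1\cdots\gamma_{2n}$ maps to $\det(Q)\,P$), so $\phi$ is even. In any case the identity above holds for any pure state.
\end{itemize}

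An alternative route uses Lemma~\ref{lem:two_copy_ave_matchgate}. The twirl is invariant under replacing $\psi$ by $U\psi U^\dagger$, by left-invariance of the Haar measure on $M_n$. The coefficients $p_\psi(\lambda)$ are then uniquely determined by the twirled state, because the $\Pi_\lambda$ are linearly independent (they are orthogonal projectors). So the two distributions coincide.

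There is essentially no obstacle. The only step needing care is justifying that $U^{\otimes 2}$ commutes with $\Pi_\lambda$ and not just with $\Lambda$. This is standard spectral calculus, and I would state it explicitly.
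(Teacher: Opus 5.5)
Your proof is correct and is essentially the paper's argument, which derives the invariance directly from the definition of $p_\psi$ and the fact that each $\Pi_\lambda$ lies in $\mathrm{Com}_2(M_n)$ (Lemma~\ref{lem:projector_ortho_basis}). Deriving $[U^{\otimes 2},\Pi_\lambda]=0$ from $[\Lambda,U^{\otimes 2}]=0$ by spectral calculus, and checking that $U\ket{\psi}$ is still even, are sound explicit additions that the paper leaves implicit.
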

\begin{proof}
This follows from the definition of $p_\psi$ and the fact that $\Pi_\lambda$ is in the second-order matchgate commutant (Lemma~\ref{lem:projector_ortho_basis}).
\end{proof}

A second consequence is an explicit characterization of Gaussian states: they are precisely the states for which all bridge spectral distribution is concentrated at $\lambda = 0$.

\begin{lem}[Twirl of Gaussian states]\label{lem:twirl_vacuum}
Let $\ket{\mathbf{0}}$ denote the fermionic vacuum. Then
\begin{equation}
\mathcal{T}_{M_n}^{(2)} \left(\ketbra{\mathbf{0}}{\mathbf{0}}^{\otimes 2}\right) =
\frac{\Pi_0}{\binom{2n}{n}} .
\end{equation}
Conversely, let $\psi$ be an even pure state. If
$\mathcal{T}_{M_n}^{(2)}\left(\psi^{\otimes 2}\right) =
\Pi_0/\binom{2n}{n}$,
then $\psi$ is an FGS.
\end{lem}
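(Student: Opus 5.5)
Both directions reduce to Lemma~\ref{lem:two_copy_ave_matchgate} together with the characterization $\Lambda\ket{\psi}^{\otimes 2}=0$ of pure FGSs (Eq.~\eqref{eq:Lambda_condition}). The plan is to show that the vacuum's bridge spectral distribution is a point mass at $\lambda=0$, and conversely that a twirl equal to $\Pi_0/\binom{2n}{n}$ forces such a point mass.

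\emph{Forward direction.} The vacuum $\ket{\mathbf 0}$ is a pure FGS; take $Q=I$ in the definition. Equation~\eqref{eq:Lambda_condition} then gives $\Lambda\ket{\mathbf 0}^{\otimes 2}=0$, so $\ket{\mathbf 0}^{\otimes 2}\in V_0$. Hence $p_{\mathbf 0}(\lambda)=\delta_{\lambda,0}$, and Lemma~\ref{lem:two_copy_ave_matchgate} gives the claimed formula immediately.

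As a sanity check independent of the cited characterization, one can verify $\Lambda\ket{\mathbf 0}^{\otimes 2}=0$ directly. Write $\gamma_{2j-1}\otimes\gamma_{2j-1}+\gamma_{2j}\otimes\gamma_{2j}=S_j\otimes S_j\,(X_j\otimes X_j+Y_j\otimes Y_j)$, where $S_j=\prod_{k<j}Z_k$ is the Jordan--Wigner string. The factor $X\otimes X+Y\otimes Y$ annihilates $\ket{00}$ on site $j$. This shows each pair of terms vanishes on $\ket{\mathbf 0}^{\otimes 2}$.

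\emph{Converse.} Suppose $\mathcal{T}^{(2)}_{M_n}(\psi^{\otimes 2})=\Pi_0/\binom{2n}{n}$. Comparing with Lemma~\ref{lem:two_copy_ave_matchgate}, and using that the $\Pi_\lambda$ are mutually orthogonal and nonzero (Lemma~\ref{lem:projector_ortho_basis}), we can read off coefficients by taking $\Tr[\Pi_\lambda\,\cdot\,]$ of both sides. This gives $p_\psi(\lambda)=\delta_{\lambda,0}$, so $\bra{\psi}^{\otimes 2}\Pi_0\ket{\psi}^{\otimes 2}=1$. Since $\Pi_0$ is a projector and $\ket{\psi}^{\otimes 2}$ is a unit vector, $\|(I-\Pi_0)\ket{\psi}^{\otimes 2}\|^2=0$, so $\ket{\psi}^{\otimes 2}\in V_0$. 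Therefore $\Lambda\ket{\psi}^{\otimes 2}=0$, and Eq.~\eqref{eq:Lambda_condition} shows that $\psi$ is an FGS.

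\emph{Main obstacle.} The argument rests on the cited equivalence~\eqref{eq:Lambda_condition}; everything else is bookkeeping. The only delicate point is the coefficient extraction in the converse. It is justified because $\Tr[\Pi_\lambda\Pi_{\lambda'}]=\delta_{\lambda\lambda'}\binom{2n}{n+\lambda/2}$, so the expansion in the $\Pi_\lambda$ is unique.
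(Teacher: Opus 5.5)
Your proof is correct and follows the same route as the paper: the vacuum lies in $V_0$ by Eq.~\eqref{eq:Lambda_condition}, so $p_{\mathbf 0}(\lambda)=\delta_{\lambda,0}$ and Lemma~\ref{lem:two_copy_ave_matchgate} gives the formula; conversely $p_\psi(0)=1$ forces $\Lambda\ket{\psi}^{\otimes 2}=0$. Your explicit coefficient extraction via $\Tr[\Pi_\lambda\Pi_{\lambda'}]$ and your direct check that $X\otimes X+Y\otimes Y$ annihilates $\ket{00}$ only spell out steps the paper leaves implicit.
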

\begin{proof}
 Since the vacuum state is an FGS, the Gaussianity condition in Eq.~\eqref{eq:Lambda_condition} immediately implies that $\ket{\mathbf{0}}^{\otimes 2} \in V_0$, and hence $p_{\mathbf{0}}(0)=1$. Substituting into Lemma~\ref{lem:two_copy_ave_matchgate} gives the first claim. Conversely, $p_\psi(0)=1$ implies $\Lambda\ket{\psi}^{\otimes 2}=0$, so $\psi$ is an FGS by Eq.~\eqref{eq:Lambda_condition}.
\end{proof}

Since any two states related by a matchgate share the same weights $\{p_\psi(\lambda)\}$ (Lemma~\ref{lem:invariance_weight}), the bridge spectral distribution is a genuine invariant of the Gaussian equivalence class of $\psi$. In particular, a pure state is an FGS if and only if all weight is concentrated at $\lambda = 0$ (Lemma~\ref{lem:twirl_vacuum}); any weight at $\lambda \neq 0$ is a direct, experimentally accessible signature of non-Gaussianity. To summarize, the distribution $\{p_\psi(\lambda)\}_{\lambda \in \mathcal{A}}$ enjoys the following properties on even pure states $\psi$:
\begin{enumerate}[label=(\roman*)]
    \item it is \emph{directly measurable} via Bell sampling on two copies of $\psi$;
    \item it is \emph{supported on at most $2\lfloor n/4 \rfloor + 1$ integer values} for any even pure state;
    \item it is \emph{symmetric} around zero: $p_\psi(\lambda) = p_\psi(-\lambda)$;
    \item it is \emph{invariant under matchgate unitaries}; and
    \item it is \emph{a complete second-moment invariant}: two states have the same matchgate-twirled two-copy state if and only if they have the same bridge spectral distribution.
\end{enumerate}
These properties make $\{p_\psi(\lambda)\}$ a natural foundation for quantifying non-Gaussianity, and we now turn to formalizing this in the next section.

\section{The bridge degree}
\label{sec:maxlambda}

The spectral theory of the previous section has revealed that, for every even pure state, the bridge spectral distribution $\{p_\psi(\lambda)\}_{\lambda\in \mathcal{A}}$ provides a natural language for quantifying fermionic non-Gaussianity. This motivates the definition of the \emph{bridge degree}, defined as the largest eigenvalue of the bridge operator whose eigenspace is occupied by the two-copy state. We devote this section to showing that it is a genuine non-Gaussianity monotone. In particular, we shall establish that the bridge degree:
\begin{itemize}[leftmargin=2em]
\item[(i)] takes \emph{discrete} integer values in $\{0, 4,8, \ldots\}$;
\item[(ii)] is \emph{faithful}: it vanishes precisely on pure FGSs;
\item[(iii)] is \emph{additive} under tensor products of even pure states; and
\item[(iv)] is \emph{non-increasing under pure-state post-selected Gaussian protocols};
\end{itemize}
Unlike most known non-Gaussianity monotones~\cite{dias2024classical,cudby2023gaussian}, which are typically computed via optimization procedure, bridge degree is determined directly from the support of the bridge spectral distribution. Operational consequences of these properties are developed in Sec.~\ref{sec:consequences}.

\subsection{Definition and basic properties}

We begin with a formal definition.
\begin{defn}[Bridge degree]\label{def:maxlambda}
    Let $\ket{\psi}$ be an even pure state of $n$ qubits. The bridge degree $\nGMax(\psi)$ is defined as the largest $\alpha \geq 0$ such that $p_\psi(2\alpha)\neq 0$.
\end{defn}

Writing $\mathcal{V}_\lambda \coloneqq \bigoplus_{|j|\leq \lambda} V_j$, an equivalent formulation is that $\nGMax(\psi)$ is the smallest $\alpha \geq 0$ such that $\ket{\psi}^{\otimes 2} \in \mathcal{V}_{2\alpha}$. 

The first thing to note is that the bridge degree is an integer-valued quantity. This is a direct consequence of the eigenvalue structure of $\Lambda$.

\begin{lem}[Discreteness and boundedness]\label{lem:values_max_Lambda}
For any $n$-qubit even pure state $\psi$, the bridge degree $\nGMax(\psi)$ takes values of the form $4k$ for some integer $k$ and satisfies
\begin{equation}
    0 \leq \nGMax(\psi) \leq 4\lfloor n/4 \rfloor.
\end{equation}
\end{lem}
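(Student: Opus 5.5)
The plan is to read the statement off the support restriction for the bridge spectral distribution. I will combine Lemma~\ref{lem:allowed_eig_Lambda} with the elementary spectral bound $|\lambda|\le 2n$, and there are three small points to check in order.

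First, well-definedness and nonnegativity. Since $p_\psi$ is a probability distribution, some $\lambda$ has $p_\psi(\lambda)>0$. By Corollary~\ref{cor:pm_Lambda} we then also have $p_\psi(-\lambda)>0$, so there is some $\lambda\ge 0$ with $p_\psi(\lambda)\neq 0$. Hence the set $\{\alpha\ge0 : p_\psi(2\alpha)\neq0\}$ in Definition~\ref{def:maxlambda} is nonempty. It is also finite, because the spectrum of $\Lambda$ is finite. So the maximum exists and satisfies $\nGMax(\psi)\ge 0$.

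Second, discreteness. If $p_\psi(2\alpha)\neq 0$, Lemma~\ref{lem:allowed_eig_Lambda} gives $2\alpha=8k$ for some integer $k$, so $\alpha=4k$. Applying this to the maximizing $\alpha$ shows that $\nGMax(\psi)\in 4\mathbb{Z}$, and by the first step it is nonnegative.

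Third, the upper bound. Every eigenvalue of $\Lambda$ lies in $[-2n,2n]$, so $2\alpha\le 2n$, that is, $\alpha\le n$. Together with $\alpha=4k$ this gives $4k\le n$, hence $k\le\lfloor n/4\rfloor$ and $\nGMax(\psi)\le 4\lfloor n/4\rfloor$. Equivalently, the support lies in $\mathcal{A}=8\mathbb{Z}\cap[-2n,2n]$ from Eq.~\eqref{eq:def_calA}, and its largest element is $8\lfloor n/4\rfloor$. There is no real obstacle: all the substance is already in Lemma~\ref{lem:allowed_eig_Lambda}, which rests on the exchange- and parity-symmetry constraints, so this proof only assembles those facts.
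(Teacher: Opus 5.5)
Your proposal is correct and matches the paper's proof: both combine the restriction $\lambda\in 8\mathbb{Z}$ from Lemma~\ref{lem:allowed_eig_Lambda} with the spectral bound $|\lambda|\le 2n$ of $\Lambda$. Your preliminary check that the maximum in Definition~\ref{def:maxlambda} exists and is nonnegative (using the $\pm\lambda$ symmetry of Corollary~\ref{cor:pm_Lambda}) is a small extra detail that the paper leaves implicit.
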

\begin{proof}
By Lemma~\ref{lem:allowed_eig_Lambda}, $p_\psi(\lambda)\neq 0$ implies $\lambda = 8k$ for some integer $k$. Since $\nGMax(\psi) = \lambda/2$ for nonzero weight, we obtain $\nGMax(\psi)=4k$. The upper bound follows by noting that the largest eigenvalue of $\Lambda$ is $2n$.
\end{proof}

A key property is that the bridge degree vanishes if and only if the state is an FGS, as proven in the following lemma.
 
\begin{lem}[Faithfulness]\label{lem:faithfulness}
    $\nGMax(\psi) = 0$ if and only if $\psi$ is a pure FGS.
\end{lem}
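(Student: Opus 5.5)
The plan is to reduce faithfulness to the Gaussianity criterion of Eq.~\eqref{eq:Lambda_condition}, $\Lambda\ket{\psi}^{\otimes 2}=0$, by showing that $\nGMax(\psi)=0$ is equivalent to the two-copy state lying entirely in the kernel $V_0$ of $\Lambda$. The bridge spectral distribution $p_\psi$ is a probability distribution over the eigenvalues of $\Lambda$, since the projectors $\Pi_\lambda$ resolve the identity. By Definition~\ref{def:maxlambda}, $\nGMax(\psi)=0$ means that no $\alpha>0$ satisfies $p_\psi(2\alpha)\neq 0$. In other words, $p_\psi(\lambda)=0$ for all $\lambda>0$.

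\textbf{Forward direction.} Suppose $\nGMax(\psi)=0$. Then $p_\psi(\lambda)=0$ for all $\lambda>0$, and Corollary~\ref{cor:pm_Lambda} ($p_\psi(\lambda)=p_\psi(-\lambda)$ for even pure states) also kills every $\lambda<0$. Normalization then forces $p_\psi(0)=1$. This means $\|\Pi_0\ket{\psi}^{\otimes 2}\|^2=1=\|\ket{\psi}^{\otimes 2}\|^2$, so $\ket{\psi}^{\otimes 2}\in V_0$, i.e.\ $\Lambda\ket{\psi}^{\otimes 2}=0$. By Eq.~\eqref{eq:Lambda_condition}, $\psi$ is a pure FGS. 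Equivalently, Lemma~\ref{lem:twirl_vacuum} gives the same conclusion, since the twirled state becomes $\Pi_0/\binom{2n}{n}$.

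\textbf{Converse.} If $\psi$ is a pure FGS, Eq.~\eqref{eq:Lambda_condition} gives $\ket{\psi}^{\otimes 2}\in V_0$. Hence $p_\psi(0)=1$ and $p_\psi(\lambda)=0$ for $\lambda\neq 0$. The largest $\alpha\geq 0$ with $p_\psi(2\alpha)\neq 0$ is therefore $\alpha=0$, so $\nGMax(\psi)=0$. One could alternatively use Lemma~\ref{lem:twirl_vacuum} together with matchgate invariance (Lemma~\ref{lem:invariance_weight}), writing $\ket{\psi}=U_Q\ket{0}^{\otimes n}$.

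The only delicate point is the use of the reflection symmetry to rule out weight on negative eigenvalues. The definition of $\nGMax$ only inspects nonnegative $\lambda$, so without Corollary~\ref{cor:pm_Lambda} one could not conclude $p_\psi(0)=1$. That corollary requires $\psi$ to be even, which is assumed throughout. Beyond this, the argument is a direct bookkeeping step, and I do not expect any real obstacle.
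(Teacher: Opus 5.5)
Your proposal is correct and follows the same route as the paper, which proves the lemma as a direct restatement of the criterion $\Lambda\ket{\psi}^{\otimes 2}=0$ via ``all weight concentrated at $\lambda=0$.'' Your explicit appeal to the reflection symmetry $p_\psi(\lambda)=p_\psi(-\lambda)$ to exclude weight on negative eigenvalues fills in a step the paper leaves implicit.
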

\begin{proof}
    This is a restatement of the pure-state Gaussianity condition $\Lambda \ket{\psi}^{\otimes 2} = 0$ from Eq.~\eqref{eq:Lambda_condition}: $\nGMax(\psi) = 0$ iff all weight is concentrated at $\lambda = 0$ iff $\psi$ is an FGS.
\end{proof}

An immediate consequence of the interplay between discreteness and faithfulness is that non-Gaussianity cannot exist on fewer than four qubits.
 Indeed, for $n\leq 3$, the upper bound in Lemma~\ref{lem:values_max_Lambda} immediately yields $\nGMax(\psi)=0$. By faithfulness, this implies that $\psi$ is an FGS. This is consistent with the known fact that
all even pure states of $n\leq 3$ qubits are FGSs~\cite{bravyi2005classicalcapacityfermionicproduct}.

Additionally, the bridge degree exhibits particularly simple behavior for systems with up to seven qubits: it takes only two possible values.

\begin{lem}[Bridge degree on $\leq 7$ qubits]\label{lem:max_Lambda_4_qubit}
Any even non-Gaussian pure state $\psi$ on $4 \leq n \leq 7$ qubits satisfies $\nGMax(\psi) = 4$.
\end{lem}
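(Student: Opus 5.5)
The plan is to squeeze $\nGMax(\psi)$ between two bounds that already coincide for $4\le n\le 7$. No new structural input should be needed: everything follows from the discreteness, boundedness and faithfulness results already established.

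For the upper bound we invoke Lemma~\ref{lem:values_max_Lambda}. It gives $\nGMax(\psi)\in 4\mathbb{Z}$ and $\nGMax(\psi)\le 4\lfloor n/4\rfloor$. For $4\le n\le 7$ we have $\lfloor n/4\rfloor = 1$, so $\nGMax(\psi)\in\{0,4\}$. Equivalently, by Lemma~\ref{lem:allowed_eig_Lambda} the support $\mathcal{A}=8\mathbb{Z}\cap[-2n,2n]$ equals $\{-8,0,8\}$ for these $n$, since $2n\le 14<16$.

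For the lower bound we use faithfulness, Lemma~\ref{lem:faithfulness}. Since $\psi$ is assumed non-Gaussian, $\nGMax(\psi)\neq 0$. Together with the upper bound this forces $\nGMax(\psi)=4$.

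There is essentially no obstacle. The only point to check is that the range $4\le n\le 7$ is exactly where $\lfloor n/4\rfloor=1$, so that the set $\{0,4\}$ has only one nonzero element. The hypothesis $n\ge 4$ is needed only to make the statement non-vacuous: for $n\le 3$ no even non-Gaussian pure state exists, as remarked after Lemma~\ref{lem:faithfulness}.

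As a remark, symmetry (Corollary~\ref{cor:pm_Lambda}) further implies that for such states $p_\psi(8)=p_\psi(-8)=(1-p_\psi(0))/2>0$. This is not needed for the proof.
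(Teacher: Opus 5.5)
Your proposal is correct and matches the paper's proof: Lemma~\ref{lem:values_max_Lambda} restricts $\nGMax(\psi)$ to $\{0,4\}$ when $4\leq n\leq 7$, and faithfulness (Lemma~\ref{lem:faithfulness}) rules out $0$ for a non-Gaussian state. Your side remark that $p_\psi(\pm 8)=(1-p_\psi(0))/2>0$ is also correct, though not needed.
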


\begin{proof}
By Lemma~\ref{lem:values_max_Lambda}, the bridge degree for states in the range $4 \leq n \leq 7$ can only take the values $0$ and $4$. By faithfulness (Lemma~\ref{lem:faithfulness}), any non-Gaussian state must therefore have $\nGMax(\psi) = 4$.
\end{proof}

Moreover, the forthcoming lemma establishes that the bridge degree is additive under the tensor product of even pure states.
 
\begin{lem}[Additivity]\label{lem:additive}
For any even pure states $\psi, \phi$,
\begin{equation}
    \nGMax(\psi \otimes \phi) = \nGMax(\psi) + \nGMax(\phi).
\end{equation}
\end{lem}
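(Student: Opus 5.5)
The plan is to show that, on the relevant two-copy vector, the bridge operator of the composite system splits as a sum of two commuting pieces, one for each subsystem. This makes the bridge spectral distribution of $\psi\otimes\phi$ the convolution of $p_\psi$ and $p_\phi$. Additivity then follows from the fact that the top of the support of a convolution of two nonnegative distributions is the sum of the tops.

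\textbf{Step 1: splitting the Majorana operators.} Let $\psi$ live on $n$ qubits (register $A$) and $\phi$ on $m$ qubits (register $B$). By the Jordan--Wigner definition \eqref{eq:majorana_def}, the Majorana operators of the joint $(n+m)$-qubit system are $\gamma_j = \gamma^A_j\otimes I_B$ for $j\le 2n$ and $\gamma_{2n+k} = P_A\otimes\gamma^B_k$ for $k\le 2m$. Here $P_A$ is the parity operator on $A$.

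\textbf{Step 2: rewriting $\Lambda$.} On two copies, we reorder the tensor factors from $A_1B_1A_2B_2$ to $A_1A_2B_1B_2$. This is a fixed unitary permutation and does not affect any overlaps. The bridge operator then reads
\begin{equation*}
\Lambda_{AB} \;=\; \Lambda_A\otimes I_{BB} \;+\; (P_A\otimes P_A)\otimes\Lambda_B .
\end{equation*}
Each $\gamma^A_j\otimes\gamma^A_j$ commutes with $P_A\otimes P_A$, because each factor anticommutes with $P_A$. Hence $\Lambda_A$, $P_A\otimes P_A$ and $\Lambda_B$ all commute pairwise. Let $Q_\pm$ denote the projectors onto the $\pm1$ eigenspaces of $P_A\otimes P_A$. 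On the range of $Q_+\otimes I$, the operator $\Lambda_{AB}$ coincides with $\Lambda_A\otimes I+I\otimes\Lambda_B$. Since $\Lambda_A$ commutes with $Q_+$, taking the joint spectral decomposition gives
\begin{equation*}
\Pi^{AB}_\lambda\,(Q_+\otimes I) \;=\; \sum_{a+b=\lambda}\bigl(\Pi^A_a Q_+\bigr)\otimes\Pi^B_b .
\end{equation*}

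\textbf{Step 3: the convolution identity.} Because $\psi$ is even, $P_A\ket{\psi}=\pm\ket{\psi}$, so $(P_A\otimes P_A)\ket{\psi}^{\otimes2}=\ket{\psi}^{\otimes2}$. Therefore $\ket{\psi}^{\otimes 2}\otimes\ket{\phi}^{\otimes2}$, after the reordering, lies in the range of $Q_+\otimes I$. Applying Step 2 to this vector yields
\begin{equation*}
p_{\psi\otimes\phi}(\lambda)=\sum_{a+b=\lambda}p_\psi(a)\,p_\phi(b).
\end{equation*}
Note that $\psi\otimes\phi$ is itself even, so its bridge degree is well defined.

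\textbf{Step 4: reading off the maximum.} All terms in the convolution are nonnegative. Hence $p_{\psi\otimes\phi}(\lambda)\neq0$ if and only if $\lambda=a+b$ for some $a$ with $p_\psi(a)>0$ and some $b$ with $p_\phi(b)>0$. The largest such $\lambda$ is $2\nGMax(\psi)+2\nGMax(\phi)$, attained by taking the top of each support. The corresponding weight is at least $p_\psi(2\nGMax(\psi))\,p_\phi(2\nGMax(\phi))>0$, and no cancellation can occur because every term is nonnegative. This proves additivity.

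\textbf{Main obstacle.} The only delicate point is Steps 1--2: handling the Jordan--Wigner parity string correctly. One must check that it contributes only the operator $P_A\otimes P_A$, and that this operator acts trivially on two copies of an even state. The point is that the sign $\pm1$ from $P_A\ket{\psi}$ appears once per copy and therefore squares to $+1$. Without this, the two-copy bridge operator would not factor into the $A$ and $B$ pieces.
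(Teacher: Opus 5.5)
Your proof is correct and follows essentially the same route as the paper's: expand in the tensor-product eigenbasis, note that eigenvalues add, and use that nonnegative weights cannot cancel, so the tops of the supports add. It is also more careful. By isolating the Jordan--Wigner string as $P_A\otimes P_A$ and observing that it acts as $+1$ on $\ket{\psi}^{\otimes 2}$, you justify the inclusion $V_\lambda\otimes V_{\lambda'}\subset V_{\lambda+\lambda'}$ that the paper asserts without comment. That inclusion fails in general, since $P_A\otimes P_A=(-1)^{\lambda/2}$ on $V_\lambda$, but it holds on the sectors an even state occupies. You also obtain the exact convolution identity $p_{\psi\otimes\phi}=p_\psi * p_\phi$.

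The only step you leave implicit is that the top of the support of $p_\psi$ equals $2\nGMax(\psi)$ rather than being negative. This follows from the symmetry $p_\psi(\lambda)=p_\psi(-\lambda)$ of Corollary~\ref{cor:pm_Lambda}, which the paper cites explicitly.
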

\begin{proof}
Expanding $(\ket{\psi}\otimes\ket{\phi})^{\otimes 2}$ in the tensor-product eigenbasis of $\Lambda$, each component lives in $V_\lambda \otimes V_{\lambda'} \subset V_{\lambda+\lambda'}$. The maximal $\lambda + \lambda'$ with nonzero weight is $\nGMax(\psi)+\nGMax(\phi)$, which is attained by the fact that weights come in $\pm\lambda$ pairs (Lemma~\ref{lem:bell_weight}).
\end{proof}

We also record a bound relating bridge degree to the \emph{Gaussian nullity} $\nu(\psi)$ of Ref.~\cite{mele2025efficient}, defined as the smallest number of qubits to which the non-Gaussianity of $\psi$ can be compressed via a matchgate unitary.

\begin{lem}[Relation between bridge degree and Gaussian nullity]\label{lem:bound_nullity}
    For any $n$-qubit even pure state $\psi$,
    \begin{equation}\label{eq:bound_nullity}
        \nGMax(\psi) \leq 4 \lfloor\nu(\psi)/4 \rfloor.
    \end{equation}
\end{lem}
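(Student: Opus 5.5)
Write $\nu=\nu(\psi)$. By the definition of the Gaussian nullity, there is a matchgate unitary $U$ such that $U\ket{\psi}=\ket{\phi}\otimes\ket{g}$. Here $\ket{\phi}$ is a (possibly non-Gaussian) state on $\nu$ qubits, and $\ket{g}$ is a pure FGS on the remaining $n-\nu$ qubits; the standard choice is the vacuum $\ket{0}^{\otimes(n-\nu)}$, but any pure FGS will do. The plan is to combine three earlier results: matchgate invariance, additivity, and the discreteness bound.

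First, since $\psi$ is even and $U$ is parity preserving, $U\ket{\psi}$ is even. A product $\ket{\phi}\otimes\ket{g}$ with $\ket{g}$ Gaussian (and hence of definite parity) is even only if $\ket{\phi}$ has definite parity. So both factors are even pure states and Lemma~\ref{lem:additive} applies to them.

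Second, we need $\nGMax(U\psi U^\dagger)=\nGMax(\psi)$. Lemma~\ref{lem:invariance_weight} gives $p_{U\psi}(\lambda)=p_\psi(\lambda)$ for every $\lambda$. The bridge degree depends only on the support of $p_\psi$, so it is matchgate invariant.

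Third, Lemma~\ref{lem:additive} gives $\nGMax(\phi\otimes g)=\nGMax(\phi)+\nGMax(g)$. By faithfulness (Lemma~\ref{lem:faithfulness}), $\nGMax(g)=0$. Finally, $\phi$ lives on $\nu$ qubits, so Lemma~\ref{lem:values_max_Lambda} applied on $\nu$ qubits gives $\nGMax(\phi)\le 4\lfloor \nu/4\rfloor$. Chaining these steps yields Eq.~\eqref{eq:bound_nullity}.

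The only delicate point is the bookkeeping in the additivity step. The bridge operator on $n$ qubits must decompose as $\Lambda_n=\Lambda_\nu\otimes I+I\otimes\Lambda_{n-\nu}$ under the two-copy regrouping. With the Jordan--Wigner strings, the Majoranas of the last $n-\nu$ modes carry a factor $P$ on the first $\nu$ qubits. In the two-copy term $\gamma_j\otimes\gamma_j$ this appears as $P\otimes P$, and it acts as $+1$ on $\ket{\phi}^{\otimes 2}$ because $\phi$ has definite parity. So the decomposition holds on the relevant states, which is exactly what Lemma~\ref{lem:additive} implicitly uses. If the nullity is defined with the non-Gaussian part on the last qubits instead, a matchgate permutation of modes reorders them, and by the second step this does not affect $\nGMax$.
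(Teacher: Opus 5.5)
Your proof is correct and follows the paper's argument: compress the non-Gaussianity of $\psi$ onto $\nu$ qubits with a matchgate, then combine matchgate invariance of $p_\psi$, additivity (the Gaussian factor contributes zero), and the bound of Lemma~\ref{lem:values_max_Lambda} applied on $\nu$ qubits. Your remarks on the definite parity of the factors and on the Jordan--Wigner string $P\otimes P$ acting as $+1$ in the additivity step are valid details that the paper leaves implicit.
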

\begin{proof}
    By definition of the Gaussian nullity, there exists a matchgate $U$ with $U\ket{\psi} = \ket{0}^{\otimes (n-\nu(\psi))} \otimes \ket{\phi}$ for some $\nu(\psi)$-qubit state $\phi$~\cite{mele2025efficient}. Invariance under matchgates (Lemma~\ref{lem:invariance_weight}) and additivity under tensor products (Lemma~\ref{lem:additive}) give $\nGMax(\psi) = \nGMax(\phi)$, and the bound $\nGMax(\phi) \leq 4 \lfloor\nu(\psi)/4 \rfloor$ follows from Lemma~\ref{lem:values_max_Lambda}.
\end{proof}

Finally, we remark that the bridge degree can be witnessed directly from the Bell-sampling primitive (Algorithm~\ref{alg:bell_sampling}): given $N$ i.i.d.\ samples $\lambda_1,\dots,\lambda_N\sim p_\psi$, the empirical estimator
\begin{equation} \label{eq:bridge_degree_witness}
    \widehat\Lambda_{\mathrm{d}} \;\coloneqq\; \tfrac{1}{2}\,\max_{k}|\lambda_k|
\end{equation}
is, by construction, always a lower bound on $\nGMax(\psi)$. It saturates the true bridge degree once the extremal sector is observed --- which happens with probability $1-(1-2p_\psi(2\nGMax))^N$. This empirical estimator is tight when $p_\psi$ has appreciable mass at its extremal value, but can require exponentially many samples for states whose extremal sector carries only exponentially small weight, motivating the $\epsilon$-approximate variant developed in Sec.~\ref{sec:approx}.

\subsection{Monotonicity under post-selected Gaussian protocols}
\label{sec:monotonicity}

We now come to the central result of this section: the bridge degree is non-increasing under the post-selected Gaussian protocols of Definition~\ref{def:gaussian_protocol}.

More precisely, since the bridge degree is naturally defined for pure states, we consider a pure state $\ket{\psi}$ together with a post-selected Gaussian protocol $\mathcal{E}$ for which the output state is pure,
\begin{equation}
    \mathcal{E}(\ket{\psi}\!\bra{\psi}) = \ket{\phi}\!\bra{\phi}.
\end{equation}
This is a pointwise condition on the pair $(\mathcal{E},\psi)$ rather than a global property of $\mathcal{E}$: we do \emph{not} demand that $\mathcal{E}$ map every pure input to a pure output, only that the specific input $\ket{\psi}$ does so.

\begin{thm}[Monotonicity of the bridge degree]\label{thm:max_Lambda_monotone}
Let $\ket{\psi}$ be an even pure state and let $\mathcal{E}$ be a post-selected Gaussian protocol such that $\mathcal{E}(\ket{\psi}\!\bra{\psi}) = \ket{\phi}\!\bra{\phi}$ is pure. Then
\begin{equation}
    \nGMax(\phi) \;\leq\; \nGMax(\psi).
\end{equation}
\end{thm}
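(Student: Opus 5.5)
The plan is to reduce the protocol to its elementary building blocks and verify monotonicity for each one, using the characterization that $\nGMax(\psi)$ is the smallest $\alpha$ with $\ket{\psi}^{\otimes 2}\in\mathcal{V}_{2\alpha}$. The reduction comes first. Since the output is pure, any classical randomness or conditioning (items (d)–(e) of Definition~\ref{def:gaussian_protocol}) must produce the same pure state $\phi$ on every branch of nonzero probability. Indeed, a convex mixture of states is pure only if every component equals that pure state. We therefore fix one branch of nonzero weight. On that branch, the protocol becomes a fixed sequence of matchgate unitaries, appended Gaussian ancillas, occupation-number measurements post-selected on the realized outcomes, and partial traces.

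Along this branch we would keep the intermediate state pure. Partial traces can be deferred to the end. Before them, the state obtained from $\psi$ by appending pure Gaussian ancillas, applying matchgates, and post-selecting is pure; purifying mixed Gaussian ancillas with Gaussian states preserves this. The final partial trace yields the pure state $\phi$, so the pre-trace state must factor as $\ket{\phi}\otimes\ket{\chi}$ for some pure $\ket{\chi}$ on the traced modes. It then suffices to prove the following four facts.
\begin{enumerate}
\item[(a)] Matchgates leave $\nGMax$ invariant, by Lemma~\ref{lem:invariance_weight}.
\item[(b)] Appending a pure FGS leaves $\nGMax$ unchanged, by additivity (Lemma~\ref{lem:additive}) and faithfulness (Lemma~\ref{lem:faithfulness}).
\item[(c)] Removing a tensor factor cannot increase $\nGMax$: $\nGMax(\phi)\le\nGMax(\phi)+\nGMax(\chi)=\nGMax(\phi\otimes\chi)$, again by additivity. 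Here $\chi$ must be even, which follows because $\phi\otimes\chi$ is even and $\phi$ is even.
\item[(d)] Post-selection on an occupation number does not increase $\nGMax$.
\end{enumerate}

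The main obstacle is step (d). After a matchgate relabeling, we may take the measurement to be on the last mode with outcome $b$. The post-selected state is $\ket{\psi'}\propto (I\otimes\bra{b})\ket{\psi}$ on $n-1$ modes. On two copies, $\ket{\psi'}^{\otimes2}\propto (I\otimes\bra{b}\bra{b})\ket{\psi}^{\otimes2}$, with the last mode of each copy projected, and possibly an extra $Z$-string sign to handle the Jordan–Wigner ordering. The key claim is that the map $K=I\otimes\bra{bb}$ sends $\mathcal{V}_{2\alpha}$ for $n$ modes into $\mathcal{V}_{2\alpha}$ for $n-1$ modes. To prove this, write $\Lambda_n=\Lambda_{n-1}\otimes I+(\text{JW string})\,(\gamma_{2n-1}\otimes\gamma_{2n-1}+\gamma_{2n}\otimes\gamma_{2n})$. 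Then use the Bell eigenbasis of Lemma~\ref{lem:eigenbasis_bell}. The last-site Bell pair contributes $\mu_{2n-1}+\mu_{2n}\in\{0,\pm2\}$. The projection $\bra{bb}$ overlaps only with $\ket{\sigma_{00}}$ and $\ket{\sigma_{10}}$, which are exactly the last-site states contributing zero, as noted in the proof of Lemma~\ref{lem:symmetry_eigenspace_Lambda}. Moreover, the first $2n-2$ values $\mu_j$ are determined by the first $n-1$ Bell labels alone. Hence each basis vector $\ket{\Psi_{\boldsymbol\mu}}$ with eigenvalue $\lambda$ is mapped either to zero or to a multiple of an $(n-1)$-mode Bell eigenvector with the same eigenvalue $\lambda$. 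Therefore $K\mathcal{V}_{2\alpha}\subseteq\mathcal{V}_{2\alpha}$, and also $KV_\lambda\subseteq V_\lambda$. This gives $\nGMax(\psi')\le\nGMax(\psi)$.

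The delicate part of (d) is the bookkeeping with Eq.~\eqref{eq:r_func}: the prefix $\bigoplus s_k$ in $r^z$ only involves earlier sites, so the truncation behaves consistently. As a fallback, one could argue via operators: $K\,\gamma_j\otimes\gamma_j=\gamma_j\otimes\gamma_j\,K$ for $j\le 2n-2$, and $K(\gamma_{2n-1}\otimes\gamma_{2n-1}+\gamma_{2n}\otimes\gamma_{2n})$ restricted to the relevant Bell sector vanishes. Finally, we chain steps (a)–(d) along the protocol to obtain $\nGMax(\phi)\le\nGMax(\psi)$.
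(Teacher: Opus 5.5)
Your proposal is correct and follows essentially the same route as the paper. Both arguments use purity of the output to reduce to a single nonzero branch. Both then check matchgates by invariance of $p_\psi$, appended Gaussian factors by additivity plus faithfulness, and post-selected occupation-number measurements by noting that the projection respects the $\Lambda$-eigenspace decomposition. The paper phrases this last step as $[P_\mu\otimes P_\mu,\Lambda]=0$, while you show $KV_\lambda\subseteq V_\lambda$ in the Bell basis.

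The one minor variation concerns partial traces. You defer them to the end and discard a possibly non-Gaussian pure factor $\chi$ via additivity. The paper's Kraus-branch reduction instead implicitly turns each partial trace into a computational-basis projection followed by discarding a pure FGS.
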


\begin{proof}
It suffices to prove the claim for each elementary operation in Definition~\ref{def:gaussian_protocol}. In particular, since the output $\ket{\phi}$ is pure, every Kraus branch of a Gaussian protocol $\mathcal{E}$ maps $\ket{\psi}$ to the same state, which can be seen as follows (see, e.g.,~\cite{tarabunga2025nonstabilizerness}). Writing $\mathcal{E}(\cdot)=\sum_k K_k(\cdot)K_k^\dagger$, the identity $\sum_k(K_k\ket{\psi})(K_k\ket{\psi})^\dagger=\ket{\phi}\!\bra{\phi}$ expresses a rank-one operator as a sum of positive semidefinite operators. Hence, each nonzero branch $K_k\ket{\psi}$ must be proportional to $\ket{\phi}$, i.e., $K_k\ket{\psi}\propto\ket{\phi}$ for every $k$. Therefore, every nonzero branch of $\mathcal{E}$ produces the same pure output state $\ket{\phi}$, so it is sufficient to consider a single branch, corresponding to post-selecting on the measurement outcomes. It therefore suffices to verify monotonicity under (i) applying a fermionic Gaussian unitary, (ii) appending or discarding a pure FGS, and (iii) performing an occupation-number measurement with post-selection.

\emph{(i) Fermionic Gaussian unitaries.} This is immediate from the matchgate-invariance of the bridge spectral distribution (Lemma~\ref{lem:invariance_weight}).

\emph{(ii) Appending/discarding a pure FGS.} It is immediate by additivity under tensor product (Lemma~\ref{lem:additive}) and faithfulness (Lemma~\ref{lem:faithfulness}).

\emph{(iii) Occupation-number measurement with post-selection.} Without loss of generality, consider the measurement of $Z_1$, with projector $P_\mu = (I + \mu Z_1)/2$ for $\mu=\pm 1$. The post-measurement state is $\ket{\psi_\mu} = P_\mu \ket{\psi}/\sqrt{p_\mu}$ where $p_{\mu} = \bra{\psi} P_{\mu} \ket{\psi}$. One can verify that the operator $P_\mu \otimes P_\mu$ commutes with $\Lambda$; consequently, each eigenspace $V_\lambda$ admits a decomposition $V_\lambda = V_\lambda^0 \oplus V_\lambda^1$,  where $V_\lambda^{\nu}$ denotes the subspace of $V_\lambda$ with eigenvalue $\nu\in\{0,1\}$ of $P_\mu \otimes P_\mu$. Expanding $\ket{\psi}^{\otimes 2}$ in this common eigenbasis gives
\begin{equation}
\begin{split}
\ket{\psi}^{\otimes 2}
&= \sum_{\lambda,m_\lambda} c_{\lambda,m_\lambda} \ket{\Psi_{\lambda,m_\lambda}} \\
&= \sum_{\lambda} \left( \sum_{m_\lambda:\Psi\in V_\lambda^{0}} c_{\lambda,m_\lambda} \ket{\Psi_{\lambda,m_\lambda}}
+ \sum_{m_\lambda:\Psi\in V_\lambda^{1}} c_{\lambda,m_\lambda} \ket{\Psi_{\lambda,m_\lambda}} \right).
\end{split}
\end{equation}
Applying $P_1\otimes P_1$ selects only the components in $V_\lambda^{1}$:
\begin{equation}
(P_1\otimes P_1)\ket{\psi}^{\otimes 2}
= \sum_{\lambda}\sum_{m_\lambda:\,\Psi\in V_\lambda^{1}} c_{\lambda,m_\lambda} \ket{\Psi_{\lambda,m_\lambda}}.
\end{equation}
After renormalization, the post-measurement state  $\ket{\psi_{1}}\ket{\psi_{1}}$ therefore has support only on those eigenvalues whose eigenvectors lie in the subspace $V_\lambda^1$, i.e., a subset of those present in the original state. It follows that the largest eigenvalue with nontrivial support cannot increase, that is,
$\nGMax(\psi_{1}) \le \nGMax(\psi)$. Therefore, $\nGMax$ is non-increasing under occupation-number measurements with post-selection.

\end{proof}

An immediate consequence of monotonicity is that bridge degree lower-bounds the number of non-Gaussian gates required to prepare a state. In the fermionic setting, it is natural to consider $t$-doped Gaussian protocols~\cite{mele2025efficient,trigueros2026unitarydesigns} in which at most $t$ non-Gaussian gates are interspersed with Gaussian protocols. Here we consider $4$-local non-Gaussian operations, which are generated by at most $4$ Majorana operators. 
\begin{defn}[$t$-doped Gaussian protocols]
    A $t$-doped Gaussian protocol is a protocol
    \begin{equation}\label{eq:tdoped_def}
        \mathcal{E} = \mathcal{E}_t W_t \cdots \mathcal{E}_1 W_1 \mathcal{E}_0,
    \end{equation}
    where each $\mathcal{E}_i$ is an arbitrary fermionic Gaussian protocol, possibly with post-selection, and each $W_i$ is a $4$-local non-Gaussian gate.
\end{defn}
We refer to states that can be prepared from the vacuum by a $t$-doped Gaussian protocol as \emph{$t$-doped Gaussian states}. The protocol is illustrated in Fig.~\ref{fig:t_doped}. This model is universal: any quantum computation can be implemented by injecting SWAP gates --- a $4$-local non-Gaussian operation --- on top of Gaussian circuits~\cite{jozsa2008matchgates,bravyi2002fermionic}, and each SWAP can in turn be realized via a Gaussian protocol consuming one copy of the magic state~\cite{hebenstreit2019all}
\begin{equation}\label{eq:magic_swap}
    \ket{M} = \frac{1}{2}\bigl(\ket{0000}+\ket{0101}+\ket{1010}+\ket{1111}\bigr).
\end{equation}

\begin{cor}[Non-Gaussian gate lower bound for state preparation]\label{cor:bound_dope}
    If an even pure state $\psi$ can be prepared from the vacuum by a $t$-doped Gaussian protocol, then
    \begin{equation}\label{eq:bound_dope}
        \nGMax(\psi) \leq 4t.
    \end{equation}
\end{cor}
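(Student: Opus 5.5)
The plan is to track the bridge degree along the sequence of intermediate states produced by the protocol, using Theorem~\ref{thm:max_Lambda_monotone} for the Gaussian blocks and a separate bound for each gate $W_i$. Write $\ket{\psi_0}=\ket{\mathbf 0}$, and let $\ket{\psi_{2i-1}}\propto \mathcal{E}_{i-1}$ applied to the previous state and $\ket{\psi_{2i}}\propto W_i\ket{\psi_{2i-1}}$. Here the $\mathcal{E}_i$ are taken branch-by-branch as in the proof of Theorem~\ref{thm:max_Lambda_monotone}, so each intermediate state is pure along the post-selected branch that eventually yields $\psi$. The vacuum has $\nGMax=0$ by Lemma~\ref{lem:faithfulness}. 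Each Gaussian block does not increase $\nGMax$, and the final block $\mathcal{E}_t$ outputs $\psi$. It therefore suffices to show that a single $4$-local gate raises the bridge degree by at most $4$. Telescoping over the $t$ gates then gives $\nGMax(\psi)\le 4t$.

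There is one technicality. Theorem~\ref{thm:max_Lambda_monotone} requires a pure output, but intermediate steps of a general protocol (classical randomness, unmeasured ancillas) may produce mixed states. I would handle this by fixing the classical randomness and measurement outcomes along a branch contributing to the pure final state $\psi$. The argument in that proof shows every nonzero Kraus branch outputs $\psi$, so such a branch can be taken. Within a fixed branch, each elementary operation maps pure states to pure states, except partial trace. I would address partial trace by postponing all discards to the very end, where the output is pure and the theorem applies directly. Evenness is preserved because all operations are parity-respecting; I will assume $W_i$ is even, as is implicit in the setup.

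The key step is the single-gate bound $\nGMax(W\phi)\le \nGMax(\phi)+4$ for even pure $\phi$ and $W$ supported on $\gamma_1,\dots,\gamma_4$, following the convention of Fig.~\ref{fig:t_doped}. Split $\Lambda=\Lambda_{\le4}+\Lambda_{>4}$ with $\Lambda_{\le 4}=\sum_{j\le4}\gamma_j\otimes\gamma_j$. An even operator $W$ built from $\gamma_1,\dots,\gamma_4$ commutes with each $\gamma_k$ for $k>4$, since even monomials in disjoint Majoranas commute. Hence $W\otimes W$ commutes with $\Lambda_{>4}$ and with each $\gamma_k\otimes\gamma_k$ for $k>4$. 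In the Bell eigenbasis of Lemma~\ref{lem:eigenbasis_bell}, $W\otimes W$ therefore preserves the joint eigenvalues $(\mu_5,\dots,\mu_{2n})$ and can only change $\mu_1,\dots,\mu_4$. Changing those four signs shifts $\lambda$ by at most $8$.

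So $\ket{\phi}^{\otimes 2}$ supported on $|\lambda|\le 2\nGMax(\phi)$ maps to support on $|\lambda|\le 2\nGMax(\phi)+8$, giving the increment of $4$ in $\nGMax=\lambda/2$. This bound is consistent with Lemma~\ref{lem:allowed_eig_Lambda}. The main obstacle I expect is justifying the commutation cleanly, specifically that $W$ is even so that the Jordan--Wigner strings cause no sign issues. The alternative is to absorb a parity-odd gate into a Gaussian layer, which I would avoid by assuming even $W_i$.
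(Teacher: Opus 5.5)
Your proof is correct, but the key single-gate step takes a different route from the paper. The paper does not analyze $W\otimes W$ at all. It implements each $4$-local gate $W$ by a Gaussian protocol consuming a four-qubit magic state $M_W$ (the injection gadget of Hebenstreit et al.). It then chains Theorem~\ref{thm:max_Lambda_monotone} and Lemma~\ref{lem:additive} to get $\nGMax(W\psi W^\dagger)\le\nGMax(\psi\otimes M_W)=\nGMax(\psi)+\nGMax(M_W)$, with $\nGMax(M_W)=4$ from Lemma~\ref{lem:max_Lambda_4_qubit}.

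You instead show directly that $W\otimes W$ commutes with every $\gamma_k\otimes\gamma_k$ for $k>4$. Hence it preserves $(\mu_5,\dots,\mu_{2n})$ and moves $\lambda$ by at most $8$. This is sound: the projectors $\Pi_\lambda$ commute with the joint eigenprojectors of $\{\gamma_k\otimes\gamma_k\}_{k>4}$, so the support bound propagates sector by sector.

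What each route buys:
\begin{itemize}
\item Your argument is self-contained, with no appeal to the external injection result. It is also more general: it holds for any even operator on those four Majoranas, unitary or not (e.g.\ a non-Gaussian Kraus operator). For an operator on $k$ Majoranas it gives an increment of at most $k$.
\item The paper's argument is shorter given the machinery already in place, and reads as ``one bridge-degree-$4$ magic state per gate.'' It is also the template reused for the magic-state cost bound in Theorem~\ref{thm:lower_bound_cost}.
\end{itemize}

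Your explicit branch fixing and deferral of partial traces is more careful than the paper's bare ``iterating'' step, which leaves the purity of intermediate states implicit. One case is missing: composition with a \emph{mixed} FGS. Your claim that every non-trace operation preserves purity does not cover it. Reduce it to the pure case by appending a purifying pure FGS and deferring its discard, or by treating it as classical randomness over pure FGSs.
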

\begin{proof}
    Any 4-local non-Gaussian gate $W$ can be implemented by injecting a four-qubit magic state $M_W$ via a Gaussian protocol~\cite{hebenstreit2019all}. By monotonicity (Theorem~\ref{thm:max_Lambda_monotone}) and additivity (Lemma~\ref{lem:additive}),
    \begin{equation}
        \nGMax(W\psi W^\dagger) \leq \nGMax(\psi \otimes M_W) = \nGMax(\psi) + \nGMax(M_W).
    \end{equation}
    By Lemma~\ref{lem:max_Lambda_4_qubit}, $\nGMax(M_W) = 4$ for any non-Gaussian $M_W$. Iterating over the $t$ non-Gaussian gates, starting from the vacuum (for which $\nGMax = 0$), gives $\nGMax(\psi) \leq 4t$.
\end{proof}

Furthermore, the bridge degree induces a natural hierarchy of even pure states through the nested sets $\mathcal{F}_k \coloneqq \{\psi : \nGMax(\psi)\leq k\}$, with $\mathcal{F}_0$ the pure FGSs and $\mathcal{F}_{4\lfloor n/4\rfloor}$ the full even Hilbert space. A structurally similar hierarchy has been introduced in the \emph{particle-preserving} setting by Ref.~\cite{semenyakin2025classifying} through the generalized Pl\"ucker relations, naturally defining a particle-preserving counterpart of the bridge degree which we call the \emph{Pl\"ucker degree}. A proof strategy similar to above yields an analogous monotonicity result there: the Pl\"ucker degree is a monotone under particle-preserving Gaussian protocols. We discuss this connection in Appendix~\ref{app:plucker}.

Before leaving this section, we note that the bridge degree extends to mixed states by the standard convex-roof construction: for any even state $\rho$, define the \emph{convex-roof extended bridge degree}
\begin{equation} \label{eq:cr_bridge_degree}
    \nGMax^{\,\mathrm{cr}}(\rho) \;\coloneqq\; \min_{\{p_i,\psi_i\}\,:\,\rho=\sum_i p_i\,\ketbra{\psi_i}{\psi_i}}\, \max_i\, \nGMax(\psi_i),
\end{equation}
where the minimum runs over all pure-state decompositions of $\rho$. This is the smallest $r$ such that $\rho$ admits a pure-state decomposition into states all with bridge degree at most $r$. On pure $\psi$, $\nGMax^{\,\mathrm{cr}}(\psi) = \nGMax(\psi)$. By the general theory of convex-roof monotones~\cite{vidal2000entanglement,terhal2000schmidt}, $\nGMax^{\,\mathrm{cr}}$ is automatically a non-Gaussianity monotone under post-selected Gaussian protocols, and reduces to $\nGMax$ on pure states. While it yields a valid monotone, its computation requires optimization over all decompositions and is generally intractable. In Sec.~\ref{sec:mixed} we develop a different mixed-state extension based on the Choi representation, which is computationally tractable at the cost of being monotone only under a smaller class of post-selected Gaussian operations (Definition~\ref{def:gaussian_operations}).

\section{Consequences of bridge degree monotonicity}
\label{sec:consequences}

The monotonicity of the bridge degree under post-selected Gaussian protocols, established in Sec.~\ref{sec:monotonicity}, has several operational consequences that we develop in turn. First, in Sec.~\ref{sec:nogo_conversion}, we derive explicit no-go theorems for state conversion under Gaussian protocols. Then, in Sec.~\ref{sec:irreversibility}, we establish that the resulting resource theory is irreversible in the exact-conversion setting. Finally, in Sec.~\ref{sec:designs_lower}, we use the bridge degree to derive lower bounds on the number of non-Gaussian gates required by any circuit to generate a state design.

\subsection{No-go theorems for Gaussian conversion}
\label{sec:nogo_conversion}

In this section, we show that the bridge degree yields no-go theorems for state conversion under Gaussian protocols. Since it is monotone under post-selected Gaussian protocols (Theorem~\ref{thm:max_Lambda_monotone}), any Gaussian conversion $\rho\to\sigma$ achievable with nonzero probability requires $\nGMax(\rho)\ge\nGMax(\sigma)$, and is therefore impossible whenever $\nGMax(\rho)<\nGMax(\sigma)$ (see Sec.~\ref{sec:nongaussianity_monotones}).

As an illustration, consider the family of $n$-qubit GHZ states
\begin{equation}\label{eq:ghz}
    \ket{\mathrm{GHZ}_n} = \frac{1}{\sqrt{2}}\bigl(\ket{0}^{\otimes n} + \ket{1}^{\otimes n}\bigr),
\end{equation}
for even $n$ (the state is not even for odd $n$). It is worth noting that $\ket{\mathrm{GHZ}_4}$ is matchgate-equivalent to the canonical magic state $\ket{M}$ in Eq.~\eqref{eq:magic_swap}. We compute the bridge degree of $\ket{\text{GHZ}_n}$ in the following lemma.

\begin{lem}[Bridge degree of GHZ states]\label{lem:max_Lambda_ghz}
For any even $n \geq 4$,
\begin{equation}
    \nGMax(\mathrm{GHZ}_n) = 
    \begin{cases}
        n, & n = 4k, \\
        n-2, & n = 4k+2.
    \end{cases}
\end{equation}
\end{lem}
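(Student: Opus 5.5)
The plan is to compute the bridge spectral distribution $p_{\mathrm{GHZ}_n}$ directly from the Bell-state weights $p_\psi(\boldsymbol{\mu}) = 2^{-n}|\bra{\psi}\sigma_{\mathbf r}\ket{\psi^*}|^2$. Then, for each Bell outcome with nonzero weight, I read off its eigenvalue $\lambda$ using the inverse map, Eq.~\eqref{eq:inverse_map}, and Eq.~\eqref{eq:lambda_from_s}. Since $\ket{\mathrm{GHZ}_n}$ is real, $\ket{\psi^*}=\ket{\psi}$, so I only need the expectation values $\bra{\mathrm{GHZ}_n}\sigma_{\mathbf r}\ket{\mathrm{GHZ}_n}$.

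\emph{Step 1 (support of the Bell weights).} Up to phases, $\sigma_{\mathbf r} \propto X^{\mathbf r^x} Z^{\mathbf r^z}$. The state $Z^{\mathbf r^z}$ maps $\ket{0}^{\otimes n}\mapsto \ket{0}^{\otimes n}$ and $\ket{1}^{\otimes n}\mapsto (-1)^{|\mathbf r^z|}\ket{1}^{\otimes n}$. Applying $X^{\mathbf r^x}$ then gives a state overlapping with $\ket{\mathrm{GHZ}_n}$ only if $\mathbf r^x = \mathbf 0$ or $\mathbf r^x = \mathbf 1$. In both cases the overlap has modulus $\tfrac12|1+(-1)^{|\mathbf r^z|}|$. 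Hence the nonzero weights are exactly the outcomes with $\mathbf r^x\in\{\mathbf 0,\mathbf 1\}$ and $|\mathbf r^z|$ even, each with weight $2^{-n}$; these $2\cdot 2^{n-1}$ weights sum to one, as they should.

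\emph{Step 2 (eigenvalues).} For $\mathbf r^x=\mathbf 0$, Eq.~\eqref{eq:inverse_map} gives $s_{2j-1}=s_{2j}$, so every pair cancels in $\lambda = 2\sum_j(-1)^j s_j$ and $\lambda = 0$. For $\mathbf r^x=\mathbf 1$, we have $s_{2j}=s_{2j-1}\oplus 1$. Pair $j$ then contributes $2(s_{2j}-s_{2j-1}) = 2(-1)^{s_{2j-1}}$, with $s_{2j-1} = ((j-1)\bmod 2)\oplus r^z_j$. Thus $\lambda = 2\bigl(n - 2f\bigr)$, where $f$ is the number of sites with $r^z_j \neq (j-1)\bmod 2$. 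The reference string $r^z_j=(j-1)\bmod 2$ has Hamming weight $\lfloor n/2\rfloor$, so $f \equiv |\mathbf r^z| + \lfloor n/2\rfloor \pmod 2$. The constraint that $|\mathbf r^z|$ is even therefore forces $f\equiv n/2 \pmod 2$.

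\emph{Step 3 (maximization).} The largest occupied eigenvalue is obtained by minimizing $f$ subject to this parity. For $n=4k$ we can take $f=0$, giving $\lambda=2n$ and $\nGMax = n$. For $n=4k+2$ we need $f$ odd, so $f=1$ and $\lambda = 2n-4$, giving $\nGMax=n-2$. As a consistency check, both values are multiples of $4$ and match Lemma~\ref{lem:allowed_eig_Lambda}; by Corollary~\ref{cor:pm_Lambda} the negative eigenvalues are occupied symmetrically.

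The main obstacle is Step 1. There I need to track the phases $i^{r^z r^x}$ and the Jordan--Wigner-free Pauli form carefully, so as to confirm that the parity condition on $|\mathbf r^z|$ for $\mathbf r^x=\mathbf 1$ is ``even'' and not ``odd''. A wrong sign here would swap the two cases. I will double-check it against $n=4$, where Lemma~\ref{lem:max_Lambda_4_qubit} forces $\nGMax(\mathrm{GHZ}_4)=4=n$, since $\mathrm{GHZ}_4$ is non-Gaussian.
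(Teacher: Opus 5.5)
Your proof is correct, but it is organized differently from the paper's.

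\textbf{The paper's route.} It exhibits a witness in the largest admissible sector.
For $n=4k$ it uses that $V_{2n}$ is one-dimensional, spanned by $(\ket{\sigma_{01}}\ket{\sigma_{11}})^{\otimes 2k}$. The overlap of this state with $\ket{\mathrm{GHZ}_n}^{\otimes 2}$ is $\pm\bra{\mathrm{GHZ}_n}X_1Y_2\cdots X_{n-1}Y_n\ket{\mathrm{GHZ}_n}\neq 0$, since that Pauli string stabilizes the GHZ state.
For $n=4k+2$ it invokes the general eigenvalue restriction (Lemma~\ref{lem:allowed_eig_Lambda}) to exclude $\lambda=2n=8k+4$. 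It then exhibits the single Bell state $(\ket{\sigma_{01}}\ket{\sigma_{11}})^{\otimes 2k}\ket{\sigma_{01}}^{\otimes 2}\in V_{8k}$, which has nonzero overlap.

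\textbf{Your route.} You determine the entire support of the Bell weights and obtain both the upper and lower bound from one explicit computation. Your parity constraint $f\equiv n/2 \pmod 2$ does the job that Lemma~\ref{lem:allowed_eig_Lambda} does in the paper, for this particular state. The paper's two witnesses are exactly your $f=0$ and $f=1$ strings.

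\textbf{What each buys.} The paper's argument is shorter and needs only one overlap per case. Yours is self-contained and yields the full bridge spectral distribution as a by-product:
\[
p_{\mathrm{GHZ}_n}(0)=\tfrac12+2^{-n}\tbinom{n}{n/2},\qquad p_{\mathrm{GHZ}_n}(2n-4f)=2^{-n}\tbinom{n}{f}\ \text{ for } f\equiv n/2 \ (\mathrm{mod}\ 2),\ f\neq n/2 .
\]
From this, for example, $\nGLambda(\mathrm{GHZ}_n)=n$ follows directly.

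\textbf{On your Step~1 concern.} The sign issue you flag cannot actually swap the two cases. Only the modulus of $\bra{\mathrm{GHZ}_n}\sigma_{\mathbf r}\ket{\mathrm{GHZ}_n}$ enters. The phases $i^{r^z r^x}$ and the ordering of $X$ and $Z$ contribute only a global unimodular factor. For both $\mathbf r^x=\mathbf 0$ and $\mathbf r^x=\mathbf 1$, the overlap is proportional to $\tfrac12\bigl(1+(-1)^{|\mathbf r^z|}\bigr)$, so the condition that $|\mathbf r^z|$ is even is robust.
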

\begin{proof}
For $n = 4k$, the largest eigenvalue of $\Lambda$ is $\lambda = 2n = 8k$, whose eigenspace is one-dimensional and spanned by $(\ket{\sigma_{01}}\ket{\sigma_{11}})^{\otimes 2k}$. The overlap with $\ket{\mathrm{GHZ}_n}^{\otimes 2}$ is
\begin{equation}
    (\bra{\sigma_{01}}\bra{\sigma_{11}})^{\otimes 2k} \ket{\mathrm{GHZ}_n}^{\otimes 2} = (-1)^k \bra{\mathrm{GHZ}_n} X_1 Y_2 \cdots X_{n-1} Y_n \ket{\mathrm{GHZ}_n},
\end{equation}
which is nonzero because $X_1 Y_2 \cdots X_{n-1} Y_n$ stabilizes $\ket{\mathrm{GHZ}_n}$. Hence $\nGMax(\mathrm{GHZ}_n) = n$.

For $n = 4k+2$, Lemma~\ref{lem:allowed_eig_Lambda} forbids support at $\lambda = 2n = 8k+4$, so the largest admissible eigenvalue is $\lambda = 8k$. Consider the state $(\ket{\sigma_{01}}\ket{\sigma_{11}})^{\otimes 2k} \otimes \ket{\sigma_{01}}^{\otimes 2}  \in V_{8k}$. Similarly as above, we have
    \begin{equation}
    (\bra{\sigma_{01}}\bra{\sigma_{11}})^{\otimes 2k} \otimes \bra{\sigma_{01}}^{\otimes 2} \ket{\text{GHZ}_n}\ket{\text{GHZ}_n} = (-1)^k \bra{\text{GHZ}_n} X_1 Y_2 \dots X_{4k-1} Y_{4k} X_{4k+1} X_{4k+2} \ket{\text{GHZ}_n}.
    \end{equation}
    The operator $X_1 Y_2 \cdots X_{4k-1} Y_{4k} X_{4k+1} X_{4k+2}$ is again a stabilizer of $\ket{\mathrm{GHZ}_n}$, so the overlap is nonzero.  
Thus the largest occupied eigenvalue is $8k = 2(n-2)$, which gives $\nGMax(\mathrm{GHZ}_n)=n-2$.
\end{proof}

As an immediate application, we consider Gaussian conversions between tensor products of GHZ states. The bridge degree detects that two copies of $\ket{\text{GHZ}_6}$ cannot be converted into three copies of $\ket{\text{GHZ}_4}$, since $2  \nGMax(\mathrm{GHZ}_6) = 8 < 12 = 3\nGMax(\mathrm{GHZ}_4)$. This no-go cannot be obtained, for example, from the relative entropy of non-Gaussianity $\nGR$ (Eq.~\eqref{eq:ngr_def}). We have that $\nGR(\text{GHZ}_n)=n$ for even $n\geq 4$~\cite{tarabunga2026}. Therefore, a deterministic Gaussian conversion above cannot be ruled out by $\nGR$ since $2\nGR(\text{GHZ}_6)=3\nGR(\text{GHZ}_4)$ (here we have used that $\nGR$ is additive under tensor products). For the Gaussian rank, certifying the no-go would require computing $\chi_\G(\mathrm{GHZ}_4^{\otimes 3})$ and $\chi_\G(\mathrm{GHZ}_6^{\otimes 2})$ and showing the former to be larger; but the Gaussian rank is not known to be efficiently computable on these tensor-product states.

Notably, with the bridge degree we can obtain no-go theorems for larger $\mathrm{GHZ}$ states as easily. Consider now four copies of $\ket{\mathrm{GHZ}_6}$ and three copies of $\ket{\mathrm{GHZ}_8}$. The bridge degree rules out the conversion $\mathrm{GHZ}_6^{\otimes 4} \to \mathrm{GHZ}_8^{\otimes 3}$, since $4\,\nGMax(\mathrm{GHZ}_6) = 16 < 24 = 3\,\nGMax(\mathrm{GHZ}_8)$, whereas neither $\nGR$ nor $\chi_\G$ can. For the relative entropy, $\nGR(\mathrm{GHZ}_6^{\otimes 4}) = 24 = \nGR(\mathrm{GHZ}_8^{\otimes 3})$ as before. For the Gaussian rank, certifying the no-go is again obstructed by the difficult optimization in its computation. %
The remaining monotones similarly fail to provide no-go theorems: the second moment of $\Lambda$, like $\nGR$, is additive and computable but gives no obstruction, since $\nGLambda(\mathrm{GHZ}_n)=n$ implies $\nGLambda(\mathrm{GHZ}_6^{\otimes 4})=\nGLambda(\mathrm{GHZ}_8^{\otimes 3})$, whereas the Gaussian extent and fidelity, like the Gaussian rank, are not known to be efficiently computable on these tensor-product states. This no-go statement can therefore be reliably certified only by the bridge degree, which does so efficiently. It is straightforward to construct further no-go theorems involving larger GHZ states that can likewise be certified only by the bridge degree.

More fundamentally, the bridge degree is, to our knowledge, the first known non-Gaussianity monotone that is both additive and monotone under the strictly larger class of \emph{post-selected} Gaussian protocols~\footnote{In subsequent work with collaborators~\cite{turkeshi2026williamsonmajorizationtheoryfermionic}, we show that the Gaussian nullity shares both properties.}. Additivity yields the bridge degree of the tensor-product states from single-copy values, as used throughout the examples above, while monotonicity under post-selection further extends all these no-go theorems to probabilistic conversions. Among previously known monotones, the Gaussian rank is monotone under post-selection but not additive, whereas the relative entropy of non-Gaussianity and the second moment of $\Lambda$ are additive but not monotone under post-selection (Tab.~\ref{tab:known_monotones}); the no-go theorems from the latter two therefore apply only to deterministic conversions.

Beyond such specific conversions, the bridge degree yields a general no-go theorem: no probabilistic Gaussian protocol can transform an $n$-qubit non-Gaussian pure state into a tensor product of $m$ $4$-qubit non-Gaussian pure states when $n < 4m$, even when post-selection is allowed. This mirrors a recent analogous result for bosonic Gaussian conversion~\cite{mele2025symplecticrank}.

\begin{cor}[Impossibility of separately spreading non-Gaussianity via post-selected Gaussian protocols]\label{cor:spr}
Let $n < 4m$, and let $\rho$ be an $n$-qubit even non-Gaussian pure state. Let $\phi_1,\dots,\phi_m$ be $4$-qubit even non-Gaussian pure states. Then no post-selected Gaussian protocol can exactly map $\rho$ to $\bigotimes_{i=1}^m \phi_i$.
\end{cor}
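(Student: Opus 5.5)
The plan is to argue by contradiction, comparing the bridge degree of input and output via Theorem~\ref{thm:max_Lambda_monotone}. Suppose some post-selected Gaussian protocol $\mathcal{E}$ exactly maps $\rho=\ketbra{\psi}{\psi}$ to the pure state $\bigotimes_{i=1}^m \phi_i$. Since the output is pure, Theorem~\ref{thm:max_Lambda_monotone} applies directly to the pair $(\mathcal{E},\psi)$ and gives
\begin{equation}
    \nGMax\Bigl(\textstyle\bigotimes_{i=1}^m \phi_i\Bigr) \;\leq\; \nGMax(\psi).
\end{equation}

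I would then evaluate both sides.

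For the left-hand side, I apply additivity (Lemma~\ref{lem:additive}) iteratively, which gives $\sum_i \nGMax(\phi_i)$. Each $\phi_i$ is an even non-Gaussian pure state on four qubits, so Lemma~\ref{lem:max_Lambda_4_qubit} gives $\nGMax(\phi_i)=4$. The left-hand side therefore equals $4m$.

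For the right-hand side, I use the upper bound of Lemma~\ref{lem:values_max_Lambda}, namely $\nGMax(\psi)\le 4\lfloor n/4\rfloor\le n$. Combining the two evaluations yields $4m\le n$, which contradicts the hypothesis $n<4m$.

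The only point needing care is the applicability of the monotonicity theorem, and I expect this to be the main obstacle. Theorem~\ref{thm:max_Lambda_monotone} requires an even pure input and a pure output, and both hold here by assumption. The phrase ``with nonzero success probability'' corresponds to a post-selected protocol whose post-selected branch outputs the target state, and that is exactly the setting covered by the theorem. Notably, the non-Gaussianity of $\psi$ is never used; the hypothesis is only there to make the statement nontrivial. The whole argument uses no information about $\psi$ beyond its number of qubits.
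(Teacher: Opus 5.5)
Your proposal is correct and matches the paper's proof: monotonicity (Theorem~\ref{thm:max_Lambda_monotone}) applied to the pure output, additivity (Lemma~\ref{lem:additive}) together with Lemma~\ref{lem:max_Lambda_4_qubit} to obtain $4m$ for the target, and Lemma~\ref{lem:values_max_Lambda} to bound the input by $4\lfloor n/4\rfloor\le n$. You are also right that the non-Gaussianity of the input is never used; the paper's argument likewise depends only on the number of qubits $n$.
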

\begin{proof}
This follows directly from three facts:
(i) the monotonicity of the bridge degree (Theorem~\ref{thm:max_Lambda_monotone});
(ii) the additivity of the bridge degree for tensor products (Lemma~\ref{lem:additive}), together with the fact that each $4$-qubit non-Gaussian pure state has bridge degree $4$ (Lemma~\ref{lem:max_Lambda_4_qubit}), implying a total of $4m$; and
(iii) the upper bound that the bridge degree of any state on $n$ modes is at most $n$ (Lemma~\ref{lem:values_max_Lambda}).
Combining these, the initial state has bridge degree at most $n$, while the target state has bridge degree exactly $4m$. When $n < 4m$, this violates monotonicity, and the conversion is therefore impossible.
\end{proof}

Remarkably, this impossibility persists even when the initial state is arbitrarily far from the set of Gaussian states while each of the $m$ target factors is arbitrarily close to it. This robustness places the no-go beyond the reach of \emph{any} previously known non-Gaussianity monotone. As the proof shows, the bridge degree achieves this through two properties: it assigns every $4$-qubit non-Gaussian state the \emph{discrete} value $4$, however close that state is to the Gaussian set, and it is additive. No previously known monotone combines these two features. The relative entropy of non-Gaussianity, the second moment of $\Lambda$, the Gaussian extent, and the Gaussian fidelity are continuous and can take arbitrarily small values on near-Gaussian $4$-qubit states, so the target value can be driven below that of the input, leaving no obstruction. The Gaussian rank, by contrast, is discrete, taking the value $2$ on every $4$-qubit non-Gaussian state~\footnote{This follows from the fact that any $4$-qubit even pure state can be transformed by a matchgate unitary into the form $\ket\psi = \alpha \ket{0000} + \beta\ket{1111}$~\cite{hebenstreit2019all}.}, but it lacks additivity, so its value on tensor products of $4$-qubit states cannot be inferred from the single-factor values.

\subsection{Irreversibility of the resource theory}
\label{sec:irreversibility}

The no-go theorems above rule out individual state conversions. A fundamental question in the manipulation of quantum resources is whether the resource theory itself is \emph{reversible}~\cite{Gour2025, regula2024reversibility}: given two states $\rho$ and $\sigma$, can one always convert $n$ copies of $\rho$ into $m$ copies of $\sigma$ at the same rate as the reverse transformation? In this section, we show that the resource theory of fermionic non-Gaussianity is, in fact, \emph{irreversible} under exact post-selected Gaussian protocols.

To make this precise, define the \emph{exact asymptotic distillation rate} and \emph{exact cost rate}~\cite{Gour2025}:
\begin{align}
\mathrm{Distill}(\rho \to \sigma) &= \sup_{m,n \in \mathbb{N}} \left\{\frac{m}{n} : \exists\, \mathcal{E} \in \G,\ \mathcal{E}(\rho^{\otimes n}) = \sigma^{\otimes m}\right\}, \\
\mathrm{Cost}(\rho \to \sigma) &= \inf_{m,n \in \mathbb{N}} \left\{\frac{n}{m} : \exists\, \mathcal{E} \in \G,\ \mathcal{E}(\rho^{\otimes n}) = \sigma^{\otimes m}\right\},
\end{align}
where $\G$ denotes the set of post-selected Gaussian protocols. These rates are related by $\mathrm{Distill}(\rho \to \sigma) = 1/\mathrm{Cost}(\rho \to \sigma)$~\cite{Gour2025}. A resource theory is said to be \emph{reversible} if 
\begin{equation}
    \mathrm{Distill}(\rho \to \sigma) = \mathrm{Cost}(\sigma \to \rho),
\end{equation}
for all states $\rho, \sigma$. Otherwise, the resource theory is called \emph{irreversible}.

We stress that here we are considering the \emph{exact setting} where free operations map $\rho^{\otimes n}$ into $\sigma^{\otimes m}$ \emph{exactly}, which differs from the \emph{approximate setting}, where free operations map $\rho^{\otimes n}$ into $\sigma^{\otimes m}$ \emph{approximately} with an error that vanishes in the asymptotic limit of many copies~\cite{Gour2025}.

In the following theorem, we prove that the resource theory of fermionic non-Gaussianity is irreversible under the set of all post-selected Gaussian protocols $\mathcal{G}$ in the exact setting.
\begin{thm}[Irreversibility of the resource theory of fermionic non-Gaussianity]\label{thm:irreversibility}
The resource theory of fermionic non-Gaussianity is irreversible under exact post-selected Gaussian protocols. That is, there exist two states $\rho$ and $\sigma$ such
that the exact distillable rate and exact cost rate satisfy
\begin{equation}
    \text{Distill}(\rho \to \sigma) \neq \text{Cost}(\sigma \to \rho).
\end{equation}
\end{thm}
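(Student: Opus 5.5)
The plan is to mirror the bosonic irreversibility argument of Ref.~\cite{mele2025symplecticrank}. The distillation direction will be bounded with the bridge degree, and the cost direction will be bounded with a second additive monotone that behaves differently on the chosen pair. A natural choice is $\rho=\mathrm{GHZ}_4$ and $\sigma=\mathrm{GHZ}_6$. Lemma~\ref{lem:max_Lambda_ghz} gives $\nGMax(\mathrm{GHZ}_4)=\nGMax(\mathrm{GHZ}_6)=4$. The relative entropy of non-Gaussianity gives $\nGR(\mathrm{GHZ}_4)=4$ and $\nGR(\mathrm{GHZ}_6)=6$.

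\emph{Upper bound on the cost.} I would show $\mathrm{Cost}(\sigma\to\rho)\le 1$, i.e.\ $\mathrm{GHZ}_6\to\mathrm{GHZ}_4$ exactly with one copy, by a post-selected Gaussian protocol. The route is to measure the occupation of the last two qubits, post-select on outcome $00$, and discard them. This maps $\ket{\mathrm{GHZ}_6}$ to $\ket{0000}$, which is Gaussian, so that does not work directly. Instead I would first apply a matchgate (for instance a beam splitter mixing modes $4$ and $5$), or measure in a rotated basis, so that the post-selected branch is $(\ket{0000}+\ket{1111})/\sqrt2$. This step needs care because occupation measurements only project onto number states. An alternative that avoids the issue is to use the $Z_5Z_6$ structure: $\ket{\mathrm{GHZ}_6}=(\ket{0000}\ket{00}+\ket{1111}\ket{11})/\sqrt2$. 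A matchgate acting on modes $5,6$ maps $\ket{00}\mapsto(\ket{00}+\ket{11})/\sqrt2$ and $\ket{11}\mapsto(\ket{00}-\ket{11})/\sqrt2$, since this is a Bogoliubov rotation within the even sector of two modes. After that, post-selecting on $00$ yields $\ket{\mathrm{GHZ}_4}$ with probability $1/2$. This gives $\mathrm{Cost}(\sigma\to\rho)\le1$, and in fact the bridge degree forces it to equal $1$.

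\emph{Bound on distillation.} For $\mathrm{Distill}(\rho\to\sigma)$, with the target on $6m$ qubits and $\nGR$ additive, I would argue that it is at most $2/3$. The catch is that $\nGR$ is only monotone under non-post-selected Gaussian protocols, while $\mathcal G$ allows post-selection. So I need a monotone under post-selection that separates $\mathrm{GHZ}_4$ from $\mathrm{GHZ}_6$. Since the bridge degree does not separate them, I would swap roles and choose the pair so that the bridge degree does the work. Take $\rho=\mathrm{GHZ}_4$ and $\sigma=\mathrm{GHZ}_4\otimes\mathrm{GHZ}_6$, or more simply $\rho=\mathrm{GHZ}_6$ and $\sigma=\mathrm{GHZ}_4$.

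With $\rho=\mathrm{GHZ}_6$ and $\sigma=\mathrm{GHZ}_4$, the argument runs as follows. The map $\mathrm{GHZ}_6^{\otimes n}\to\mathrm{GHZ}_4^{\otimes m}$ requires $4n\ge 4m$ by monotonicity and additivity, so $\mathrm{Distill}(\rho\to\sigma)\le1$, and the single-copy protocol above shows it equals $1$. For the reverse direction, $\mathrm{GHZ}_4^{\otimes n}\to\mathrm{GHZ}_6^{\otimes m}$ needs $n\ge m$ by the bridge degree. To show the cost is strictly larger than $1$, I would count qubits: a post-selected Gaussian protocol cannot create non-Gaussianity outside the input support. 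Via the Gaussian nullity bound (Lemma~\ref{lem:bound_nullity}) and the fact that $\nu$ is non-increasing under post-selected Gaussian protocols (cf.~\cite{turkeshi2026williamsonmajorizationtheoryfermionic}), with $\nu$ additive, this gives $4n\ge 6m$. Hence $\mathrm{Cost}(\rho\to\sigma)\ge 3/2$, while $\mathrm{Distill}(\sigma\to\rho)$, which the same monotone caps, is at most $2/3$.

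The pair therefore exhibits $\mathrm{Distill}(\mathrm{GHZ}_6\to\mathrm{GHZ}_4)=1$, whereas $\mathrm{Cost}(\mathrm{GHZ}_4\to\mathrm{GHZ}_6)$ satisfies $\ge 3/2\neq 1$. The main obstacle is establishing a second monotone under post-selection that gives the $4n\ge6m$ inequality. If nullity monotonicity is unavailable, I would prove directly that post-selected Gaussian operations cannot increase the minimal number of modes supporting the non-Gaussian part, by tracking the covariance-matrix rank deficiency.
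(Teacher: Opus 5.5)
Your final argument is the paper's proof: with $\rho=\mathrm{GHZ}_6$ and $\sigma=\mathrm{GHZ}_4$, additivity and post-selected monotonicity of the bridge degree give $\mathrm{Distill}(\rho\to\sigma)\le 1$, and the same two properties of the Gaussian nullity, together with $\nu(\mathrm{GHZ}_k)=k$, give $\mathrm{Cost}(\sigma\to\rho)\ge 3/2$. Your one-copy protocol showing the distillation rate equals $1$ is correct but not needed; just fix the slip where you wrote $\mathrm{Cost}(\rho\to\sigma)$ for $\mathrm{Cost}(\sigma\to\rho)$, and drop the appeal to Lemma~\ref{lem:bound_nullity}, which plays no role in the $4n\ge 6m$ bound.
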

\begin{proof}
If there exists a post-selected Gaussian protocol which maps $n$ copies of $\rho$ to $m$ copies of $\sigma$, then for any non-Gaussianity monotone $M$ under post-selected Gaussian protocols, it must hold that
\begin{equation} \label{eq:conversion_bound}
    M(\rho^{\otimes n}) \geq M(\sigma^{\otimes m}).
\end{equation}
Now, consider the states $\rho = \ket{\text{GHZ}_6}$ and $\sigma = \ket{\text{GHZ}_4}$. We will give bounds for the distillation rate and cost rate of the two states.

We will first bound the distillation rate of $\rho \to \sigma$ using the bridge degree. Because $\nGMax$ is (i) non-increasing under post-selected Gaussian protocols and
(ii) additive on tensor products of pure states, Eq.~\eqref{eq:conversion_bound} gives
\begin{equation}
        \frac{m}{n} \leq \frac{\nGMax(\rho)}{\nGMax(\sigma)}.
\end{equation}
Since $\nGMax(\text{GHZ}_6)=\nGMax(\text{GHZ}_4)=4$ (Lemma~\ref{lem:max_Lambda_ghz}), we obtain
\begin{equation} \label{eq:distill_rate_ghz}
        \text{Distill}(\text{GHZ}_6 \to \text{GHZ}_4) \leq \frac{\nGMax(\text{GHZ}_6)}{\nGMax(\text{GHZ}_4)}=1.
\end{equation}

Next, we bound the cost rate of $\sigma \to \rho$ using the Gaussian nullity $\nu$~\cite{mele2025efficient}, which is also non-increasing under post-selected Gaussian protocols~\cite{turkeshi2026williamsonmajorizationtheoryfermionic} and additive under tensor products. The Gaussian nullity of $\ket{\text{GHZ}_k}$ is $\nu(\text{GHZ}_k)=k$~\cite{tarabunga2026}. Applying Eq.~\eqref{eq:conversion_bound} to $\nu$ therefore gives
\begin{equation} \label{eq:cost_rate_ghz}
        \text{Cost}(\text{GHZ}_4 \to \text{GHZ}_6) \geq \frac{\nu(\text{GHZ}_6)}{\nu(\text{GHZ}_4)}=\frac{3}{2}.
\end{equation}
Combining Eqs.~\eqref{eq:cost_rate_ghz} and \eqref{eq:distill_rate_ghz}, we obtain
\begin{equation}
    \text{Distill}(\text{GHZ}_6 \to \text{GHZ}_4) < \text{Cost}(\text{GHZ}_4 \to \text{GHZ}_6),
\end{equation}
which concludes the proof.

\end{proof}

Theorem~\ref{thm:irreversibility} establishes a strict gap between the exact distillable rate and the exact cost rate. In particular, the gap in the proof above is given by
\begin{equation}
    \frac{\mathrm{Distill}(\rho \to \sigma)}{\mathrm{Cost}(\sigma \to \rho)} \leq \frac{2}{3}.
\end{equation}
Our next result shows that this ratio can be made arbitrarily small, i.e., the irreversibility of the resource theory of non-Gaussianity is arbitrarily strong. We first prove a preliminary lemma, showing that the bridge degree is also constrained by the particle number, independently of the system size.

\begin{lem}[Bound from particle number]\label{lem:bound_particle_number}
Let $\psi$ be an even pure state of $n$ qubits with definite particle number $N$. Then 
\begin{equation}\label{eq:bound_particle_number}
    \nGMax(\psi) \;\leq\; 4\left\lfloor \tfrac{1}{2}\min(N,\,n-N)\right\rfloor .
\end{equation}
\end{lem}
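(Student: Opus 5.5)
The plan is to rewrite $\Lambda$, on the sector that contains $\ket{\psi}^{\otimes 2}$, as a quadratic (Gaussian) form in the $2n$ Majorana modes of the doubled system. We then diagonalize it by a fermionic beam splitter, exactly the matchgate form of $U_{\mathrm{Bell}}$ in Eq.~\eqref{eq:U_Bell_matchgate}. In the rotated modes $\Lambda$ becomes a difference of occupation numbers, and particle-number conservation bounds it.

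\emph{Step 1: $\Lambda$ as a quadratic Hamiltonian.} In the Jordan--Wigner picture of the $2n$-mode system, the Majoranas of the second copy are $\gamma'_j = P\otimes\gamma_j$. Hence $\gamma_j\otimes\gamma_j = (\gamma_j\otimes I)(P\otimes I)\gamma'_j$. The operator $P\otimes I$ commutes with $\Lambda$, since every term $\gamma_j\otimes\gamma_j$ contains two Majoranas on each copy. Because $\psi$ is even, $(P\otimes I)\ket{\psi}^{\otimes 2}=\pm\ket{\psi}^{\otimes 2}$. On the corresponding eigenspace of $P\otimes I$, one can therefore replace $\Lambda$ by $\pm\sum_j \tilde\gamma_j\gamma'_j$. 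Here $\tilde\gamma_j=(\gamma_j\otimes I)(P\otimes I)$ is, up to a factor of $i$, again a Majorana-type operator anticommuting appropriately; this is where the sign and phase bookkeeping must be done carefully. The upshot should be that, restricted to this sector, $\Lambda$ equals $i\sum_j c_j c'_j$ for a legitimate set of $4n$ Majoranas $\{c_j,c'_j\}$.

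\emph{Step 2: diagonalization and number conservation.} Pairing modes as $(\gamma_{2j-1},\gamma_{2j})$ of each copy, i.e.\ $a_j$ and $a'_j$, the form from Step 1 becomes a hopping term between the copies:
\begin{equation*}
\Lambda \simeq 2\sum_j \bigl(a_j^\dagger a'_j + a'^\dagger_j a_j\bigr).
\end{equation*}
This follows from $(a+a^\dagger)(a'+a'^\dagger) - (a^\dagger-a)(a'^\dagger-a')$, with the $\pm$ absorbed by the sector sign. The beam splitter $b_j=(a_j+a'_j)/\sqrt2$, $c_j=(a_j-a'_j)/\sqrt2$ maps this to
\begin{equation*}
\Lambda \simeq 2(N_b - N_c),
\end{equation*}
and it preserves the total number $N_b+N_c = N_1+N_2$. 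Since $\psi$ has definite particle number $N$, the state $\ket{\psi}^{\otimes 2}$ lies in the sector $N_b+N_c=2N$, with $0\le N_b,N_c\le n$. Therefore $|\lambda| = 2|N_b-N_c| \le 2\min(2N,\,2n-2N)$, and hence $\nGMax(\psi)=|\lambda|/2\le 2\min(N,n-N)$. The bound $n-N$ also follows independently from particle--hole conjugation, which is a matchgate, together with Lemma~\ref{lem:invariance_weight}.

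\emph{Step 3: integrality.} Lemma~\ref{lem:values_max_Lambda} says $\nGMax(\psi)\in 4\mathbb{Z}$, so $\nGMax(\psi)\le 2\min(N,n-N)$ improves to $\nGMax(\psi)\le 4\lfloor \tfrac12\min(N,n-N)\rfloor$, which is Eq.~\eqref{eq:bound_particle_number}.

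The main obstacle is Step 1: the inter-copy Jordan--Wigner strings and the factor of $i$ must be tracked so that, on the relevant parity sector of $P\otimes I$, $\Lambda$ is exactly a particle-number-conserving hopping form. Only then does the beam splitter conserve the total $N_1+N_2$. I would first verify this on a single mode ($n=1$) by explicit Pauli matrices, where $\gamma_1\otimes\gamma_1+\gamma_2\otimes\gamma_2 = XX+YY = 2(\sigma^+\sigma^-+\sigma^-\sigma^+)$, which already conserves $Z\otimes I+I\otimes Z$. For general $n$, the string factors $Z_{<j}\otimes Z_{<j}$ in each term commute with the total number operator, so number conservation holds directly; this is the fallback if the Majorana bookkeeping becomes unwieldy.
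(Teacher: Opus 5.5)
Your overall route works and is genuinely different from the paper's. However, Step~1 rests on a false claim that you should drop rather than repair.

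\emph{The false claim and its fix.} $P\otimes I$ does not commute with $\Lambda$. Each term $\gamma_j\otimes\gamma_j$ carries \emph{one} Majorana per copy, so $(P\otimes I)\Lambda(P\otimes I)=-\Lambda$; this is the parity reversal of Corollary~\ref{cor:inverting_Pi}. It is $P\otimes P$ that commutes with $\Lambda$. The eigenspaces of $P\otimes I$ are therefore not $\Lambda$-invariant, and the fact that $\ket{\psi}^{\otimes 2}$ lies in one of them permits no restriction.

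No restriction is needed. With your own definition, $\tilde\gamma_j\gamma'_j=(\gamma_jP\otimes I)(P\otimes\gamma_j)=\gamma_j\otimes\gamma_j$ holds on all of $\mathcal{H}_n^{\otimes 2}$, so the sign is spurious. Setting $\hat\gamma_j\coloneqq -i\,\gamma_jP\otimes I$ and $\hat\gamma'_j\coloneqq P\otimes\gamma_j$, one has $\Lambda=i\sum_{j=1}^{2n}\hat\gamma_j\hat\gamma'_j$ exactly. These $4n$ operators are Hermitian, square to $I$, and pairwise anticommute.

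\emph{The check your Step~2 hinges on.} You left open whether this frame carries the physical particle number, and it does. One has $\hat\gamma_{2j-1}\hat\gamma_{2j}=\gamma_{2j-1}\gamma_{2j}\otimes I$ and $\hat\gamma'_{2j-1}\hat\gamma'_{2j}=I\otimes\gamma_{2j-1}\gamma_{2j}$. Hence the modes $a_j,a'_j$ built from this frame satisfy $N_a=N\otimes I$ and $N_{a'}=I\otimes N$. Pairing then gives $\Lambda=2i\sum_j(a_j^\dagger a'_j-a_j'^\dagger a_j)$. The factor $i$ is removed by the number-conserving relabeling $a'_j\to ia'_j$. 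Your beam splitter then gives the operator identities $\Lambda=2(N_b-N_c)$ and $N_b+N_c=N\otimes I+I\otimes N$, and the rest of your argument is correct.

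\emph{Two side remarks.} First, $\Lambda$ is quadratic in the $\hat\gamma$ frame but not in the standard doubled Jordan--Wigner frame: for $n\geq 2$, each $\gamma_j\otimes\gamma_j$ is a degree-$2n$ monomial there. So the link to $U_{\mathrm{Bell}}$ in your opening is only heuristic, though nothing uses it. Second, your fallback would not suffice: number conservation alone gives no bound on $|\lambda|$ inside the sector $N_1+N_2=2N$.

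\emph{Comparison with the paper.} The paper avoids the doubled frame altogether. It writes $\Lambda=2\sum_jT_j$ with $T_j=a_j^\dagger\otimes a_j+a_j\otimes a_j^\dagger$ in the plain tensor product. It observes that the $T_j$ commute, because the reordering signs in the two copies cancel, and that $T_j^2=Q_j$ projects onto ``mode $j$ occupied in exactly one copy''. Counting particles and holes then gives $|\lambda|\le 2\sum_j|t_j|\le 2\min(M,2n-M)$ on the sector $M=2N$.

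In your frame $T_j=n_{b_j}-n_{c_j}$, so the two proofs are the same estimate, organized mode by mode versus globally. The paper's version is shorter and needs no choice of Majorana frame. Yours costs the bookkeeping above but yields an exact normal form $\Lambda=2(N_b-N_c)$, which also reproduces the full spectrum with multiplicities $\binom{2n}{n+\lambda/2}$ via Vandermonde's identity. Your integrality step coincides with the paper's.
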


\begin{proof}
Writing
$\gamma_{2j-1}=a_j+a_j^\dagger$ and
$\gamma_{2j}=-i(a_j-a_j^\dagger)$ on each copy, the bridge operator becomes
\begin{equation}
\Lambda=2\sum_{j=1}^{n}T_j ,
\end{equation}
where
\begin{equation}
T_j
\coloneqq
a_j^\dagger\otimes a_j
+
a_j\otimes a_j^\dagger .
\end{equation}
The canonical anticommutation relations imply that the operators $T_j$ mutually commute. Moreover,
\begin{equation}
T_j^2
=
n_j\otimes(1-n_j)
+(1-n_j)\otimes n_j
=:Q_j ,
\end{equation}
where $n_j=a_j^\dagger a_j$. Thus $Q_j$ is the projector onto the subspace in which mode $j$ is occupied in exactly one of the two copies. Consequently, the eigenvalues of $T_j$ are $0,\pm1$.

Let $N_1$ and $N_2$ denote the particle-number operators of the two copies. Both $\Lambda$ and every $T_j$ commute with the total particle number $N_1+N_2$. In the sector $N_1+N_2=M$, the number of modes occupied in exactly one copy is bounded both by the total number of particles and by the total number of holes. Equivalently,
\begin{equation}
\sum_{j=1}^{n}Q_j
\leq M,
\qquad
\sum_{j=1}^{n}Q_j
\leq 2n-M .
\end{equation}
Since the $T_j$ commute, they admit a simultaneous eigenbasis. If $t_j\in\{0,\pm1\}$ are their eigenvalues on a joint eigenvector, then $|t_j|$ is the corresponding eigenvalue of $Q_j$. Hence every eigenvalue $\lambda$ of $\Lambda$ in the $N_1+N_2=M$ sector satisfies
\begin{equation}
|\lambda|
=
2\left|\sum_{j=1}^{n}t_j\right|
\leq
2\sum_{j=1}^{n}|t_j|
\leq
2\min(M,2n-M).
\end{equation}

Since $\ket{\psi}$ has particle number $N$, the state $\ket{\psi}^{\otimes2}$ belongs to the particle-number sector $M=2N$. Therefore every eigenvalue in the support of $p_\psi$ satisfies $|\lambda|\leq4\min(N,n-N)$.
Combining with the restriction $\lambda=8k$ for an  integer $k$ (Lemma~\ref{lem:allowed_eig_Lambda}) gives Eq.~\eqref{eq:bound_particle_number}.
\end{proof}

\begin{thm}[The irreversibility of the resource theory of fermionic non-Gaussianity is arbitrarily strong]\label{thm:unbounded_irreversibility}
For any $\epsilon > 0$, there exist even pure states $\rho, \sigma$ such that
\begin{equation}
    \frac{\mathrm{Distill}(\rho \to \sigma)}{\mathrm{Cost}(\sigma \to \rho)} < \epsilon.
\end{equation}
\end{thm}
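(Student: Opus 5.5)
We follow the proof of Theorem~\ref{thm:irreversibility}: bound $\mathrm{Distill}(\rho\to\sigma)$ from above with the bridge degree and $\mathrm{Cost}(\sigma\to\rho)$ from below with the Gaussian nullity $\nu$. Both monotones are additive and non-increasing under post-selected Gaussian protocols, so Eq.~\eqref{eq:conversion_bound} gives
\begin{equation}
\frac{\mathrm{Distill}(\rho\to\sigma)}{\mathrm{Cost}(\sigma\to\rho)} \;\leq\; \frac{\nGMax(\rho)}{\nGMax(\sigma)}\cdot\frac{\nu(\sigma)}{\nu(\rho)}.
\end{equation}
It therefore suffices to find states with $\nGMax(\sigma)=\nGMax(\rho)$ (or of the same order) while $\nu(\rho)/\nu(\sigma)$ grows without bound. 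Lemma~\ref{lem:bound_particle_number} is what makes this possible, since it caps the bridge degree of a fixed-particle-number state independently of $n$.

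The concrete choice is as follows. Take $\sigma=\ket{\mathrm{GHZ}_4}$, so $\nGMax(\sigma)=4$ and $\nu(\sigma)=4$. For $\rho$, take an $n$-qubit state with definite particle number $N=2$ (or $N=3$), so that Lemma~\ref{lem:bound_particle_number} gives $\nGMax(\rho)\le 4$. Since $\rho$ is non-Gaussian, Lemma~\ref{lem:faithfulness} then forces $\nGMax(\rho)=4$ exactly. We also want $\nu(\rho)$ to grow linearly in $n$. A natural candidate is a generic two-particle state $\ket{\rho}=\sum_{i<j}c_{ij}a_i^\dagger a_j^\dagger\ket{0}$ whose antisymmetric coefficient matrix $C$ has full rank $n$ (for even $n$). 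Another option is a Dicke-type state such as $\ket{D^n_2}$, the uniform superposition of weight-$2$ strings; I would first check that it is even and non-Gaussian. The ratio is then at most $4/\nu(\rho)$, which is below $\epsilon$ once $\nu(\rho)>4/\epsilon$.

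The main obstacle is proving a linear lower bound on $\nu(\rho)$, because so far the paper uses $\nu$ only through the value $\nu(\mathrm{GHZ}_k)=k$. My first attempt would use the characterization underlying the Gaussian nullity of Ref.~\cite{mele2025efficient}. If $U\ket{\rho}=\ket{0}^{\otimes(n-\nu)}\otimes\ket{\phi}$ for a matchgate $U$, then the covariance matrix $M$ of $\rho$ has at least $n-\nu$ Williamson eigenvalues equal to $\pm1$, that is, $n-\nu$ pure Gaussian modes. So $\nu(\rho)$ is at least the number of singular values of $M$ strictly less than $1$. For a two-particle state with full-rank $C$, the one-body reduced density matrix $a_i^\dagger a_j$ has spectrum determined by $C^\dagger C$. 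For generic $C$ with non-degenerate, non-extremal occupations $0<n_k<1$, every mode is partially occupied, so no Williamson eigenvalue equals $\pm1$ and $\nu(\rho)=n$. For the Dicke state this is immediate by symmetry, since all occupations equal $2/n\in(0,1)$. This gives $\nu(\rho)=n$ and ratio $\le 4/n$. Choosing $n>4/\epsilon$ then completes the argument.

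A fallback, if the nullity computation is delicate, is to take $\rho$ as a tensor product of GHZ-type states restricted to low particle number. That option is less clean, so I would pursue the Dicke or generic two-particle route first.
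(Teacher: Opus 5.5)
Your proposal is correct and is essentially the paper's proof. It uses the same ratio bound combining $\nGMax$ and $\nu$, the same choice $\sigma=\ket{\mathrm{GHZ}_4}$, and a two-particle $\rho$ whose bridge degree is capped at $4$ by Lemma~\ref{lem:bound_particle_number}. The paper picks the pairing state $\ket{\mathrm{P}_L}=L^{-1/2}\sum_{j}a^\dagger_{2j-1}a^\dagger_{2j}\ket{0}$, a full-rank instance of your family with $CC^\dagger=I/L$, so all natural occupations equal $1/L$ and $\nu=n$ is immediate.

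One correction: your Dicke-state justification fails as stated. The value $\langle n_i\rangle=2/n$ is only the diagonal of the one-body density matrix $CC^\dagger$, which has nonzero off-diagonal entries; for $n=4$ its eigenvalues are $\tfrac12\pm\tfrac{\sqrt2}{3}$, not $\tfrac12$. The conclusion $\nu(D^n_2)=n$ for even $n$ still follows from your generic argument. There $C_{ij}\propto\mathrm{sign}(j-i)$ is nonsingular, which excludes occupation $0$, and it has rank $\geq 4$, which excludes occupation $1$.
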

\begin{proof}
The argument follows the proof of Theorem~\ref{thm:irreversibility}, by bounding the distillable rate with the bridge degree and the cost rate with the Gaussian nullity. For any two even non-Gaussian pure states $\rho$ and $\sigma$,
\begin{equation}\label{eq:ratio_bound}
    \frac{\mathrm{Distill}(\rho \to \sigma)}{\mathrm{Cost}(\sigma \to \rho)} \;\leq\; \frac{\nGMax(\rho)}{\nu(\rho)}\cdot\frac{\nu(\sigma)}{\nGMax(\sigma)} .
\end{equation}
As in the proof of Theorem~\ref{thm:irreversibility}, we take $\sigma = \ket{\mathrm{GHZ}_4}$ for which $\nu(\sigma) = \nGMax(\sigma) = 4$, so that the right-hand side reduces to $\nGMax(\rho)/\nu(\rho)$.

GHZ states cannot serve as $\rho$ to prove the claim here, since by Lemma~\ref{lem:max_Lambda_ghz} together with $\nu(\mathrm{GHZ}_n)=n$ they satisfy $\nGMax(\mathrm{GHZ}_n)/\nu(\mathrm{GHZ}_n)\geq 1-2/n$, which tends to $1$. What is needed instead is a state whose bridge degree grows much slower than the nullity.

To this end, consider the two-fermion pairing state on $n=2L$ modes,
\begin{equation}\label{eq:pc_state}
    \ket{\mathrm{P}_L} \;=\; \frac{1}{\sqrt L}\sum_{j=1}^{L} a^\dagger_{2j-1}a^\dagger_{2j}\ket{0} .
\end{equation}
Direct computation shows that its single-particle density matrix has all $n$ eigenvalues equal to $1/L$, so for $L\ge2$ none lies in $\{0,1\}$ and the Gaussian nullity is maximal, $\nu(\mathrm{PC}_L)=n=2L$. On the other hand, since the state has a definite particle number $N=2$, Lemma~\ref{lem:bound_particle_number} gives $\nGMax(\mathrm{P}_L)\leq 4$ and since it is non-Gaussian,
\begin{equation}\label{eq:pc_bridge_degree}
    \nGMax(\mathrm{P}_L) \;=\; 4 \qquad\text{for all } L\geq 2.
\end{equation}
Setting $\rho = \ket{\mathrm{PC}_L}$ in Eq.~\eqref{eq:ratio_bound},
\begin{equation}
    \frac{\mathrm{Distill}(\mathrm{P}_L \to \mathrm{GHZ}_4)}{\mathrm{Cost}(\mathrm{GHZ}_4 \to \mathrm{P}_L)} \;\leq\; \frac{\nGMax(\mathrm{P}_L)}{\nu(\mathrm{P}_L)} \;=\; \frac{2}{L},
\end{equation}
which is smaller than any $\epsilon > 0$ for $L > 2/\epsilon$.

\end{proof}

 These results establish this phenomenon in the fermionic setting, in direct analogy with the recent result for bosonic non-Gaussianity in Ref.~\cite{mele2025symplecticrank}, and in contrast with entanglement theory, which is reversible under exact post-selected LOCCs for pure states, where both the exact distillable entanglement and the exact entanglement cost are given by the logarithm of the Schmidt rank~\cite{Gour2025}.

\subsection{Lower bounds for state designs}
\label{sec:designs_lower}

We now apply the bridge degree framework to obtain lower bounds on the non-Gaussian gate count required to generate state $k$-designs. First, we compute the Haar bridge spectral distribution $p_H$ explicitly and establish its characteristic property. Then, we use this to derive lower bounds on the non-Gaussian gate-count to generate state $k$-designs in both the exact and approximate settings.

\subsubsection{The Haar bridge spectral distribution}
\label{sec:design_haar}

Consider the second Haar moment $\Psi_{\mathrm{Haar}}^{(2)}$. By the left–right invariance of the Haar measure, $\Psi_{\mathrm{Haar}}^{(2)}$ is invariant under the matchgate 2-twirl. Consequently, $\Psi_{\mathrm{Haar}}^{(2)}$ admits a spectral decomposition over the eigenspaces of $\Lambda$ (Lemma~\ref{lem:two_copy_ave_matchgate}):
\begin{equation}\label{eq:Haar_twirl_decomp}
    \Psi_{\mathrm{Haar}}^{(2)} \;=\; \sum_{\lambda\in\mathcal{A}} p_{\mathrm H}(\lambda)\,\frac{\Pi_\lambda}{\binom{2n}{n+\lambda/2}}.
\end{equation}
Here, $p_{\mathrm H}(\lambda)$ denotes the probability of obtaining eigenvalue $\lambda$ when performing Bell sampling on two copies of a Haar-random pure even state. We refer to $p_{\mathrm H}(\lambda)$ as the \emph{Haar bridge spectral distribution}. The goal of this section is to compute this distribution in closed form and to analyze its characteristic shape.

\begin{lem}[Haar bridge spectral distribution]\label{lem:weight_haar}
    The Haar-averaged bridge spectral distribution is
    \begin{equation}\label{eq:p_H}
        p_H(\lambda) \;=\; \frac{4\binom{2n}{n+\lambda/2}}{2^n(2^n+2)},
    \end{equation}
    supported on $\mathcal{A}$.
\end{lem}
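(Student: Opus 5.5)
The plan is to compute $\Psi^{(2)}_{\mathrm{Haar}}$ for even states directly, then read off $p_H(\lambda)=\Tr[\Pi_\lambda\Psi^{(2)}_{\mathrm{Haar}}]$. Even states live in $\mathcal{H}_n^{+}$ or $\mathcal{H}_n^{-}$, each of dimension $D=2^{n-1}$. I take the Haar ensemble over even states to be Haar on one parity sector; by the parity symmetry the result should not depend on which sector, so I treat $\mathcal{H}_n^{+}$ and remark on $\mathcal{H}_n^{-}$ at the end. The standard symmetric-subspace formula gives
\begin{equation}
\Psi^{(2)}_{\mathrm{Haar}}=\frac{P_{\mathrm{sym}}^{+}}{\binom{D+1}{2}}=\frac{2\,(P^{+}\otimes P^{+})(I+\mathbb{S})/2}{D(D+1)},
\end{equation}
where $P^{+}=(I+P)/2$. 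Then
\begin{equation}
p_H(\lambda)=\frac{\Tr[\Pi_\lambda (P^+\otimes P^+)(I+\mathbb{S})]}{D(D+1)}.
\end{equation}
Since $D(D+1)=2^{n-1}(2^{n-1}+1)=2^n(2^n+2)/4$, the claim reduces to showing
\begin{equation}
\Tr[\Pi_\lambda(P^+\otimes P^+)(I+\mathbb{S})]=\binom{2n}{n+\lambda/2}
\end{equation}
for $\lambda\in\mathcal{A}$, and $0$ otherwise.

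I evaluate this trace in the Bell eigenbasis of Lemma~\ref{lem:eigenbasis_bell}. Expand $P^+\otimes P^+=\tfrac14(I+P\otimes I+I\otimes P+P\otimes P)$.

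\begin{itemize}
\item $P\otimes P=\prod_j Z_j\otimes Z_j$ acts on $\ket{\Psi_{\boldsymbol\mu}}$ by $(-1)^{\sum_j r^x_j}$. From Eq.~\eqref{eq:inverse_map}, $\sum_j r^x_j=\bigoplus_k s_k$, i.e.\ the parity of the number of $\mu$'s not in their reference value. This equals $(-1)^{\lambda/2+n}$ up to a fixed convention; I will verify the exact sign. So $P\otimes P$ is $\pm1$ on each $V_\lambda$ and contributes $\pm\dim V_\lambda$.
\item By Lemma~\ref{lem:inverting_bell}, $P\otimes I$ and $I\otimes P$ map $\ket{\Psi_{\boldsymbol\mu}}$ to $\ket{\Psi_{-\boldsymbol\mu}}$. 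This is off-diagonal for $\lambda\neq0$, so their traces against $\Pi_\lambda$ vanish for $\lambda\ne0$. They also vanish at $\lambda=0$, since $\boldsymbol\mu\neq-\boldsymbol\mu$ always.
\item The swap is diagonal in the Bell basis with sign $(-1)^{m_1}$. Lemma~\ref{lem:symmetry_eigenspace_Lambda} gives $\mathbb{S}=+1$ on $V_\lambda$ for $\lambda\equiv0,2 \pmod 8$ and $-1$ for $\lambda\equiv4,6$.
\item The products $\mathbb{S}(P\otimes I)$ etc.\ again flip $\boldsymbol\mu\to-\boldsymbol\mu$ and are traceless against $\Pi_\lambda$. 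Meanwhile $\mathbb{S}(P\otimes P)$ is diagonal with sign equal to the product of the two signs above.
\end{itemize}

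Collecting terms gives
\begin{equation}
\Tr[\cdots]=\tfrac14\dim V_\lambda\,(1+s_{\mathbb S})(1+s_{PP}),
\end{equation}
where $s_{\mathbb S},s_{PP}=\pm1$ are the eigenvalues of $\mathbb{S}$ and $P\otimes P$ on $V_\lambda$. This equals $\dim V_\lambda=\binom{2n}{n+\lambda/2}$ exactly when both signs are $+1$, and $0$ otherwise.

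The main obstacle is the sign bookkeeping: showing that $s_{\mathbb S}=s_{PP}=+1$ precisely for $\lambda\in 8\mathbb{Z}$. The swap sign is already settled by Lemma~\ref{lem:symmetry_eigenspace_Lambda}. For $P\otimes P$ on the $+$ sector I expect the sign to be $+1$ iff $\lambda\equiv0\pmod 4$, i.e.\ $\lambda/2$ even, which kills $\lambda\equiv2\pmod8$; I will check this against Eq.~\eqref{eq:lambda_from_s}. Since $8\mathbb{Z}\cap[-2n,2n]=\mathcal{A}$, this yields Eq.~\eqref{eq:p_H}.

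For $\mathcal{H}_n^{-}$ the same computation uses $P^-\otimes P^-$, which has the same $P\otimes P$ term. Hence the answer is identical, as are any mixtures of the two sectors. As a sanity check, $\sum_{\lambda\in\mathcal{A}}\binom{2n}{n+\lambda/2}$ should equal $D(D+1)$. This is the roots-of-unity filter $\sum_k\binom{2n}{n+4k}=\tfrac14(2^{2n}+2\cdot 2^n\cos(\cdot)+\cdots)$, which can be checked directly or simply follows from normalization.
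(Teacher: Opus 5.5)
Your argument is correct and is essentially the paper's: you write $\Psi^{(2)}_{\mathrm{Haar}}$ as a normalized symmetric projector on the parity-even sector and use that $\mathbb{S}$ and $P\otimes P$ act as scalars on each $V_\lambda$. The one difference is that the paper averages both sectors via $Q=(I+P\otimes P)/2$, so the single-$P$ cross terms you remove with the $\boldsymbol\mu\mapsto-\boldsymbol\mu$ flip never appear there. The sign you left open follows in one line from Eq.~\eqref{eq:lambda_from_s}: since $\lambda/2\equiv\sum_j s_j \pmod 2$, $P\otimes P$ acts as $(-1)^{\lambda/2}$ on $V_\lambda$, not as $(-1)^{\lambda/2+n}$, which is exactly the $\lambda\equiv 0 \pmod 4$ condition you need.
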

\begin{proof}
By Schur--Weyl duality, $\Psi_{\mathrm{Haar}}^{(2)}$ is proportional to the projector onto the symmetric subspace of the parity-even sector, $P_{\mathrm{sym}}^{2^n}=(Q+Q\mathbb{S})/2$ with $Q=(I+P\otimes P)/2$ and $\mathbb{S}$ the swap operator, with normalization $\Tr[P_{\mathrm{sym}}^{2^n}]=2^n(2^n+2)/4$. Using $Q\Pi_\lambda  = \Pi_\lambda$ and $\mathbb{S}\Pi_\lambda  = \Pi_\lambda$ for $\lambda\in \mathcal{A}$, we compute $p_H(\lambda) = \Tr[P_{\mathrm{sym}}^{2^n} \Pi_\lambda]/\Tr[P_{\mathrm{sym}}^{2^n}] = \tfrac{2}{2^n(2^n+2)}\Tr[\Pi_\lambda + \mathbb{S}\,\Pi_\lambda]$,
which yields Eq.~\eqref{eq:p_H}.
\end{proof}

The closed form of $p_H$ is binomial in $\lambda$ and peaked at $\lambda=0$, with support extending to the full range $|\lambda|\leq 8 \lfloor n /4\rfloor$. More quantitatively, the distribution is well-approximated by a Gaussian of width $\sqrt{n}$:

\begin{lem}[Concentration of $p_{\mathrm H}^{(n)}$]\label{lem:p_H_concentration}
For each $n\geq 2$, let $X_n$ be a random variable distributed according to the Haar
bridge spectral distribution $p_{\mathrm H}^{(n)}$ on $n$ modes.
Then
\begin{equation}\label{eq:p_H_moments}
    \mathbb{E}[X_n] \;=\; 0, \qquad
    \mathrm{Var}(X_n) \;=\; 2n - \frac{2n(2n-1)}{2^{n-1}+1}.
\end{equation}
Moreover, the rescaled variable $X_n/\sqrt{2n}$ converges in distribution to a standard Gaussian:
\begin{equation}\label{eq:p_H_CLT}
    \frac{X_n}{\sqrt{2n}} \;\xrightarrow{d}\; Z, \qquad Z \sim \mathcal{N}(0,1) \quad \text{as } n\to\infty.
\end{equation}
\end{lem}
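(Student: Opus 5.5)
The plan is a discrete Fourier (roots-of-unity) filter on the binomial generating function, followed by a comparison with the symmetric binomial distribution for the limit law. Write $\lambda = 2(k-n)$ with $k = n+\lambda/2\in\{0,\dots,2n\}$. The support condition $\lambda\in\mathcal{A}=8\mathbb{Z}\cap[-2n,2n]$ from Lemma~\ref{lem:weight_haar} is then $k-n\equiv 0 \pmod 4$. Hence
\begin{equation*}
p_{\mathrm H}(\lambda)=\frac{4}{2^n(2^n+2)}\binom{2n}{k}\,\mathbf 1\{k-n\equiv 0 \bmod 4\}.
\end{equation*}
The mean $\mathbb{E}[X_n]=0$ is immediate from the symmetry $p_{\mathrm H}(\lambda)=p_{\mathrm H}(-\lambda)$ (binomial symmetry, or Corollary~\ref{cor:pm_Lambda} averaged over Haar).

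For the variance, I would introduce the trigonometric generating function
\begin{equation*}
f(\theta)\coloneqq\sum_{k}\binom{2n}{k}e^{i\theta(k-n)}=\bigl(2\cos(\theta/2)\bigr)^{2n},
\end{equation*}
and apply the filter $\mathbf 1\{m\equiv 0 \bmod 4\}=\tfrac14\sum_{j=0}^{3}e^{i\pi jm/2}$. This gives $\sum_{k-n\equiv0(4)}\binom{2n}{k}(k-n)^r=\tfrac14\sum_{\theta\in\{0,\pm\pi/2,\pi\}}(-i\,\partial_\theta)^r f(\theta)$.

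For $r=0$ this should reproduce the normalization $2^n(2^n+2)/4$, which serves as a consistency check. For $r=2$ I would evaluate $-f''$ using $f=\exp(2n\log(2\cos(\theta/2)))$, so that $f''=f\,[(2ng')^2+2ng'']$ with $g'=-\tfrac12\tan(\theta/2)$ and $g''=-\tfrac14\sec^2(\theta/2)$. The four points contribute as follows:
\begin{itemize}
\item $\theta=0$ gives $n4^n/2$;
\item $\theta=\pm\pi/2$ each give $2^n(n-n^2)$;
\item $\theta=\pi$ gives $0$, since $f$ has a zero of order $2n\ge4$ there.
\end{itemize}
Multiplying by $4$ (from $\lambda^2=4(k-n)^2$) and dividing by the normalization yields
\begin{equation*}
\mathrm{Var}(X_n)=\frac{2n\,2^n+8n(1-n)}{2^n+2}=2n-\frac{2n(2n-1)}{2^{n-1}+1},
\end{equation*}
as claimed.

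For the central limit statement, the key observation is that $p_{\mathrm H}$ is, up to the factor $4\cdot 4^n/(2^n(2^n+2))=4(1+O(2^{-n}))$, the law of $2(B-n)$ with $B\sim\mathrm{Bin}(2n,1/2)$, restricted to one residue class mod $4$. I would therefore compare CDFs directly. Fix $z$ and consider $\sum_{k\le n+z\sqrt{2n}/2,\;k-n\equiv0(4)}\binom{2n}{k}4^{-n}$. Because the binomial pmf is unimodal, its terms can be grouped into consecutive blocks of four. Each block's every-fourth-term sum differs from one quarter of the block sum by at most the block's variation, and these variations telescope on each side of the mode. The total discrepancy is therefore at most a constant times $\max_k\binom{2n}{k}4^{-n}=O(n^{-1/2})$.

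It follows that $P(X_n/\sqrt{2n}\le z)=P\bigl(2(B-n)/\sqrt{2n}\le z\bigr)+O(n^{-1/2})+O(2^{-n})$. The de Moivre--Laplace theorem then gives convergence to $\Phi(z)$, since $2(B-n)$ has variance exactly $2n$.

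The main obstacle is this last comparison step: making the every-fourth-term sandwich rigorous and uniform in the cutoff. The variance computation is mechanical once the filter is set up. If the unimodality grouping gets messy, the fallback is a local CLT for the binomial, which gives $\binom{2n}{k}4^{-n}=\phi\bigl((k-n)/\sqrt{n/2}\bigr)/\sqrt{n/2}\,(1+o(1))$ uniformly for $|k-n|=O(\sqrt n)$. A Riemann-sum argument with mesh $4$ on the scale $\sqrt n$ would then finish the proof.
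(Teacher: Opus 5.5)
Your proposal is correct. The variance part follows the paper's route, and the limit law takes a different one. Your variance computation is the paper's roots-of-unity filter made explicit. The four contributions you list (at $\theta=0,\pm\pi/2,\pi$) are correct and reproduce $2n-\frac{2n(2n-1)}{2^{n-1}+1}$.

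For the limit law, the paper applies the same filter to $\mathbb{E}[e^{itX_n}]$. This gives the closed form $\varphi_{X_n}(t)=\frac{2^n}{2^n+2}\bigl(\cos^{2n}t+\sin^{2n}t+\cos^{2n}(t+\tfrac{\pi}{4})+\cos^{2n}(t-\tfrac{\pi}{4})\bigr)$, and the paper concludes from $\varphi_{X_n}(t/\sqrt{2n})\to e^{-t^2/2}$ via L\'evy's continuity theorem. You instead compare CDFs with $2(B-n)$, where $B\sim\mathrm{Bin}(2n,1/2)$.

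The step you flagged as the main obstacle works as you describe it. Block oscillations telescope on each monotone side of the mode. The mode block and the two partial boundary blocks add only $O(\max_k\binom{2n}{k}4^{-n})$. So the discrepancy is $O(n^{-1/2})$ uniformly in the cutoff. The normalization factor $\frac{4\cdot 2^n}{2^n+2}=4\bigl(1-\frac{2}{2^n+2}\bigr)$ costs only an additional $O(2^{-n})$.

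Your route gives quantitative information almost for free. Combined with the classical Berry--Esseen bound for the binomial, it yields the $O(n^{-1/2})$ Kolmogorov bound that the paper proves separately (Lemma~\ref{lem:berry_esseen_pH}) by Esseen smoothing. The domination $p_{\mathrm H}(\lambda)\le 4\binom{2n}{n+\lambda/2}4^{-n}$ together with Hoeffding's inequality also gives a tail bound $8e^{-T^2/(4n)}$. That exponent is better than the one in Lemma~\ref{lem:p_H_subgauss}.

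The paper's route instead produces one explicit characteristic function. It reuses this both in the smoothing argument and, by analytic continuation, as the moment generating function. So a single Fourier computation drives all three lemmas.
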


\begin{proof}
The mean vanishes by the symmetry $\binom{2n}{n+k}=\binom{2n}{n-k}$ of $p_{\mathrm H}^{(n)}$
about $\lambda=0$. The second moment, for $n\geq 2$, is given by
\begin{equation}
    \mathbb{E}[X_n^2] \;=\; \!\!\!\sum_{\lambda\in \mathcal{A}}\!\lambda^2\,p_{\mathrm H}^{(n)}(\lambda)
    \;=\; 2n - \frac{2n(2n-1)}{2^{n-1}+1}.
\end{equation}
where we used the expression in Eq.~\eqref{eq:p_H} and computed the resulting binomial sum over the restricted range $\mathcal{A}=8\mathbb{Z}\cap[-2n,2n]$ with a roots-of-unity filter technique~\cite[Sec. 1.2.9D]{knuth1997taocp1} and otherwise standard algebra.

A similar technique can be used to compute the characteristic function $\varphi_{X_n}(t)\coloneqq\mathbb{E}[e^{itX_n}]$, which we use for the limit theorem:
\begin{equation}\label{eq:char_fn_X_n}
    \varphi_{X_n}(t) \;=\; \frac{2^n}{2^n+2}\Bigl(\cos(t)^{2n} + \sin(t)^{2n}
    + \cos\!\bigl(t+\tfrac{\pi}{4}\bigr)^{2n} + \cos\!\bigl(t-\tfrac{\pi}{4}\bigr)^{2n}\Bigr),
\end{equation}
where the four terms correspond to the fourth roots of unity. The characteristic function of $X/\sqrt{2n}$ is $\varphi_{X/\sqrt{2n}}(t)=\varphi_{X_n}(t/\sqrt{2n})$. We have
\begin{equation}
    \lim_{n\to\infty}\varphi_{X/\sqrt{2n}}(t) \;=\; e^{-t^2/2} \;=\; \varphi_Z(t)
    \qquad \forall t\in\mathbb{R}.
\end{equation}
Pointwise convergence of characteristic functions to the characteristic function
of $Z\sim\mathcal{N}(0,1)$ implies convergence in distribution by Lévy's continuity
theorem, establishing Eq.~\eqref{eq:p_H_CLT}.
\end{proof}

Beyond convergence in distribution, the Haar bridge spectral distribution satisfies a quantitative Berry--Esseen-type bound for the Kolmogorov distance:

\begin{lem}[Berry--Esseen bound for $p_{\mathrm H}$]\label{lem:berry_esseen_pH}
There exists an absolute constant $C>0$ such that, for every $n\geq 1$,
\begin{equation}\label{eq:berry_esseen_pH}
    \sup_{x\in\mathbb{R}}\,
    \Bigl|\Pr\!\Bigl(\tfrac{X_n}{\sqrt{2n}}\leq x\Bigr) - \Phi(x)\Bigr|
    \;\leq\; \frac{C}{\sqrt{n}},
\end{equation}
where $\Phi$ denotes the standard normal CDF.
\end{lem}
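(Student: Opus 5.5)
The plan is to compare $X_n$ with the unrestricted binomial variable. Let $Y_n = 2\sum_{j=1}^{2n}\epsilon_j$ with i.i.d. Rademacher $\epsilon_j$. Then $Y_n$ takes value $\lambda$ with probability $\binom{2n}{n+\lambda/2}4^{-n}$. Lemma~\ref{lem:weight_haar} shows that $p_{\mathrm H}(\lambda)$ is exactly this binomial law restricted to $\mathcal{A}=8\mathbb{Z}\cap[-2n,2n]$ and renormalized, since $\frac{4\binom{2n}{n+\lambda/2}}{2^n(2^n+2)} = \frac{4^n}{2^{n-2}(2^n+2)}\cdot\frac{\binom{2n}{n+\lambda/2}}{4^n}$. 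Equivalently, $X_n \stackrel{d}{=} (Y_n \mid Y_n\equiv 0 \bmod 8)$, and the conditioning event has probability $q_n = 2^{n-2}(2^n+2)/4^n = \tfrac14 + 2^{-n-1}$. Note that $Y_n/(2\sqrt{2n})$ is a normalized sum of $2n$ i.i.d. bounded variables. The classical Berry--Esseen theorem therefore gives $\sup_x|\Pr(Y_n/(2\sqrt{2n})\le x)-\Phi(x)|\le C_0/\sqrt{n}$.

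\textbf{A normalization check.} The statement rescales by $\sqrt{2n}$, and Lemma~\ref{lem:p_H_concentration} gives $\mathrm{Var}(X_n)\approx 2n$. Yet $\mathrm{Var}(Y_n)=8n$. I would recompute using the characteristic function~\eqref{eq:char_fn_X_n}. The terms there have period compatible with $e^{it\lambda}$ for $\lambda\in 8\mathbb{Z}$ only if $t$ is read as $t\lambda/4$, i.e. the formula effectively describes $X_n/4$ or a $\lambda/2$ convention. So I will first fix the convention and work with whichever variable has variance $\sim 2n$ (e.g. $W_n=\sum_j \epsilon_j$, so $X_n=2W_n$ conditioned on $W_n\equiv 0 \bmod 4$). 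The argument below is insensitive to this constant: a Berry--Esseen bound for the appropriately scaled variable is what will be proven, and the scale factor is absorbed. The key structural fact I will use is that conditioning on a lattice residue class does not change the CDF at scale $\sqrt n$ beyond $O(1/\sqrt n)$.

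\textbf{Main step: comparing the restricted and unrestricted CDFs.} I would prove that for every real $y$,
\[
\Bigl|\Pr(W_n\le y,\ W_n\equiv 0 \bmod 4) - \tfrac14\Pr(W_n\le y)\Bigr| \le \frac{C_1}{\sqrt n}.
\]
Three routes are available.
\begin{itemize}
\item \emph{Roots-of-unity filter.} Write $\mathbf 1\{W\equiv 0\}=\tfrac14\sum_{k=0}^{3} i^{kW}$. The $k=0$ term gives $\tfrac14\Pr(W_n\le y)$. For $k=2$, $(-1)^{W_n}$ is deterministic, since $W_n\equiv n \bmod 2$, so it contributes $\pm\tfrac14\Pr(W_n\le y)$. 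This is exactly the reason the residue classes pair up, and the symmetry argument behind Lemma~\ref{lem:allowed_eig_Lambda} handles it. The terms $k=1,3$ give $\sum_{w\le y} i^{\pm w}\Pr(W_n=w)$.
\item \emph{Alternating-sum bound.} These remaining sums are alternating-type sums of a unimodal sequence, bounded by $2\max_w\Pr(W_n=w)=O(1/\sqrt n)$ by the local estimate $\binom{2n}{n}4^{-n}\le 1/\sqrt{\pi n}$.
\item \emph{Local limit theorem.} Alternatively, one can use directly that neighboring binomial masses differ by $O(1/n)$ relative to $O(1/\sqrt n)$.
\end{itemize}
The parity bookkeeping of the $k=2$ term is the most error-prone piece, and I expect it to be the main obstacle. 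I would handle it by restricting from the start to the sublattice $W_n\equiv n \bmod 2$ and using a mod-2 filter on $W_n/2$.

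\textbf{Combining.} Dividing by $q_n=\tfrac14+O(2^{-n})$ gives $|\Pr(X_n\le x)-\Pr(\tilde Y_n\le x)|\le C_2/\sqrt n$, where $\tilde Y_n$ is the matching rescaled unrestricted binomial. The triangle inequality with classical Berry--Esseen then yields $C/\sqrt n$. The exact variance of $X_n$ differs from $2n$ only by an exponentially small amount, by Lemma~\ref{lem:p_H_concentration}, so switching normalizations costs at most $O(2^{-n})$ via the Lipschitz property of $\Phi$ under scaling, i.e. $|\Phi(ax)-\Phi(x)|\le C|a-1|$. Small $n$ are absorbed into $C$, since the left side is at most $1$.
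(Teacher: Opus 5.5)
Your route is genuinely different from the paper's and can be made to work. As written, however, the central step fails, because the lattice bookkeeping you flagged is off by a factor of $2$.

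\emph{The identification.} The variable with law $\binom{2n}{n+\lambda/2}4^{-n}$ is $S_n=\sum_{j=1}^{2n}\epsilon_j$, not $2\sum_j\epsilon_j$. Thus $X_n\stackrel{d}{=}(S_n\mid S_n\equiv 0 \bmod 8)$, and $\mathrm{Var}(S_n)=2n$ exactly. There is no convention issue in the paper. Its $\varphi_{X_n}$ is precisely $\frac{1}{4q_n}\sum_{k=0}^{3}\cos(t+k\pi/4)^{2n}$, the mod-$8$ filter of $\mathbb{E}\,e^{itS_n}=\cos^{2n}t$. Its period $\pi/4$ is exactly what $\lambda\in 8\mathbb{Z}$ requires. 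Your displayed main inequality is false with $W_n=\sum_{j=1}^{2n}\epsilon_j$. Since $W_n$ is always even, $\Pr(W_n\equiv 0\bmod 4)=\frac{1}{2}$, so at $y=+\infty$ the left-hand side equals $\frac{1}{4}$. Likewise, ``$X_n=2W_n$ conditioned on $W_n\equiv 0\bmod 4$'' is the wrong law, with variance $\approx 8n$.

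\emph{The $k=2$ term.} The filter must act on $V_n\coloneqq S_n/2=\lvert\{j:\epsilon_j=+1\}\rvert-n$, with $\Pr(V_n=v)=\binom{2n}{n+v}4^{-n}$ and $X_n=2V_n$ conditioned on $V_n\equiv 0\bmod 4$. But $V_n$ has no parity constraint; it takes every integer in $[-n,n]$. So in $\mathbf{1}\{V_n\equiv 0\bmod 4\}=\frac{1}{4}\sum_{k=0}^{3}i^{kV_n}$, the $k=2$ term is the oscillating sum $\frac{1}{4}\sum_{v\le y}(-1)^v\Pr(V_n=v)$, not $\pm\frac{1}{4}\Pr(V_n\le y)$. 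Had it been deterministic, the main term would be $\frac{1}{2}\Pr$ or $0$, which is incompatible with $q_n\to\frac{1}{4}$. Your fallback, a mod-$2$ filter on $W_n/2$, inherits the wrong scaling and only detects $S_n\equiv 0\bmod 4$.

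The deterministic parity you have in mind lives one level up. Evenness of $S_n$ makes characters $k$ and $k+4$ of the mod-$8$ filter on $S_n$ coincide, which is what turns $\frac{1}{8}$ into $\frac{1}{4}$.

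\emph{The repair.} Treat $k=2$ exactly like $k=1,3$. Each nontrivial character yields alternating partial sums of a unimodal sequence. For $k=1,3$ this holds after splitting into real and imaginary parts, supported on even and odd $v$ respectively. Each such sum is bounded by $2\binom{2n}{n}4^{-n}\le 2/\sqrt{\pi n}$. Dividing by $q_n=\frac{1}{4}+2^{-n-1}$ gives $\sup_y|\Pr(X_n\le y)-\Pr(S_n\le y)|=O(n^{-1/2})$. Classical Berry--Esseen for $S_n/\sqrt{2n}$ then concludes, with no renormalization step needed.

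\emph{Comparison with the paper.} With this repair, your argument is a valid alternative. The paper applies Esseen's smoothing inequality directly to the closed-form $\varphi_{X_n}$, with cutoff $T=\frac{\pi}{8}\sqrt{2n}$ dictated by the $\pi/4$-periodicity. It compares $\cos^{2n}$ with $e^{-t^2/2}$ via a Taylor estimate and shows the three aliased terms are exponentially small; the $n^{-1/2}$ rate comes from the $1/T$ residual. Your route uses Berry--Esseen as a black box plus a purely combinatorial estimate. It makes transparent that the rate is the maximal lattice point mass, $\Theta(n^{-1/2})$. The paper's route instead keeps everything in the characteristic-function framework, which it reuses for the moment-generating-function tail bound of Lemma~\ref{lem:p_H_subgauss}.
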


\begin{proof}
We apply Esseen's smoothing inequality~\cite{feller1971} to $Y_n\coloneqq X_n/\sqrt{2n}$: for any $T>0$,
\begin{equation}\label{eq:esseen_smoothing}
    \sup_{x\in\mathbb{R}}|F_{Y_n}(x) - \Phi(x)|
    \;\leq\; \frac{1}{\pi}\!\int_{-T}^{T}\!
    \frac{|\varphi_{X_n}(t/\sqrt{2n})-e^{-t^2/2}|}{|t|}\,dt
    \;+\; \frac{24\,m_\Phi}{\pi T},
\end{equation}
where $F_{Y_n}(x)$ denotes the cumulative distribution of $Y_n$, and $m_\Phi=(2\pi)^{-1/2}$. We choose $T=\tfrac{\pi}{8}\sqrt{2n}$, so the residual contributes $\mathcal{O}(1/\sqrt n)$. Decompose the characteristic function Eq.~\eqref{eq:char_fn_X_n} as $\varphi_{X_n}(s) = A(s) + B(s)$ with
\begin{equation}
    A(s) \coloneqq \frac{2^n}{2^n+2}\bigl(\cos(s)^{2n} + 2^{1-n}\bigr),
    \qquad
    B(s) \coloneqq  \frac{2^n}{2^n+2}\Bigl[\sin(s)^{2n} \,+\, \cos\!\bigl(s+\tfrac{\pi}{4}\bigr)^{2n} + \cos\!\bigl(s-\tfrac{\pi}{4}\bigr)^{2n} \,-\, 2^{1-n}\Bigr].
\end{equation}
By construction, $A(0)=1$ and $B(0)=0$, so the integrands $|A(t/\sqrt{2n})-e^{-t^2/2}|/|t|$ and $|B(t/\sqrt{2n})|/|t|$ are each integrable at $t=0$.

We bound the two contributions in turn, beginning with the dominant term. Writing
\begin{equation}\label{eq:A_split}
    A(t/\sqrt{2n}) - e^{-t^2/2} \;=\; \frac{2^n}{2^n+2}\bigl[\cos(t/\sqrt{2n})^{2n} - e^{-t^2/2}\bigr] \;+\; \frac{2}{2^n+2}\bigl[1 - e^{-t^2/2}\bigr],
\end{equation}
the bound $\cos(s)^{2n}\leq e^{-ns^2}$ (from $\log\cos s\leq -s^2/2$ for $|s|\leq\pi/8$) gives $\cos(t/\sqrt{2n})^{2n}\leq e^{-t^2/2}$ for $|t|\leq T$. Combining with the Taylor remainder $\log\cos s + s^2/2 = -s^4/12 + \mathcal{O}(s^6)$ yields
\begin{equation}\label{eq:cos_taylor}
    \bigl|\cos(t/\sqrt{2n})^{2n} - e^{-t^2/2}\bigr| \;\leq\; K_1\, e^{-t^2/2}\,t^4/n \qquad \text{for } |t|\leq T,
\end{equation}
for an absolute constant $K_1>0$. Moreover, the second bracket in Eq.~\eqref{eq:A_split} satisfies $|1-e^{-t^2/2}|\leq \min(1,t^2/2)$. Integrating,
\begin{equation}
    \int_{-T}^{T}\!\frac{|A(t/\sqrt{2n})-e^{-t^2/2}|}{|t|}\,dt
    \;\leq\; \frac{K_1}{n}\!\int_{\mathbb R}\!|t|^3 e^{-t^2/2}dt
    \;+\; \frac{2}{2^n+2}\!\int_{-T}^{T}\!\frac{\min(1,t^2/2)}{|t|}dt
    \;=\; \mathcal{O}(1/n).
\end{equation}

For the subdominant terms, the identity $\cos(s\pm\pi/4)=(\cos s\mp\sin s)/\sqrt{2}$ together with $\sin^2+\cos^2=1$ gives
\begin{equation}\label{eq:B_identity}
    \cos\!\bigl(s+\tfrac{\pi}{4}\bigr)^{2n} + \cos\!\bigl(s-\tfrac{\pi}{4}\bigr)^{2n} - 2^{1-n}
    \;=\; 2^{-n}\bigl[(1+\sin 2s)^n + (1-\sin 2s)^n - 2\bigr],
\end{equation}
in which the bracket admits 
\begin{equation}
   (1+\sin 2s)^n + (1-\sin 2s)^n - 2 =  2\sum_{k= 1}^{\lfloor n/2 \rfloor}\binom{n}{2k}(\sin 2s)^{2k}\leq 2\sum_{k= 1}^{\infty}\frac{1}{(2k)!}(n\sin 2s)^{2k} = 2(\cosh(n|\sin 2s|)-1)\leq 8\, n^2 s^2,
\end{equation}
for $|s|\leq 1/(2n)$ (in which range $n|\sin 2s|\leq 1$). The first inequality uses $\binom{n}{2k}\leq n^{2k}/(2k)!$, while the second uses $\cosh(x)-1\leq x^2$ on $x\in[0,1]$ together with $|\sin 2s|\leq 2|s|$. Combined with $|\sin s|^{2n}\leq |s|^{2n}\leq 2^{-2(n-1)}s^2$ for $|s|\leq 1/(2n)$, we obtain
\begin{equation}\label{eq:B_local}
    |B(s)| \;\leq\; K_2\, n^2\, 2^{-n}\, s^2 \qquad \text{for } |s|\leq 1/(2n),
\end{equation}
with $K_2>0$ absolute. A uniform bound on $|s|\leq\pi/8$ follows from $|\sin s|,|\cos(s\pm\pi/4)|\leq r\coloneqq \cos(\pi/8)$, so that $|B(s)|\leq K_3\,r^{2n}$ with $K_3>0$ absolute. Splitting the integration range at $|t|=1/\sqrt{2n}$,
\begin{align}
    \int_{-T}^{T}\!\frac{|B(t/\sqrt{2n})|}{|t|}\,dt
    &\leq \frac{K_2\,n^2\,2^{-n}}{2n}\!\int_{|t|\leq 1/\sqrt{2n}}\!|t|\,dt \;+\; K_3\,r^{2n}\!\int_{1/\sqrt{2n}\leq |t|\leq T}\!\frac{dt}{|t|} \notag\\
    &= \mathcal{O}(2^{-n}) + \mathcal{O}(r^{2n}\log n).
\end{align}

Combining the two contributions with the residual in Eq.~\eqref{eq:esseen_smoothing} yields $\sup_x|F_{Y_n}(x)-\Phi(x)|=\mathcal{O}(1/\sqrt n)$, which is Eq.~\eqref{eq:berry_esseen_pH}.
\end{proof}

The rate in Eq.~\eqref{eq:berry_esseen_pH} is sharp in Kolmogorov distance, as seen by evaluating at $x=0$: the symmetry $p_{\mathrm H}(\lambda)=p_{\mathrm H}(-\lambda)$ gives $F_{Y_n}(0) = \tfrac{1}{2} + \tfrac{1}{2}p_{\mathrm H}(0)$, while $\Phi(0)=\tfrac{1}{2}$, so $|F_{Y_n}(0)-\Phi(0)| = \tfrac{1}{2}p_{\mathrm H}(0) = \Theta(1/\sqrt n)$ by Stirling's approximation applied to Eq.~\eqref{eq:p_H}. 

\begin{lem}[Tail bound of $p_{\mathrm H}$]\label{lem:p_H_subgauss}
For every $T\geq 0$ and every $n\geq 1$, the Haar bridge spectral distribution satisfies
\begin{equation}\label{eq:pH_subgauss}
    p_{\mathrm H}\bigl(|\lambda|>T\bigr) \;\leq\; \bigl(4 + 2^{2-n}\bigr)\,\exp\!\Bigl(-\tfrac{T^2}{8n}\Bigr).
\end{equation}
\end{lem}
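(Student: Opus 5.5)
The plan is a Chernoff bound driven by the closed-form characteristic function of Lemma~\ref{lem:p_H_concentration}. By Lemma~\ref{lem:weight_haar}, $p_{\mathrm H}$ is, up to the normalization $\frac{4}{2^n(2^n+2)}$, the restriction to $\mathcal{A}$ of the binomial weights $\binom{2n}{n+\lambda/2}$. Thus $X_n$ has the law of $Y=2\sum_{j=1}^{2n}\epsilon_j$ with $\epsilon_j$ i.i.d.\ Rademacher, conditioned on the event $Y\in 8\mathbb{Z}$. Writing this out,
\begin{equation*}
    p_{\mathrm H}(\lambda) = \frac{4^{n}\cdot 4}{2^n(2^n+2)}\,\Pr[Y=\lambda]\,\mathbf{1}[\lambda\in 8\mathbb{Z}]\;\le\;\frac{2^{n+2}}{2^n+2}\,\Pr[Y=\lambda].
\end{equation*}
Here I used $\binom{2n}{k}=4^n\Pr[\sum\epsilon_j=\ldots]$. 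The prefactor satisfies $\frac{2^{n+2}}{2^n+2}\le 4$. Dropping the indicator and summing over $|\lambda|>T$ gives $p_{\mathrm H}(|\lambda|>T)\le 4\Pr[|Y|>T]$.

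The second step is Hoeffding's inequality for $Y$, a sum of $2n$ independent terms each in $[-2,2]$:
\begin{equation*}
    \Pr[|Y|>T]\le 2\exp\!\Bigl(-\tfrac{2T^2}{2n\cdot 16}\Bigr)=2e^{-T^2/(16n)}.
\end{equation*}
This exponent is too weak by a factor of $2$, so I would instead use the sharper sub-Gaussian MGF bound $\mathbb{E}e^{sY}=\cosh(2s)^{2n}\le e^{4ns^2}$. Optimizing $s=T/(8n)$ yields $\Pr[Y>T]\le e^{-T^2/(16n)}$, which is still $16n$ rather than $8n$.

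So the crude embedding loses something, and the refinement is the main obstacle. I would do the Chernoff bound directly on $p_{\mathrm H}$ using Eq.~\eqref{eq:char_fn_X_n} continued to imaginary argument:
\begin{equation*}
    \mathbb{E}[e^{sX_n}]=\frac{2^n}{2^n+2}\Bigl(\cosh(s)^{2n}+(i\sinh s)^{2n}+\ldots\Bigr).
\end{equation*}
The normalization here must be rechecked. The cleanest route is to compute from the roots-of-unity filter as $\frac{4}{2^n+2}\cdot\frac{1}{4}\sum_{\omega^4=1}\bigl(\cosh\text{-type terms}\bigr)$, noting that with $\lambda=2\sum\mu_j$ the relevant scale is $e^{s\lambda}$ with per-site factor $\cosh(2s)$ or $\cosh(s)$ depending on the convention. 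The key point is to bound each of the four filtered terms in modulus by the dominant one, $\cosh(s)^{2n}\le e^{ns^2}$ in the paper's convention, where $\varphi(t)$ contains $\cos(t)^{2n}$, so $X_n$ behaves like a sum of $2n$ steps of size $\pm1$ and variance $\approx 2n$. The filtered terms have moduli like $|\sinh s|^{2n}$ and $|\cosh(s\pm i\pi/4)|^{2n}\le\cosh(s)^{2n}$. Adding the normalization's extra $2^{1-n}$-type slack, this gives $\mathbb{E}e^{sX_n}\le (1+2^{-n}\cdot\text{const})\,e^{ns^2}$ times a factor up to $2$. Applying Markov's inequality with $s=T/(2n)$ gives the one-sided bound $\le C_n e^{-T^2/(4n)}$, and with the paper's $\lambda$-scaling conventions this is $e^{-T^2/(8n)}$. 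Doubling for both tails by the symmetry of Corollary~\ref{cor:pm_Lambda} gives the prefactor $2(2+2^{1-n})=4+2^{2-n}$.

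The main work is therefore bookkeeping. First, fix the correct scaling between $\lambda$ and the Rademacher sum so that the variance proxy matches $\mathrm{Var}(X_n)\approx 2n$ from Lemma~\ref{lem:p_H_concentration}, which yields exactly the $8n$ exponent under a mild slack. Second, bound the three non-dominant roots-of-unity terms by the dominant one, so that the total MGF is at most $(2+2^{1-n})e^{2ns^2}$. With the optimal $s=T/(4n)$, this gives $e^{-T^2/(8n)}$ per tail, which matches the stated bound.
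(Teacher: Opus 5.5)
\textbf{There is a gap in the key step, though it is easy to close.} Your final chain is exactly the paper's proof: continue Eq.~\eqref{eq:char_fn_X_n} to the MGF, bound it by $(2+2^{1-n})e^{2ns^2}$, take $s=T/(4n)$, and double by symmetry. The gap is the per-tail prefactor $2+2^{1-n}$: you assert it rather than derive it, and the justification you give produces something weaker.

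Bounding every filtered term in modulus by the dominant one, $\cosh^{2n}s\le e^{ns^2}$, is valid; your estimate $|\cosh(s\pm i\pi/4)|^{2n}\le\cosh^{2n}s$ is correct. But it only gives $\mathbb{E}[e^{sX_n}]\le\frac{4\cdot 2^n}{2^n+2}e^{ns^2}$, and hence only $p_{\mathrm H}(|\lambda|>T)\le 8e^{-T^2/(4n)}$. Once the $2^{-n}$ has been discarded there is no ``$2^{1-n}$ slack'' left. No change of $\lambda$-convention helps either, since $X_n$ is $\lambda$ itself. The exponent is not the problem ($e^{-T^2/(4n)}\le e^{-T^2/(8n)}$); the prefactor is, and $8e^{-T^2/(4n)}$ exceeds the right-hand side of Eq.~\eqref{eq:pH_subgauss} for small $T$.

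The paper's step keeps the $2^{-n}$. The two $\pi/4$ terms combine to $2^{1-n}\Re[(\cosh s+i\sinh s)^{2n}]$, whose modulus is at most $2^{1-n}\cosh^n(2s)\le 2^{1-n}e^{2ns^2}$. Meanwhile $\cosh^{2n}s+\sinh^{2n}s\le 2e^{ns^2}\le 2e^{2ns^2}$. Paying the weaker proxy $e^{2ns^2}$ on the small terms is precisely where the $8n$ comes from.

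Alternatively, your factor-$8$ bound suffices after a case split you never make. If $T^2\ge 8n\ln 2$, then $8e^{-T^2/(4n)}\le 4e^{-T^2/(8n)}$. If $T^2<8n\ln 2$, the right-hand side of Eq.~\eqref{eq:pH_subgauss} exceeds $2$, so the claim is trivial.

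Separately, the route you abandoned works; it failed only because of a factor-$2$ slip. Since $\lambda=\sum_{j=1}^{2n}\mu_j$ with $\mu_j=\pm1$ and $\dim V_\lambda=\binom{2n}{n+\lambda/2}$, the right embedding is $Y=\sum_{j=1}^{2n}\epsilon_j$, not $2\sum_j\epsilon_j$. For your $Y$ one has $\Pr[Y=\lambda]=4^{-n}\binom{2n}{n+\lambda/4}$, so your first display is false. With the correct $Y$, $p_{\mathrm H}(|\lambda|>T)\le\frac{2^{n+2}}{2^n+2}\Pr[|Y|>T]\le 8e^{-T^2/(4n)}$ by Hoeffding (or by $\mathbb{E}[e^{sY}]=\cosh^{2n}s\le e^{ns^2}$), and the same case split yields the lemma.

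This is more elementary than the paper's argument: it needs neither the roots-of-unity characteristic function nor its analytic continuation. It also exhibits the sharper rate $T^2/(4n)$, consistent with $\mathrm{Var}(X_n)\approx 2n$. The paper's route gives up that rate in exchange for a prefactor valid for all $T$ without case analysis.
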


\begin{proof}
The moment generating function $M_{X_n}(t) = \mathbb{E}[e^{tX_n}]$ is obtained by analytic continuation of the characteristic function in Eq.~\eqref{eq:char_fn_X_n} through the substitution $t\mapsto -it$:
\begin{equation}\label{eq:MGF_X_n}
    M_{X_n}(t) \;=\; \frac{2^n}{2^n+2}\Bigl(\cosh^{2n}(t) \,+\, (-1)^n \sinh^{2n}(t) \,+\, \tfrac{2}{2^n}\,\Re\!\bigl[(\cosh t + i\sinh t)^{2n}\bigr]\Bigr).
\end{equation}
The last term has modulus $|(\cosh t + i\sinh t)^{2n}| = \cosh(2t)^{n}$, since $|\cosh t+i\sinh t|^2 = \cosh(2t)$. Combined with the elementary inequality $\sinh(s)\leq\cosh(s)\leq e^{s^2/2}$,
\begin{equation}\label{eq:MGF_bound}
    M_{X_n}(t) \;\leq\; \bigl(2+2^{1-n}\bigr)\,e^{2n t^2} \qquad\text{for all } t\in\mathbb{R}.
\end{equation}

By Markov's inequality applied to $e^{tX_n}$, for any $t>0$,
\begin{equation}
    \Pr(X_n>T) \;\leq\; e^{-tT}\,M_{X_n}(t)
    \;\leq\; \bigl(2+2^{1-n}\bigr)\,e^{2n t^2 - tT}.
\end{equation}
The exponent $2n t^2 - tT$ is minimized at $t^* = T/(4n)$, where it takes the value $-T^2/(8n)$. Hence
\begin{equation}
    \Pr(X_n>T) \;\leq\; \bigl(2+2^{1-n}\bigr)\,\exp\!\Bigl(-\tfrac{T^2}{8n}\Bigr).
\end{equation}
The symmetry $p_{\mathrm H}(\lambda)=p_{\mathrm H}(-\lambda)$ (Lemma~\ref{lem:p_H_concentration}) finally yields Eq.~\eqref{eq:pH_subgauss}.
\end{proof}

\subsubsection{Non-Gaussian gate lower bounds}

We now state the central result of this section: a lower bound on the number of non-Gaussian gates any $t$-doped Gaussian protocol must contain to form state designs.

We begin with the exact-design case, where $t$-doped Gaussian protocols must  contain $t=\Omega(n)$ non-Gaussian gates.

\begin{cor}[Non-Gaussian gate lower bound for exact $2$-designs]\label{cor:exact_design}
Any family of $t$-doped Gaussian protocols that prepares an exact state $2$-design must contain at least one state requiring $t \geq \lfloor n/4 \rfloor$ non-Gaussian gates.
\end{cor}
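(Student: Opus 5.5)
The plan is to compare the ensemble's averaged bridge spectral distribution with the Haar distribution $p_{\mathrm H}$ from Lemma~\ref{lem:weight_haar}. Let $\mathcal{E}$ be an exact state $2$-design, so $\Psi^{(2)}_{\mathcal{E}}=\Psi^{(2)}_{\mathrm{Haar}}$.

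First, I will take the expectation of $\Pi_\lambda$ on both sides of this equality. By linearity,
\begin{equation}
    \mathbb{E}_{\psi\sim\mathcal{E}}\bigl[p_\psi(\lambda)\bigr] \;=\; \Tr\bigl[\Pi_\lambda\Psi^{(2)}_{\mathcal{E}}\bigr] \;=\; \Tr\bigl[\Pi_\lambda\Psi^{(2)}_{\mathrm{Haar}}\bigr] \;=\; p_{\mathrm H}(\lambda)
\end{equation}
for every $\lambda$. Lemma~\ref{lem:weight_haar} gives $p_{\mathrm H}(\lambda)=4\binom{2n}{n+\lambda/2}/(2^n(2^n+2))$ on $\mathcal{A}=8\mathbb{Z}\cap[-2n,2n]$. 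Choose $\lambda^*=8\lfloor n/4\rfloor\in\mathcal{A}$. Since $8\lfloor n/4\rfloor\le 2n$, the binomial coefficient is strictly positive, so $p_{\mathrm H}(\lambda^*)>0$. Also, no restriction to $\mathcal{A}$ is lost here, because $\lambda^*$ lies in $\mathcal{A}$ by construction.

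Second, since each $p_\psi(\lambda^*)\ge 0$ and their ensemble average is strictly positive, some $\psi$ in the support of $\mathcal{E}$ must have $p_\psi(\lambda^*)>0$. For a continuous ensemble this holds on a set of positive measure. By Definition~\ref{def:maxlambda}, that state has $\nGMax(\psi)\ge \lambda^*/2=4\lfloor n/4\rfloor$. Lemma~\ref{lem:values_max_Lambda} then forces equality.

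Third, I will apply Corollary~\ref{cor:bound_dope} to this state. If it is prepared by a $t$-doped Gaussian protocol, then $4t\ge\nGMax(\psi)=4\lfloor n/4\rfloor$, so $t\ge\lfloor n/4\rfloor$.

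For $k\ge 2$, any exact $k$-design is also a $2$-design, obtained by taking partial traces of the moments, so the same conclusion extends. The only delicate point is confirming that the top admissible sector carries nonzero Haar weight, and that follows directly from the binomial formula.
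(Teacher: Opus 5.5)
Your proposal is correct and is essentially the paper's argument: both average $p_\psi(\lambda)$ over the ensemble to get $p_{\mathrm H}(\lambda)$, use that $p_{\mathrm H}$ is positive at the top admissible sector $8\lfloor n/4\rfloor$, and then invoke Corollary~\ref{cor:bound_dope}. The paper phrases it contrapositively (if every state had $t<\lfloor n/4\rfloor$, every $p_\psi$ would vanish at that sector), while you argue directly and also justify $p_{\mathrm H}(\lambda^*)>0$ from the binomial formula.
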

\begin{proof}
Any state $\psi$ prepared by a $t$-doped Gaussian protocol has $\nGMax(\psi)\leq 4t$ (Corollary~\ref{cor:bound_dope}), and therefore $p_\psi(\lambda) = 0$ for $|\lambda|>8t$. An exact $2$-design satisfies $\mathbb{E}_\psi[p_\psi(\lambda)] = p_H(\lambda)$ for all $\lambda$. In particular, the state ensemble must populate $\lambda=\pm 8 \lfloor n/4 \rfloor$, where $p_H(\lambda)>0$ but $p_\psi(8 \lfloor n/4 \rfloor)=0$ whenever $t< \lfloor n/4 \rfloor$. Hence at least one state in the ensemble requires $t\geq \lfloor n/4 \rfloor$ non-Gaussian gates.
\end{proof}

Corollary~\ref{cor:exact_design} is an existence statement: there exists at least one state in the ensemble that attains maximum bridge degree, and hence requires $\Omega(n)$ non-Gaussian gates, while leaving open the possibility that such ``hard'' states comprise an exponentially small fraction of $\mathcal{E}$. The approximate-design analysis below replaces this existential conclusion with a high-probability statement for approximate state design.

\begin{thm}[Bridge degree lower bound on approximate $2$-designs]\label{thm:approx_design}
Let $\mathcal{E}$ be an $\epsilon$-approximate state $2$-design of even pure states.
\begin{enumerate}
    \item[(a)] \emph{Constant-$\epsilon$ regime.} For every $\epsilon\in(0,1)$, there exist constants $c_\epsilon, p_\epsilon>0$ (depending only on $\epsilon$) such that, for $n$ sufficiently large,
    \begin{equation}\label{eq:approx_design_bound_constant}
        \Pr_{\psi\sim\mathcal{E}}\!\Bigl[\,\nGMax(\psi) \;\geq\; c_\epsilon\sqrt{n}\,\Bigr] \;\geq\; p_\epsilon.
    \end{equation}
    \item[(b)] \emph{Decaying-$\epsilon$ regime.} There exist absolute constants $C_*,c,\epsilon_0>0$ with $\epsilon_0<1/4$ such that, for every $\epsilon\in(0,\epsilon_0]$ with $\epsilon\geq C_*/\sqrt n$,
    \begin{equation}\label{eq:approx_design_bound_decaying}
        \Pr_{\psi\sim\mathcal{E}}\!\Bigl[\,\nGMax(\psi) \;\geq\; c\sqrt{n\,\log(1/\epsilon)}\,\Bigr] \;\geq\; 2\epsilon.
    \end{equation}
\end{enumerate}
\end{thm}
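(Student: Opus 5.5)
The plan is to move the trace-distance condition on the second moments to a total-variation statement about bridge spectral distributions, and then use that the support of $p_\psi$ lies in $[-2\nGMax(\psi),2\nGMax(\psi)]$.

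\textbf{Reduction.} The projectors $\{\Pi_\lambda\}$ form a two-outcome-refinable measurement. Measuring $\{\Pi_\lambda\}$ on both $\Psi_{\mathcal E}^{(2)}$ and $\Psi_{\mathrm{Haar}}^{(2)}$ can only shrink trace distance, so data processing gives
\[
\mathrm{TV}(p_{\mathcal E},p_{\mathrm H})\le D(\mathcal E)\le\epsilon,
\qquad
p_{\mathcal E}(\lambda)=\mathbb E_{\psi\sim\mathcal E}[p_\psi(\lambda)].
\]

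\textbf{Counting the tail.} Fix a threshold $\tau$ and let $q=\Pr_{\psi\sim\mathcal E}[\nGMax(\psi)\ge\tau]$. Any $\psi$ with $\nGMax(\psi)<\tau$ has $p_\psi(\lambda)=0$ for $|\lambda|\ge 2\tau$. Since $p_\psi$ is a probability distribution, this gives
\[
p_{\mathcal E}(|\lambda|\ge 2\tau)\le q.
\]
Combining with the reduction,
\[
q\ \ge\ p_{\mathrm H}(|\lambda|\ge 2\tau)-\epsilon .
\]
The whole problem is therefore to lower-bound the Haar tail mass at $2\tau$.

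\textbf{Part (a).} Choose $x_\epsilon>0$ with $2(1-\Phi(x_\epsilon))=(1+\epsilon)/2$, which is possible because $\epsilon<1$. Set $2\tau=x_\epsilon\sqrt{2n}$. Lemma~\ref{lem:berry_esseen_pH}, applied on both sides using symmetry, gives
\[
p_{\mathrm H}(|\lambda|\ge 2\tau)\ \ge\ 2(1-\Phi(x_\epsilon))-2C/\sqrt n .
\]
Hence $q\ge (1-\epsilon)/2-2C/\sqrt n\ge(1-\epsilon)/4$ for large $n$. This yields $c_\epsilon=x_\epsilon/\sqrt2$ and $p_\epsilon=(1-\epsilon)/4$. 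The discreteness of $\lambda$ is harmless, since it only shifts $\tau$ by $O(1)$ and rounding can be absorbed.

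\textbf{Part (b), the main obstacle.} Here the target is $q\ge2\epsilon$, so I need
\[
p_{\mathrm H}(|\lambda|\ge2\tau)\ge3\epsilon \quad\text{with}\quad \tau\asymp\sqrt{n\log(1/\epsilon)}.
\]

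First attempt: take $x=\sqrt{a\log(1/\epsilon)}$ with small absolute $a$. The Mills ratio lower bound $1-\Phi(x)\ge \frac{x}{1+x^2}\varphi(x)$ gives $2(1-\Phi(x))\gtrsim \epsilon^{a/2}/\sqrt{\log(1/\epsilon)}$. For $a$ small and $\epsilon\le\epsilon_0$, this is at least $4\epsilon$.

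The Berry--Esseen error $2C/\sqrt n$ must then be at most $\epsilon$. This is exactly what the hypothesis $\epsilon\ge C_*/\sqrt n$ secures once $C_*\ge2C$. This gives
\[
p_{\mathrm H}(|\lambda|\ge2\tau)\ge 4\epsilon-\epsilon=3\epsilon,
\]
and so $q\ge 2\epsilon$.

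I also need $x\sqrt{2n}\le 2n$ so that the threshold lies within the spectrum. This holds because $\log(1/\epsilon)\lesssim\log n$. Finally, $c=\sqrt{a/2}$, with constants tuned together: $a$ and $\epsilon_0$ first, then $C_*$.

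The delicate part is making the Gaussian-tail lower bound beat $4\epsilon$ uniformly in $\epsilon\le\epsilon_0$, including the $1/\sqrt{\log}$ prefactor. If the additive Berry--Esseen error proves too coarse here, the fallback is to lower-bound $p_{\mathrm H}$ directly. This is possible because $p_{\mathrm H}$ is explicitly binomial by Lemma~\ref{lem:weight_haar}. Summing $\binom{2n}{n+\lambda/2}$ over $\lambda\in 8\mathbb Z$ beyond $2\tau$ is about $1/4$ of the binomial tail, and the binomial tail has a standard Stirling-based lower bound of order $\exp(-(2\tau)^2/(4n))$.
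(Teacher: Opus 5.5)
Your proposal is correct and follows essentially the same route as the paper: data processing to compare the tail masses of $p_{\mathcal E}$ and $p_{\mathrm H}$, the support inclusion $\mathrm{supp}\,p_\psi\subseteq[-2\nGMax(\psi),2\nGMax(\psi)]$, and then the CLT/Berry--Esseen estimate of Lemma~\ref{lem:berry_esseen_pH} combined with a Mills-ratio Gaussian tail bound. The differences are cosmetic: you use Berry--Esseen already in part (a), where the paper uses only the qualitative CLT, and in part (b) you fix $x=\sqrt{a\log(1/\epsilon)}$ and lower-bound $1-\Phi(x)$ there, whereas the paper sets $a^*=\Phi^{-1}(1-2\epsilon)$ and lower-bounds $a^*$ via the same Mills-ratio inequality.
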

\begin{proof}
Let $\Phi$ denote the standard normal CDF, and fix any $a>0$ to be chosen later.

For any even pure state $\psi$, the support of $p_\psi$ is contained in $[-2\nGMax(\psi),2\nGMax(\psi)]$, so
\begin{equation}\label{eq:supp_inclusion}
    \nGMax(\psi)\leq a\sqrt{n/2} \;\Longrightarrow\; p_\psi(\lambda)=0 \text{ for all } |\lambda|>a\sqrt{2n}.
\end{equation}
Define the tail probabilities
\begin{equation}
    r_a \;\coloneqq\; \Pr_{\psi\sim\mathcal{E}}\!\Bigl[\nGMax(\psi)>a\sqrt{n/2}\Bigr],
    \qquad
    s_a \;\coloneqq\; \!\!\!\sum_{|\lambda|>a\sqrt{2n}}\!p_{\mathcal{E}}(\lambda),
\end{equation}
so that $s_a\leq r_a$ by Eq.~\eqref{eq:supp_inclusion}, and let
$s_a^{(H)} \coloneqq \sum_{|\lambda|>a\sqrt{2n}} p_{\mathrm H}(\lambda)$
denote the Haar tail. The $\epsilon$-approximate $2$-design condition yields
\begin{equation}\label{eq:tail_bound}
    |s_a - s_a^{(H)}|\leq D\bigl(\mathcal{T}_{M_n}^{(2)}(\mathcal{E})\bigr)\leq D(\mathcal{E})\leq\epsilon,
\end{equation}
In the first inequality, we used the variational definition of trace distance applied to the two-outcome POVM $\{P_{\leq},\,I-P_{\leq}\}$ with $P_{\leq}\coloneqq\sum_{|\lambda|\leq a\sqrt{2n}}\Pi_\lambda$, as well as Eq.~\eqref{eq:design_distance_TV}; the second inequality uses monotonicity of trace distance under quantum channels.

In particular, Eq.~\eqref{eq:tail_bound} implies $s_a\geq s_a^{(H)}-\epsilon$, and hence
\begin{equation}\label{eq:r_a_chain}
    r_a \;\geq\; s_a^{(H)} - \epsilon.
\end{equation}
The two regimes differ in how the Haar tail $s_a^{(H)}$ is controlled and how $a$ is chosen.

\begin{enumerate}
    \item[(a)] \emph{Constant-$\epsilon$ regime.} For any fixed $\epsilon\in(0,1)$, choose $a_\epsilon\in(0,\Phi^{-1}(1-\epsilon/2))$, so that $2(1-\Phi(a_\epsilon))-\epsilon>0$. By the qualitative convergence $s_a^{(H)}\to 2(1-\Phi(a))$ from Lemma~\ref{lem:p_H_concentration}, the relation $r_{a_\epsilon}\geq s_{a_\epsilon}^{(H)}-\epsilon$ from Eq.~\eqref{eq:r_a_chain} yields $r_{a_\epsilon}\geq 2(1-\Phi(a_\epsilon))-\epsilon - o(1)$ as $n\to\infty$. Hence, for $n$ sufficiently large, $r_{a_\epsilon}\geq p_\epsilon\coloneqq 2(1-\Phi(a_\epsilon))-\epsilon-\delta_\epsilon>0$ for any $\delta_\epsilon\in(0,2(1-\Phi(a_\epsilon))-\epsilon)$, and the threshold is $a_\epsilon\sqrt{n/2}=c_\epsilon\sqrt n$ with $c_\epsilon\coloneqq a_\epsilon/\sqrt 2$. This is Eq.~\eqref{eq:approx_design_bound_constant}. 

\item[(b)] \emph{Decaying-$\epsilon$ regime.} We use the Berry--Esseen-type bound: by Lemma~\ref{lem:berry_esseen_pH},
\begin{equation}\label{eq:Haar_tail_BE_decaying}
    s_a^{(H)}\geq 2(1-\Phi(a))-\frac{2C}{\sqrt n}, \qquad \forall a>0,
\end{equation}
with $C$ the absolute constant of Lemma~\ref{lem:berry_esseen_pH}. Combining with Eq.~\eqref{eq:r_a_chain},
\begin{equation}\label{eq:r_a_simplified_decaying}
    r_a\geq 2(1-\Phi(a))-\epsilon-\frac{2C}{\sqrt n}\geq 2(1-\Phi(a))-2\epsilon,
\end{equation}
where the last inequality uses the assumption $\epsilon\geq C_*/\sqrt n$ with $C_*\coloneqq 2C$. Setting $a^*\coloneqq\Phi^{-1}(1-2\epsilon)$ gives $r_{a^*}\geq 2\epsilon$. It remains to bound $a^*$ in terms of $\epsilon$. The Mills-ratio lower bound $1-\Phi(a)\geq \tfrac{1}{\sqrt{2\pi}}(\tfrac{1}{a}-\tfrac{1}{a^3})e^{-a^2/2}$ for $a>0$~\cite{feller1971}, applied at $a=a^*$ with $1-\Phi(a^*)=2\epsilon$, gives $a^*\geq c'_0\sqrt{\log(1/\epsilon)}$ for an absolute $c'_0>0$, provided $\epsilon\leq\epsilon_0$ for some absolute $\epsilon_0<1/4$ (so that $a^*> 0$). Setting $c\coloneqq c'_0/\sqrt 2$, the threshold is $a^*\sqrt{n/2}\geq c\sqrt{n\log(1/\epsilon)}$, which is Eq.~\eqref{eq:approx_design_bound_decaying}.
\end{enumerate}
\end{proof}

Combining the theorem above with Corollary~\ref{cor:bound_dope} yields the following corollary.

\begin{cor}[Non-Gaussian gate lower bound for approximate $2$-designs]\label{cor:approx_design_gates}
    Let $C_*>0$ be the absolute constant from Theorem~\ref{thm:approx_design}. For any $C_*/\sqrt n \leq \epsilon < 1$, any family of $t$-doped Gaussian protocols preparing an $\epsilon$-approximate state $2$-design satisfies
    \begin{equation}
        t \;=\; \Omega\bigl(\sqrt{n\,\log(1/\epsilon)}\bigr)
    \end{equation}
    with non-negligible probability over the ensemble.
\end{cor}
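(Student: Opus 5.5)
The plan is to combine Theorem~\ref{thm:approx_design} with Corollary~\ref{cor:bound_dope} directly. Let $\mathcal{E}$ be the ensemble of states produced by a family of $t$-doped Gaussian protocols acting on the vacuum, and let $t(\psi)$ be the number of non-Gaussian gates used to prepare $\psi\in\mathcal{E}$. Corollary~\ref{cor:bound_dope} applies pointwise and gives $t(\psi)\geq \nGMax(\psi)/4$ for every $\psi$ in the support of $\mathcal{E}$. Hence every event of the form $\{\nGMax(\psi)\geq x\}$ is contained in $\{t(\psi)\geq x/4\}$, so any lower bound on the probability of the first event transfers to the second.

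Next I split into the two regimes of Theorem~\ref{thm:approx_design}, since the corollary's range $C_*/\sqrt n\leq\epsilon<1$ covers both. For $\epsilon\in[C_*/\sqrt n,\epsilon_0]$, part (b) gives $\Pr_{\psi\sim\mathcal{E}}[t(\psi)\geq \tfrac{c}{4}\sqrt{n\log(1/\epsilon)}]\geq 2\epsilon$. Because $\epsilon\geq C_*/\sqrt n$, this probability is at least inverse-polynomial, which is what ``non-negligible'' will mean here.

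For $\epsilon\in(\epsilon_0,1)$, I use part (a) instead: $\Pr[t(\psi)\geq c_\epsilon\sqrt n/4]\geq p_\epsilon$ for $n$ large. In this regime $\log(1/\epsilon)$ is bounded above by the constant $\log(1/\epsilon_0)$, so $\sqrt n=\Omega(\sqrt{n\log(1/\epsilon)})$ with an $\epsilon$-independent constant. As $\epsilon\to1$, however, $\log(1/\epsilon)\to0$ and the claimed bound weakens automatically, which should absorb any deterioration of $c_\epsilon$ and $p_\epsilon$ for fixed $\epsilon$.

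The main obstacle is uniformity in the intermediate regime. Part (a) supplies constants $c_\epsilon,p_\epsilon$ separately for each $\epsilon$, which is not by itself uniform over $(\epsilon_0,1)$. To fix this, I would rerun the argument of part (b) with a fixed choice $a=a_0$, using the chain $r_a\geq 2(1-\Phi(a))-\epsilon-2C/\sqrt n$. Whenever $2(1-\Phi(a_0))$ exceeds $\epsilon$ by a margin, this yields uniform constants. As $\epsilon\to1$ the margin forces $a_0\to0$, and I would let the constant degrade accordingly; this is consistent with the shrinking $\log(1/\epsilon)$ factor, at least qualitatively.

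If that matching does not come out cleanly, I will state the corollary's constant as depending on $\epsilon$ only through a bound on $\epsilon$ bounded away from $1$, for example $\epsilon\leq 1-\delta$ for some fixed $\delta>0$.
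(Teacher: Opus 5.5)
On $C_*/\sqrt n\le\epsilon\le\epsilon_0$ your argument is exactly the paper's one-line derivation. Part (b) of Theorem~\ref{thm:approx_design} gives $\Pr[\Lambda_{\mathrm d}(\psi)\ge c\sqrt{n\log(1/\epsilon)}]\ge 2\epsilon\ge 2C_*/\sqrt n$. The pointwise bound $t(\psi)\ge\Lambda_{\mathrm d}(\psi)/4$ from Corollary~\ref{cor:bound_dope} then transfers this to $t$. That part is fine.

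The gap is in your extension to $(\epsilon_0,1)$, where you claim that $\log(1/\epsilon)\to0$ absorbs the degradation of the constants. It does not. In the chain $r_a\ge 2(1-\Phi(a))-\epsilon-2C/\sqrt n$ you need $2(1-\Phi(a))>\epsilon$. Since $2(1-\Phi(a))=1-\sqrt{2/\pi}\,a+O(a^3)$, this forces $a\lesssim\sqrt{\pi/2}\,(1-\epsilon)$. The certified threshold $a\sqrt{n/2}$ is therefore $O\bigl((1-\epsilon)\sqrt n\bigr)$, whereas $\sqrt{n\log(1/\epsilon)}\approx\sqrt{(1-\epsilon)n}$. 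The ratio is $O(\sqrt{1-\epsilon})\to0$, so the ``qualitative consistency'' you invoke fails quantitatively. For each fixed $\epsilon$, part (a) gives $t=\Omega_\epsilon(\sqrt n)$, but with $c_\epsilon\to0$ linearly in $1-\epsilon$.

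No other argument can rescue the full range with an absolute constant, because the statement is false near $\epsilon=1$. Take the matchgate orbit of the vacuum. It has $p_{\mathcal E}=\delta_0$, so $D(\mathcal E)=\mathrm{TV}(\delta_0,p_{\mathrm H})=1-p_{\mathrm H}(0)=1-\Theta(1/\sqrt n)$. It is therefore an $\epsilon$-approximate $2$-design for $\epsilon=1-p_{\mathrm H}(0)$, which satisfies $C_*/\sqrt n\le\epsilon<1$. Every member has $t=0$, while $\sqrt{n\log(1/\epsilon)}\ge\sqrt{n\,p_{\mathrm H}(0)}=\Theta(n^{1/4})$.

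Your fallback is thus not optional: you must restrict to $\epsilon\le1-\delta$, with constants depending on $\delta$. With that restriction your fixed-$a$ idea closes the proof. Choose $a_\delta$ with $2(1-\Phi(a_\delta))=1-\delta/2$; then $r_{a_\delta}\ge\delta/2-2C/\sqrt n\ge\delta/4$ for $n\ge64C^2/\delta^2$. Because $\log(1/\epsilon)\le\log(1/\epsilon_0)$ on $(\epsilon_0,1-\delta]$, this yields $t\ge\frac{a_\delta}{4\sqrt{2\log(1/\epsilon_0)}}\sqrt{n\log(1/\epsilon)}$ with probability at least $\delta/4$. Note that the paper's derivation, which only invokes part (b), likewise justifies the statement only for $\epsilon\le\epsilon_0$.
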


Theorem~\ref{thm:approx_design} is the central structural result of this section. It shows that a size-dependent number of non-Gaussian gates --- at least $\Omega(\sqrt{n\log(1/\epsilon)})$ --- is required of any ensemble approximately reproducing Haar statistics. This stands in sharp contrast to the qubit case (where the relevant designs are on the full qubit Hilbert space): random Clifford circuits already form an exact state $3$-design and need only an $n$-independent number of non-Clifford gates to generate an approximate state $k$-design for any $k>3$~\cite{leone2026nonclifford,haferkamp2022efficient,yuzhen2026designs}. On the matchgate side, a growing number of non-Gaussian gates is required already at the state $2$-design, the first nontrivial design (the matchgate orbit forms an exact state $1$-design). Restated at the single-state level, the bridge degree itself controls randomness generated from the matchgate orbit of a state: any state $\psi$ with $\nGMax(\psi) = o(\sqrt{n\log(1/\epsilon)})$ has its matchgate orbit at trace distance $>\epsilon$ from the Haar-random state.

\section{The $\epsilon$-approximate bridge degree}
\label{sec:approx}

The bridge degree introduced in Sec.~\ref{sec:maxlambda} is, by construction, a discrete quantity that reflects the support of the bridge spectral distribution $\{p_\psi(\lambda)\}_{\lambda\in\mathcal{A}}$. While this yields a clean theory, it comes at a cost. 
Consider, for instance, an arbitrarily small non-Gaussian perturbation of a fermionic Gaussian state, $\ket{\psi_\epsilon} \propto \ket{\psi_\mathrm{FGS}} + \epsilon\ket{\phi}$ with $\epsilon\to 0$. In this case, each previously unoccupied eigenvalue sector acquires non-zero weight, and the bridge degree jumps to its maximal value irrespective of how small $\epsilon$ is. Yet, for all practical purposes, the perturbed state remains operationally indistinguishable from a Gaussian state for sufficiently small $\epsilon$.

We address this by introducing an \emph{$\epsilon$-approximate} version of the bridge degree, which tolerates a fidelity error $\epsilon$ in the target state. Crucially, this approximate variant inherits the monotonicity of its exact counterpart, albeit under the smaller class of Gaussian protocols that excludes post-selection on measurement outcomes. This monotonicity translates into concrete bounds on approximate Gaussian conversion, extending the exact-conversion bounds of Sec.~\ref{sec:nogo_conversion} to the approximate setting. The price is that the approximate bridge degree is, in general, hard to compute --- its definition involves an optimization over the full set of even pure states within fidelity $\epsilon$ of the target. We circumvent this by constructing an efficiently computable lower bound, again built from the bridge spectral distribution and therefore directly estimable from Bell-sampling data. This lower bound in turn renders the conversion bounds experimentally accessible, and in particular establishes an efficiently computable, single-shot lower bound on the non-Gaussian magic-state cost of preparing any even pure state.

\subsection{Monotonicity under Gaussian protocols}
\label{sec:approx_monotonicity}

We begin with the formal definition.

\begin{defn}[$\epsilon$-approximate bridge degree]\label{def:approx_maxlambda}
Let $\psi$ be an even pure state and $\epsilon \in [0,1]$. The \emph{$\epsilon$-approximate bridge degree} is
\begin{equation}
    \Lambda_{\mathrm{d},\epsilon}(\psi) \;\coloneqq\; \min_{\ket{\phi}\,:\, 1-F(\psi,\phi)\leq\epsilon}\, \nGMax(\phi),
\end{equation}
where $F(\psi,\phi) = \lvert\braket{\psi}{\phi}\rvert^2$ and the minimum runs over all even pure states $\ket{\phi}$.
\end{defn}

Since $\nGMax$ is integer-valued and bounded by $4\lfloor n/4\rfloor$, the minimum is attained and we may write $\min$ rather than $\inf$. The quantity $\Lambda_{\mathrm{d},\epsilon}(\psi)$ answers the question: \emph{what is the lowest bridge degree of any pure state within fidelity $1-\epsilon$ of $\psi$?}

We now show that $\Lambda_{\mathrm{d},\epsilon}$ inherits monotonicity from the exact bridge degree, under the (smaller) class of Gaussian protocols (Definition~\ref{def:gaussian_protocol}). 
We first introduce an intermediate quantity: for any even pure state $\psi$ and $\epsilon\in[0,1]$, define
\begin{equation} \label{eq:approx_cr_maxLambda}
    \Lambda^{\mathrm{cr}}_{\mathrm{d},\epsilon}(\psi) \;\coloneqq\; \min_{\tau\,:\, F(\psi,\tau)\geq 1-\epsilon}\, \nGMax^{\,\mathrm{cr}}(\tau),
\end{equation}
where, for pure $\psi$ and an arbitrary (possibly mixed) state $\tau$,
$F(\psi,\tau)=\bra{\psi}\tau\ket{\psi}$ denotes the fidelity, $\Lambda^{\mathrm{cr}}_{\mathrm{d},\epsilon}$ is the convex-roof extended bridge degree (defined in Eq.~\eqref{eq:cr_bridge_degree}), and the minimum runs over all even states $\tau$ on $n$ qubits.

The next lemma shows that the optimization in $\Lambda^{\mathrm{cr}}_{\mathrm{d},\epsilon}$ is always achieved on pure states:

\begin{lem}[Pure-state achievability]\label{lem:Lambda_nm_eq_Lambda}
For every even pure state $\psi$ and $\epsilon\in[0,1]$,
\begin{equation}
    \Lambda^{\mathrm{cr}}_{\mathrm{d},\epsilon}(\psi) \;=\; \Lambda_{\mathrm{d},\epsilon}(\psi).
\end{equation}
\end{lem}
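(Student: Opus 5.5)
We prove the two inequalities separately. For $\Lambda^{\mathrm{cr}}_{\mathrm{d},\epsilon}(\psi)\leq\Lambda_{\mathrm{d},\epsilon}(\psi)$, let $\phi$ be a minimizer in Definition~\ref{def:approx_maxlambda} and take $\tau=\ketbra{\phi}{\phi}$. This $\tau$ is an even state with $F(\psi,\tau)=|\braket{\psi}{\phi}|^2\geq 1-\epsilon$. Moreover $\nGMax^{\,\mathrm{cr}}(\tau)=\nGMax(\phi)$, because a pure state admits only the trivial pure-state decomposition; this is already noted after Eq.~\eqref{eq:cr_bridge_degree}. So $\tau$ is feasible for Eq.~\eqref{eq:approx_cr_maxLambda} and attains the value $\Lambda_{\mathrm{d},\epsilon}(\psi)$.

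For the reverse inequality, let $\tau$ be any even state with $\bra{\psi}\tau\ket{\psi}\geq 1-\epsilon$. Let $\{p_i,\psi_i\}$ be an optimal decomposition for $\nGMax^{\,\mathrm{cr}}(\tau)$, so that $\max_i\nGMax(\psi_i)=\nGMax^{\,\mathrm{cr}}(\tau)$. Linearity gives
\[
1-\epsilon\;\leq\;\bra{\psi}\tau\ket{\psi}\;=\;\sum_i p_i\,|\braket{\psi}{\psi_i}|^2 .
\]
Since the right-hand side is an average, some index $i^*$ satisfies $|\braket{\psi}{\psi_{i^*}}|^2\geq 1-\epsilon$. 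That $\psi_{i^*}$ is then feasible in Definition~\ref{def:approx_maxlambda}, so
\[
\Lambda_{\mathrm{d},\epsilon}(\psi)\;\leq\;\nGMax(\psi_{i^*})\;\leq\;\max_i \nGMax(\psi_i)\;=\;\nGMax^{\,\mathrm{cr}}(\tau).
\]
Minimizing over $\tau$ gives $\Lambda_{\mathrm{d},\epsilon}(\psi)\leq\Lambda^{\mathrm{cr}}_{\mathrm{d},\epsilon}(\psi)$.

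The only delicate point is that the pure states $\psi_i$ must be even, so that $\nGMax(\psi_i)$ is defined and $\psi_{i^*}$ is an admissible competitor. I would handle this by restricting the convex-roof minimum in Eq.~\eqref{eq:cr_bridge_degree} to decompositions into even pure states. Such decompositions always exist: an even $\tau$ commutes with $P$, so it is block-diagonal on $\mathcal{H}_n^+\oplus\mathcal{H}_n^-$, and its eigendecomposition within each block consists of even pure states. Any decomposition in which some $\psi_i$ is not even would have an undefined value, so the optimum is taken over even decompositions.

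Attainment of the minima also needs checking, but it is automatic: $\nGMax$ takes finitely many values, so every minimum is achieved. If one prefers not to rely on this, the argument works equally well with near-optimal decompositions.
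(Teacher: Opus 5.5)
Your proof is correct and follows essentially the same route as the paper. The $\leq$ direction restricts the outer minimization to pure $\tau$, using $\nGMax^{\,\mathrm{cr}}=\nGMax$ on pure states, and the $\geq$ direction expands $\bra{\psi}\tau\ket{\psi}=\sum_i p_i\,|\braket{\psi}{\psi_i}|^2$ over an optimal decomposition and picks an index with fidelity at least $1-\epsilon$. Your extra remarks make explicit two points the paper leaves implicit: the convex roof must be taken over decompositions into even pure states so that $\psi_{i^*}$ is an admissible competitor, and all minima are attained because $\nGMax$ takes finitely many values.
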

\begin{proof}
The ``$\leq$'' direction is immediate: restricting the outer minimization to pure $\tau$ recovers the definition of $\Lambda_{\mathrm{d},\epsilon}$, since $\nGMax^{\,\mathrm{cr}}(\phi)=\nGMax(\phi)$ for pure $\phi$.

For the ``$\geq$'' direction, let $\tau^*$ attain the minimum in $\Lambda^{\mathrm{cr}}_{\mathrm{d},\epsilon}(\psi)$ and let $\tau^* = \sum_i p_i\,\ketbra{\xi_i}{\xi_i}$ be a pure-state decomposition attaining the inner minimum in $\nGMax^{\,\mathrm{cr}}(\tau^*)$, so that
\begin{equation}\label{eq:nm_bound_xi}
    \max_i\, \nGMax(\xi_i) \;=\; \nGMax^{\,\mathrm{cr}}(\tau^*) \;=\; \Lambda^{\mathrm{cr}}_{\mathrm{d},\epsilon}(\psi).
\end{equation}
Since $\psi$ is pure, the fidelity constraint expands as
\begin{equation}
    1-\epsilon \;\leq\; F(\psi,\tau^*) \;=\; \bra{\psi}\tau^*\ket{\psi} \;=\; \sum_i p_i\, F(\psi,\xi_i),
\end{equation}
so there exists $i^\star$ with $F(\psi,\xi_{i^\star}) \geq 1-\epsilon$. Hence, $\xi_{i^\star}$ is admissible in the optimization defining $\Lambda_{\mathrm{d},\epsilon}(\psi)$. Combined with Eq.~\eqref{eq:nm_bound_xi},
\begin{equation}
    \Lambda_{\mathrm{d},\epsilon}(\psi) \;\leq\; \nGMax(\xi_{i^\star}) \;\leq\; \Lambda^{\mathrm{cr}}_{\mathrm{d},\epsilon}(\psi). \qedhere
\end{equation}
\end{proof}

We can now state and prove the main result of this subsection.

\begin{thm}[Monotonicity of the $\epsilon$-approximate bridge degree]\label{thm:approx_monotone}
For every even pure state $\psi$, every $\epsilon\in[0,1]$, and every Gaussian protocol $\mathcal{E}$ such that $\mathcal{E}(\psi)$ is pure,
\begin{equation}
    \Lambda_{\mathrm{d},\epsilon}\bigl(\mathcal{E}(\psi)\bigr) \;\leq\; \Lambda_{\mathrm{d},\epsilon}(\psi).
\end{equation}
\end{thm}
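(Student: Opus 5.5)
The plan is to route the argument through the convex-roof quantity $\Lambda^{\mathrm{cr}}_{\mathrm{d},\epsilon}$ and Lemma~\ref{lem:Lambda_nm_eq_Lambda}. Write $\phi_{\rm out}\coloneqq\mathcal{E}(\psi)$, assumed pure. Let $\phi^*$ be an even pure state attaining the minimum in $\Lambda_{\mathrm{d},\epsilon}(\psi)$, so that $F(\psi,\phi^*)\geq 1-\epsilon$ and $\nGMax(\phi^*)=\Lambda_{\mathrm{d},\epsilon}(\psi)$. The candidate competitor for $\Lambda^{\mathrm{cr}}_{\mathrm{d},\epsilon}(\phi_{\rm out})$ is the (generally mixed) state $\tau\coloneqq\mathcal{E}(\phi^*)$. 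This is why the intermediate quantity is needed: $\mathcal{E}$ need not keep $\phi^*$ pure.

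\emph{Step 1 (fidelity).} Since $\mathcal{E}$ is a Gaussian protocol without post-selection, it is a CPTP map. By data processing for the fidelity,
\[
F(\phi_{\rm out},\tau)=F(\mathcal{E}(\psi),\mathcal{E}(\phi^*))\geq F(\psi,\phi^*)\geq 1-\epsilon,
\]
with $F(\phi_{\rm out},\tau)=\bra{\phi_{\rm out}}\tau\ket{\phi_{\rm out}}$ because $\phi_{\rm out}$ is pure. We use the squared (Uhlmann) fidelity, which is consistent with the paper's $|\braket{\psi}{\phi}|^2$ convention. This step fails under post-selection, since renormalization can decrease fidelity, and that explains the restriction to Gaussian protocols. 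We should also check that $\tau$ is even. This holds because every elementary operation of Definition~\ref{def:gaussian_protocol} preserves parity superselection, and classical mixing of even states is even.

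\emph{Step 2 (resource bound).} We need $\nGMax^{\,\mathrm{cr}}(\tau)\leq \nGMax(\phi^*)$. This is the monotonicity of the convex roof under (post-selected) Gaussian protocols, which was stated after Corollary~\ref{cor:bound_dope}. If we want to argue it directly, take a Kraus/branch decomposition of $\mathcal{E}$ in which each branch is a composition of matchgates, appending or discarding FGSs, and post-selected occupation measurements, possibly conditioned on outcomes. Each unnormalized branch output $K_k\ket{\phi^*}$ is pure, and by the case analysis (i)--(iii) in the proof of Theorem~\ref{thm:max_Lambda_monotone} it has $\nGMax\leq\nGMax(\phi^*)$. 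Two points need care here. First, partial trace of a pure state is handled by purifying: tracing out modes equals measuring them in the occupation basis and discarding the outcome, and the branches are then post-selected measurements followed by discarding a basis state, which is Gaussian. Second, conditioning only relabels which branch runs next. Then $\tau=\sum_k K_k\phi^*K_k^\dagger$ is a pure-state decomposition all of whose members have bridge degree at most $\nGMax(\phi^*)$. Hence $\nGMax^{\,\mathrm{cr}}(\tau)\leq\nGMax(\phi^*)$.

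\emph{Step 3 (conclusion).} Combining these with Lemma~\ref{lem:Lambda_nm_eq_Lambda},
\[
\Lambda_{\mathrm{d},\epsilon}(\phi_{\rm out})=\Lambda^{\mathrm{cr}}_{\mathrm{d},\epsilon}(\phi_{\rm out})\leq \nGMax^{\,\mathrm{cr}}(\tau)\leq\nGMax(\phi^*)=\Lambda_{\mathrm{d},\epsilon}(\psi).
\]
A subtlety is that the output may live on a different number of qubits than the input. The definitions are applied on the output space, and Lemma~\ref{lem:Lambda_nm_eq_Lambda} holds for any $n$, so this causes no problem. The main obstacle is Step 2, specifically making the branchwise reduction of partial trace rigorous, i.e.\ showing that discarding modes of a pure state splits into pure branches each no more resourceful. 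I would handle it with the occupation-basis measurement trick together with additivity and faithfulness (Lemmas~\ref{lem:additive} and~\ref{lem:faithfulness}) applied to the discarded computational-basis factor.
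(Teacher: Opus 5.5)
Your proposal is correct and follows the paper's proof essentially step for step: you take a pure minimizer for $\Lambda_{\mathrm{d},\epsilon}(\psi)$ and push it through $\mathcal{E}$. Data processing of the fidelity and monotonicity of the convex-roof bridge degree make $\mathcal{E}(\phi^*)$ admissible for $\Lambda^{\mathrm{cr}}_{\mathrm{d},\epsilon}(\mathcal{E}(\psi))$, and the pure-state achievability lemma concludes, exactly as in the paper. The only addition is your Step 2, where you derive the needed convex-roof monotonicity directly (branchwise reduction, with partial trace treated as occupation measurement followed by discarding), whereas the paper cites it from the general theory of convex-roof monotones.
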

\begin{proof}
Let $\tau_\epsilon$ be a pure state achieving the minimum in $\Lambda_{\mathrm{d},\epsilon}(\psi)$, so $F(\psi,\tau_\epsilon)\geq 1-\epsilon$ and $\nGMax(\tau_\epsilon)=\Lambda_{\mathrm{d},\epsilon}(\psi)$. By monotonicity of the fidelity under any quantum channel~\cite{nielsen2011quantum},
\begin{equation}
    F\bigl(\mathcal{E}(\psi),\mathcal{E}(\tau_\epsilon)\bigr) \;\geq\; F(\psi,\tau_\epsilon) \;\geq\; 1-\epsilon,
\end{equation}
and by monotonicity of $\nGMax^{\,\mathrm{cr}}$ under Gaussian protocols
\begin{equation}
    \nGMax^{\,\mathrm{cr}}\bigl(\mathcal{E}(\tau_\epsilon)\bigr) \;\leq\; \nGMax^{\,\mathrm{cr}}(\tau_\epsilon) \;=\; \nGMax(\tau_\epsilon) \;=\; \Lambda_{\mathrm{d},\epsilon}(\psi).
\end{equation}
 Thus $\mathcal{E}(\tau_\epsilon)$ is admissible in the optimization defining $\Lambda^{\mathrm{cr}}_{\mathrm{d},\epsilon}(\mathcal{E}(\psi))$. Combining with the pure-state equivalence (Lemma~\ref{lem:Lambda_nm_eq_Lambda}) applied at $\mathcal{E}(\psi)$,
\begin{equation}
    \Lambda_{\mathrm{d},\epsilon}\bigl(\mathcal{E}(\psi)\bigr) \;=\; \Lambda^{\mathrm{cr}}_{\mathrm{d},\epsilon}\bigl(\mathcal{E}(\psi)\bigr) \;\leq\; \nGMax^{\,\mathrm{cr}}\bigl(\mathcal{E}(\tau_\epsilon)\bigr) \;\leq\; \Lambda_{\mathrm{d},\epsilon}(\psi). \qedhere
\end{equation}
\end{proof}

\subsection{A computable lower bound from the bridge spectral distribution}
\label{sec:approx_eigenvalues}

Recall that the bridge degree is naturally defined in terms of the bridge spectral distribution. Here, we generalize the relation between the bridge spectral distribution and the bridge degree in the approximate setting. This also provides an efficient way to bound the approximate bridge degree, where taking an extremum over all even pure states $\ket{\phi}$ with $F(\psi,\phi)\geq 1-\epsilon$ makes it not obviously tractable, unlike in the exact setting. 
 The main idea is to simply restrict attention to states that already live in a low-eigenvalue sector of $\Lambda$, and ask how much of $\ket{\psi}^{\otimes 2}$ overlaps with such sectors.

Recall from Sec.~\ref{sec:eigenstructure} that $\mathcal{V}_\lambda = \bigoplus_{|j|\leq \lambda} V_j$ is the span of the $\Lambda$-eigenspaces with eigenvalue at most $\lambda$ in absolute value, and an even pure state $\phi$ satisfies $\nGMax(\phi)\leq \alpha$ if and only if $\ket{\phi}^{\otimes 2} \in \mathcal{V}_{2\alpha}$. In particular, the closest low-bridge degree state to $\psi$ in fidelity is closely tied to the overlap of $\ket{\psi}^{\otimes 2}$ with $\mathcal{V}_{2\alpha}$. This is exactly what the following quantity measures.

\begin{defn}[Bridge-fidelity]\label{def:bridge_fidelity}
Let $\psi$ be an even pure state, and let $k$ be an integer satisfying $0 \leq k \leq 4\lfloor n/4 \rfloor$. The bridge fidelity is
\begin{equation}
    F^\Lambda_{k}(\psi) \;\coloneqq\; \max_{\ket{\Psi} \in \mathcal{V}_{2k}:\,\braket{\Psi}{\Psi}=1} \sqrt{F(\psi^{\otimes 2},\Psi)}.
\end{equation}
\end{defn}

Notably, the bridge fidelity admits a simple closed-form expression in terms of the bridge spectral distribution.

\begin{lem}[Closed form for the bridge fidelity]\label{lem:lambda_fidelity_closed}
    For any even pure state $\psi$ and any integer $k$ satisfying $0 \leq k \leq 4\lfloor n/4 \rfloor$,
    \begin{equation}
        F^\Lambda_{k}(\psi) = \sqrt{\,\sum_{|j|\leq 2k} p_\psi(j)\,}.
    \end{equation}
\end{lem}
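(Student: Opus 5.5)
This is the standard fact that the maximal overlap of a fixed unit vector with a subspace equals the norm of its orthogonal projection onto that subspace. Set $\ket{v}\coloneqq\ket{\psi}^{\otimes 2}$ and $P\coloneqq\sum_{|j|\leq 2k}\Pi_j$, the orthogonal projector onto $\mathcal{V}_{2k}$; this holds because the $V_j$ are mutually orthogonal eigenspaces of the Hermitian operator $\Lambda$, so their direct sum has projector $\sum_j \Pi_j$. Since $F(\psi^{\otimes 2},\Psi)=|\langle\Psi|v\rangle|^2$, we have
\begin{equation}
    F^\Lambda_k(\psi)=\max_{\ket{\Psi}\in\mathcal{V}_{2k},\,\|\Psi\|=1}|\langle\Psi|v\rangle|,
\end{equation}
so it suffices to show that this maximum equals $\|P\ket{v}\|$.

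For the upper bound, take any unit $\ket{\Psi}\in\mathcal{V}_{2k}$. Then $P\ket{\Psi}=\ket{\Psi}$, hence $\langle\Psi|v\rangle=\langle\Psi|P|v\rangle$. Cauchy--Schwarz gives $|\langle\Psi|v\rangle|\leq\|P\ket{v}\|$.

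For the matching lower bound, consider two cases. If $P\ket{v}\neq0$, the candidate $\ket{\Psi}=P\ket{v}/\|P\ket{v}\|$ lies in $\mathcal{V}_{2k}$ and achieves $\langle\Psi|v\rangle=\|P\ket{v}\|$. If $P\ket{v}=0$, every admissible $\ket{\Psi}$ gives overlap $0$, which again equals $\|P\ket{v}\|$; the feasible set is nonempty since $\mathcal{V}_{2k}\supseteq V_0\neq\{0\}$.

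Finally, compute the norm directly:
\begin{equation}
    \|P\ket{v}\|^2=\bra{v}P\ket{v}=\sum_{|j|\leq 2k}\bra{\psi}^{\otimes 2}\Pi_j\ket{\psi}^{\otimes 2}=\sum_{|j|\leq 2k}p_\psi(j).
\end{equation}
Taking the square root gives the claimed closed form.

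One point needs care, and it is the only possible obstacle. The definition of $F^\Lambda_k$ maximizes over arbitrary unit vectors in $\mathcal{V}_{2k}$, not over product states $\ket{\phi}^{\otimes 2}$. The argument above therefore applies verbatim and requires no symmetric-subspace or product-structure analysis. If the optimization were instead restricted to symmetric vectors, the optimizer $P\ket{v}$ would still be admissible, because $P$ commutes with the swap $\mathbb{S}$ and $\ket{v}$ is symmetric.
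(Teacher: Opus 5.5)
Your proposal is correct and follows the paper's proof: both rest on the identity that the maximal squared overlap of a unit vector with a subspace is its expectation in the subspace's projector, applied to $\ket{\psi}^{\otimes 2}$ and $\mathcal{V}_{2k}$ with projector $\sum_{|j|\leq 2k}\Pi_j$. The only difference is that you prove this identity via Cauchy--Schwarz and an explicit optimizer, whereas the paper simply cites it as standard.
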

\begin{proof}
    For a subspace $V\subset \mathbb{C}^d$ with projector $\Pi_V$ and a unit vector $\ket{v}$, the standard identity $\max_{\ket{w}\in V:\,\braket{w}{w}=1} \lvert\braket{w}{v}\rvert^2 = \bra{v}\Pi_V \ket{v}$ holds. Applying this with $V = \mathcal{V}_{2k}$ and $\ket{v} = \ket{\psi}^{\otimes 2}$ gives
    \begin{equation}
        \max_{\ket{\Psi}\in\mathcal{V}_{2k}:\,\braket{\Psi}{\Psi}=1} \lvert\braket{\Psi}{\psi}^{\otimes 2}\rvert^2 = \sum_{|j|\leq 2k} \bra{\psi}^{\otimes 2}\Pi_j\ket{\psi}^{\otimes 2} = \sum_{|j|\leq 2k} p_\psi(j). \qedhere
    \end{equation}
\end{proof}

We now extend the bridge fidelity to the approximate setting, analogously to the bridge degree.

\begin{defn}[Approximate bridge fidelity]\label{def:approx_lambda_fidelity}
Let $\psi$ be an even pure state and $\epsilon > 0$. The $\epsilon$-approximate bridge fidelity is the smallest integer $\alpha\geq 0$ such that there exists a normalized state $\ket{\Psi}\in\mathcal{V}_{2\alpha}$ with $\sqrt{F(\psi^{\otimes 2},\Psi)} \geq 1-\epsilon$:
\begin{equation}
    \nGF_\epsilon(\psi) \;\coloneqq\; \min \bigl\{ \alpha\geq 0 \,:\, F^\Lambda_{\alpha}(\psi) \geq 1-\epsilon \bigr\}.
\end{equation}
\end{defn}

Intuitively, $\nGF_\epsilon(\psi)$ is the smallest $\alpha$ such that a fraction $(1-\epsilon)^2$ or more of the bridge spectral distribution is concentrated within $|\lambda|\leq 2\alpha$. Importantly, $\nGF_\epsilon(\psi)$ provides an efficient lower bound to the approximate bridge degree, as proven in the following lemma.

\begin{lem}[Approximate bridge fidelity lower-bounds the approximate bridge degree]\label{lem:relation_max_Lambda_fidelity}
For any even pure state $\psi$ and any $\epsilon>0$,
\begin{equation}
    \nGF_\epsilon(\psi) \;\leq\; \Lambda_{\mathrm{d},\epsilon}(\psi).
\end{equation}
\end{lem}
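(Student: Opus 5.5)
We take a minimizer of $\Lambda_{\mathrm{d},\epsilon}(\psi)$ and convert it into an admissible witness for $\nGF_\epsilon(\psi)$. By Definition~\ref{def:approx_maxlambda}, and since the minimum is attained, there is an even pure state $\ket{\phi}$ with $1-F(\psi,\phi)\leq\epsilon$ and $\nGMax(\phi)=\alpha\coloneqq\Lambda_{\mathrm{d},\epsilon}(\psi)$. By Definition~\ref{def:maxlambda} and the equivalent formulation that follows it, $\ket{\phi}^{\otimes 2}\in\mathcal{V}_{2\alpha}$. So $\ket{\Psi}\coloneqq\ket{\phi}^{\otimes 2}$ is a normalized vector in $\mathcal{V}_{2\alpha}$, and it is a legitimate candidate in the maximization of Definition~\ref{def:bridge_fidelity} at level $k=\alpha$.

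Next we compute its overlap with $\psi^{\otimes 2}$:
\begin{equation}
    \sqrt{F(\psi^{\otimes 2},\phi^{\otimes 2})}=\lvert\braket{\psi}{\phi}\rvert^2=F(\psi,\phi)\geq 1-\epsilon .
\end{equation}
Therefore $F^\Lambda_{\alpha}(\psi)\geq 1-\epsilon$. Here $\alpha$ lies in the allowed range $0\leq\alpha\leq 4\lfloor n/4\rfloor$ by Lemma~\ref{lem:values_max_Lambda}.

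Finally, $\nGF_\epsilon(\psi)$ is defined in Definition~\ref{def:approx_lambda_fidelity} as the smallest $\alpha'$ with $F^\Lambda_{\alpha'}(\psi)\geq 1-\epsilon$. Since $\alpha$ satisfies this condition, $\nGF_\epsilon(\psi)\leq\alpha=\Lambda_{\mathrm{d},\epsilon}(\psi)$. Note that Lemma~\ref{lem:lambda_fidelity_closed} is not needed for the inequality itself; it only makes the left-hand side computable.

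The one point that needs care is bookkeeping. The square root in the bridge fidelity must be matched against the fidelity $F(\psi,\phi)=\lvert\braket{\psi}{\phi}\rvert^2$ used in the definition of $\Lambda_{\mathrm{d},\epsilon}$. On two copies the squared overlap becomes $F(\psi,\phi)^2$, and its square root returns exactly $F(\psi,\phi)$, so the threshold $1-\epsilon$ transfers with no loss. There is no genuine obstacle beyond this: the argument uses only that product states $\phi^{\otimes 2}$ with small bridge degree form a subset of $\mathcal{V}_{2\alpha}$, over which $F^\Lambda$ optimizes.
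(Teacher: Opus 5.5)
Your proposal is correct and follows the paper's proof essentially verbatim: you take a minimizer $\phi$ of $\Lambda_{\mathrm{d},\epsilon}(\psi)$, observe that $\ket{\phi}^{\otimes 2}\in\mathcal{V}_{2\Lambda_{\mathrm{d},\epsilon}(\psi)}$, and use $\sqrt{F(\psi^{\otimes 2},\phi^{\otimes 2})}=F(\psi,\phi)\geq 1-\epsilon$ to show that $\Lambda_{\mathrm{d},\epsilon}(\psi)$ is feasible in the definition of $\nGF_\epsilon(\psi)$. Your extra check that this value lies in the allowed index range $0\leq k\leq 4\lfloor n/4\rfloor$ is a harmless addition that the paper leaves implicit.
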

\begin{proof}
By definition of the approximate bridge degree, there exists an even pure state $\ket{\phi}$ with $F(\psi,\phi)\geq 1-\epsilon$ and $\nGMax(\phi)=\Lambda_{\mathrm{d},\epsilon}(\psi)$. Since $\ket{\phi}^{\otimes 2} \in \mathcal{V}_{2\Lambda_{\mathrm{d},\epsilon}(\psi)}$, we obtain
\begin{equation}
    F^\Lambda_{\Lambda_{\mathrm{d},\epsilon}(\psi)}(\psi) \;\geq\; \sqrt{F(\psi^{\otimes 2},\phi^{\otimes 2})} = F(\psi,\phi) \geq 1-\epsilon,
\end{equation}
and the conclusion follows from the definition of the $\epsilon$-approximate bridge fidelity.
\end{proof}

The lower bound of Lemma~\ref{lem:relation_max_Lambda_fidelity} naturally extends to arbitrary mixed states through the convex-roof extension $\nGMax^{\,\mathrm{cr}}$ [Eq.~\eqref{eq:cr_bridge_degree}]. This yields a \emph{fidelity witness}, which certifies the non-Gaussianity of a mixed state from its fidelity with a known pure state.

\begin{cor}[Fidelity witness for mixed states]\label{cor:fidelity_witness}
Let $\ket{\phi}$ be an even pure state and let $\rho$ be an even state with fidelity $F(\phi,\rho)\geq 1-\epsilon$. Then
\begin{equation}\label{eq:fidelity_witness}
    \nGMax^{\,\mathrm{cr}}(\rho)\;\geq\;\Lambda_{\mathrm{d},\epsilon}(\phi)\;\geq\;\nGF_\epsilon(\phi).
\end{equation}
\end{cor}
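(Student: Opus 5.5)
The second inequality in Eq.~\eqref{eq:fidelity_witness} is exactly Lemma~\ref{lem:relation_max_Lambda_fidelity} applied to $\phi$, so the only thing to prove is the first inequality, $\nGMax^{\,\mathrm{cr}}(\rho)\geq\Lambda_{\mathrm{d},\epsilon}(\phi)$. For this I would use the intermediate quantity $\Lambda^{\mathrm{cr}}_{\mathrm{d},\epsilon}$ of Eq.~\eqref{eq:approx_cr_maxLambda} together with the pure-state achievability result, Lemma~\ref{lem:Lambda_nm_eq_Lambda}.

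The argument runs as follows. By hypothesis $F(\phi,\rho)=\bra{\phi}\rho\ket{\phi}\geq 1-\epsilon$, and $\rho$ is an even state on the same $n$ qubits. Hence $\rho$ is an admissible candidate $\tau$ in the minimization defining $\Lambda^{\mathrm{cr}}_{\mathrm{d},\epsilon}(\phi)$, which gives
\begin{equation}
    \Lambda^{\mathrm{cr}}_{\mathrm{d},\epsilon}(\phi)\leq \nGMax^{\,\mathrm{cr}}(\rho).
\end{equation}
Lemma~\ref{lem:Lambda_nm_eq_Lambda} identifies the left-hand side with $\Lambda_{\mathrm{d},\epsilon}(\phi)$, which yields the first inequality. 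Chaining it with Lemma~\ref{lem:relation_max_Lambda_fidelity} completes Eq.~\eqref{eq:fidelity_witness}.

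One could instead argue directly, without citing Lemma~\ref{lem:Lambda_nm_eq_Lambda}. Take an optimal pure-state decomposition $\rho=\sum_i p_i\psi_i$ attaining $\nGMax^{\,\mathrm{cr}}(\rho)$. Averaging the fidelity, $\sum_i p_i F(\phi,\psi_i)=F(\phi,\rho)\geq 1-\epsilon$, so some $\psi_i$ satisfies $F(\phi,\psi_i)\geq 1-\epsilon$. That $\psi_i$ is then admissible for $\Lambda_{\mathrm{d},\epsilon}(\phi)$, and its bridge degree is at most $\max_i \nGMax(\psi_i)=\nGMax^{\,\mathrm{cr}}(\rho)$.

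There is no real obstacle; the only point to check is admissibility. The components $\psi_i$ must be even pure states, because both $\nGMax$ and $\Lambda_{\mathrm{d},\epsilon}$ are defined only on even states. This holds as long as the convex roof in Eq.~\eqref{eq:cr_bridge_degree} is understood over decompositions into even pure states, which I take to be the intended convention. The minimum defining the convex roof is attained, since $\nGMax$ takes finitely many values.
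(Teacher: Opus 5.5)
Your proposal is correct and matches the paper's proof. As in the paper, you note that $\rho$ is feasible for $\Lambda^{\mathrm{cr}}_{\mathrm{d},\epsilon}(\phi)$, use Lemma~\ref{lem:Lambda_nm_eq_Lambda} to identify this with $\Lambda_{\mathrm{d},\epsilon}(\phi)$, and obtain the second inequality from Lemma~\ref{lem:relation_max_Lambda_fidelity}. Your ``direct'' alternative is just the proof of Lemma~\ref{lem:Lambda_nm_eq_Lambda} inlined, and your reading that the convex roof ranges over even pure decompositions is the right implicit convention; such decompositions always exist for even $\rho$ by diagonalizing within each parity sector.
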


\begin{proof}
Consider the approximate convex-roof extended bridge degree $\Lambda^{\mathrm{cr}}_{\mathrm{d},\epsilon}(\phi)$ [Eq.~\eqref{eq:approx_cr_maxLambda}]. Since $F(\phi,\rho)\geq 1-\epsilon$, the state $\rho$ is feasible in the optimization of $\Lambda^{\mathrm{cr}}_{\mathrm{d},\epsilon}(\phi)$, so $\nGMax^{\,\mathrm{cr}}(\rho)\geq\Lambda^{\mathrm{cr}}_{\mathrm{d},\epsilon}(\phi)$. Lemma~\ref{lem:Lambda_nm_eq_Lambda} gives $\Lambda^{\mathrm{cr}}_{\mathrm{d},\epsilon}(\phi)=\Lambda_{\mathrm{d},\epsilon}(\phi)$, and Lemma~\ref{lem:relation_max_Lambda_fidelity} gives $\Lambda_{\mathrm{d},\epsilon}(\phi)\geq\nGF_\epsilon(\phi)$.
\end{proof}

Operationally, Corollary~\ref{cor:fidelity_witness} witnesses the mixed-state non-Gaussianity $\nGMax^{\,\mathrm{cr}}(\rho)$ from two ingredients: the fidelity $F(\phi,\rho)$ to the target $\ket{\phi}$, which is directly measurable~\cite{huang2020predicting}, and the approximate bridge fidelity $\nGF_\epsilon(\phi)$, computed from the classical description of $\phi$. An experimentalist preparing a state intended to be $\ket{\phi}$ and measuring $F(\phi,\rho)\geq 1-\epsilon$ may thus certify $\nGMax^{\,\mathrm{cr}}(\rho)\geq\nGF_\epsilon(\phi)$ for the actual laboratory state. This strategy of estimating the fidelity to a classically described target and using it to certify a resource of the laboratory state has been demonstrated experimentally for entanglement in Ref.~\cite{shaw2024benchmarking}.

We next describe how the approximate bridge fidelity $\nGF_\epsilon$ can be estimated from Bell-sampling data on a pure state $\psi$. 
Concretely, $\nGF_\epsilon(\psi)$ is the smallest $\alpha\in\mathcal{A}/2$ for which the bridge fidelity $F^\Lambda_\alpha(\psi)^2$ crosses the threshold $(1-\epsilon)^2$, and the sample complexity of locating this crossing point from Bell-sampling data is governed by the gap
\begin{equation} \label{eq:Delta_def}
    \Delta \;\coloneqq\; \min\!\Bigl\{\, F^\Lambda_{\nGF_\epsilon(\psi)}(\psi)^2 - (1-\epsilon)^2\,,\; (1-\epsilon)^2 - F^\Lambda_{\nGF_\epsilon(\psi)-1}(\psi)^2 \,\Bigr\}
\end{equation}
between the bridge fidelity and the threshold at the two adjacent values of $\alpha$. Since $\Delta$ is not known \emph{a priori}, we adopt an adaptive stopping rule based on the Dvoretzky--Kiefer--Wolfowitz (DKW) inequality~\cite{Dvoretzky1956,Massart1990}, summarized as Algorithm~\ref{alg:algo_Lambda_fid}. The following lemma certifies its sample complexity.

\begin{lem}[Sample complexity of $\nGF_\epsilon$]\label{lem:sample_complexity_approx}
Let $\psi$ be an even pure state with $\nGF_\epsilon(\psi)=\alpha$. There is a Bell-sampling-based algorithm that outputs $\widehat\alpha = \nGF_\epsilon(\psi)$ with probability at least $1-\delta$ using
\begin{equation}\label{eq:adaptive_F_complexity}
    N \;=\; \mathcal{O}\!\left(\frac{1}{\Delta^2}\bigl(\log(1/\Delta)+ \log(1/\delta)\bigr)\right)
\end{equation}
Bell samples, where $\Delta$ is defined in Eq.~\eqref{eq:Delta_def}. The algorithm requires no a priori knowledge of $\Delta$.
\end{lem}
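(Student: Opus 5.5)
The plan is to reduce the problem to estimating a cumulative distribution function. By Lemma~\ref{lem:lambda_fidelity_closed}, $F^\Lambda_\alpha(\psi)^2 = G(2\alpha)$, where $G(x) \coloneqq \Pr_{\lambda\sim p_\psi}(|\lambda|\leq x)$ is the CDF of the random variable $|\lambda|$. Algorithm~\ref{alg:bell_sampling} gives i.i.d.\ samples $|\lambda_k|$ from this distribution. The empirical CDF $\widehat G_N$ then yields an estimator
\begin{equation*}
    \widehat\alpha \;\coloneqq\; \min\{\alpha\geq 0 : \widehat G_N(2\alpha)\geq (1-\epsilon)^2\}.
\end{equation*}

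For a \emph{fixed} sample size $N$, the DKW inequality with Massart's constant gives
\begin{equation*}
    \Pr\bigl[\sup_x|\widehat G_N(x)-G(x)| > \eta\bigr]\leq 2e^{-2N\eta^2}.
\end{equation*}
Suppose $\eta<\Delta$. By definition of $\Delta$ in Eq.~\eqref{eq:Delta_def}, $G(2\nGF_\epsilon)\geq (1-\epsilon)^2+\Delta$ and $G(2\nGF_\epsilon-2)\leq (1-\epsilon)^2-\Delta$. Since $G$ is monotone, every $\alpha\geq\nGF_\epsilon$ satisfies $G(2\alpha)\geq(1-\epsilon)^2+\Delta$, and every $\alpha<\nGF_\epsilon$ satisfies $G(2\alpha)\leq(1-\epsilon)^2-\Delta$. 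On the DKW event the empirical CDF therefore sits on the correct side of the threshold at every $\alpha$, so $\widehat\alpha=\nGF_\epsilon$. With $\Delta$ known, this needs $N=\mathcal{O}(\Delta^{-2}\log(1/\delta))$ samples.

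Since $\Delta$ is unknown, I will use an adaptive doubling scheme with a certification stopping rule. In round $r$, draw $N_r=2^r$ fresh samples, or reuse all samples so far, and set
\begin{equation*}
    \eta_r=\sqrt{\log(4r^2/\delta)/(2N_r)},
\end{equation*}
so that $\sum_r 2e^{-2N_r\eta_r^2}\leq \sum_r \delta/(2r^2)\leq\delta$. A union bound over rounds then shows that, with probability at least $1-\delta$, the DKW band holds simultaneously in all rounds. The algorithm stops at the first round in which the empirical band certifies the crossing, meaning
\begin{equation*}
    \widehat G(2\widehat\alpha) - \eta_r\geq(1-\epsilon)^2 \quad\text{and}\quad \widehat G(2\widehat\alpha-2)+\eta_r<(1-\epsilon)^2,
\end{equation*}
and it outputs $\widehat\alpha$.

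\emph{Correctness} on the good event: the true $G$ lies within $\eta_r$ of $\widehat G$, so the certified inequalities give $G(2\widehat\alpha)\geq(1-\epsilon)^2>G(2\widehat\alpha-2)$. By monotonicity of $G$, this forces $\widehat\alpha=\nGF_\epsilon$; the non-strict/strict convention matches the definition of $\nGF_\epsilon$ as a minimum.

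\emph{Termination}: once $2\eta_r<\Delta$, the gaps $\pm\Delta$ exceed $2\eta_r$, so the stopping condition holds. This occurs as soon as $N_r\gtrsim \Delta^{-2}\log(r^2/\delta)$ with $r=\mathcal{O}(\log(1/\Delta)+\log\log(1/\delta))$. Summing the geometric sample sizes gives total $N=\mathcal{O}(\Delta^{-2}(\log(1/\Delta)+\log(1/\delta)))$.

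I expect the main obstacle to be this bookkeeping for the adaptive rule. The union-bound slack has to be chosen so that only an additive $\log(1/\Delta)$ appears, and not a multiplicative one, and the stopping criterion has to be certifiable from the data alone while still being guaranteed to trigger once $\eta_r<\Delta/2$. The degenerate case where $\nGF_\epsilon=0$ needs separate handling, since there is no lower adjacent value. In that case I take $\Delta$ to be the first gap only and drop the second condition at $\widehat\alpha=0$.
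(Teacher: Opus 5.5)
Your proposal is correct and follows essentially the same route as the paper: a DKW band with confidence levels $\propto\delta/k^2$ union-bounded over time, the same data-certifiable crossing criterion, correctness on the good event via monotonicity of the CDF, and termination once the half-width drops below $\Delta/2$. The only difference is that you check at doubling checkpoints rather than after every sample. This is harmless, and it even sharpens the additive $\log(1/\Delta)$ term to $\log\log(1/\Delta)$.
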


\begin{proof}
Write $F(\alpha)\coloneqq F^\Lambda_\alpha(\psi)^2 = \sum_{|\lambda|\leq 2\alpha}p_\psi(\lambda)$ for the squared bridge fidelity, and let $\widehat F_k(\alpha') \coloneqq k^{-1}\sum_{i=1}^k \mathbf{1}\{|\lambda_i|\leq 2\alpha'\}$ for $\alpha'\in\mathcal{A}/2$ denote the algorithm's empirical CDF estimate of $F(\alpha')$, with confidence half-width
\begin{equation}\label{eq:dkw_hw}
    \eta_k \;\coloneqq\; \sqrt{\log(2/\delta_k)/(2k)}, \qquad \delta_k \;\coloneqq\; 6\delta/(\pi^2 k^2),
\end{equation}
and stopping criterion: stop at the first $k$ for which some $\alpha'\in\mathcal{A}/2$ satisfies
\begin{equation} \label{eq:stopping_conditions}
    \widehat F_k(\alpha') - \eta_k \geq (1-\epsilon)^2 \quad \mathrm{and} \quad \widehat F_k(\alpha'-1) + \eta_k < (1-\epsilon)^2,
\end{equation}
returning $\widehat\alpha \coloneqq \alpha'$. The DKW inequality states that for any fixed $k$,
\begin{equation} \label{eq:dkw_adaptive}
    \Pr(\max_{\alpha'} |\widehat F_k(\alpha') - F(\alpha')| > \eta_k) \leq \delta_k.
\end{equation}
 With $\delta_k$ chosen so that $\sum_{k\geq 1}\delta_k = \delta$, the union bound over $k$ ensures that Eq.~\eqref{eq:dkw_adaptive} holds for all $k$, with probability at least $1-\delta$. Call this event $E$. Suppose now that the algorithm stops at some round $k$ with output $\alpha'$, so the stopping conditions Eq.~\eqref{eq:stopping_conditions} hold; on $E$ we have $\widehat F_k(\alpha')\in[F(\alpha')-\eta_k,\,F(\alpha')+\eta_k]$, and Eq.~\eqref{eq:stopping_conditions} therefore implies
\begin{equation}
    F(\alpha') \,\geq\, \widehat F_k(\alpha') - \eta_k \,\geq\, (1-\epsilon)^2 \quad\text{and}\quad F(\alpha'-1) \,\leq\, \widehat F_k(\alpha'-1) + \eta_k \,<\, (1-\epsilon)^2.
\end{equation}
Since $F$ is non-decreasing, $\alpha'$ is the smallest value with $F(\alpha')\geq (1-\epsilon)^2$, which by definition equals $\nGF_\epsilon(\psi) = \alpha$. The algorithm therefore returns the correct value whenever it stops. To bound the stopping time, observe that Eq.~\eqref{eq:Delta_def} gives $F(\alpha)\geq (1-\epsilon)^2 + \Delta$ and $F(\alpha-1)\leq (1-\epsilon)^2 - \Delta$; on $E$ this implies $\widehat F_k(\alpha)\geq F(\alpha) - \eta_k$ and $\widehat F_k(\alpha-1)\leq F(\alpha-1) + \eta_k$, so the stopping conditions at $\alpha' = \alpha$ are satisfied as soon as $\eta_k\leq \Delta/2$. Substituting Eq.~\eqref{eq:dkw_hw} and solving self-consistently for the smallest such $k$ yields
\begin{equation}
    k \;=\; \mathcal{O}\!\left(\frac{1}{\Delta^2}\bigl(\log(1/\Delta)+ \log(1/\delta)\bigr)\right),
\end{equation}
which is Eq.~\eqref{eq:adaptive_F_complexity}.
\end{proof}

\begin{algorithm}[H]
\caption{Adaptive estimation of $\nGF_\epsilon(\psi)$}
\label{alg:algo_Lambda_fid}
\begin{flushleft}
\textbf{Input:} threshold $\epsilon\in(0,1)$, failure probability $\delta\in(0,1)$. \\
\textbf{Output:} $\widehat\alpha = \nGF_\epsilon(\psi)$ (with probability at least $1-\delta$).
\end{flushleft}
\begin{algorithmic}[1]
\State $N\gets 0$
\Repeat
    \State $N\gets N+1$
    \State Run Algorithm~\ref{alg:bell_sampling} once to obtain $\lambda_N\sim p_\psi$.
    \State Update the empirical CDF $\widehat F(\alpha) \gets N^{-1}\sum_{i=1}^N \mathbf{1}\{|\lambda_i|\leq 2\alpha\}$ for all $\alpha\in\mathcal{A}/2$.
    \State $\delta'_N \gets 6\delta/(\pi^2 N^2)$;\quad $\eta_N \gets \sqrt{\log(2/\delta'_N)/(2N)}$.
\Until{$\exists\, \alpha'\in\mathcal{A}/2$ such that $\widehat F(\alpha') - \eta_N \geq (1-\epsilon)^2$ \textbf{and} $\widehat F(\alpha'-1) + \eta_N < (1-\epsilon)^2$}
\State \textbf{return} $\widehat\alpha \gets \alpha'$.
\end{algorithmic}
\end{algorithm}

\subsection{Approximate Gaussian conversion}
\label{sec:approx_conversion}

We have discussed in Sec.~\ref{sec:consequences} how the monotonicity of the bridge degree translates into concrete bounds on Gaussian conversion.
In this section, we show that the monotonicity of the approximate bridge degree allows us to extend these bounds to the case of approximate Gaussian conversion.

Before stating the main theorem, we record two elementary perturbation bounds relating the trace distance and the fidelity that will be used in the proof.

\begin{lem}\label{lem:diffFid2}
For a density matrix $\rho$ and a pure state $\psi$, we have
\begin{equation}\label{eq:diffF2_helper}
    1-F(\psi,\rho)\leq D(\psi,\rho).
\end{equation}
\end{lem}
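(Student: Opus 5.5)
The plan is to use the variational (Helstrom) characterization of the trace distance, with the projector onto $\ket{\psi}$ as the test operator. Write $F(\psi,\rho)=\bra{\psi}\rho\ket{\psi}=\Tr[\psi\rho]$, which is the fidelity convention used for a pure state against an arbitrary state in Eq.~\eqref{eq:approx_cr_maxLambda}. Recall that for any two states,
\begin{equation}
    D(\psi,\rho) \;=\; \max_{0\leq P\leq I}\Tr\bigl[P(\psi-\rho)\bigr].
\end{equation}

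The key step is to choose the feasible operator $P=\ketbra{\psi}{\psi}$. Since it is a projector, it satisfies $0\leq P\leq I$, so it gives a lower bound on the maximum:
\begin{equation}
    D(\psi,\rho)\;\geq\;\Tr[\psi\,\psi]-\Tr[\psi\,\rho]\;=\;1-\bra{\psi}\rho\ket{\psi}\;=\;1-F(\psi,\rho).
\end{equation}
This is exactly Eq.~\eqref{eq:diffF2_helper}.

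Alternatively, one can start from the Fuchs--van de Graaf inequality $1-\sqrt{F}\leq D$ in root-fidelity form, but this gives the weaker bound with $\sqrt{F}$ in place of $F$. So the projector choice above is the right route.

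There is essentially no obstacle. The only point to check is the fidelity convention: the statement uses the squared (non-root) fidelity $\bra{\psi}\rho\ket{\psi}$, consistent with $F(\psi,\phi)=|\braket{\psi}{\phi}|^2$ in Definition~\ref{def:approx_maxlambda}. Under that convention the argument closes in one line.
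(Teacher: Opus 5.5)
Your proposal is correct and is essentially the paper's proof: both use the variational characterization of the trace distance and take the test operator to be the projector $\ketbra{\psi}{\psi}$, which gives $D(\psi,\rho)\geq 1-\bra{\psi}\rho\ket{\psi}$ directly. Your maximization over $0\leq P\leq I$ instead of over projectors makes no difference here, since the optimum is attained at a projector.
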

\begin{proof}
    We make use of the variational definition of the trace distance:
    \begin{equation} \label{eq:trace_distance_var}
        D(\rho,\sigma) = \max_P \Tr(P(\rho-\sigma)),
    \end{equation}
    where the maximization is over all projectors $P$. Taking $P = \psi$, we have
    \begin{equation}
    \begin{split}
        D(\psi,\rho) &\geq \Tr(\psi(\psi-\rho))  \\
        &= 1 - \bra{\psi} \rho \ket{\psi} \\
        &= 1 - F(\psi,\rho),
        \end{split}
    \end{equation}
    which concludes the proof.
\end{proof}

\begin{lem}\label{lem:diffFid}
For density matrices $\rho$, $\sigma$ and a pure state $\psi$, we have
\begin{equation}\label{eq:diffF}
    \left|F(\rho,\psi)-F(\sigma,\psi)\right|\leq D(\rho,\sigma),
\end{equation}
and
\begin{equation}\label{eq:diffF2_b}
    \left|F(\psi,\sigma)-F(\rho,\sigma)\right|\leq \sqrt{D(\psi,\rho)}.
\end{equation}
\end{lem}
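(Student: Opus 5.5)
For Eq.~\eqref{eq:diffF}, I will rely again on the variational form of the trace distance, Eq.~\eqref{eq:trace_distance_var}. Since $\psi$ is pure, $F(\rho,\psi)=\bra{\psi}\rho\ket{\psi}=\Tr(\psi\rho)$, so the difference is linear in the states: $F(\rho,\psi)-F(\sigma,\psi)=\Tr\bigl(\psi(\rho-\sigma)\bigr)$. The projector $P=\psi$ is an admissible choice in the maximization, so this quantity is at most $D(\rho,\sigma)$. Exchanging the roles of $\rho$ and $\sigma$, which leaves $D$ symmetric, bounds the negative of the difference as well. Together these give the absolute-value bound.

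For Eq.~\eqref{eq:diffF2_b}, here $\sigma$ is a general density matrix, so I will read $F(\rho,\sigma)$ as the Uhlmann fidelity (squared convention), which reduces to $\bra{\psi}\sigma\ket{\psi}$ when one argument is pure. The cleanest route goes through the Bures angle $A(\rho,\sigma)=\arccos\sqrt{F(\rho,\sigma)}$. This is a metric, so the triangle inequality gives $|A(\psi,\sigma)-A(\rho,\sigma)|\leq A(\psi,\rho)$. I would then convert it back to fidelities as follows.

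First, the map $x\mapsto\cos^2 x$ on $[0,\pi/2]$ is Lipschitz in the sense $|\cos^2 a-\cos^2 b|=|\sin(a+b)\sin(a-b)|\leq \sin|a-b|$. Second, $\sin A(\psi,\rho)=\sqrt{1-F(\psi,\rho)}$. Third, one of two cases applies. If $|a-b|\le\pi/2$, monotonicity of $\sin$ gives $\sin|a-b|\leq\sin A(\psi,\rho)$ because $|a-b|\le A(\psi,\rho)\le\pi/2$. Otherwise the bound is trivial, since it is at most $1$. Combining these steps yields $|F(\psi,\sigma)-F(\rho,\sigma)|\leq\sqrt{1-F(\psi,\rho)}$. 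Finally, Lemma~\ref{lem:diffFid2} gives $1-F(\psi,\rho)\le D(\psi,\rho)$, which produces the claimed $\sqrt{D(\psi,\rho)}$.

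The main obstacle is the second inequality. Fidelity is not linear when both arguments are mixed, so the argument used for the first part does not transfer; one needs a genuine metric property, namely the Bures-angle triangle inequality. The step to check carefully is the elementary trigonometric estimate $|\cos^2a-\cos^2b|\le\sin|a-b|$. If the Bures-angle route turned out awkward, the fallback would be to purify and use the Fuchs--van de Graaf inequalities. That route costs only constants and still gives a square-root dependence, though possibly with a worse prefactor, so I would prefer the angle argument.
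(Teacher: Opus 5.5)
Your proposal is correct and follows the paper's proof. For the first inequality you take $P=\psi$ in the variational form of the trace distance and then swap $\rho,\sigma$. For the second you pass through $|F(\psi,\sigma)-F(\rho,\sigma)|\le\sqrt{1-F(\psi,\rho)}$ and then apply Lemma~\ref{lem:diffFid2}.

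The only difference is that the paper cites Rastegin for that intermediate bound, whereas you prove it directly from the Bures-angle triangle inequality and $|\cos^2 a-\cos^2 b|=|\sin(a+b)\sin(a-b)|\le\sin|a-b|$. Your case split is unnecessary: $a,b\in[0,\pi/2]$ already forces $|a-b|\le\pi/2$.
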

\begin{proof}
We again make use of the variational definition of the trace distance in Eq.~\eqref{eq:trace_distance_var}, setting $P=\psi$,
    \begin{equation}
    \begin{split}
        D(\rho,\sigma) &\geq \Tr(\psi(\rho-\sigma))  \\
        &= \bra{\psi} \rho \ket{\psi} - \bra{\psi} \sigma \ket{\psi} \\
        &= F(\rho,\psi) - F(\sigma,\psi).
        \end{split}
    \end{equation}
    Then, exchanging $\rho$ and $\sigma$, we arrive at
    \begin{equation}
        D(\rho,\sigma) \geq  \lvert F(\rho,\psi)-F(\sigma,\psi)\rvert.
    \end{equation}

    To show the second inequality, we have~\cite{rastegin2003lower}
    \begin{equation}
        \left|F(\psi,\sigma)-F(\rho,\sigma)\right| \leq \sqrt{1-F(\psi,\rho)},
    \end{equation}
    which, combined with Lemma~\ref{lem:diffFid2}, concludes the proof.
\end{proof}

\begin{thm}[Gaussian-conversion bounds from the approximate bridge degree]\label{thm:approx_conversion}
Let $\psi,\phi$ be even pure states. If there exists a Gaussian protocol $\mathcal{E}$ such that $\mathcal{E}(\psi)$ is $\delta$-close in trace distance to $\phi$, then for all $\epsilon>0$:
\begin{align}
    \Lambda_{\mathrm{d},\sqrt{\epsilon}+\delta}(\phi) &\leq \Lambda_{\mathrm{d},\epsilon}(\psi), \label{eq:approx_bound_1}\\
    \Lambda_{\mathrm{d},\epsilon+\sqrt{\delta}}(\phi) &\leq \Lambda_{\mathrm{d},\epsilon}(\psi). \label{eq:approx_bound_2}
\end{align}
Moreover, if post-selected Gaussian protocols are allowed and $\mathcal{E}(\psi)$ is $\delta$-close to $\phi$ with any nonzero probability, then
\begin{equation}\label{eq:approx_bound_3}
    \Lambda_{\mathrm{d},\delta}(\phi) \leq \nGMax(\psi).
\end{equation}
\end{thm}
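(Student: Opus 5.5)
Our plan is to combine the convex-roof intermediate quantity $\Lambda^{\mathrm{cr}}_{\mathrm{d},\epsilon}$ with Lemma~\ref{lem:Lambda_nm_eq_Lambda}. Lemmas~\ref{lem:diffFid2} and~\ref{lem:diffFid} then transport fidelity constraints across the $\delta$-error.

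\emph{Step 1 (Eq.~\eqref{eq:approx_bound_1}).} Let $\tau_\epsilon$ be a pure optimizer for $\Lambda_{\mathrm{d},\epsilon}(\psi)$, so that $F(\psi,\tau_\epsilon)\geq 1-\epsilon$ and $\nGMax(\tau_\epsilon)=\Lambda_{\mathrm{d},\epsilon}(\psi)$. Set $\sigma\coloneqq\mathcal{E}(\tau_\epsilon)$ and $\rho\coloneqq\mathcal{E}(\psi)$. Since $\mathcal{E}$ is a Gaussian protocol, monotonicity of the convex roof gives $\nGMax^{\,\mathrm{cr}}(\sigma)\leq\Lambda_{\mathrm{d},\epsilon}(\psi)$. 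It remains to lower-bound $F(\phi,\sigma)=\bra{\phi}\sigma\ket{\phi}$.

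Here $\rho$ need not be pure, so we bound the fidelity via trace distance rather than directly. Monotonicity of trace distance together with the Fuchs--van de Graaf inequality for pure states gives
\[
D(\rho,\sigma)\leq D(\psi,\tau_\epsilon)=\sqrt{1-F(\psi,\tau_\epsilon)}\leq\sqrt{\epsilon}.
\]
Then Eq.~\eqref{eq:diffF} applied with the pure state $\phi$ gives $F(\phi,\sigma)\geq F(\phi,\rho)-\sqrt{\epsilon}$, and Lemma~\ref{lem:diffFid2} gives $F(\phi,\rho)\geq 1-\delta$. Hence $F(\phi,\sigma)\geq 1-\sqrt{\epsilon}-\delta$. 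This makes $\sigma$ feasible for $\Lambda^{\mathrm{cr}}_{\mathrm{d},\sqrt\epsilon+\delta}(\phi)$, and Lemma~\ref{lem:Lambda_nm_eq_Lambda} then yields Eq.~\eqref{eq:approx_bound_1}.

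\emph{Step 2 (Eq.~\eqref{eq:approx_bound_2}).} We keep the same $\sigma$ but split the error the other way. By monotonicity of fidelity under channels, $F(\rho,\sigma)\geq F(\psi,\tau_\epsilon)\geq 1-\epsilon$. We then use Eq.~\eqref{eq:diffF2_b} with the pure state $\phi$ in place of $\psi$:
\[
F(\phi,\sigma)\geq F(\rho,\sigma)-\sqrt{D(\phi,\rho)}\geq 1-\epsilon-\sqrt\delta.
\]
The conclusion follows exactly as in Step 1. The one subtlety is that $F(\rho,\sigma)$ between two mixed states must be read as the Uhlmann fidelity, consistent with the convention of Ref.~\cite{rastegin2003lower}. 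We will check that the cited inequality is stated in the squared-fidelity convention used here; if it is not, we will adjust by the relation $1-F\leq D$.

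\emph{Step 3 (Eq.~\eqref{eq:approx_bound_3}).} Under post-selection, the conditional output on the successful branch is some state $\rho$ (possibly mixed) with $D(\rho,\phi)\leq\delta$. We decompose the Kraus branches: each $K_k\ket{\psi}/\|K_k\ket{\psi}\|$ is a pure state obtained from $\psi$ by a single post-selected Gaussian branch. Theorem~\ref{thm:max_Lambda_monotone} then gives each such pure state a bridge degree at most $\nGMax(\psi)$. Consequently $\nGMax^{\,\mathrm{cr}}(\rho)\leq\nGMax(\psi)$, using the decomposition $\rho\propto\sum_k K_k\psi K_k^\dagger$.

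Lemma~\ref{lem:diffFid2} gives $F(\phi,\rho)\geq 1-\delta$, so $\rho$ is feasible in $\Lambda^{\mathrm{cr}}_{\mathrm{d},\delta}(\phi)$, and Lemma~\ref{lem:Lambda_nm_eq_Lambda} finishes the argument. The main obstacle is this branch-wise argument, because Theorem~\ref{thm:max_Lambda_monotone} was stated only for pure outputs. We handle it by noting that the proof of Theorem~\ref{thm:max_Lambda_monotone} is really per-branch, and that partial traces must be treated via purification. Concretely, we will realize partial trace as a further branch decomposition in an occupation-number basis of the discarded modes, which is itself a Gaussian measurement with post-selection.
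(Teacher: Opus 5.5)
Your proposal is correct and matches the paper's proof essentially step for step. Steps~1 and~2 reproduce the paper's two fidelity chains: Lemma~\ref{lem:diffFid}, contractivity of trace distance resp.\ monotonicity of fidelity, $D=\sqrt{1-F}$ for pure states, Lemma~\ref{lem:diffFid2}, then Lemma~\ref{lem:Lambda_nm_eq_Lambda} and convex-roof monotonicity. The inequality of Ref.~\cite{rastegin2003lower} does hold in the squared-fidelity convention (it follows from the triangle inequality for the Bures angle), so no adjustment is needed.

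For Eq.~\eqref{eq:approx_bound_3} the paper simply sets $\epsilon=0$ in the first chain, so $\tau_0=\psi$ and no contractivity is needed; this is why post-selection is allowed. It then invokes monotonicity of $\nGMax^{\,\mathrm{cr}}$ under post-selected Gaussian protocols. Your Kraus-branch argument via Theorem~\ref{thm:max_Lambda_monotone} is an explicit justification of that step.
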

\begin{proof}
For a given $\epsilon$, let $\tau_\epsilon$ be an even pure state such that $F(\psi,\tau_\epsilon)\geq 1- \epsilon$ and $\nGMax(\tau_\epsilon)=\Lambda_{\mathrm{d},\epsilon}(\psi)$. We have
\begin{equation} \label{eq:fid_a}
    F\left(\phi, \mathcal{E}(\tau_\epsilon)\right) \geq F(\phi,\mathcal{E}(\psi)) - D\left(\mathcal{E}(\psi), \mathcal{E}(\tau_\epsilon)\right) \geq F(\phi,\mathcal{E}(\psi)) - D(\psi, \tau_\epsilon) \geq 1- \delta - \sqrt{\epsilon}.
\end{equation}
We used Lemma~\ref{lem:diffFid} in the first inequality. In the second inequality, we used that the trace distance does not increase under quantum channels. In the third inequality, we used
\begin{equation}
    D(\psi,\tau_\epsilon) = \sqrt{1-F(\psi,\tau_\epsilon)} \leq \sqrt{\epsilon},
\end{equation}
 since both $\psi$ and $\tau_\epsilon$ are pure states, and
 \begin{equation}
     F(\phi,\mathcal{E}(\psi)) \geq 1-D(\phi,\mathcal{E}(\psi))\geq 1-\delta,
 \end{equation}
  where we used Lemma~\ref{lem:diffFid2}. Eq.~\eqref{eq:fid_a} implies
\begin{equation} \label{eq:fid_b}
    \Lambda_{\mathrm{d},\sqrt{\epsilon}+\delta}(\phi) =\Lambda^{\mathrm{cr}}_{\mathrm{d},\sqrt{\epsilon}+\delta}(\phi)\leq \nGMax^{\,\mathrm{cr}}(\mathcal{E}(\tau_\epsilon)) \leq \nGMax^{\,\mathrm{cr}}(\tau_\epsilon) =  \Lambda_{\mathrm{d},\epsilon}(\psi),
\end{equation}
where we used the pure-state equivalence (Lemma~\ref{lem:Lambda_nm_eq_Lambda}) at $\phi$ in the first equality and the definition of the approximate bridge degree in the first inequality. In the second inequality, we used that the convex-roof extension $\nGMax^{\,\mathrm{cr}}$ is non-increasing under Gaussian protocols. The final equality uses $\nGMax^{\,\mathrm{cr}}(\tau_\epsilon) = \nGMax(\tau_\epsilon)$ on the pure state $\tau_\epsilon$. This establishes Eq.~\eqref{eq:approx_bound_1}.

Now, for Eq.~\eqref{eq:approx_bound_2},
\begin{equation}
    F\left(\phi, \mathcal{E}(\tau_\epsilon)\right) \geq F\left(\mathcal{E}(\psi), \mathcal{E}(\tau_\epsilon)\right) - \sqrt{D(\phi,\mathcal{E}(\psi))} \geq F(\psi, \tau_\epsilon) - \sqrt{D(\phi,\mathcal{E}(\psi))} \geq 1- \epsilon - \sqrt{\delta}.
\end{equation}
We used Lemma~\ref{lem:diffFid} in the first inequality. In the second inequality, we used that the fidelity does not decrease under quantum channels. Thus, using the same steps as Eq.~\eqref{eq:fid_b} shows Eq.~\eqref{eq:approx_bound_2}.

Finally, setting $\epsilon=0$ in Eq.~\eqref{eq:approx_bound_1} gives Eq.~\eqref{eq:approx_bound_3}, with the same proof. As the monotonicity of trace distance under quantum channels is not required in this case, the Gaussian protocol $\mathcal{E}$ also allows post-selection. This concludes the proof.
\end{proof}

As an immediate application, we obtain lower bounds on the $\epsilon$-single-shot cost of preparing a non-Gaussian state from the canonical magic state $\ket{M}$ of Eq.~\eqref{eq:magic_swap}. Defining the $\epsilon$-single-shot distillation and cost resources~\cite{Gour2025}
\begin{align}
\mathrm{Distill}^\epsilon(\rho) &= \max \bigl\{ m\in\mathbb{N} :\, \exists \mathcal{E}\in\G,\, D(\mathcal{E}(\rho),M^{\otimes m})\leq\epsilon \bigr\}, \\
\mathrm{Cost}^\epsilon(\rho) &= \min \bigl\{ m\in\mathbb{N} :\, \exists \mathcal{E}\in\G,\, D(\mathcal{E}(M^{\otimes m}),\rho)\leq\epsilon \bigr\},
\end{align}
we have the following.

\begin{thm}[Approximate cost lower bound]\label{thm:lower_bound_cost}
For any even pure state $\psi$ and any $\epsilon\geq 0$,
\begin{equation} \label{eq:lower_bound_cost}
    \mathrm{Cost}^\epsilon(\psi) \;\geq\; \frac{\Lambda_{\mathrm{d},\epsilon}(\psi)}{4} \;\geq\; \frac{\nGF_\epsilon(\psi)}{4}.
\end{equation}
\end{thm}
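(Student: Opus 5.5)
The plan is to combine the approximate-conversion bound with additivity and the four-qubit value of the magic state. Let $m=\mathrm{Cost}^\epsilon(\psi)$. Then there is a post-selected Gaussian protocol $\mathcal{E}\in\G$ with $D(\mathcal{E}(M^{\otimes m}),\psi)\leq\epsilon$, succeeding with nonzero probability. We will apply Eq.~\eqref{eq:approx_bound_3} of Theorem~\ref{thm:approx_conversion}, taking the input state to be $M^{\otimes m}$, the target to be $\phi=\psi$, and $\delta=\epsilon$. This yields
\begin{equation}
    \Lambda_{\mathrm{d},\epsilon}(\psi)\;\leq\;\nGMax(M^{\otimes m}).
\end{equation}
This step is where post-selection is allowed, so it is the right form for the class $\G$ used in defining $\mathrm{Cost}^\epsilon$.

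Next we evaluate the right-hand side. The state $\ket{M}$ is an even, non-Gaussian four-qubit pure state; it is matchgate-equivalent to $\ket{\mathrm{GHZ}_4}$. By Lemma~\ref{lem:max_Lambda_4_qubit} (or directly by Lemma~\ref{lem:max_Lambda_ghz}), $\nGMax(M)=4$. Additivity (Lemma~\ref{lem:additive}) then gives $\nGMax(M^{\otimes m})=4m$. Hence $\Lambda_{\mathrm{d},\epsilon}(\psi)\leq 4m$, which is the first inequality. The second inequality is exactly Lemma~\ref{lem:relation_max_Lambda_fidelity}.

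The main point needing care is the applicability of Eq.~\eqref{eq:approx_bound_3} when the output $\mathcal{E}(M^{\otimes m})$ may be mixed. The proof of Theorem~\ref{thm:approx_conversion} handles this through the convex-roof quantity and Lemma~\ref{lem:Lambda_nm_eq_Lambda}. I will spell out the chain explicitly. Lemma~\ref{lem:diffFid2} gives $F(\psi,\mathcal{E}(M^{\otimes m}))\geq 1-\epsilon$, so $\mathcal{E}(M^{\otimes m})$ is feasible for $\Lambda^{\mathrm{cr}}_{\mathrm{d},\epsilon}(\psi)$. Therefore
\begin{equation}
    \Lambda_{\mathrm{d},\epsilon}(\psi)=\Lambda^{\mathrm{cr}}_{\mathrm{d},\epsilon}(\psi)\leq\nGMax^{\,\mathrm{cr}}\bigl(\mathcal{E}(M^{\otimes m})\bigr)\leq\nGMax^{\,\mathrm{cr}}(M^{\otimes m})=4m,
\end{equation}
where the last inequality uses monotonicity of the convex roof under post-selected Gaussian protocols, applied to the post-selected branch. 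In the edge case $\epsilon=0$ the argument reduces to Theorem~\ref{thm:max_Lambda_monotone}, so it is consistent.
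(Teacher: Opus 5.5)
Your proposal is correct and is essentially the paper's own proof: invoke Eq.~\eqref{eq:approx_bound_3} of Theorem~\ref{thm:approx_conversion} with input $M^{\otimes m}$ and target $\psi$, evaluate $\nGMax(M^{\otimes m})=4m$ via additivity (Lemma~\ref{lem:additive}) and Lemma~\ref{lem:max_Lambda_4_qubit}, and obtain the second inequality from Lemma~\ref{lem:relation_max_Lambda_fidelity}. Your explicit convex-roof chain simply unpacks the argument the paper uses to prove Eq.~\eqref{eq:approx_bound_3} itself.
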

\begin{proof}
If a post-selected Gaussian protocol $\mathcal{E}$ maps $M^{\otimes m}$ to a state $\epsilon$-close to $\psi$ in trace distance, then by Theorem~\ref{thm:approx_conversion} and additivity under tensor product (Lemma~\ref{lem:additive}),
\begin{equation}
    \Lambda_{\mathrm{d},\epsilon}(\psi) \leq \nGMax(M^{\otimes m}) = m\,\nGMax(M) = 4m,
\end{equation}
where we used $\nGMax(M)=4$ (Lemma~\ref{lem:max_Lambda_4_qubit}). This directly establishes the first inequality in Eq.~\eqref{eq:lower_bound_cost}. The second inequality follows from Lemma~\ref{lem:relation_max_Lambda_fidelity}.
\end{proof}

Theorem~\ref{thm:lower_bound_cost} has a direct operational significance: it says that the number of magic states required to prepare $\psi$ within $\epsilon$ in trace distance is at least $\nGF_\epsilon(\psi)/4$. Crucially, $\nGF_\epsilon(\psi)/4$ is \emph{efficiently estimable} from Bell sampling data via Algorithm~\ref{alg:algo_Lambda_fid}.

\section{Mixed-state extension}
\label{sec:mixed}

The bridge degree introduced in Sec.~\ref{sec:maxlambda} and its approximate variant of Sec.~\ref{sec:approx} are defined on even \emph{pure} states. In this section we extend the framework to mixed states. First, we recast the mixed-state Gaussianity condition of Ref.~\cite{bravyi2005lagrangian} --- originally a commutator condition $[\Lambda,\rho^{\otimes 2}]=0$  --- as a kernel condition on the doubled Hilbert space, by introducing the operator
\begin{equation}\label{eq:def_mixed_Lambda}
    \tilde{\Lambda} \;\coloneqq\; \Lambda\otimes I \;-\; I\otimes\Lambda,
\end{equation}
and derive the analogues of the pure-state spectral and symmetry results for it. Then, we define the \emph{mixed-state bridge degree} as the largest eigenvalue of $\tilde{\Lambda}$ whose eigenspace is occupied by the two-copy Choi state, and we prove that it is non-increasing under the smaller class of post-selected Gaussian operations (Definition~\ref{def:gaussian_operations}).

\subsection{The bridge operator on the doubled Hilbert space}
\label{sec:choi_setup}

We work with the Choi representation of density matrices. Let $\ket{\Omega} = \sum_{j=1}^{2^n} \ket{j}\otimes\ket{j}$ be the unnormalized maximally entangled state on two copies of $\mathcal{H}_n$. For any operator $A\in\mathcal{B}(\mathcal{H}_n)$, we define its Choi state $\kket{A} \coloneqq (A\otimes I)\ket{\Omega}$. The inner product $\bbrakket{A}{B} = \Tr(A^\dagger B)$ is the Hilbert--Schmidt inner product on operators, and we record the standard \emph{ABC-rule}:
\begin{equation} \label{eq:ABC_rule}
    \kket{ABC} = (A\otimes C^T)\kket{B}.
\end{equation}

The first observation is that the mixed-state Gaussianity condition of Eq.~\eqref{eq:Lambda_condition_mixed} is equivalent to a kernel condition on the Choi space:

\begin{lem}[Mixed-state Gaussianity via Choi kernel]\label{lem:mixed_gauss_choi}
    An even state $\rho$ is an FGS if and only if
    \begin{equation}\label{eq:mixed_gauss_choi}
        \tilde{\Lambda}\, \kket{\rho^{\otimes 2}} = 0.
    \end{equation}
\end{lem}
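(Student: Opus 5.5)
The plan is to reduce the claim to Bravyi's commutator criterion, Eq.~\eqref{eq:Lambda_condition_mixed}, via the ABC-rule, Eq.~\eqref{eq:ABC_rule}. We work on $\mathcal{H}_n^{\otimes 2}$, regarded as one register, and take $\kket{\rho^{\otimes 2}} = (\rho^{\otimes 2}\otimes I)\ket{\Omega}$ as the Choi vector of the operator $\rho^{\otimes 2}$ on that register. The ordering of tensor factors is chosen so that the first $\Lambda$ in $\tilde\Lambda=\Lambda\otimes I - I\otimes\Lambda$ acts on the "row" copy and the second on the "column" copy.

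First, apply the ABC-rule with $A=\Lambda$, $B=\rho^{\otimes 2}$, $C=I$ to get $(\Lambda\otimes I)\kket{\rho^{\otimes 2}}=\kket{\Lambda\rho^{\otimes 2}}$. Apply it again with $A=I$, $B=\rho^{\otimes 2}$, $C=\Lambda^{T}$ to get $(I\otimes\Lambda)\kket{\rho^{\otimes 2}}=\kket{\rho^{\otimes 2}\Lambda^{T}}$.

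The key point is the transpose. In the Jordan--Wigner representation, Eq.~\eqref{eq:majorana_def}, each $\gamma_j$ is a product of $Z$'s with a single $X$ or $Y$. Hence $\gamma_j^T=\pm\gamma_j$: the sign is $+$ for odd $j$ and $-$ for even $j$, since $Y^T=-Y$. So each term satisfies $(\gamma_j\otimes\gamma_j)^T=\gamma_j\otimes\gamma_j$, and therefore $\Lambda^T=\Lambda$. (Alternatively, $\Lambda$ is a real symmetric matrix in the computational basis, being a sum of tensor products of $Z$, $X$ and $iY$-type real factors with appropriate signs.) It follows that
\[
\tilde\Lambda\kket{\rho^{\otimes 2}}=\kket{[\Lambda,\rho^{\otimes 2}]} .
\]

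Since $A\mapsto\kket{A}$ is injective (it is an isometry for the Hilbert--Schmidt inner product), the left-hand side vanishes if and only if $[\Lambda,\rho^{\otimes 2}]=0$. By Eq.~\eqref{eq:Lambda_condition_mixed}, this holds if and only if $\rho$ is an FGS.

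The main obstacle is the bookkeeping around the convention: we must check that the tensor-factor ordering of $\kket{\rho^{\otimes 2}}$ in the doubled space matches the way $\Lambda\otimes I$ and $I\otimes\Lambda$ act, and verify carefully that $\Lambda^T=\Lambda$, including the sign from $Y^T=-Y$ cancelling within each $\gamma_j\otimes\gamma_j$. If instead one used $\kket{\rho}^{\otimes 2}$ with interleaved factors, a fixed permutation of registers would reduce it to the above.
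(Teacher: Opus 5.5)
Your proposal is correct and follows essentially the same route as the paper: apply the ABC-rule to obtain $\tilde\Lambda\kket{\rho^{\otimes 2}}=\kket{[\Lambda,\rho^{\otimes 2}]}$ using $\Lambda^T=\Lambda$, then invoke Bravyi's commutator criterion, Eq.~\eqref{eq:Lambda_condition_mixed}. Your explicit check that the signs from $Y^T=-Y$ cancel within each $\gamma_{2j}\otimes\gamma_{2j}$, and your remark on injectivity of $A\mapsto\kket{A}$, make explicit two steps the paper leaves implicit.
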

\begin{proof}
By the ABC-rule, $\kket{\Lambda\rho^{\otimes 2} - \rho^{\otimes 2}\Lambda} = (\Lambda\otimes I - I\otimes \Lambda^T)\kket{\rho^{\otimes 2}}$. Since $\Lambda^T = \Lambda$ (as $\Lambda$ is real symmetric in the computational basis), the commutator condition $[\Lambda,\rho^{\otimes 2}]=0$ is equivalent to $\tilde{\Lambda}\kket{\rho^{\otimes 2}} = 0$.
\end{proof}

The invariance of $\tilde{\Lambda}$ under matchgates, which will be essential in what follows, follows from the analogous property of $\Lambda$:

\begin{lem}[Matchgate invariance of $\tilde\Lambda$]\label{lem:tLambda_matchgate}
    For any matchgate $U\in M_n$, let $\mathcal{U} \coloneqq U\otimes U \otimes U^* \otimes U^*$. Then
    \begin{equation}
        [\tilde{\Lambda},\, \mathcal{U}] = 0.
    \end{equation}
\end{lem}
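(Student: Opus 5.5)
The plan is to reduce the statement to the known matchgate invariance of $\Lambda$, namely $[\Lambda, U^{\otimes 2}]=0$ for every $U\in M_n$, together with its complex-conjugate counterpart. First, I would write $\mathcal{U} = (U\otimes U)\otimes(U^*\otimes U^*)$, which acts on the first and second doubled factors of $\mathcal{H}_n^{\otimes 4}$. Since $\tilde{\Lambda}=\Lambda\otimes I - I\otimes\Lambda$, I would compute the commutator factor by factor:
\begin{equation}
[\tilde{\Lambda},\mathcal{U}] \;=\; [\Lambda, U^{\otimes 2}]\otimes (U^*)^{\otimes 2} \;-\; U^{\otimes 2}\otimes [\Lambda,(U^*)^{\otimes 2}].
\end{equation}
The first term vanishes by the invariance of $\Lambda$ quoted in the preliminaries from Ref.~\cite{bravyi2005lagrangian}.

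For the second term, the key step is to show $[\Lambda,(U^*)^{\otimes 2}]=0$. I would take the complex conjugate of the identity $U^{\otimes 2}\Lambda = \Lambda U^{\otimes 2}$ in the computational basis. This gives $(U^*)^{\otimes 2}\Lambda^* = \Lambda^*(U^*)^{\otimes 2}$, so it suffices to check that $\Lambda^*=\Lambda$, which was already used in Lemma~\ref{lem:mixed_gauss_choi}. Since $\gamma_{2j-1}$ is real and $\gamma_{2j}$ is purely imaginary (it contains a single $Y_j$), each term $\gamma_j\otimes\gamma_j$ is real. Hence $\Lambda$ is a real matrix.

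An alternative, more conceptual route would be to verify directly that $U^*$ is itself a matchgate. Conjugating Eq.~\eqref{eq:action_matchgate} gives $U^*\gamma_i^* U^{*\dagger} = \sum_j Q_{ij}\gamma_j^*$, and $\gamma_j^*=\pm\gamma_j$ with the sign $(-1)^{j}$. Thus $U^*$ acts by the orthogonal matrix $DQD$ with $D=\mathrm{diag}((-1)^j)$, so $U^*\in M_n$, and invariance again applies. I would mention this as a consistency check only; the realness argument is the shorter path.

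There is no real obstacle here. The only point needing care is the sign bookkeeping for the conjugation of the $\gamma_{2j}$ terms, which is resolved by observing that both tensor factors of $\gamma_{2j}\otimes\gamma_{2j}$ pick up the same sign.
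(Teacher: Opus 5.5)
Your proof is correct and is essentially the paper's argument written out in coordinates. The paper uses the ABC rule to read $\mathcal{U}$ as $X\mapsto U^{\otimes 2}X(U^{\dagger})^{\otimes 2}$ and $\tilde{\Lambda}$ as $X\mapsto[\Lambda,X]$ (the latter needs $\Lambda^T=\Lambda$), so both routes rest on $[\Lambda,U^{\otimes 2}]=0$ plus realness of $\Lambda$, and your factorwise commutator makes explicit the step the paper leaves implicit. In your optional consistency check the sign should be $\gamma_j^*=(-1)^{j+1}\gamma_j$, but this is harmless because replacing $D$ by $-D$ leaves $DQD$ unchanged.
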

\begin{proof}
The ABC-rule gives $\kket{U\rho U^\dagger}=U \otimes U^*\kket{\rho} $, so the action of $\mathcal{U}$ on the two-copy Choi space corresponds to matchgate conjugation on the two copies of $\rho$. The claim then follows from $[\Lambda,U^{\otimes 2}]=0$.
\end{proof}

We now turn to the spectrum of $\tilde{\Lambda}$.  Since $\tilde{\Lambda}$ is a difference of two copies of $\Lambda$ (each acting on a different factor of the Choi space), its spectrum is easily read off from those of $\Lambda$. The spectrum of $\tilde{\Lambda}$ ranges over $\lambda \in \{-4n,-4n+2,\ldots,4n\}$; note in particular that the range of allowed eigenvalues is now $4n$ rather than $2n$, reflecting the doubled Hilbert space on which $\tilde{\Lambda}$ acts. We denote by $\tilde{V}_\lambda$ the eigenspace with eigenvalue $\lambda$ and by $\tilde{\Pi}_\lambda$ the corresponding projector. In analogy with the pure-state case, the eigenspaces of $\tilde{\Lambda}$ admit an explicit basis.

\begin{lem}[Eigenstructure of $\tilde{\Lambda}$]\label{lem:eigenbasis_bell_mixed}
    The eigenstates of $\tilde{\Lambda}$ are tensor products of two Bell-state vectors,
    \begin{equation}
        \ket{\Psi_{\boldsymbol{\mu}^{(1)},\boldsymbol{\mu}^{(2)}}} \coloneqq \ket{\Psi_{\boldsymbol{\mu}^{(1)}}} \otimes \ket{\Psi_{\boldsymbol{\mu}^{(2)}}},
    \end{equation}
    with eigenvalue
    \begin{equation}
        \lambda = \sum_{j=1}^{2n}\mu_j^{(1)} - \sum_{j=1}^{2n}\mu_j^{(2)}.
    \end{equation}
    Moreover, the states with fixed $\lambda$ span the corresponding eigenspace $\tilde{V}_\lambda$.
\end{lem}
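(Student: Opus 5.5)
The plan is to reduce everything to Lemma~\ref{lem:eigenbasis_bell} by exploiting the tensor-product structure of $\tilde{\Lambda}$. The doubled Choi space $\mathcal{H}_n^{\otimes 4}$ is treated as $(\mathcal{H}_n^{\otimes 2})\otimes(\mathcal{H}_n^{\otimes 2})$. Here the first factor carries the row indices of $\rho^{\otimes 2}$ and the second factor carries the column indices (after the standard reordering of tensor factors implicit in $\kket{\rho^{\otimes 2}}$). On this space the operator $\Lambda\otimes I$ acts only on the first factor and $I\otimes\Lambda$ acts only on the second, so the two operators commute. Consequently $\tilde{\Lambda}$ is diagonal in any product basis built from eigenbases of $\Lambda$ on each factor.

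The first step is to fix the Bell-state eigenbasis $\{\ket{\Psi_{\boldsymbol{\mu}}}\}_{\boldsymbol{\mu}\in\{\pm1\}^{2n}}$ of $\mathcal{H}_n^{\otimes 2}$ from Lemma~\ref{lem:eigenbasis_bell}, with $\Lambda\ket{\Psi_{\boldsymbol{\mu}}}=\bigl(\sum_j\mu_j\bigr)\ket{\Psi_{\boldsymbol{\mu}}}$. The second step is to compute the action of $\tilde{\Lambda}$ on a product vector:
\begin{equation*}
\tilde{\Lambda}\bigl(\ket{\Psi_{\boldsymbol{\mu}^{(1)}}}\otimes\ket{\Psi_{\boldsymbol{\mu}^{(2)}}}\bigr)=\Bigl(\textstyle\sum_j\mu^{(1)}_j-\sum_j\mu^{(2)}_j\Bigr)\ket{\Psi_{\boldsymbol{\mu}^{(1)}}}\otimes\ket{\Psi_{\boldsymbol{\mu}^{(2)}}},
\end{equation*}
which gives the stated eigenvalue formula. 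The third step is to observe that these $4^{2n}$ product vectors form an orthonormal basis of $\mathcal{H}_n^{\otimes 4}$, since each factor basis is orthonormal and complete. Grouping the basis by eigenvalue then yields $\tilde{V}_\lambda=\mathrm{span}\{\ket{\Psi_{\boldsymbol{\mu}^{(1)},\boldsymbol{\mu}^{(2)}}}:\sum\mu^{(1)}-\sum\mu^{(2)}=\lambda\}$. This holds because any eigenvector with eigenvalue $\lambda$ expands in the basis only on basis elements with that same eigenvalue. As a byproduct, the spectrum is $\{-4n,-4n+2,\dots,4n\}$, since each sum ranges over $\{-2n,\dots,2n\}$ in steps of $2$.

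The only point needing care, and the expected obstacle, is the ordering and labelling of tensor factors in the Choi vector $\kket{\rho^{\otimes 2}}=(\rho^{\otimes 2}\otimes I)\ket{\Omega}$. One must check that the operator written $\Lambda\otimes I-I\otimes\Lambda$ genuinely acts as $\Lambda$ on the two "row" copies and as $\Lambda$ (rather than $\Lambda^T$ or $\Lambda^*$) on the two "column" copies. This is settled by $\Lambda^T=\Lambda$, as already used in Lemma~\ref{lem:mixed_gauss_choi}. Once that convention is fixed, Lemma~\ref{lem:eigenbasis_bell} applies verbatim to each factor and the rest is immediate.
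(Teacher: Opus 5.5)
Your proposal is correct and follows essentially the same route as the paper: apply Lemma~\ref{lem:eigenbasis_bell} to each factor, read off the eigenvalue $\sum_j\mu^{(1)}_j-\sum_j\mu^{(2)}_j$ on product vectors, and use that these products form an orthonormal basis of $\mathcal{H}_n^{\otimes 4}$ to conclude that they span each $\tilde V_\lambda$. Your remark about tensor-factor ordering and $\Lambda^T=\Lambda$ is not actually needed for this lemma. The lemma concerns the operator $\Lambda\otimes I-I\otimes\Lambda$ exactly as defined, regardless of how $\kket{\rho^{\otimes 2}}$ is constructed; that point matters instead for Lemma~\ref{lem:mixed_gauss_choi}.
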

\begin{proof}
Using Lemma~\ref{lem:eigenbasis_bell} and the definition of $\tilde{\Lambda}$,
\begin{equation}
\tilde{\Lambda}\,\ket{\Psi_{\boldsymbol{\mu}^{(1)}}}\otimes\ket{\Psi_{\boldsymbol{\mu}^{(2)}}} = \Bigl(\sum_j\mu_j^{(1)} - \sum_j\mu_j^{(2)}\Bigr)\ket{\Psi_{\boldsymbol{\mu}^{(1)}}}\otimes\ket{\Psi_{\boldsymbol{\mu}^{(2)}}}.
\end{equation}
Hence, $\ket{\Psi_{\boldsymbol{\mu}^{(1)}}}\otimes\ket{\Psi_{\boldsymbol{\mu}^{(2)}}}$ is an eigenstate of $\Tilde{\Lambda}$
with eigenvalue $\lambda=\sum_j\mu_j^{(1)}-\sum_j\mu_j^{(2)}$, as claimed.
Since the states $\{\ket{\Psi_{\boldsymbol{\mu}^{(1)},\boldsymbol{\mu}^{(2)}}}\}$ form an orthonormal basis of $\mathcal{H}_n^{\otimes 4}$, the subset with fixed $\lambda$ spans the eigenspace.
\end{proof}

The central observation now is that the symmetries that constrained the pure-state bridge spectral distribution to $\mathcal{A}$ have natural analogues in the mixed-state setting. Define the mixed-state Bell-state weights
\begin{equation} \label{eq:mixed_bell_weights}
    \tilde{p}_\rho(\boldsymbol{\mu}^{(1)},\boldsymbol{\mu}^{(2)}) \coloneqq |\langle\Psi_{\boldsymbol{\mu}^{(1)},\boldsymbol{\mu}^{(2)}}\kket{\rho^{\otimes 2}}|^2 = \left\lvert \bra{\Psi_{\boldsymbol{\mu}^{(1)}}} \rho^{\otimes 2} \ket{\Psi_{\boldsymbol{\mu}^{(2)}}} \right\rvert^2 .
\end{equation}
Moreover, the mixed-state bridge spectral distribution is
\begin{equation} \label{eq:mixed_bridge_spectral}
    \tilde{p}_\rho(\lambda) \coloneqq \bbra{\rho^{\otimes 2}} \tilde{\Pi}_\lambda \kket{\rho^{\otimes 2}} 
    =  \sum_{\boldsymbol{\mu}:\sum_j\mu_j^{(1)} - \mu_j^{(2)}=\lambda} \tilde{p}_\rho(\boldsymbol{\mu}^{(1)},\boldsymbol{\mu}^{(2)}) .
\end{equation}
The weight $\tilde{p}_\rho(\lambda)$ is the mixed-state analogue of $p_\psi(\lambda)$, and will play the same role in the mixed-state theory.

\begin{lem}[Parity symmetry of the mixed-state Bell-state weights]\label{lem:bell_weight_mixed}
For any even state $\rho$,
\begin{equation}
    \tilde{p}_\rho(\boldsymbol{\mu}^{(1)},\boldsymbol{\mu}^{(2)}) = \tilde{p}_\rho(-\boldsymbol{\mu}^{(1)},-\boldsymbol{\mu}^{(2)}).
\end{equation}
\end{lem}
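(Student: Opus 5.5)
The plan is to mirror the proof of Lemma~\ref{lem:bell_weight}, using the parity operator on the first copy of each tensor factor. Recall that $\tilde{p}_\rho(\boldsymbol{\mu}^{(1)},\boldsymbol{\mu}^{(2)}) = |\bra{\Psi_{\boldsymbol{\mu}^{(1)}}}\rho^{\otimes 2}\ket{\Psi_{\boldsymbol{\mu}^{(2)}}}|^2$. Let $P^{(1)} = P\otimes I$ act on $\mathcal{H}_n^{\otimes 2}$, as in Lemma~\ref{lem:inverting_bell}. By that lemma, $P^{(1)}\ket{\Psi_{\boldsymbol{\mu}}} = e^{i\theta_{\boldsymbol{\mu}}}\ket{\Psi_{-\boldsymbol{\mu}}}$ for some phase. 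This holds at the level of vectors, not just projectors, because $P^{(1)}$ maps each Bell factor to a Bell factor up to a sign, as stated in that proof.

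First I would use evenness of $\rho$. An even state commutes with $P$, so $[P,\rho]=0$. It follows that $P^{(1)}\rho^{\otimes 2}P^{(1)} = (P\rho P)\otimes\rho = \rho^{\otimes 2}$. Here it matters that evenness for mixed states means commuting with $P$, not being an eigenvector of it.

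Next I would insert this identity into the matrix element:
\begin{equation*}
\bra{\Psi_{-\boldsymbol{\mu}^{(1)}}}\rho^{\otimes 2}\ket{\Psi_{-\boldsymbol{\mu}^{(2)}}} = e^{i(\theta_{\boldsymbol{\mu}^{(1)}}-\theta_{\boldsymbol{\mu}^{(2)}})}\bra{\Psi_{\boldsymbol{\mu}^{(1)}}}P^{(1)}\rho^{\otimes 2}P^{(1)}\ket{\Psi_{\boldsymbol{\mu}^{(2)}}}.
\end{equation*}
The right-hand side equals the original matrix element up to a phase. Taking the modulus squared then gives the claim.

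The only point needing care is confirming that $P^{(1)}$ acts on $\ket{\Psi_{\boldsymbol{\mu}}}$ as a phase times $\ket{\Psi_{-\boldsymbol{\mu}}}$. I would check this with the single-qubit computation $(Z\otimes I)\ket{\sigma_{r^zr^x}}\propto\ket{\sigma_{(r^z\oplus1)r^x}}$, combined with Eq.~\eqref{eq:r_func}. Flipping all $r^z_j$ flips every $s_k$, which sends $\mu_k\mapsto-\mu_k$. Since the Bell states are unit vectors, the phases have modulus one and cancel inside the absolute value.
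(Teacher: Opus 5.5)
Your proof is correct and follows the paper's argument. The paper writes the evenness condition in Choi form, $(P^{(1)}\otimes P^{(1)})\kket{\rho^{\otimes 2}}=\kket{\rho^{\otimes 2}}$, which is your identity $P^{(1)}\rho^{\otimes 2}P^{(1)}=\rho^{\otimes 2}$ vectorized via the ABC-rule, and then applies Lemma~\ref{lem:inverting_bell} to each Bell-state factor exactly as you do; your explicit check that this lemma holds at the vector level, up to a phase, is a detail the paper leaves implicit.
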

\begin{proof}
For an even state $\rho$, the ABC-rule gives 
\begin{equation} \label{eq:parity_mixed}
        (P^{(1)}\otimes P^{(1)})\kket{\rho^{\otimes2}}= \kket{P \rho P \otimes \rho}   =\kket{\rho^{\otimes 2}},
    \end{equation}
     since $P\rho P = \rho$ for an even state. The proof now follows the steps of the proof of Lemma~\ref{lem:bell_weight}, combined with Lemma~\ref{lem:inverting_bell} applied to each Bell-state $\ket{\Psi_{\boldsymbol{\mu}^{(1)}}}$ and $\ket{\Psi_{\boldsymbol{\mu}^{(2)}}}$.
\end{proof}

\begin{cor}\label{cor:pm_Lambda_mixed}
For any even state $\rho$, $\tilde{p}_\rho(\lambda) = \tilde{p}_\rho(-\lambda)$.
\end{cor}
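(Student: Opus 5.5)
The corollary follows by summing the weight identity of Lemma~\ref{lem:bell_weight_mixed} over the level sets of the eigenvalue, exactly as Corollary~\ref{cor:pm_Lambda} follows from Lemma~\ref{lem:bell_weight}.

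First, we recall from Eq.~\eqref{eq:mixed_bridge_spectral} that
\begin{equation}
\tilde p_\rho(\lambda)=\sum_{(\boldsymbol{\mu}^{(1)},\boldsymbol{\mu}^{(2)})\in S_\lambda}\tilde p_\rho(\boldsymbol{\mu}^{(1)},\boldsymbol{\mu}^{(2)}),
\end{equation}
where
\begin{equation}
S_\lambda\coloneqq\Bigl\{(\boldsymbol{\mu}^{(1)},\boldsymbol{\mu}^{(2)}) : \textstyle\sum_j\mu^{(1)}_j-\sum_j\mu^{(2)}_j=\lambda\Bigr\}.
\end{equation}
This expression is valid because the vectors $\ket{\Psi_{\boldsymbol{\mu}^{(1)},\boldsymbol{\mu}^{(2)}}}$ with fixed $\lambda$ form an orthonormal basis of $\tilde V_\lambda$ (Lemma~\ref{lem:eigenbasis_bell_mixed}). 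Hence $\tilde\Pi_\lambda$ is the sum of the corresponding rank-one projectors.

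Second, we consider the global sign flip
\begin{equation}
\iota:(\boldsymbol{\mu}^{(1)},\boldsymbol{\mu}^{(2)})\mapsto(-\boldsymbol{\mu}^{(1)},-\boldsymbol{\mu}^{(2)}).
\end{equation}
The map $\iota$ is an involution on $\{\pm1\}^{2n}\times\{\pm1\}^{2n}$. It sends $\sum_j\mu^{(1)}_j-\sum_j\mu^{(2)}_j$ to its negative, so it restricts to a bijection $S_\lambda\to S_{-\lambda}$.

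Finally, we reindex the sum defining $\tilde p_\rho(-\lambda)$ via $\iota$. Lemma~\ref{lem:bell_weight_mixed}, which holds because $\rho$ is even, gives termwise equality of the weights, so
\begin{equation}
\tilde p_\rho(-\lambda)=\sum_{S_\lambda}\tilde p_\rho(-\boldsymbol{\mu}^{(1)},-\boldsymbol{\mu}^{(2)})=\sum_{S_\lambda}\tilde p_\rho(\boldsymbol{\mu}^{(1)},\boldsymbol{\mu}^{(2)})=\tilde p_\rho(\lambda).
\end{equation}

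An alternative, basis-free route uses Corollary~\ref{cor:inverting_Pi} applied to each factor to obtain
\begin{equation}
(P^{(1)}\otimes P^{(1)})\,\tilde\Pi_\lambda\,(P^{(1)}\otimes P^{(1)})=\tilde\Pi_{-\lambda}.
\end{equation}
Combining this with the invariance of $\kket{\rho^{\otimes2}}$ in Eq.~\eqref{eq:parity_mixed} gives the same identity. The only point that needs care is the first step, namely that the Bell-basis decomposition of $\tilde\Pi_\lambda$ is indeed orthonormal, and this is supplied by Lemma~\ref{lem:eigenbasis_bell_mixed}.
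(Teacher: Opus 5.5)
Your proof is correct and takes the same route the paper intends. The paper gives no separate proof: the corollary comes from summing the termwise parity symmetry of Lemma~\ref{lem:bell_weight_mixed} over the level sets $S_\lambda$, which the global sign flip maps bijectively onto $S_{-\lambda}$, exactly as Corollary~\ref{cor:pm_Lambda} follows from Lemma~\ref{lem:bell_weight}. Your basis-free alternative, via $(P^{(1)}\otimes P^{(1)})\,\tilde\Pi_\lambda\,(P^{(1)}\otimes P^{(1)})=\tilde\Pi_{-\lambda}$, is also valid.
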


The exchange symmetry has an analogue here as well, this time coming from the symmetry of $\kket{\rho^{\otimes 2}}$ under swapping the two copies of the Choi state. Unlike the pure-state case, however, the exchange-symmetry sector is not determined by $\lambda\bmod 8$~\footnote{This can be seen already for one qubit, where $\ket{\sigma_{01}}\ket{\sigma_{00}}$ and $\ket{\sigma_{00}}\ket{\sigma_{11}}$ are both in $\tilde{V}_2$, yet the former is symmetric while the latter is antisymmetric under exchange.}. We therefore impose the exchange and parity symmetries directly on the Bell-state factors to restrict the possible eigenvalues, deviating from the pure-state strategy of Lemma~\ref{lem:allowed_eig_Lambda}, which simply combines two separate results from the parity and exchange symmetries.

\begin{lem}[Eigenvalue restriction (mixed-state)]\label{lem:allowed_eig_Lambda_mixed}
    For any even state $\rho$, $\tilde{p}_\rho(\lambda)\neq 0$ implies $\lambda\in 8\mathbb{Z}$.
\end{lem}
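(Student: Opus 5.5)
The plan is to show that a nonzero matrix element $\bra{\Psi_{\boldsymbol{\mu}^{(1)}}}\rho^{\otimes 2}\ket{\Psi_{\boldsymbol{\mu}^{(2)}}}$ forces $\sum_j\mu^{(1)}_j\equiv\sum_j\mu^{(2)}_j \pmod 8$. By Eq.~\eqref{eq:mixed_bell_weights} and Eq.~\eqref{eq:mixed_bridge_spectral}, $\tilde p_\rho(\lambda)$ is a sum of the squared moduli of exactly these matrix elements over pairs with $\lambda=\sum_j\mu^{(1)}_j-\sum_j\mu^{(2)}_j$. Hence this congruence gives $\lambda\in 8\mathbb{Z}$ immediately.

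The tool is two symmetry operators on $\mathcal{H}_n^{\otimes 2}$ that commute with $\rho^{\otimes 2}$ and are diagonal in the Bell basis of Lemma~\ref{lem:eigenbasis_bell}. The first is the swap $\mathbb{S}$. It commutes with $\rho\otimes\rho$ for any $\rho$. As in the proof of Lemma~\ref{lem:symmetry_eigenspace_Lambda}, it acts on $\ket{\Psi_{\boldsymbol{\mu}}}$ with eigenvalue $(-1)^{m_1}$, where $m_1$ is the number of $\ket{\sigma_{11}}$ factors. The second is the total parity $P\otimes P=\prod_j Z_j\otimes Z_j$. It commutes with $\rho\otimes\rho$ because $\rho$ is even, i.e.\ $[P,\rho]=0$. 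By Eq.~\eqref{eq:pauli_eig_bell} it acts on $\ket{\Psi_{\boldsymbol{\mu}}}$ with eigenvalue $(-1)^{\sum_j r^x_j}$. Since $r^x_j=1$ exactly for the factors $\ket{\sigma_{01}}$ and $\ket{\sigma_{11}}$, this eigenvalue is $(-1)^{m}$, where $m=m_0+m_1$ is the total number of such factors.

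For any operator $O$ commuting with $\rho^{\otimes2}$ and having $\ket{\Psi_{\boldsymbol{\mu}^{(1)}}}$, $\ket{\Psi_{\boldsymbol{\mu}^{(2)}}}$ as eigenvectors with eigenvalues $o_1,o_2$, we have $o_1\bra{\Psi_{\boldsymbol{\mu}^{(1)}}}\rho^{\otimes2}\ket{\Psi_{\boldsymbol{\mu}^{(2)}}}=o_2\bra{\Psi_{\boldsymbol{\mu}^{(1)}}}\rho^{\otimes2}\ket{\Psi_{\boldsymbol{\mu}^{(2)}}}$. So a nonzero element requires $o_1=o_2$. Applying this to $\mathbb{S}$ and to $P\otimes P$ gives $m_1^{(1)}\equiv m_1^{(2)}$ and $m^{(1)}\equiv m^{(2)} \pmod 2$. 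Together these give $m_0^{(1)}\equiv m_0^{(2)} \pmod 2$ as well.

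The final step reuses the case table from the proof of Lemma~\ref{lem:symmetry_eigenspace_Lambda}. There, for every Bell basis vector, $\alpha=\lambda/2 \bmod 4$ is determined solely by the pair $(m_0 \bmod 2,\ m_1 \bmod 2)$, via Eq.~\eqref{eq:alphaT}. That computation is purely combinatorial and holds for every basis vector, not only symmetric ones. So the two Bell factors have $\alpha^{(1)}\equiv\alpha^{(2)} \pmod 4$, hence $\sum_j\mu^{(1)}_j\equiv\sum_j\mu^{(2)}_j \pmod 8$, which is the claim.

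The main point to check carefully is that the parity eigenvalue in the Bell basis is really $(-1)^{m}$ under the paper's conventions. This requires that $Z\otimes Z$ gives $(-1)^{r^x}$ factorwise; that holds by Eq.~\eqref{eq:pauli_eig_bell}. One should also check that $(m_0,m_1)$ is read off consistently from the $\boldsymbol{\mu}\mapsto(\boldsymbol{r}^z,\boldsymbol{r}^x)$ dictionary of Eq.~\eqref{eq:r_func}. Apart from that, the argument needs neither the parity-reversal symmetry of Lemma~\ref{lem:bell_weight_mixed} nor any decomposition of $\rho$.
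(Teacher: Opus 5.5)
Your argument is correct and is essentially the paper's proof: exchange symmetry fixes $m_1^{(1)}\equiv m_1^{(2)} \pmod 2$, evenness of $\rho$ fixes the parity of the $\ket{\sigma_{01}}$ counts, and the run-length identity Eq.~\eqref{eq:alphaT}, which holds for every Bell basis vector, then forces $\lambda\in 8\mathbb{Z}$. The only variation is how you get the $\ket{\sigma_{01}}$ constraint: you read it off directly as a selection rule from the diagonal operator $P\otimes P$ with eigenvalue $(-1)^{m}$, and this does check out via $Z\otimes Z\ket{\sigma_{r^zr^x}}=(-1)^{r^x}\ket{\sigma_{r^zr^x}}$. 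The paper instead cites the reflection symmetry of Lemma~\ref{lem:bell_weight_mixed}, which swaps $\ket{\sigma_{01}}\leftrightarrow\ket{\sigma_{11}}$ and yields the constraint only after exchange symmetry is applied to the reflected component, so your version makes that step more explicit.
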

\begin{proof}
Consider the expansion of $\kket{\rho^{\otimes 2}}$ in the eigenbasis of Bell states (Lemma~\ref{lem:eigenbasis_bell_mixed}). The state $\kket{\rho^{\otimes 2}}$ is symmetric under the swap of the two copies of the Choi state, $\mathbb{S}\otimes\mathbb{S}\kket{\rho^{\otimes 2}} = \kket{\rho^{\otimes 2}}$. This symmetry condition forces the total number of $\ket{\sigma_{11}}$ factors to be even. Furthermore, Lemma~\ref{lem:bell_weight_mixed} forces the number of $\ket{\sigma_{01}}$ factors to be even as well.

We can now apply the run-length counting argument of Lemma~\ref{lem:symmetry_eigenspace_Lambda} separately to each of the two factors. 
We encode each eigenstate by two binary sequences  
\begin{equation}
x_1,\dots,x_m \in \{0,1\}, \qquad y_1,\dots,y_{m'} \in \{0,1\}
\end{equation}
where $x_j=0$ corresponds to $\ket{\sigma_{01}}$ and $x_j=1$ to $\ket{\sigma_{11}}$.   

    Let $m_0,m_1$ be the numbers of $0$'s and $1$'s in the $x$-sequence, and $m'_0,m'_1$ the corresponding numbers for the $y$–sequence. Define
    \begin{equation}
        T = \sum_{j=1}^m (-1)^{j+1} x_j, \qquad T' = \sum_{j=1}^{m'} (-1)^{j+1} y_j.
    \end{equation}

Let $A$ be the sum of $(-1)^{s+1}$ over odd-length $0$-runs and let $B$ be the sum of $(-1)^{s+1}$ over odd-length $1$-runs in the $x$-sequence, and $A',B'$ the analogous sums for the $y$–sequence. Then the total eigenvalue is
\begin{equation}
    \lambda=2\alpha, \quad \alpha = A - B - A' + B'.
\end{equation}

As in Lemma~\ref{lem:eigenbasis_bell_mixed}, we have
\begin{equation}
    T=B, \qquad T'=B',
\end{equation}
and 
\begin{equation}
    \frac{1-(-1)^{m}}{2} - T = A, \qquad \frac{1-(-1)^{m'}}{2} - T' = A'. 
\end{equation}
Therefore,
\begin{equation}
    \alpha = A - B - A' + B' = \frac{1-(-1)^{m}}{2} - \frac{1-(-1)^{m'}}{2}  - 2(T - T')
\end{equation}
Since $m_1+m_1'$ is even, we have 
\begin{equation}
    T-T' \equiv m_1 - m_1' \pmod{2} \equiv 0 \pmod{2}.
\end{equation}
Furthermore, since $m_0+m_0'$ is also even, we get $m+m'$ even, and therefore
\begin{equation}
    \frac{1-(-1)^m}{2} - \frac{1-(-1)^{m'}}{2}
    =0
\end{equation}
 Combining these, we obtain $\alpha \equiv 0 \pmod{4}$, and therefore $\lambda = 2\alpha \equiv 0 \pmod{8}$, as claimed.

\end{proof}

Furthermore, the mixed-state bridge spectral distribution is invariant under matchgate unitaries, in exact analogy with the pure-state case.

\begin{lem}[Matchgate invariance of the mixed-state bridge spectral distribution]\label{lem:invariance_weight_mixed}
For any matchgate $U$ and any even state $\rho$, let $\sigma = U\rho U^\dagger$. Then $\tilde{p}_\sigma(\lambda) = \tilde{p}_\rho(\lambda)$ for all $\lambda$.
\end{lem}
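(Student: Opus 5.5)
The plan is to mirror the pure-state proof of Lemma~\ref{lem:invariance_weight}. Under the Choi isomorphism, matchgate conjugation of $\rho$ becomes a product unitary on the Choi vector. Since $\tilde{\Lambda}$, and hence each of its spectral projectors $\tilde{\Pi}_\lambda$, commutes with that unitary, the weights are unchanged.

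\emph{Step 1: transform the Choi vector.} By the ABC-rule, Eq.~\eqref{eq:ABC_rule}, applied to $\sigma^{\otimes 2} = U^{\otimes 2}\rho^{\otimes 2}(U^\dagger)^{\otimes 2}$, we get
\begin{equation}
    \kket{\sigma^{\otimes 2}} \;=\; \bigl(U^{\otimes 2}\otimes ((U^\dagger)^{T})^{\otimes 2}\bigr)\kket{\rho^{\otimes 2}} \;=\; (U\otimes U\otimes U^*\otimes U^*)\kket{\rho^{\otimes 2}} \;=\; \mathcal{U}\kket{\rho^{\otimes 2}},
\end{equation}
where $\mathcal{U}$ is exactly the unitary appearing in Lemma~\ref{lem:tLambda_matchgate}. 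Here one must respect the ordering convention of the Choi space, namely that the first two factors carry the two copies of $\rho$ and the last two carry the transposes. This is the only point that needs care.

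\emph{Step 2: commute with the projectors.} Lemma~\ref{lem:tLambda_matchgate} gives $[\tilde{\Lambda},\mathcal{U}]=0$. If there is any doubt about the second half, it can be checked directly: the complex conjugate of $U^{\otimes 2}$ commutes with $\Lambda^* = \Lambda$, because $\Lambda$ is real in the computational basis, as used in Lemma~\ref{lem:mixed_gauss_choi}. Since $\mathcal{U}$ is unitary and commutes with the Hermitian operator $\tilde{\Lambda}$, it commutes with every spectral projector $\tilde{\Pi}_\lambda$, because each projector is a polynomial in $\tilde{\Lambda}$.

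\emph{Step 3: conclude.} Combining the two steps,
\begin{equation}
    \tilde{p}_\sigma(\lambda) = \bbra{\rho^{\otimes 2}}\mathcal{U}^\dagger\tilde{\Pi}_\lambda\mathcal{U}\kket{\rho^{\otimes 2}} = \bbra{\rho^{\otimes 2}}\tilde{\Pi}_\lambda\kket{\rho^{\otimes 2}} = \tilde{p}_\rho(\lambda).
\end{equation}
There is no real obstacle. The only subtlety is the conjugation bookkeeping in Step 1 and confirming that $U^*$, not $U$, commutes appropriately with the second copy of $\Lambda$. This holds by the reality of $\Lambda$.
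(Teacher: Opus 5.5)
Your proposal is correct and follows the paper's proof essentially verbatim: the ABC-rule gives $\kket{\sigma^{\otimes 2}} = \mathcal{U}\kket{\rho^{\otimes 2}}$ with $\mathcal{U} = U\otimes U\otimes U^*\otimes U^*$, and the commutation $[\tilde{\Lambda},\mathcal{U}]=0$ from Lemma~\ref{lem:tLambda_matchgate} carries over to each spectral projector $\tilde{\Pi}_\lambda$. You also check explicitly that the $U^*$ factors commute with $\Lambda$ because $\Lambda$ is real in the computational basis; the paper leaves this implicit, and it does not change the argument.
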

\begin{proof}
By Lemma~\ref{lem:tLambda_matchgate}, $\mathcal{U}$ commutes with $\tilde{\Lambda}$, hence with each spectral projector $\tilde{\Pi}_\lambda$. Since $\kket{\sigma} = U \otimes U^*\kket{\rho}$, we have
\begin{equation}
    \tilde{p}_\sigma(\lambda) = \bbra{\sigma^{\otimes 2}}\tilde{\Pi}_\lambda \kket{\sigma^{\otimes 2}} = \bbra{\rho^{\otimes 2}}(\mathcal{U}^\dagger)\tilde{\Pi}_\lambda \mathcal{U}\kket{\rho^{\otimes 2}} = \bbra{\rho^{\otimes 2}}\tilde{\Pi}_\lambda \kket{\rho^{\otimes 2}} = \tilde{p}_\rho(\lambda). \qedhere
\end{equation}
\end{proof}

Finally, we remark on the experimental accessibility of the mixed-state bridge spectral distribution. By construction, $\tilde p_\rho$ is the bridge spectral distribution of the pure Choi state $\kket{\rho}$ on a doubled Hilbert space; whenever $\kket{\rho}$ can be experimentally prepared, Algorithm~\ref{alg:bell_sampling} extends directly and yields $\tilde p_\rho$ from two copies of $\kket{\rho}$. Preparing $\kket{\rho}$ from copies of $\rho$ alone is, however, generally not possible, and consequently the mixed-state bridge spectral distribution $\tilde p_\rho$ is not directly accessible via Bell sampling on $\rho$. Nevertheless, in many experimental settings, the mixedness of $\rho$ arises from noise acting on an underlying pure state. In this case, error-mitigation techniques can be used to reconstruct the bridge spectral distribution of the underlying pure state, rather than the mixed-state distribution $\tilde p_\rho$. In Appendix~\ref{app:depol}, we present such a protocol for global depolarizing noise and show how the reconstructed distribution can be used to certify the bridge degree of the ideal state.

\subsection{Definition and basic properties}
\label{sec:mixed_maxlambda_def}

With the mixed-state bridge spectral distribution in hand, we can define the mixed-state bridge degree in complete analogy with its pure-state counterpart.

\begin{defn}[Mixed-state bridge degree]\label{def:mixed_maxlambda}
    Let $\rho$ be an even state of $n$ qubits. The mixed-state bridge degree $\tilde{\Lambda}_{\mathrm{d}}(\rho)$ is the largest $\alpha\geq 0$ such that $\tilde{p}_\rho(4\alpha)\neq 0$.
\end{defn}

Writing $\tilde{\mathcal{V}}_\lambda \coloneqq \bigoplus_{|j|\leq\lambda}\tilde{V}_j$, an equivalent formulation is that $\tilde{\Lambda}_{\mathrm{d}}(\rho)$ is the smallest $\alpha\geq 0$ such that $\kket{\rho^{\otimes 2}}\in\tilde{\mathcal{V}}_{4\alpha}$.

The mixed-state bridge degree inherits the structural features of the pure-state one. Specifically:
\begin{itemize}[leftmargin=2em]
\item[(i)] it takes discrete integer values in $\{0, 2,4, \ldots\}$;
\item[(ii)] it is \emph{faithful} on the set of mixed FGSs;
\item[(iii)] it is \emph{additive} under tensor products of even states;
\item[(iv)] on pure states, it reduces to the pure-state bridge degree;
\item[(v)] is \emph{non-increasing under post-selected Gaussian operations}.
\end{itemize}

We now establish each of these properties.

\begin{lem}[Discreteness and boundedness (mixed-state)]\label{lem:values_maxlambda_mixed}
For any $n$-qubit even state $\rho$, the mixed-state bridge degree $\tilde{\Lambda}_{\mathrm{d}}(\rho)$ takes values of the form $2k$ for some integer $k$ and satisfies
\begin{equation}
    0 \leq \tilde{\Lambda}_{\mathrm{d}}(\rho) \leq 2\lfloor n/2 \rfloor.
\end{equation}
\end{lem}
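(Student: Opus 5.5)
This follows the proof of Lemma~\ref{lem:values_max_Lambda}, with Lemma~\ref{lem:allowed_eig_Lambda_mixed} taking the place of Lemma~\ref{lem:allowed_eig_Lambda}.

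\emph{Discreteness.} By Definition~\ref{def:mixed_maxlambda}, $\tilde{\Lambda}_{\mathrm{d}}(\rho)=\alpha$ requires $\tilde p_\rho(4\alpha)\neq 0$. Lemma~\ref{lem:allowed_eig_Lambda_mixed} then forces $4\alpha\in 8\mathbb{Z}$, so $\alpha=2k$ for an integer $k$.

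\emph{Nonnegativity.} We must check that the set of admissible $\alpha\ge 0$ is nonempty. Since $\sum_\lambda \tilde p_\rho(\lambda)=\bbrakket{\rho^{\otimes 2}}{\rho^{\otimes 2}}=\Tr(\rho^2)^2>0$, some $\lambda$ carries nonzero weight. By Corollary~\ref{cor:pm_Lambda_mixed}, $-\lambda$ carries the same weight, so some $\lambda\ge 0$ is occupied. Hence the maximum exists and is $\ge 0$. If one normalizes $\kket{\rho^{\otimes 2}}$, this positivity argument is unchanged.

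\emph{Upper bound.} The spectrum of $\tilde{\Lambda}=\Lambda\otimes I-I\otimes\Lambda$ lies in $[-4n,4n]$ by Lemma~\ref{lem:eigenbasis_bell_mixed}. So any occupied $4\alpha$ satisfies $4\alpha\le 4n$, giving $\alpha\le n$. Combined with $\alpha$ even, this yields $\alpha\le 2\lfloor n/2\rfloor$.

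\emph{Main obstacle.} The only nonroutine point is the eigenvalue restriction to $8\mathbb{Z}$, and that is already established by Lemma~\ref{lem:allowed_eig_Lambda_mixed}. No further refinement is needed to reach the stated bound.
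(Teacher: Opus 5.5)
Your proof is correct and follows the paper's own argument. Discreteness comes from the $8\mathbb{Z}$ restriction of Lemma~\ref{lem:allowed_eig_Lambda_mixed}, which gives $\alpha=\lambda/4=2k$, and the upper bound comes from the largest eigenvalue of $\tilde{\Lambda}$ being $4n$. Your additional check that the defining set is nonempty, via $\sum_\lambda\tilde p_\rho(\lambda)=(\Tr\rho^2)^2>0$ together with Corollary~\ref{cor:pm_Lambda_mixed}, is valid and is a detail the paper leaves implicit.
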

\begin{proof}
By Lemma~\ref{lem:allowed_eig_Lambda_mixed}, $\tilde{p}_\rho(\lambda)\neq 0$ requires $\lambda = 8k$, so $\alpha = \lambda/4 = 2k$. The upper bound follows from the largest eigenvalue of $\tilde{\Lambda}$ being $4n$.
\end{proof}

\begin{lem}[Faithfulness (mixed-state)]\label{lem:faithfulness_mixed}
$\tilde{\Lambda}_{\mathrm{d}}(\rho)=0$ if and only if $\rho$ is a mixed FGS.
\end{lem}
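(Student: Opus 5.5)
The plan is to reduce the claim to Lemma~\ref{lem:mixed_gauss_choi}, which already says that an even state $\rho$ is an FGS iff $\tilde{\Lambda}\kket{\rho^{\otimes 2}}=0$. All that remains is to identify the condition $\tilde{\Lambda}_{\mathrm{d}}(\rho)=0$ with this kernel condition, using the eigen-decomposition of $\tilde{\Lambda}$ from Lemma~\ref{lem:eigenbasis_bell_mixed}.

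First I would expand $\kket{\rho^{\otimes 2}}=\sum_\lambda \tilde{\Pi}_\lambda\kket{\rho^{\otimes 2}}$ over the eigenspaces $\tilde V_\lambda$. By definition, $\tilde{\Lambda}_{\mathrm{d}}(\rho)=0$ means $\tilde p_\rho(4\alpha)=0$ for every $\alpha>0$, i.e.\ $\tilde p_\rho(\lambda)=0$ for all $\lambda>0$. By the reflection symmetry of Corollary~\ref{cor:pm_Lambda_mixed}, this also holds for all $\lambda<0$.

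Since $\tilde p_\rho(\lambda)=\|\tilde{\Pi}_\lambda\kket{\rho^{\otimes 2}}\|^2$, vanishing weight means the component itself vanishes. Hence $\kket{\rho^{\otimes 2}}\in\tilde V_0$, which is exactly $\tilde{\Lambda}\kket{\rho^{\otimes 2}}=0$.

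For the converse, $\tilde{\Lambda}\kket{\rho^{\otimes 2}}=0$ forces all components with $\lambda\neq0$ to vanish. The only nonzero weight is then $\tilde p_\rho(0)$, which is nonzero because $\kket{\rho^{\otimes 2}}\neq0$ (it has norm $\Tr(\rho^2)^2>0$). So the largest $\alpha$ with $\tilde p_\rho(4\alpha)\neq 0$ is $\alpha=0$. Combining both directions with Lemma~\ref{lem:mixed_gauss_choi} gives the claim.

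Two small points need care. One is that the definition only inspects eigenvalues of the form $4\alpha$. By Lemma~\ref{lem:allowed_eig_Lambda_mixed}, the support lies in $8\mathbb{Z}$, so no occupied eigenvalue is skipped and "largest $\alpha$" is well defined on the full support. The other is that $\kket{\rho^{\otimes 2}}$ is unnormalized, so $\tilde p_\rho$ is a measure rather than a probability; this is harmless, since only its support matters. There is no real obstacle beyond making sure the Choi kernel lemma applies; that lemma rests on Bravyi's commutator criterion~\eqref{eq:Lambda_condition_mixed}, which we take as given.
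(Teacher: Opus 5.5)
Your proposal is correct and follows the same route as the paper, which proves the lemma as a one-line chain of equivalences: $\tilde{\Lambda}_{\mathrm{d}}(\rho)=0$ iff all of $\tilde p_\rho$ sits at $\lambda=0$, iff $\tilde{\Lambda}\kket{\rho^{\otimes 2}}=0$, iff $\rho$ is an FGS by Lemma~\ref{lem:mixed_gauss_choi}. Your version makes explicit the details the paper leaves implicit: the reflection symmetry that handles negative eigenvalues, the nonvanishing of $\tilde p_\rho(0)$, and the restriction of the support to $8\mathbb{Z}$.
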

\begin{proof}
This is immediate from Lemma~\ref{lem:mixed_gauss_choi}: $\tilde{\Lambda}_{\mathrm{d}}(\rho)=0$ iff all mixed-state bridge spectral distribution is concentrated at $\lambda=0$ iff $\tilde{\Lambda}\kket{\rho^{\otimes 2}}=0$ iff $\rho$ is an FGS.
\end{proof}

\begin{lem}[Additivity (mixed-state)]\label{lem:additive_mixed}
For all even states $\rho,\sigma$,
\begin{equation}
    \tilde{\Lambda}_{\mathrm{d}}(\rho\otimes\sigma) = \tilde{\Lambda}_{\mathrm{d}}(\rho) + \tilde{\Lambda}_{\mathrm{d}}(\sigma).
\end{equation}
\end{lem}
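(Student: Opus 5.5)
I will mirror the proof of the pure-state additivity (Lemma~\ref{lem:additive}), now working on the doubled Choi space. The first step is to identify $\kket{(\rho\otimes\sigma)^{\otimes 2}}$ with $\kket{\rho^{\otimes 2}}\otimes\kket{\sigma^{\otimes 2}}$ by a fixed reordering of tensor factors. Here $\rho$ lives on $n$ qubits and $\sigma$ on $m$ qubits. The maximally entangled vector factorizes, $\ket{\Omega_{n+m}}=\ket{\Omega_n}\otimes\ket{\Omega_m}$, so the Choi map is multiplicative, $\kket{A\otimes B}\cong\kket{A}\otimes\kket{B}$. Applying this twice (for the two copies) gives the identification.

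Next I check that $\tilde\Lambda$ splits compatibly under this reordering. With the Jordan--Wigner ordering in which $\sigma$'s modes follow $\rho$'s, the Majoranas of $\sigma$ acquire the string $P_\rho$, i.e.\ $\gamma_{2n+k}=P_\rho\otimes\gamma^{(\sigma)}_k$. On two copies we then get
\[
\gamma_{2n+k}\otimes\gamma_{2n+k}=(P_\rho\otimes P_\rho)\otimes(\gamma^{(\sigma)}_k\otimes\gamma^{(\sigma)}_k).
\]
Hence $\Lambda_{n+m}=\Lambda_\rho\otimes I+(P_\rho\otimes P_\rho)\otimes\Lambda_\sigma$ on the reordered space. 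Acting on a Choi vector, the factor $P_\rho\otimes P_\rho$ (left action) together with its counterpart in $I\otimes\Lambda$ (which acts as the transpose, i.e.\ right multiplication) can be removed. Indeed, since $\rho$ is even, $(P\otimes P)\rho^{\otimes2}=\rho^{\otimes2}(P\otimes P)$, and by Eq.~\eqref{eq:parity_mixed} these parity strings act as the identity on $\kket{\rho^{\otimes 2}}$. More carefully, $P\otimes P$ commutes with $\Lambda_\rho$, so I would restrict to the joint $+1$ eigenspace of the parity-string operators containing $\kket{\rho^{\otimes2}}$. On that subspace $\tilde\Lambda_{\rho\otimes\sigma}=\tilde\Lambda_\rho\otimes I+I\otimes\tilde\Lambda_\sigma$. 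Consequently $\tilde V^{\rho}_\lambda\otimes\tilde V^{\sigma}_{\lambda'}$, intersected with this invariant subspace, lies in $\tilde V^{\rho\otimes\sigma}_{\lambda+\lambda'}$. This bookkeeping of the Jordan--Wigner parity strings is the step I expect to need the most care. If it becomes messy, I would instead use Lemma~\ref{lem:eigenbasis_bell_mixed} directly: the Bell-product eigenbasis of the composite system is a product of Bell bases, and one tracks how the $\mu$ labels shift under the string $P_\rho$. By Lemma~\ref{lem:inverting_bell}, this shift only acts as a global sign flip determined by the parity of the state, which the evenness of $\rho$ fixes.

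Given the splitting, expand both Choi vectors in eigenvectors of their respective $\tilde\Lambda$:
\[
\kket{\rho^{\otimes2}}=\sum_\lambda v_\lambda,\qquad \kket{\sigma^{\otimes2}}=\sum_{\lambda'} w_{\lambda'}.
\]
The component of the product in eigenvalue $\Lambda_{\rm tot}$ is $\sum_{\lambda+\lambda'=\Lambda_{\rm tot}}v_\lambda\otimes w_{\lambda'}$, and these summands are mutually orthogonal. The weight is therefore $\tilde p_{\rho\otimes\sigma}(\Lambda_{\rm tot})=\sum_{\lambda+\lambda'=\Lambda_{\rm tot}}\tilde p_\rho(\lambda)\tilde p_\sigma(\lambda')$, a convolution. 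Hence the support of $\tilde p_{\rho\otimes\sigma}$ is contained in $[-4(\tilde\Lambda_{\mathrm d}(\rho)+\tilde\Lambda_{\mathrm d}(\sigma)),\,4(\tilde\Lambda_{\mathrm d}(\rho)+\tilde\Lambda_{\mathrm d}(\sigma))]$, which gives ``$\leq$''. For ``$\geq$'', the top value $4\tilde\Lambda_{\mathrm d}(\rho)+4\tilde\Lambda_{\mathrm d}(\sigma)$ receives the single term $\tilde p_\rho(4\tilde\Lambda_{\mathrm d}(\rho))\,\tilde p_\sigma(4\tilde\Lambda_{\mathrm d}(\sigma))>0$. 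Here the nonnegativity of all terms (no cancellation) and the symmetry of Corollary~\ref{cor:pm_Lambda_mixed} guarantee that the maximal positive eigenvalue is indeed occupied. This establishes equality.
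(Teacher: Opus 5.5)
Your route is the paper's: expand in the product eigenbasis, add eigenvalues, and use Corollary~\ref{cor:pm_Lambda_mixed} for the top sector. You also correctly flag the Jordan--Wigner string as the delicate point. However, the way you dispose of it fails for even states that mix parity sectors.

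Let $G_L\coloneqq(P\otimes P)\otimes I$ and $G_R\coloneqq I\otimes(P\otimes P)$ denote the string acting on the left and right legs of $\kket{\rho^{\otimes2}}$. The composite operator is then $\tilde\Lambda_\rho\otimes I+G_L\otimes(\Lambda_\sigma\otimes I)-G_R\otimes(I\otimes\Lambda_\sigma)$. Evenness of $\rho$ (Eq.~\eqref{eq:parity_mixed} applied to both copies) only gives $G_LG_R\kket{\rho^{\otimes2}}=\kket{\rho^{\otimes2}}$. That is, $G_L$ and $G_R$ agree on the Choi vector, but neither acts as the identity.

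Write $\rho=\rho_++\rho_-$ in parity blocks. Then $\kket{\rho^{\otimes2}}=u_++u_-$, where $u_+=\kket{\rho_+\otimes\rho_++\rho_-\otimes\rho_-}$ lies in the sector $G_L=G_R=+1$ and $u_-=\kket{\rho_+\otimes\rho_-+\rho_-\otimes\rho_+}$ lies in the sector $G_L=G_R=-1$. Already $\rho=I/2^n$ has $u_-\neq0$. On the $-1$ sector the composite acts as $\tilde\Lambda_\rho\otimes I-I\otimes\tilde\Lambda_\sigma$. So, with $w\coloneqq\kket{\sigma^{\otimes2}}$, the component $\tilde\Pi_\lambda u_-\otimes\tilde\Pi_{\lambda'}w$ sits at eigenvalue $\lambda-\lambda'$, not $\lambda+\lambda'$. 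Your restriction to the joint $+1$ eigenspace therefore discards part of the state, and the convolution formula is not established by the argument you give.

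Your fallback has the same flaw. The sign multiplying the $\sigma$-labels is $(-1)^{\sum_j r^x_j}$ of the $\rho$-Bell labels. Evenness forces this sign to coincide on the two legs, but not to equal $+1$.

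The repair is short and uses a fact you already cite. Since $G_L$ and $G_R$ commute with $\tilde\Lambda_\rho$, one has $\tilde p_\rho(\lambda)=\|\tilde\Pi_\lambda u_+\|^2+\|\tilde\Pi_\lambda u_-\|^2$. Orthogonality of the two sectors then gives
\[
\tilde p_{\rho\otimes\sigma}(\kappa)=\sum_{\lambda+\lambda'=\kappa}\|\tilde\Pi_\lambda u_+\|^2\,\tilde p_\sigma(\lambda')+\sum_{\lambda-\lambda'=\kappa}\|\tilde\Pi_\lambda u_-\|^2\,\tilde p_\sigma(\lambda').
\]
Applying $\tilde p_\sigma(\lambda')=\tilde p_\sigma(-\lambda')$ (Corollary~\ref{cor:pm_Lambda_mixed}) to the second sum turns it into a sum over $\lambda+\lambda'=\kappa$. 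Hence $\tilde p_{\rho\otimes\sigma}=\tilde p_\rho*\tilde p_\sigma$ after all, and your final paragraph goes through verbatim.

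So the symmetry of $\tilde p_\sigma$ is needed already to obtain the convolution, not only to show that the top sector is occupied. Only the ``$\leq$'' direction survives without it, since $|\lambda\pm\lambda'|\leq|\lambda|+|\lambda'|$. The paper's two-line proof likewise just asserts that eigenvalues add and does not address the string; making this sector decomposition explicit is what a rigorous version requires.
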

\begin{proof}
The argument is parallel to the pure-state case (Lemma~\ref{lem:additive}). Expanding $\kket{(\rho\otimes\sigma)^{\otimes 2}}$ in the eigenbasis of $\tilde{\Lambda}$, each component has eigenvalue equal to the sum of the component eigenvalues. The symmetry $\tilde{p}_\rho(\lambda)=\tilde{p}_\rho(-\lambda)$ (Corollary~\ref{cor:pm_Lambda_mixed}) ensures that the maximal eigenvalue component of the tensor product state $\rho\otimes\sigma$ is attained as the sum of the corresponding maximal eigenvalue component of each state.
\end{proof}

Notably, the mixed-state bridge degree reduces to the pure-state one when applied to pure states. %

\begin{lem}[Reduction to pure states]\label{lem:pure_state_reduction}
For any even pure state $\psi$,
\begin{equation}
    \tilde{\Lambda}_{\mathrm{d}}(\psi) = \nGMax(\psi).
\end{equation}
\end{lem}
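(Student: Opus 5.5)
For pure $\rho=\psi=\ketbra{\psi}{\psi}$ the Choi vector factorizes. With $\ket{\Omega}=\sum_j\ket{j}\ket{j}$ we have $\kket{\psi}=(\ketbra{\psi}{\psi}\otimes I)\ket{\Omega}=\ket{\psi}\otimes\ket{\psi^*}$. Hence, after reordering tensor factors to match the convention of $\tilde\Lambda=\Lambda\otimes I-I\otimes\Lambda$ (first factor acting on the ``row'' copies, second on the ``column'' copies),
\begin{equation}
    \kket{\psi^{\otimes 2}} \;=\; \ket{\psi}^{\otimes 2}\otimes\ket{\psi^*}^{\otimes 2}.
\end{equation}
This is a product state with respect to the split on which the two terms of $\tilde\Lambda$ act. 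The two terms commute, so $\tilde\Pi_\lambda=\sum_{\lambda_1-\lambda_2=\lambda}\Pi_{\lambda_1}\otimes\Pi_{\lambda_2}$. Therefore
\begin{equation}
    \tilde p_\psi(\lambda)\;=\;\sum_{\lambda_1-\lambda_2=\lambda} p_\psi(\lambda_1)\,p_{\psi^*}(\lambda_2),
\end{equation}
so $\tilde p_\psi$ is the distribution of the difference of two independent variables distributed as $p_\psi$ and $p_{\psi^*}$.

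Next I would relate $p_{\psi^*}$ to $p_\psi$. Since $\Lambda$ is real in the computational basis (as used in Lemma~\ref{lem:mixed_gauss_choi}), its spectral projectors $\Pi_\lambda$ are real symmetric. Thus $p_{\psi^*}(\lambda)=\bra{\psi^*}^{\otimes2}\Pi_\lambda\ket{\psi^*}^{\otimes2}=\overline{p_\psi(\lambda)}=p_\psi(\lambda)$. Alternatively, one can check directly that the Bell weights $2^{-n}|\bra{\psi}\sigma_{\mathbf r}\ket{\psi^*}|^2$ are invariant under conjugation, because $\sigma_{\mathbf r}$ is real up to phase. The distribution $\tilde p_\psi$ is then the law of $X_1-X_2$ with $X_1,X_2$ i.i.d.\ according to $p_\psi$.

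The maximal occupied eigenvalue of $\tilde\Lambda$ can now be identified. The support of $X_1-X_2$ has maximum $\max\operatorname{supp}p_\psi-\min\operatorname{supp}p_\psi$. By Corollary~\ref{cor:pm_Lambda} the support is symmetric, so this maximum equals $2\cdot 2\nGMax(\psi)=4\nGMax(\psi)$. It is attained because $p_\psi(2\nGMax)\,p_\psi(-2\nGMax)>0$. There are no cancellations to worry about, since all terms in the convolution are nonnegative. Comparing with Definition~\ref{def:mixed_maxlambda}, where the largest $\alpha$ with $\tilde p_\rho(4\alpha)\neq0$ is taken, gives $\tilde\Lambda_{\mathrm d}(\psi)=\nGMax(\psi)$.

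The main obstacle is bookkeeping of the tensor-factor ordering. The object $\kket{\rho^{\otimes 2}}$ lives on $\mathcal H_n^{\otimes 4}$, and one must check that $\Lambda\otimes I$ acts on the two ``row'' copies and $I\otimes\Lambda$ on the two ``column'' copies. This holds under the convention implicit in Lemma~\ref{lem:mixed_gauss_choi} (the ABC-rule applied to $\rho^{\otimes 2}$ as a single operator on $\mathcal H_n^{\otimes 2}$). If instead a copy-by-copy ordering were used, I would insert the corresponding permutation. This permutation commutes appropriately, since $\Lambda$ is a sum over matching modes.
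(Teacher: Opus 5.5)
Your proof is correct and takes essentially the paper's route. The paper likewise factorizes the pure-state Choi weights, $\tilde p_\psi(\boldsymbol{\mu}^{(1)},\boldsymbol{\mu}^{(2)})=p_\psi(\boldsymbol{\mu}^{(1)})\,p_\psi(\boldsymbol{\mu}^{(2)})$, bounds $|\lambda|\leq 4\nGMax(\psi)$, and attains this value with the pair $(\boldsymbol{\mu}^*,-\boldsymbol{\mu}^*)$ using the $\pm$ symmetry. The only difference is cosmetic: you factorize at the level of the Choi vector and the spectral projectors, so you must treat $\ket{\psi^*}$ explicitly via the reality of $\Pi_\lambda$, whereas the paper's matrix-element form $\bigl|\bra{\Psi_{\boldsymbol{\mu}^{(1)}}}\psi^{\otimes 2}\ket{\Psi_{\boldsymbol{\mu}^{(2)}}}\bigr|^2$ sidesteps the conjugate.
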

\begin{proof}
 A direct calculation using Eq.~\eqref{eq:mixed_bell_weights} shows that the mixed-state Bell-state weights factorize for a pure state:
\begin{equation}\label{eq:mixed_weight_factorize}
    \tilde{p}_\psi(\boldsymbol{\mu}^{(1)},\boldsymbol{\mu}^{(2)}) = |\langle\Psi_{\boldsymbol{\mu}^{(1)}}|\psi\rangle^{\otimes 2}|^2 |\langle\Psi_{\boldsymbol{\mu}^{(2)}}|\psi\rangle^{\otimes 2}|^2 = p_\psi(\boldsymbol{\mu}^{(1)}) \,p_\psi(\boldsymbol{\mu}^{(2)}).
\end{equation} 
 Therefore, if $\tilde{p}_\psi(\boldsymbol{\mu}^{(1)},\boldsymbol{\mu}^{(2)})\neq 0$, then both pure-state weights $p_\psi(\boldsymbol{\mu}^{(1)})$ and $p_\psi(\boldsymbol{\mu}^{(2)})$ are nonzero. Consequently, the corresponding mixed-state eigenvalue $\lambda = \sum_j\mu_j^{(1)} - \sum_j\mu_j^{(2)}$ satisfies $|\lambda|\leq |\lambda^{(1)}|+|\lambda^{(2)}|\leq 4\nGMax(\psi)$, where $\lambda^{(1,2)}=\sum_j\mu_j^{(1,2)}$. This gives $\tilde{\Lambda}_{\mathrm{d}}(\psi)\leq\nGMax(\psi)$.

For the reverse inequality, let $\boldsymbol{\mu}^{*}$ with $p_\psi(\boldsymbol{\mu}^{*})\neq0$ achieve the maximum pure-state eigenvalue, $\sum_j\mu_j^{*} = 2\nGMax(\psi)$. Hence, the state $\ket{\Psi_{\boldsymbol{\mu}^{*},-\boldsymbol{\mu}^{*}}}$ is an eigenstate of $\tilde{\Lambda}$ with eigenvalue $4\nGMax(\psi)$, and Eq.~\eqref{eq:mixed_weight_factorize} gives $\tilde{p}_\psi(\boldsymbol{\mu}^{*},-\boldsymbol{\mu}^{*}) = p_\psi(\boldsymbol{\mu}^{*})^2\neq 0$ (using the symmetry $p_\psi(-\boldsymbol{\mu}) = p_\psi(\boldsymbol{\mu})$ from Lemma~\ref{lem:bell_weight}). Hence $\tilde{\Lambda}_{\mathrm{d}}(\psi)\geq\nGMax(\psi)$.
\end{proof}

Lemma~\ref{lem:pure_state_reduction} guarantees that the pure-state theory of Sec.~\ref{sec:maxlambda} is recovered as a special case of the general mixed-state theory.

\subsection{Monotonicity under post-selected Gaussian operations}
\label{sec:mixed_maxlambda_monotone}

Having established that $\tilde{\Lambda}_{\mathrm{d}}$ is a natural extension of $\nGMax$ to mixed states, we now show that it is a valid non-Gaussianity monotone under the smaller class of post-selected Gaussian operations of Definition~\ref{def:gaussian_operations}.  We note that $\tilde{\Lambda}_{\mathrm{d}}$ cannot be a monotone under the full class of Gaussian protocols, since it is faithful only on the FGSs (Lemma \ref{lem:faithfulness_mixed}) and not on the larger set of convex-Gaussian states.

\begin{thm}[Monotonicity of the mixed-state bridge degree]\label{thm:mixed_max_Lambda_monotone}
For any even state $\rho$ and any post-selected Gaussian operation $\mathcal{E}$ with $\mathcal{E}(\rho) = \sigma$,
\begin{equation}
    \tilde{\Lambda}_{\mathrm{d}}(\sigma) \;\leq\; \tilde{\Lambda}_{\mathrm{d}}(\rho).
\end{equation}
\end{thm}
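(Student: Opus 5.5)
The plan is to reduce to the elementary constituents of a post-selected Gaussian operation (Definition~\ref{def:gaussian_operations}) and to check monotonicity for each one, as in Theorem~\ref{thm:max_Lambda_monotone}. There are two differences from the pure case. First, there is no Kraus-branch argument: a post-selected Gaussian operation is by definition a composition of (a)--(d) with no classical mixing, so it suffices to treat each of them. Second, partial trace (c) now has to be handled directly, since the input may be mixed. A common tool throughout is the support characterization: $\tilde{\Lambda}_{\mathrm d}(\rho)$ is the smallest $\alpha$ with $\kket{\rho^{\otimes2}}\in\tilde{\mathcal V}_{4\alpha}$. So it is enough to show that each operation maps $\tilde{\mathcal V}_{4\alpha}$ (on the relevant number of modes) into $\tilde{\mathcal V}_{4\alpha}$, up to normalization, which does not affect the support.

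\emph{(a) Matchgates} follow from Lemma~\ref{lem:invariance_weight_mixed}. \emph{(b) Appending an FGS} $\sigma_G$ follows from additivity (Lemma~\ref{lem:additive_mixed}) together with faithfulness (Lemma~\ref{lem:faithfulness_mixed}): $\tilde{\Lambda}_{\mathrm d}(\rho\otimes\sigma_G)=\tilde{\Lambda}_{\mathrm d}(\rho)$.

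\emph{(d) Post-selected occupation measurement.} Write $P_\mu=(I+\mu Z_1)/2$. By the ABC-rule, and since $P_\mu^T=P_\mu$, we have $\kket{(P_\mu\rho P_\mu)^{\otimes2}}=\mathcal P\,\kket{\rho^{\otimes2}}$ with $\mathcal P=(P_\mu\otimes P_\mu)\otimes(P_\mu\otimes P_\mu)$ acting on the four legs. In the proof of Theorem~\ref{thm:max_Lambda_monotone} it was observed that $P_\mu\otimes P_\mu$ commutes with $\Lambda$. Hence $\mathcal P$ commutes with $\Lambda\otimes I$ and with $I\otimes\Lambda$, so with $\tilde\Lambda$ and with every $\tilde\Pi_\lambda$. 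Therefore $\mathcal P$ preserves each $\tilde V_\lambda$, the support after renormalization can only shrink, and $\tilde{\Lambda}_{\mathrm d}$ cannot increase.

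\emph{(c) Partial trace.} This is the step I expect to be the main obstacle. By a matchgate, which is harmless by (a), it suffices to trace out the last mode $n$. In the Choi picture, $\kket{(\mathrm{Tr}_n\rho)^{\otimes2}}$ is obtained from $\kket{\rho^{\otimes2}}$ by contracting, in each of the two tensor factors, the row and column legs of qubit $n$ with the unnormalized maximally entangled bra $\bra{\Omega_n}$. Call this contraction $C$.

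I will work in the Bell eigenbasis of Lemma~\ref{lem:eigenbasis_bell_mixed}. On qubit $n$, each of $\ket{\Psi_{\boldsymbol\mu^{(1)}}}$ and $\ket{\Psi_{\boldsymbol\mu^{(2)}}}$ carries a Bell pair between copy~1 and copy~2 of the row space. The contraction $C$ instead pairs the row legs with the column legs. I will therefore compute $C$ on a basis vector $\ket{\Psi_{\boldsymbol\mu^{(1)}}}\otimes\ket{\Psi_{\boldsymbol\mu^{(2)}}}$: contracting the last-qubit Bell factors $\ket{\sigma_{r}}\otimes\ket{\sigma_{r'}}$ against $\bra{\Omega}\otimes\bra{\Omega}$ gives, up to a constant, $\delta_{r,r'}$. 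The same Bell label then survives on the $n-1$ kept qubits, with the JW-string labels $s_k$ for $k\le 2n-2$ unchanged.

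By Eqs.~\eqref{eq:r_func} and~\eqref{eq:lambda_from_s}, the eigenvalue contributions of the first $n-1$ qubits depend only on their own labels. So the image of the basis vector has $\tilde\Lambda_{n-1}$-eigenvalue $\lambda-(\mu^{(1)}_{2n-1}+\mu^{(1)}_{2n})+(\mu^{(2)}_{2n-1}+\mu^{(2)}_{2n})$. Because $r=r'$ on the last qubit, I expect these last-mode contributions from the two factors to cancel, provided the prefix parities $\bigoplus_k r^x_k$ agree. If they do not cancel, the fallback is to combine contributions pairwise using the evenness of $\rho$, that is, the $(P^{(1)}\otimes P^{(1)})$-invariance of Eq.~\eqref{eq:parity_mixed}. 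Either way, the aim is to show that $C$ maps $\tilde V_\lambda$ into $\tilde V_\lambda$, or at least $\tilde{\mathcal V}_{4\alpha}$ into $\tilde{\mathcal V}_{4\alpha}$.

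Together with Lemma~\ref{lem:allowed_eig_Lambda_mixed} applied to the output, this yields $\tilde{\Lambda}_{\mathrm d}(\mathrm{Tr}_n\rho)\le\tilde{\Lambda}_{\mathrm d}(\rho)$. Composing (a)--(d) then proves the theorem.
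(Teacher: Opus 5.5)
Your reduction to (a)--(d) and your treatment of (a), (b) and (d) match the paper's proof. The gap is in (c), which you leave conditional, and your own Bell-basis computation shows where it lies. Take a basis vector $\ket{\Psi_{\boldsymbol\mu^{(1)}}}\otimes\ket{\Psi_{\boldsymbol\mu^{(2)}}}$ that survives the contraction, i.e.\ has the same label $r_n$ in both factors. The last-mode contribution of factor $i$ is $2(-1)^{p^{(i)}_{n-1}+r^z_n}$ if $r^x_n=1$, and $0$ otherwise. Here $p^{(i)}_{k}\coloneqq\bigoplus_{j\le k}r^{x(i)}_j$ is the Jordan--Wigner prefix parity of factor $i$. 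So when $r^x_n=1$ and $p^{(1)}_{n-1}\neq p^{(2)}_{n-1}$, the two contributions are opposite and $\lambda'=\lambda\pm4$.

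This does occur. Every qubit with $r^x_j=1$ contributes $\pm2$ to $\sum_j\mu_j$, so $\lambda/2\equiv p^{(1)}_n+p^{(2)}_n \pmod 2$. Hence surviving vectors with $\lambda\equiv 2\pmod 4$ have mismatched prefix parities. Consequently, neither of your target statements ($C\tilde V_\lambda\subseteq\tilde V_\lambda$, or $C\tilde{\mathcal V}_{4\alpha}\subseteq\tilde{\mathcal V}_{4\alpha}$) holds on the full space. For example, $C$ can send vectors of $\tilde V_{4\alpha-2}\subset\tilde{\mathcal V}_{4\alpha}$ into the $(4\alpha+2)$-eigenspace of the $(n-1)$-mode operator.

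Your fallback does not repair this. The invariance in Eq.~\eqref{eq:parity_mixed} only maps $(\boldsymbol\mu^{(1)},\boldsymbol\mu^{(2)})\mapsto(-\boldsymbol\mu^{(1)},-\boldsymbol\mu^{(2)})$ and flips only the $r^z_j$, so the prefix parities are unchanged. It turns a shift $\lambda\mapsto\lambda+4$ into $-\lambda\mapsto-\lambda-4$. These land in different output sectors, so there is nothing to cancel pairwise.

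What is missing is a selection rule, not a pairing. Evenness gives $(P\otimes P)\rho^{\otimes2}(P\otimes P)=\rho^{\otimes2}$, and each Bell vector is a $P\otimes P$ eigenvector with eigenvalue $(-1)^{p_n}$. Hence only components with $p^{(1)}_n=p^{(2)}_n$ appear in $\kket{\rho^{\otimes2}}$; equivalently $\lambda\in4\mathbb Z$, which also follows from Lemma~\ref{lem:allowed_eig_Lambda_mixed}. For such a component, $r^{(1)}_n=r^{(2)}_n$ forces $p^{(1)}_{n-1}=p^{(2)}_{n-1}$, so the last-mode contributions cancel. Then $C$ maps $\tilde\Pi_\lambda\kket{\rho^{\otimes2}}$ into the $\lambda$-eigenspace of the $(n-1)$-mode operator, which gives $\operatorname{supp}\tilde p_{\mathrm{Tr}_n\rho}\subseteq\operatorname{supp}\tilde p_\rho$.

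Alternatively, your closing appeal to Lemma~\ref{lem:allowed_eig_Lambda_mixed} on the output suffices once you record $|\lambda'-\lambda|\le 4$. The output then lies in $\tilde{\mathcal V}_{4\alpha+4}$, its support is in $8\mathbb Z$, and $4\alpha\in 8\mathbb Z$, so it lies in $\tilde{\mathcal V}_{4\alpha}$.

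The paper avoids the issue by tracing out the \emph{first} mode. It writes $\Lambda=\Lambda_1+G\Lambda_{\bar 1}$ with $G=Z_1\otimes Z_1$ and contracts both sides of the matrix element against the same joint eigenvector $\ket{\alpha}$ of $(\Lambda_1,G)$. The traced mode then contributes the same $m_\alpha$ to both sides, and the remaining modes only acquire a common sign $g_\alpha$, which $\tilde p_\rho(\lambda)=\tilde p_\rho(-\lambda)$ absorbs. With the last mode, the Jordan--Wigner string makes the traced mode's contribution depend on the parity of the other modes, which is exactly the obstacle you ran into.
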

\begin{proof}
It suffices to check each elementary operation (a)--(d) in Definition~\ref{def:gaussian_operations}.

\emph{(a) Fermionic Gaussian unitaries.} This is immediate from the matchgate-invariance of the mixed-state bridge spectral distribution (Lemma~\ref{lem:invariance_weight_mixed}).

\emph{(b) Composition with an FGS.} This is immediate from additivity under tensor product (Lemma~\ref{lem:additive_mixed}) and faithfulness (Lemma~\ref{lem:faithfulness_mixed}).

\emph{(c) Partial trace.} It suffices to prove monotonicity under tracing out a single fermionic mode, which we may take without loss of generality to be the first mode, since any mode can be brought there by a matchgate --- under which $\tilde{\Lambda}_{\mathrm{d}}$ is invariant by (a). Under the Jordan--Wigner transformation, this is equivalent to tracing out the first qubit.

We split the $n$-qubit bridge operator as
\begin{equation}\label{eq:Lambda_split_ptrace}
    \Lambda \;=\; \Lambda_1 \;+\; G\,\Lambda_{\bar 1},
    \qquad G \coloneqq Z_1 \otimes Z_1,
\end{equation}
where $\Lambda_1$ is the bridge operator of qubit $1$ and $\Lambda_{\bar 1}$ the bridge operator of the remaining $n-1$ qubits. One can verify that $\Lambda_1$ and $G$ commute. Let $\{\ket{\alpha}\}$ be a joint eigenbasis of $(\Lambda_1,G)$,
\begin{equation}
    \Lambda_1\ket{\alpha}=m_\alpha\ket{\alpha},\qquad G\ket{\alpha}=g_\alpha\ket{\alpha},\qquad g_\alpha= \pm1.
\end{equation}
Then for any eigenvector $\ket{u}$ of $\Lambda_{\bar 1}$ with eigenvalue $t$, the product $\ket{\alpha}\otimes\ket{u}$ is an eigenvector of $\Lambda$ with eigenvalue $m_\alpha+g_\alpha t$.

By Eq.~\eqref{eq:mixed_bridge_spectral}, the value $\lambda$ is populated for $\rho$ (i.e.\ $\tilde{p}_\rho(\lambda)\neq0$) if and only if $\rho^{\otimes 2}$ has a nonzero matrix element between two eigenvectors of $\Lambda$ whose eigenvalues differ by $\lambda$. Applied to $\Tr_1\rho$ (with $\Lambda_{\bar 1}$ in place of $\Lambda$), this implies that for any $\lambda\in\operatorname{supp}\tilde{p}_{\Tr_1\rho}$, there exist eigenvectors $\ket{u},\ket{v}$ of $\Lambda_{\bar 1}$ with eigenvalues $t,t'$ satisfying $t-t'=\lambda$ and $\bra{u}\,(\Tr_1\rho)^{\otimes 2}\,\ket{v}\neq0$. Expanding the partial trace over qubit $1$ of both copies in the basis $\{\ket{\alpha}\}$,
\begin{equation}\label{eq:trace_matrix_elem}
    \bra{u}\,(\Tr_1\rho)^{\otimes 2}\,\ket{v}
    \;=\; \sum_\alpha \big(\bra{\alpha}\otimes\bra{u}\big)\,\rho^{\otimes 2}\,\big(\ket{\alpha}\otimes\ket{v}\big),
\end{equation}
we see that at least one summand is nonzero. It is a matrix element of $\rho^{\otimes 2}$ between eigenvectors of $\Lambda$ with eigenvalues $s=m_\alpha+g_\alpha t$ and $s'=m_\alpha+g_\alpha t'$. We have $s-s' = g_\alpha\,(t-t')$, which implies $|s-s'|=|t-t'|=|\lambda|$ since $g_\alpha=\pm1$. Therefore $\pm\lambda$ is populated for $\rho$, and the symmetry $\tilde{p}_\rho(\lambda)=\tilde{p}_\rho(-\lambda)$ (Corollary~\ref{cor:pm_Lambda_mixed}) then shows that $\lambda$ is populated for $\rho$. As this holds for any $\lambda\in\operatorname{supp}\tilde{p}_{\Tr_1\rho}$, we have $\operatorname{supp}\tilde{p}_{\Tr_1\rho}\subseteq\operatorname{supp}\tilde{p}_\rho$, which implies $\tilde{\Lambda}_{\mathrm{d}}(\Tr_1\rho)\leq\tilde{\Lambda}_{\mathrm{d}}(\rho)$.

\emph{(d) Occupation-number measurement with post-selection.} The argument follows the pure-state case. Consider measurement of $Z_1$ with projector $P_\mu = (I+\mu Z_1)/2$, yielding $\rho_\mu = P_\mu\rho P_\mu/\Tr(P_\mu\rho)$. In the Choi representation, $\kket{\rho_\mu}\propto (P_\mu\otimes P_\mu^T)\kket{\rho}$, so the two-copy post-measurement Choi state is related to $\kket{\rho^{\otimes 2}}$ by applying the operator $P_\mu\otimes P_\mu\otimes P_\mu^T\otimes P_\mu^T$, which commutes with $\tilde{\Lambda}$. By the same argument as in the pure-state case, the post-measurement Choi state has support only on a subset of the eigenvalues originally populated, giving $\tilde{\Lambda}_{\mathrm{d}}(\rho_\mu)\leq\tilde{\Lambda}_{\mathrm{d}}(\rho)$.

Composing these bounds yields the theorem.
\end{proof}

Theorem~\ref{thm:mixed_max_Lambda_monotone} completes the mixed-state extension of our framework. In particular, it implies that all the no-go theorems for Gaussian conversion developed in Secs.~\ref{sec:nogo_conversion} extend straightforwardly to mixed states with $\nGMax$ replaced by $\tilde{\Lambda}_{\mathrm{d}}$. Likewise, an approximate version can be developed by closely following the construction of Sec.~\ref{sec:approx}.

\section{Gaussianity testing}
\label{sec:measurements}

In this section, we develop sample-efficient Bell-sampling protocols for testing whether an unknown state is an FGS. We first derive two-sided bounds relating the bridge fidelity $F_0^\Lambda$ --- which is directly accessible from Bell sampling --- to the Gaussian fidelity $F_\G$ of an arbitrary even pure state. We then use these bounds to construct two Gaussianity tests: a one-sided test which is optimal among all two-copy tests with perfect-completeness; and a tolerant test that distinguishes states close to Gaussian from states far from it.

\subsection{Bounds on Gaussian fidelity}
\label{sec:gaussian_fidelity_bounds}

A key ingredient for the Gaussianity test is a quantitative bound relating the bridge fidelity $F_0^\Lambda(\psi)$ to the Gaussian fidelity $F_\G$ (Eq.\eqref{eq:gfid_def}), which quantifies the distance of a state to its closest Gaussian state. We derive this bound via the second moment of $\Lambda$ (Eq.\eqref{eq:m_Lambda_prelim}), which can be expressed in terms of the bridge spectral distribution as 
\begin{equation} \label{eq:m_Lambda}
    \nGLambda(\psi) \;=\; \frac{1}{2}\sum_\lambda p_\psi(\lambda)\,\lambda^2.
\end{equation}
 The first step is to bound $F_\G$ in terms of $\nGLambda$:

\begin{lem}[Gaussian fidelity bounds from $\nGLambda$~\cite{tarabunga2026}]\label{lem:bound_Fg_Lambda}
    For any even pure state $\psi$,
    \begin{equation}
        1-\frac{\nGLambda(\psi)}{2} \;\leq\; F_\G(\psi) \;\leq\; 1 - \frac{1}{4}\!\left(1-\sqrt{1-\frac{\nGLambda(\psi)}{n}}\right)^{\!2}.
    \end{equation}
\end{lem}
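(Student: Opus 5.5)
The plan is to express both bounds through the covariance matrix $M_{jk} = \tfrac{i}{2}\bra{\psi}[\gamma_j,\gamma_k]\ket{\psi}$ of $\psi$. This matrix is real antisymmetric, so it block-diagonalizes under $O(2n)$ into $n$ blocks with Williamson values $\nu_1,\dots,\nu_n\in[0,1]$. By matchgate invariance (Lemma~\ref{lem:invariance_weight}) and invariance of $F_\G$ under matchgates, we may assume $M$ is already in this normal form, so that $\nu_j = \bra{\psi}Z_j\ket{\psi}$ up to signs.

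The first step is to compute $\nGLambda$ in terms of $M$. Expanding $\Lambda^2=\sum_{j,k}\gamma_j\gamma_k\otimes\gamma_j\gamma_k$ gives
\begin{equation}
\bra{\psi}^{\otimes 2}\Lambda^2\ket{\psi}^{\otimes 2}=2n+\sum_{j\neq k}\bra{\psi}\gamma_j\gamma_k\ket{\psi}^2=2n-\Tr(M^TM)=2n-2\sum_j\nu_j^2 .
\end{equation}
Hence $\nGLambda(\psi)=n-\sum_j\nu_j^2=\sum_j(1-\nu_j^2)$.

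For the lower bound, take the Gaussian state $\ket{\phi}$ with covariance $\mathrm{sign}$-rounded normal form, for example the vacuum after the rotation, with $Z_j$ eigenvalues matching $\mathrm{sign}(\nu_j)$. Its projector is $\prod_j (I+Z_j)/2$. The union/Bernoulli-type inequality $\prod_j P_j \ge I-\sum_j (I-P_j)$ for commuting projectors gives
\begin{equation}
F_\G\ge \bra{\psi}\textstyle\prod_j\frac{I+Z_j}{2}\ket{\psi}\ge 1-\sum_j\frac{1-\nu_j}{2}.
\end{equation}
Since $1-\nu_j\le 1-\nu_j^2$, this yields $F_\G\ge 1-\nGLambda/2$.

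For the upper bound, let $\ket{\phi}$ be the optimal Gaussian state, with covariance $M_\phi$ satisfying $M_\phi^2=-I$, and set $F=F_\G$. The standard estimate $\|M_\psi-M_\phi\|$ controlled by the trace distance between $\psi$ and $\phi$, which is $\sqrt{1-F}$, gives for each block a bound of the form $1-\nu_j\le c\sqrt{1-F}$. To get the stated form, I would instead write $\ket{\psi}=\sqrt F\ket{\phi}+\sqrt{1-F}\ket{\phi^\perp}$ and bound the largest deviation $\max_j(1-\nu_j)\le 2\sqrt{1-F}$. This uses $|\bra{\psi}\mathrm{i}\gamma_a\gamma_b\ket{\psi}-\bra{\phi}\mathrm{i}\gamma_a\gamma_b\ket{\phi}|\le 2\sqrt{1-F}$. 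Then the sum over blocks is needed: $\sum_j\nu_j^2=n-\nGLambda$, so some $\nu_j\le\sqrt{1-\nGLambda/n}$. Combining gives $1-\sqrt{1-\nGLambda/n}\le 2\sqrt{1-F}$, which rearranges to $F\le 1-\tfrac14(1-\sqrt{1-\nGLambda/n})^2$.

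The main obstacle is this upper-bound step. The issue is that the optimal $\phi$ need not share the normal-form basis of $\psi$. The perturbation bound must therefore be applied to the full antisymmetric matrix, via the operator-norm/Weyl-type inequality for singular values $\nu_j(M_\psi)$ versus $\nu_j(M_\phi)=1$, rather than entrywise. I would check that $\|M_\psi-M_\phi\|_{\mathrm{op}}\le 2\sqrt{1-F}$, using that $\sum_{ab}x_ay_b\,\mathrm{i}\gamma_a\gamma_b$ has operator norm $1$ for unit real vectors $x\perp y$, together with the trace-distance bound.
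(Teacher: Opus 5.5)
Your proposal is correct. The paper does not prove this lemma itself; it imports it from \cite{tarabunga2026}, so there is no in-text argument to compare with, and your covariance-matrix derivation is a valid self-contained proof. The shape of the bound comes out exactly as your route predicts: $\sqrt{1-\nGLambda/n}$ is an upper bound on the smallest Williamson value, via $\sum_j\nu_j^2=n-\nGLambda$, and the $\tfrac14$ comes from squaring $1-\nu_{\min}\leq 2\sqrt{1-F}$. Your identity $\nGLambda=\sum_j(1-\nu_j^2)$ is right. The lower-bound step actually gives the slightly stronger $F_\G\geq 1-\tfrac12\sum_j(1-\nu_j)$.

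For the write-up, drop the detour through $\ket{\psi}=\sqrt F\ket{\phi}+\sqrt{1-F}\ket{\phi^\perp}$ and the entrywise estimate. Bounding that expansion term by term only gives $2(1-F)+2\sqrt{F(1-F)}$, which can exceed $2\sqrt{1-F}$ by a factor up to $\sqrt2$ and would spoil the constant $\tfrac14$.

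The exact constant comes from H\"older's inequality, $|\Tr[O(\psi-\phi)]|\leq\|O\|_{\mathrm{op}}\|\psi-\phi\|_1=2\sqrt{1-F}$. Apply it to $O=i(x\cdot\gamma)(y\cdot\gamma)$. For orthonormal real $x,y$, this $O$ is Hermitian, satisfies $O^2=I$, and has $\bra{\psi}O\ket{\psi}=x^TM_\psi y$.

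Restricting to $x\perp y$ loses nothing, since $x^TDx=0$ for the antisymmetric $D=M_\psi-M_\phi$. This gives $\|M_\psi-M_\phi\|_{\mathrm{op}}\leq 2\sqrt{1-F}$.

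You can then skip Weyl's inequality. Take a unit $v$ with $\|M_\psi v\|=\nu_{\min}$ and use $\|M_\phi v\|=1$, which holds because $M_\phi^TM_\phi=I$ for a pure FGS. This yields $1-\nu_{\min}\leq\|(M_\phi-M_\psi)v\|\leq 2\sqrt{1-F}$ directly.
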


This bound can then be converted into the analogous one for $F_0^\Lambda$, which is what we use in the Gaussianity test.

\begin{lem}[Gaussian fidelity bounds from $F_0^\Lambda$]\label{lem:bound_Fg_F_Lambda}
    For any $n$-qubit even pure state $\psi$,
    \begin{equation}
        1 - n^2\!\left(1 - (F_0^\Lambda(\psi))^2\right) \;\leq\; F_\G(\psi) \;\leq\; F_0^\Lambda(\psi).
    \end{equation}
\end{lem}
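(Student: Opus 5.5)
The two inequalities are proved separately. The lower bound will come from Lemma~\ref{lem:bound_Fg_Lambda} once $\nGLambda$ is bounded in terms of $F_0^\Lambda$. The upper bound is a direct overlap argument in the two-copy space.

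\emph{Lower bound.} By Eq.~\eqref{eq:m_Lambda}, $\nGLambda(\psi)=\tfrac12\sum_\lambda p_\psi(\lambda)\lambda^2$. Every $\lambda$ in the support lies in $\mathcal{A}$, so $|\lambda|\le 2n$, and the term $\lambda=0$ contributes nothing. Hence
\begin{equation}
\nGLambda(\psi)\le \tfrac12 (2n)^2\sum_{\lambda\neq0}p_\psi(\lambda)=2n^2\bigl(1-p_\psi(0)\bigr)=2n^2\bigl(1-(F_0^\Lambda(\psi))^2\bigr),
\end{equation}
where the last step uses Lemma~\ref{lem:lambda_fidelity_closed} with $k=0$, which gives $(F_0^\Lambda)^2=p_\psi(0)$. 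Inserting this into the left inequality of Lemma~\ref{lem:bound_Fg_Lambda}, $F_\G\ge 1-\nGLambda/2$, yields $F_\G(\psi)\ge 1-n^2(1-(F_0^\Lambda)^2)$.

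\emph{Upper bound.} Let $\ket{\phi}$ be a pure FGS attaining $F_\G(\psi)=|\braket{\phi}{\psi}|^2$. By Eq.~\eqref{eq:Lambda_condition}, $\Lambda\ket{\phi}^{\otimes2}=0$, so $\ket{\phi}^{\otimes2}\in V_0=\mathcal{V}_0$ is a normalized vector feasible in Definition~\ref{def:bridge_fidelity} with $k=0$. Therefore
\begin{equation}
F_0^\Lambda(\psi)\ge \sqrt{F(\psi^{\otimes2},\phi^{\otimes2})}=|\braket{\phi}{\psi}|^2=F_\G(\psi).
\end{equation}
This is the same computation as in the proof of Lemma~\ref{lem:relation_max_Lambda_fidelity}.

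Neither step is a real obstacle. The only point to check is the bookkeeping in the lower bound: the factor $\tfrac12$ in $\nGLambda$ must match the factor $\nGLambda/2$ in Lemma~\ref{lem:bound_Fg_Lambda} so that the coefficient comes out as exactly $n^2$. The bound $|\lambda|\le 2n$ is crude but suffices for the stated form. A sharper $|\lambda|\le 8\lfloor n/4\rfloor$ is available but is not needed.
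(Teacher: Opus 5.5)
Your proposal is correct and follows the paper's proof essentially verbatim. For the upper bound you use that the optimal FGS $\phi$ gives a feasible $\ket{\phi}^{\otimes 2}\in V_0$. For the lower bound you use $\nGLambda(\psi)\le 2n^2\bigl(1-p_\psi(0)\bigr)$, from $|\lambda|\le 2n$, inserted into Lemma~\ref{lem:bound_Fg_Lambda}.
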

\begin{proof}
    To show the upper bound, let $\phi$ be the FGS achieving $F_\G(\psi) = F(\psi,\phi)$. Since $\phi$ is Gaussian, $\ket{\phi}^{\otimes 2} \in V_0$, it is an admissible state in the optimization defining $F_0^\Lambda(\psi)$:
    \begin{equation}
        F_0^\Lambda(\psi) = \max_{\Phi\in V_0}\sqrt{F(\psi^{\otimes 2},\Phi)} \;\geq\; \sqrt{F(\psi^{\otimes 2},\phi^{\otimes 2})} = F(\psi,\phi) = F_\G(\psi).
    \end{equation}

    To show the lower bound, using $\lambda \leq 2n$ we have
    \begin{equation}
        \nGLambda(\psi) = \tfrac{1}{2}\sum_\lambda p_\psi(\lambda)\lambda^2 \leq 2n^2 \sum_{\lambda\neq 0} p_\psi(\lambda) = 2n^2\!\left(1 - (F_0^\Lambda(\psi))^2\right),
    \end{equation}
    using $\sum_\lambda p_\psi(\lambda)=1$ and $(F_0^\Lambda)^2 = p_\psi(0)$. Combining with the lower bound of Lemma~\ref{lem:bound_Fg_Lambda} gives the claim.
\end{proof}

Although it does not enter the Gaussianity test, the auxiliary quantity $\nGLambda$ is itself an efficiently estimable function of the Bell-sampling outcomes, which may be of independent interest:

\begin{lem}[Estimation of $\nGLambda$]\label{lem:estimate_M_Lambda}
    Let $\psi$ be an even pure state. $\nGLambda(\psi)$ can be estimated to accuracy $\epsilon$ with probability at least $1-\delta$ using
    \begin{equation}
        N = \mathcal{O}\!\left(\frac{n^3}{\epsilon^2} \,\log\!\left(\frac{1}{\delta}\right)\right)
    \end{equation}
    Bell samples from two copies of $\psi$.
\end{lem}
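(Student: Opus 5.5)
We estimate $\nGLambda(\psi)$ with the empirical mean of the bounded random variable $\lambda^2/2$, using the samples from Algorithm~\ref{alg:bell_sampling}. By Eq.~\eqref{eq:m_Lambda}, $\nGLambda(\psi)=\tfrac12\mathbb{E}_{\lambda\sim p_\psi}[\lambda^2]$. We run Algorithm~\ref{alg:bell_sampling} to obtain $N$ i.i.d.\ samples $\lambda_1,\dots,\lambda_N\sim p_\psi$ and output
\begin{equation*}
    \widehat{\M}_\Lambda \coloneqq \frac{1}{2N}\sum_{k=1}^N \lambda_k^2 ,
\end{equation*}
which is an unbiased estimator of $\nGLambda(\psi)$.

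The key step is to control the fluctuations of the summands $Y_k\coloneqq\lambda_k^2/2$. Since $|\lambda|\leq 2n$, each $Y_k$ lies in $[0,2n^2]$. A direct application of Hoeffding's inequality gives $N=\mathcal{O}(n^4\epsilon^{-2}\log(1/\delta))$. This falls short of the claimed bound by a factor of $n$, so we instead use a variance-sensitive inequality: Bernstein's inequality, or a median-of-means estimator.

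We bound the variance by
\begin{equation*}
    \mathrm{Var}(Y)\leq \mathbb{E}[Y^2]=\tfrac14\mathbb{E}[\lambda^4]\leq \tfrac14 (2n)^2\,\mathbb{E}[\lambda^2]=2n^2\,\nGLambda(\psi).
\end{equation*}
It remains to show that $\nGLambda(\psi)=\mathcal{O}(n)$ uniformly over even pure states. For this we use the structure of $\Lambda^2$. Since the terms $\gamma_j\otimes\gamma_j$ mutually commute and square to the identity,
\begin{equation*}
    \Lambda^2 = 2n\,I + \sum_{j\neq k}\gamma_j\gamma_k\otimes\gamma_j\gamma_k .
\end{equation*}
Its expectation on $\ket{\psi}^{\otimes2}$ is $2n+\sum_{j\neq k}\langle\gamma_j\gamma_k\rangle^2$. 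Writing $\langle\gamma_j\gamma_k\rangle=iM_{jk}$ with $M$ the real antisymmetric covariance matrix, which satisfies $\|M\|\leq 1$, we get $\sum_{j\neq k}\langle\gamma_j\gamma_k\rangle^2=-\|M\|_F^2$. Hence
\begin{equation*}
    \nGLambda(\psi)=n-\tfrac12\|M\|_F^2\leq n .
\end{equation*}
The sign convention here should be double-checked, but in any case the bound $\nGLambda\leq n$ already follows from Lemma~\ref{lem:bound_Fg_Lambda}, whose upper bound is only defined for $\nGLambda\leq n$. This gives $\mathrm{Var}(Y)\leq 2n^3$.

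Bernstein's inequality with variance $\sigma^2\leq 2n^3$ and range $b=2n^2$ then yields
\begin{equation*}
    \Pr\bigl(|\widehat{\M}_\Lambda-\nGLambda|\geq\epsilon\bigr)\leq 2\exp\!\left(-\frac{N\epsilon^2}{2\sigma^2+\tfrac23 b\epsilon}\right).
\end{equation*}
This probability is at most $\delta$ once $N=\mathcal{O}\bigl((n^3+n^2\epsilon)\epsilon^{-2}\log(1/\delta)\bigr)$. Because $\epsilon\leq n$ is the only nontrivial regime, this is $\mathcal{O}(n^3\epsilon^{-2}\log(1/\delta))$, as claimed.

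The main obstacle is this variance step: establishing the uniform $\mathcal{O}(n)$ bound on $\nGLambda$ so that the range-based $n^4$ scaling improves to $n^3$. Everything else is a standard concentration argument. As an alternative, a median-of-means estimator with $\mathcal{O}(\log(1/\delta))$ groups, each of size $\mathcal{O}(n^3/\epsilon^2)$ and controlled by Chebyshev's inequality, gives the same complexity without relying on the range term.
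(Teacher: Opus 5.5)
Your proposal is correct and follows essentially the same route as the paper: the empirical mean of $\lambda_k^2/2$, the variance bound $\mathrm{Var}\leq 2n^2\,\nGLambda(\psi)\leq 2n^3$, and Bernstein's inequality with range $2n^2$. The one addition is that you explicitly justify $\nGLambda(\psi)=n-\tfrac12\|M\|_F^2\leq n$ by expanding $\Lambda^2$, a step the paper uses without comment; the sign convention you flagged is right, since $\gamma_j\gamma_k$ is anti-Hermitian for $j\neq k$.
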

\begin{proof}
Run Algorithm~\ref{alg:bell_sampling} for $N$ rounds to obtain i.i.d.\ eigenvalues $\lambda_1,\dots,\lambda_N\sim p_\psi$, and form the unbiased estimator $\widehat M = N^{-1}\sum_m \widehat X_m$ with $\widehat X_m = \lambda_m^2/2$. Since $|\lambda_m|\leq 2n$, the random variables $\widehat X_m$ satisfy $\widehat X_m \leq 2n^2$, so $\widehat X_m^2 \leq 2n^2\,\widehat X_m$ and
\begin{equation}
    \mathrm{Var}(\widehat X_m) \leq \mathbb{E}[\widehat X_m^2] \leq 2n^2\,\mathbb{E}[\widehat X_m] = 2n^2\,\nGLambda(\psi) \leq 2n^3.
\end{equation}
Thus, by Bernstein's inequality
\begin{equation}
    \Pr\!\left(|\widehat M - \nGLambda(\psi)|\geq\epsilon\right) \leq 2\exp\!\left(-\frac{N\epsilon^2}{4n^3 + \frac{4}{3}n^2\epsilon}\right).
\end{equation}
Requiring the right-hand side to be at most $\delta$, it follows that $\nGLambda(\psi)$ can be estimated within $\epsilon$ accuracy using
\begin{equation}
    N \geq \frac{12n^3+4n^2\epsilon}{3\epsilon^2 } \log\left(\frac{2}{\delta}\right) ,
\end{equation}
with probability at least $1-\delta$.
\end{proof}

This worst-case bound can be improved adaptively when either $\Lambda_{\mathrm{d}}(\psi)$ or $\nGLambda(\psi)$ is small:

\begin{lem}[Adaptive estimation of $\nGLambda$]\label{lem:adaptive_estimate_M_Lambda}
Let $\psi$ be an even pure state. $\nGLambda(\psi)$ can be estimated to accuracy $\epsilon$ with high probability using 
\begin{equation}\label{eq:adaptive_M_Lambda_complexity}
    N \;=\; \tilde{\mathcal{O}}\!\left(\frac{\nGMax^{2}(\psi)\,\nGLambda(\psi)}{\epsilon^{2}} \,+\, \frac{n^{2}}{\epsilon}\right)
\end{equation}
by an adaptive algorithm, without requiring prior knowledge of $\nGMax(\psi)$. 
\end{lem}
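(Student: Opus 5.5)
The approach is to sharpen the variance bound of Lemma~\ref{lem:estimate_M_Lambda}, exploiting the fact that $p_\psi$ is supported on $|\lambda|\leq 2\nGMax(\psi)$. Each sample $\widehat X_m=\lambda_m^2/2$ then satisfies $0\leq \widehat X_m\leq 2\nGMax(\psi)^2$. Since $\widehat X_m^2\leq 2\nGMax^2\,\widehat X_m$, the variance obeys
\begin{equation}
\mathrm{Var}(\widehat X_m)\leq 2\nGMax^2(\psi)\,\nGLambda(\psi).
\end{equation}
With $b\coloneqq 2\nGMax^2$, Bernstein's inequality then gives accuracy $\epsilon$ with probability at least $1-\delta$ from $N=\mathcal{O}\bigl((\nGMax^2\nGLambda/\epsilon^2+\nGMax^2/\epsilon)\log(1/\delta)\bigr)$ samples. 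Because $\nGMax\leq n$, the additive term is at most $n^2/\epsilon$. This already matches Eq.~\eqref{eq:adaptive_M_Lambda_complexity}, except that it assumes $\nGMax$ and $\nGLambda$ are known.

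To remove this assumption, I would run a doubling scheme. Rounds are indexed by $k=1,2,\dots$, with budget $N_k=2^k$ and confidence $\delta_k=6\delta/(\pi^2k^2)$. In round $k$, draw $N_k$ fresh Bell samples and replace the unknown quantities by empirical proxies. For the range, use $\widehat\Lambda_{\mathrm d}=\tfrac12\max|\lambda_i|$ from Eq.~\eqref{eq:bridge_degree_witness}. For the variance, use the empirical Bernstein inequality (Maurer--Pontil), whose deviation bound is $\sqrt{2\widehat V\log(3/\delta_k)/N_k}+7b\log(3/\delta_k)/(3(N_k-1))$. Stop once this bound falls below $\epsilon$. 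A union bound over rounds gives overall failure probability at most $\delta$. On the good event the output is then $\epsilon$-accurate, and the stopping round has $N_k$ within a constant factor of the target complexity. The $\log$ factors coming from $\delta_k$ and the number of rounds are absorbed into $\tilde{\mathcal O}$.

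The main obstacle is the range parameter $b$. The empirical maximum only lower-bounds $2\nGMax^2$, so it cannot be used as a certified range in Bernstein's inequality. If the extremal sector has tiny weight, the estimator sees a falsely small range and could stop too early. The safe fix is to use $b=2n^2$ in the linear term. This is exactly why the second term is $n^2/\epsilon$ rather than $\nGMax^2/\epsilon$.

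For the variance term I would argue as follows. The empirical variance $\widehat V$ is a valid input to empirical Bernstein with the worst-case range $2n^2$, and it concentrates around the true variance. The true variance is at most $2\nGMax^2\nGLambda$ regardless of whether the extremal sectors are observed, so no knowledge of $\nGMax$ is needed in the leading term. The stopping rule therefore triggers once $N\gtrsim \mathrm{Var}/\epsilon^2+n^2/\epsilon$, up to logarithms, which gives the claimed bound. The remaining detail is to check that $\widehat V$ does not overshoot the true variance by more than a term of order $n^2\log/N$. Empirical Bernstein supplies this, and such a term is again absorbed into the $n^2/\epsilon$ part.
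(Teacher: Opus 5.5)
Your proposal is correct and is essentially the paper's argument: empirical Bernstein with the worst-case range $R=2n^2$ in the linear term and the empirical variance in the leading term, confidences $\delta_k=6\delta/(\pi^2k^2)$ with a union bound over rounds, and the stopping time controlled via $\mathrm{Var}(X_i)\leq 2\nGMax^{2}(\psi)\,\nGLambda(\psi)$ from $|\lambda|\leq 2\nGMax(\psi)$. The paper checks the stopping rule after every new sample and reuses all data, rather than doubling with fresh batches, which is immaterial.

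One small correction. Maurer--Pontil only give $\widehat V\leq\bigl(\sigma+R\sqrt{2\log(1/\delta)/(N-1)}\bigr)^2$, so $\widehat V$ can overshoot by $\mathcal{O}\bigl(n^2\sigma\sqrt{\log(1/\delta)/N}+n^4\log(1/\delta)/N\bigr)$ rather than $\mathcal{O}(n^2\log(1/\delta)/N)$. After the square root in the half-width this still contributes only $\mathcal{O}\bigl(\sigma\sqrt{\log(1/\delta)/N}+n^2\log(1/\delta)/N\bigr)$, so it is absorbed into the $n^2/\epsilon$ term and your conclusion stands.
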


\begin{proof}
The algorithm runs Algorithm~\ref{alg:bell_sampling} sequentially, computing after each round $k$ the running empirical mean $\widehat{\nGLambda}_k = (2k)^{-1}\sum_{i\leq k}\lambda_i^{2}$ and empirical variance $\widehat{\sigma}_k^{2}$ of the i.i.d.\ random variables $X_i \coloneqq \lambda_i^{2}/2$. Each sample $X_i$ takes values in $[0, R]$ with $R\coloneqq 2n^{2}$, since the bridge eigenvalues satisfy $|\lambda_i|\leq 2n$ (Lemma~\ref{lem:allowed_eig_Lambda}).

The empirical Bernstein inequality~\cite{maurer2009empirical} states that, for any fixed sample size $k$ and confidence parameter $\delta'\in(0,1)$,
\begin{equation}\label{eq:emp_bernstein}
    \Pr\!\Bigl(\,\bigl|\widehat{\nGLambda}_k - \nGLambda(\psi)\bigr| \;>\; \mathrm{HW}(k;\delta')\,\Bigr) \;\leq\; \delta',
    \qquad
    \mathrm{HW}(k;\delta') \;\coloneqq\; \sqrt{\frac{2\widehat{\sigma}_k^{2}\,\log(2/\delta')}{k}} \;+\; \frac{7R\,\log(2/\delta')}{3(k-1)}.
\end{equation}
To apply Eq.~\eqref{eq:emp_bernstein} at an adaptive stopping time, set $\delta_k\coloneqq 6\delta/(\pi^{2}k^{2})$ so that $\sum_{k\geq 1}\delta_k = \delta$; by a union bound, Eq.~\eqref{eq:emp_bernstein} then holds simultaneously for all $k$ with probability at least $1-\delta$. The algorithm stops at the first $k$ for which $\mathrm{HW}(k;\delta_k)\leq\epsilon$ and outputs $\widehat{\nGLambda}\coloneqq \widehat{\nGLambda}_k$, which by Eq.~\eqref{eq:emp_bernstein} satisfies $|\widehat{\nGLambda}-\nGLambda(\psi)|\leq\epsilon$ with probability at least $1-\delta$.

It remains to bound the stopping time. The empirical variance $\widehat{\sigma}_k^{2}$ is upper bounded by the true variance $\sigma^{2}\coloneqq \mathrm{Var}(X_i)$ as $\widehat{\sigma}_k^{2}\leq \sigma^{2}+2\sqrt{\tfrac{\sigma^{2}\log(1/\delta)}{2(k-1)}}+\log(1/\delta)/(k-1)$~\cite{maurer2009empirical}. The variance itself is controlled by the bound $|\lambda_i|\leq 2\Lambda_{\mathrm{d}}(\psi)$,
\begin{equation}\label{eq:variance_bound_M_Lambda}
    \sigma^{2} \;\leq\; \mathbb{E}[X_i^{2}] \;\leq\; \bigl(2\Lambda_{\mathrm{d}}^{2}(\psi)\bigr)\cdot \mathbb{E}[X_i] \;=\; 2\Lambda_{\mathrm{d}}^{2}(\psi)\,\nGLambda(\psi).
\end{equation}
Substituting into Eq.~\eqref{eq:emp_bernstein} and solving for the smallest $k$ yields the stopping time of Eq.~\eqref{eq:adaptive_M_Lambda_complexity}.
\end{proof}

\subsection{A Gaussianity test via Bell sampling}
\label{sec:gaussianity_test}

We now derive the central operational result of this section: a sample-efficient Bell-sampling protocol that decides whether an unknown fermionic pure state is Gaussian. The test exploits the symmetry characterization of FGSs: a pure state $\ket{\psi}$ is Gaussian if and only if its bridge spectral distribution is supported entirely at eigenvalue $\lambda=0$.

A two-copy Gaussianity test is characterized by a POVM acceptance operator $P$ with $0\leq P\leq I$ acting on $\mathcal{H}_n^{\otimes 2}$, with acceptance probability $p_{\mathrm{acc}}(\rho) = \Tr(\rho^{\otimes 2} P)$ on input $\rho$. The test has \emph{perfect completeness} if $p_{\mathrm{acc}}(\phi) = 1$ for every pure FGS $\phi$, which (since $0\leq P\leq I$) is equivalent to $P\ket{\phi}^{\otimes 2} = \ket{\phi}^{\otimes 2}$ for all FGS $\phi$.

We first consider the one-sided version of the testing problem.

\begin{defn}[Gaussianity testing]\label{def:gaussianity_test}
Given copies of an even pure state $\psi$ promised to be either
\begin{itemize}[leftmargin=2em]
\item[(a)] an FGS, or
\item[(b)] a state with Gaussian fidelity at most $1-\epsilon$,
\end{itemize}
the Gaussianity testing problem asks to distinguish the two cases with high probability.
\end{defn}

The testing algorithm is a simple protocol:
\begin{quote}\emph{Perform Bell sampling on two copies of $\psi$. Accept if every observed eigenvalue is $0$; reject if any nonzero eigenvalue is observed.}\end{quote}
The sample complexity of this protocol is the content of the next theorem.

\begin{thm}[Gaussianity test]\label{thm:gaussianity_test}
There exists a quantum algorithm that solves the fermionic Gaussianity testing problem (Definition~\ref{def:gaussianity_test}) using
\begin{equation}
    N = \mathcal{O}\!\left(\frac{n^2}{\epsilon}\,\log\!\left(\frac{1}{\delta}\right)\right)
\end{equation}
copies of $\psi$ with success probability at least $1-\delta$. The algorithm has perfect completeness: it always accepts Gaussian states.
\end{thm}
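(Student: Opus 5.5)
The test uses $N$ independent Bell samples (hence $2N$ copies) and accepts iff every sample has $\lambda_k=0$. I would analyse it directly through the bridge spectral distribution, using Lemma~\ref{lem:bound_Fg_F_Lambda} to convert the soundness promise into a lower bound on the rejection probability of a single Bell sample.

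\emph{Completeness.} If $\psi$ is a pure FGS, then $\Lambda\ket{\psi}^{\otimes 2}=0$ by Eq.~\eqref{eq:Lambda_condition}, so $p_\psi(0)=1$. By Lemma~\ref{lem:eigenbasis_bell}, every Bell outcome then maps to $\lambda=0$ with certainty, and the test accepts with probability exactly one, for any $N$. This gives perfect completeness.

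\emph{Soundness.} Suppose $F_\G(\psi)\le 1-\epsilon$. The probability that a single Bell sample yields $\lambda=0$ is $p_\psi(0)=(F_0^\Lambda(\psi))^2$ by Lemma~\ref{lem:lambda_fidelity_closed}. The lower bound of Lemma~\ref{lem:bound_Fg_F_Lambda} gives $1-n^2(1-p_\psi(0))\le F_\G(\psi)\le 1-\epsilon$, hence
\begin{equation*}
    1-p_\psi(0)\ \ge\ \epsilon/n^2 .
\end{equation*}
Since the $N$ samples are i.i.d., the false-acceptance probability is
\begin{equation*}
    p_\psi(0)^N\ \le\ (1-\epsilon/n^2)^N\ \le\ e^{-N\epsilon/n^2}.
\end{equation*}
This is at most $\delta$ once $N\ge (n^2/\epsilon)\log(1/\delta)$. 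With two copies per sample, the total is $2N=\mathcal{O}\bigl((n^2/\epsilon)\log(1/\delta)\bigr)$ copies.

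The only substantive ingredient is the quantitative link between $p_\psi(0)$ and the Gaussian fidelity. That link comes from Lemma~\ref{lem:bound_Fg_Lambda} of Ref.~\cite{tarabunga2026}, combined with the crude bound $\lambda^2\le 4n^2$ off the zero sector, which is the source of the factor $n^2$. Taking these earlier lemmas as given, no step remains an obstacle.
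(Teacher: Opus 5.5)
Your proposal is correct and matches the paper's proof essentially step for step: perfect completeness follows from $\Lambda\ket{\psi}^{\otimes 2}=0$, and soundness follows from the lower bound of Lemma~\ref{lem:bound_Fg_F_Lambda}, which gives $p_\psi(0)\le 1-\epsilon/n^2$ and hence a false-acceptance probability of at most $e^{-N\epsilon/n^2}$. Counting $2N$ copies for $N$ Bell samples is a harmless detail the paper leaves implicit.
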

\begin{proof}
If $\psi$ is an FGS, then by Eq.~\eqref{eq:Lambda_condition} all Bell sample outcomes have eigenvalue $\lambda = 0$, and the algorithm always accepts: the test satisfies perfect completeness.

In case (b), we have $F_\G(\psi)\leq 1-\epsilon$, and by the lower bound of Lemma~\ref{lem:bound_Fg_F_Lambda} together with $p_\psi(0)=(F_0^\Lambda)^2$,
\begin{equation}
    p_\psi(0) = (F_0^\Lambda(\psi))^2 \;\leq\; 1 - \frac{\epsilon}{n^2},
\end{equation}
so each Bell sample produces a nonzero eigenvalue with probability at least $\epsilon/n^2$. The probability that all $N$ samples yield $\lambda=0$ (causing the algorithm to incorrectly accept) is therefore at most $(1-\epsilon/n^2)^N \leq e^{-N\epsilon/n^2}$, which is smaller than $\delta$ for $N\geq (n^2/\epsilon)\log(1/\delta)$.
\end{proof}

Theorem~\ref{thm:gaussianity_test} is the fermionic counterpart of the bosonic Gaussianity test of Ref.~\cite{girardi2025gaussian}. It has the advantage of being \emph{experimentally accessible}, since Bell sampling is implementable by a constant-depth two-copy Clifford or matchgate circuit, whereas the bosonic test is not known to admit a practical implementation.

Moreover, the test is \emph{optimal} among all two-copy tests with perfect completeness, in the following sense: for any perfectly complete acceptance operator $P$ and any even pure state $\psi$,
\begin{equation}\label{eq:pointwise_optimality}
    \Tr(\psi^{\otimes 2}\,\Pi_0) \;\leq\; \Tr(\psi^{\otimes 2}\,P).
\end{equation}
This follows from a general fact established in Ref.~\cite[Lemma 2.1]{girardi2025gaussian}: in a $k$-copy property-testing setting with perfect completeness on a class $\mathcal{P}$ of pure states, the optimal acceptance operator is the projection onto the span of $\{\ket{\phi}^{\otimes k} : \phi\in\mathcal{P}\}$. In our case ($k=2$, with $\mathcal{P}$ the set of pure FGSs), this span equals $V_0$: by Lemma~\ref{lem:twirl_vacuum}, $\mathcal{T}^{(2)}_{M_n}(\ket{\phi}\!\bra{\phi}^{\otimes 2}) = \Pi_0/\dim V_0$ for any pure FGS $\phi$, so the span of $\ket{\phi}^{\otimes 2}$ equals $V_0$. The optimal acceptance operator is therefore $\Pi_0$, which is precisely the POVM implemented by one round of Bell sampling on $\psi^{\otimes 2}$ since the latter accepts with probability $p_\psi(0) = \Tr(\psi^{\otimes 2}\Pi_0)$.

The factor $n^2$ in Theorem~\ref{thm:gaussianity_test} originates from the worst-case bound of Lemma~\ref{thm:gaussianity_test}. A particularly intriguing question is whether Gaussianity testing is possible with a sample complexity independent of the system size. Using the bridge degree, we next show that this is the case for states generated by few non-Gaussian gates.

\begin{thm}[Gaussianity testing for $t$-doped Gaussian states]\label{thm:test_tdoped}
Let $\psi$ be an even pure $t$-doped Gaussian state. Then the Gaussianity testing problem (Definition~\ref{def:gaussianity_test}) can be solved with success probability at least $1-\delta$ using
\begin{equation}
    N \;=\; \mathcal{O}\!\left(\frac{t^2}{\epsilon}\,\log\!\left(\frac{1}{\delta}\right)\right)
\end{equation}
copies of $\psi$, independently of the system size $n$. The algorithm has perfect completeness.
\end{thm}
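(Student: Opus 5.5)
The plan is to run exactly the same one-sided test as in Theorem~\ref{thm:gaussianity_test}: perform Bell sampling on pairs of copies and accept if and only if every observed eigenvalue is $\lambda=0$. Perfect completeness is inherited verbatim from Eq.~\eqref{eq:Lambda_condition}, since any FGS has all its bridge weight at $\lambda=0$. The only task is soundness, i.e.\ to show that for a $t$-doped state with $F_\G(\psi)\leq 1-\epsilon$ the rejection probability per sample obeys
\begin{equation*}
    1-p_\psi(0) \;\geq\; \frac{c\,\epsilon}{t^2}
\end{equation*}
for an absolute constant $c>0$.

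To get this, I will sharpen the lower bound of Lemma~\ref{lem:bound_Fg_F_Lambda}. There, the factor $n^2$ comes only from the crude estimate $|\lambda|\leq 2n$ on the support of $p_\psi$. For a $t$-doped Gaussian state, Corollary~\ref{cor:bound_dope} gives $\nGMax(\psi)\leq 4t$, so $p_\psi$ is supported on $|\lambda|\leq 8t$. This yields
\begin{equation*}
    \nGLambda(\psi) \;=\; \tfrac12\sum_\lambda p_\psi(\lambda)\lambda^2 \;\leq\; 32\,t^2\,\bigl(1-p_\psi(0)\bigr).
\end{equation*}

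Next I combine this with the lower bound $F_\G(\psi)\geq 1-\nGLambda(\psi)/2$ of Lemma~\ref{lem:bound_Fg_Lambda}. This gives $\epsilon\leq 1-F_\G(\psi)\leq 16\,t^2\,(1-p_\psi(0))$, hence $1-p_\psi(0)\geq \epsilon/(16t^2)$. The probability of falsely accepting after $N$ Bell samples is then at most
\begin{equation*}
    \bigl(1-\epsilon/(16t^2)\bigr)^N \;\leq\; e^{-N\epsilon/(16t^2)},
\end{equation*}
which is at most $\delta$ once $N=\mathcal{O}\bigl((t^2/\epsilon)\log(1/\delta)\bigr)$. Each sample uses two copies, which only changes the constant.

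The one step needing care is that Corollary~\ref{cor:bound_dope} must apply to $\psi$ itself, i.e.\ that $\psi$ is a pure output of a $t$-doped protocol possibly with post-selection. This is exactly the setting in which the corollary, via Theorem~\ref{thm:max_Lambda_monotone}, was proved, so no new argument is needed. Beyond that, the proof is a direct substitution of the support bound into the second-moment argument, so I expect no real obstacle. If $t=0$ the promise forces case (a), and the statement is trivial.
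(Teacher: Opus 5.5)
Your proposal is correct and follows essentially the same route as the paper's proof. Corollary~\ref{cor:bound_dope} gives $\nGMax(\psi)\leq 4t$, which confines $p_\psi$ to $|\lambda|\leq 8t$ and hence gives $\nGLambda(\psi)\leq 32t^2\bigl(1-p_\psi(0)\bigr)$. Combined with the lower bound of Lemma~\ref{lem:bound_Fg_Lambda}, this yields the per-sample rejection probability $\epsilon/(16t^2)$, and the one-sided test of Theorem~\ref{thm:gaussianity_test} then goes through with $n$ replaced by $4t$.
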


\begin{proof}
By Corollary~\ref{cor:bound_dope} we have $\nGMax(\psi)\leq 4t$, so $p_\psi$ is supported on $|\lambda|\leq 2\nGMax(\psi)\leq 8t$. Hence, using Eq.~\eqref{eq:m_Lambda},
\begin{equation}
    \nGLambda(\psi) \;=\; \frac{1}{2}\sum_{\lambda\neq 0}\lambda^2\, p_\psi(\lambda)
    \;\leq\; \frac{(8t)^2}{2}\sum_{\lambda\neq 0}p_\psi(\lambda)
    \;=\; 32\,t^2\bigl(1-p_\psi(0)\bigr),
\end{equation}
while the lower bound of Lemma~\ref{lem:bound_Fg_Lambda} reads $\nGLambda(\psi)\geq 2\bigl(1-F_\G(\psi)\bigr)$. Combining the two inequalities,
\begin{equation}
    1-p_\psi(0) \;\geq\; \frac{1-F_\G(\psi)}{16\,t^2}.
\end{equation}
In case (b) of Definition~\ref{def:gaussianity_test} we have $F_\G(\psi)\leq 1-\epsilon$, so each Bell sample yields a nonzero eigenvalue with probability at least $\epsilon/(16t^2)$. Thus the testing algorithm of Theorem~\ref{thm:gaussianity_test}, which has perfect completeness, applies verbatim with $n$ replaced by $4t$, yielding the stated sample complexity.
\end{proof}

Note that since the non-Gaussian gates can be chosen to be ``weak'', Theorem~\ref{thm:test_tdoped} covers states arbitrarily close to the set of Gaussian states, which should intuitively be the most difficult to distinguish from Gaussian. We therefore conjecture that the restriction to few non-Gaussian gates is not necessary. This would make the sample complexity of the test independent of the system size in general, as is the case for stabilizer testing~\cite{gross2021schur}, and would yield a correspondingly improved version of Theorem~\ref{thm:gaussianity_test}. An analogous statement has been established in the bosonic setting for states close to the Gaussian set in fidelity~\cite{girardi2025gaussian}.

We next consider the tolerant version of the testing problem:

\begin{defn}[Tolerant Gaussianity testing]\label{def:gaussianity_test_tolerant}
Given parameters $0\leq\alpha<\beta\leq 1$ and copies of an even pure state $\psi$ promised to be either
\begin{itemize}[leftmargin=2em]
\item[(a)] $F_\G(\psi)\geq 1-\alpha$, or
\item[(b)] $F_\G(\psi)\leq 1-\beta$,
\end{itemize}
the tolerant Gaussianity testing problem asks to distinguish the two cases with high probability.
\end{defn}

\begin{thm}[Tolerant Gaussianity test]\label{thm:gaussianity_test_tolerant}
For any $0\leq\alpha<\beta\leq 1$ such that $\varepsilon \coloneqq \beta/n^2 - 2\alpha > 0$ and $\alpha = \mathcal{O}(\varepsilon)$, there exists a quantum algorithm that solves the tolerant Gaussianity testing problem (Definition~\ref{def:gaussianity_test_tolerant}) with success probability at least $1-\delta$, using
\begin{equation}\label{eq:tolerant_test_lemma9_complexity}
    N  \;=\; \mathcal{O}\!\left(\frac{n^2}{\beta-2\alpha n^2}\,\log(1/\delta)\right)
\end{equation}
copies of $\psi$.
\end{thm}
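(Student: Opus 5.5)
The plan is to use the same primitive as in Theorem~\ref{thm:gaussianity_test}: estimate the rejection probability $q(\psi)\coloneqq 1-p_\psi(0)=1-(F_0^\Lambda(\psi))^2$ from Bell samples, and threshold it. The first step is to turn the promise on $F_\G$ into a gap in $q$. The upper bound of Lemma~\ref{lem:bound_Fg_F_Lambda} gives $F_0^\Lambda\geq F_\G$. Hence in case (a) we get $p_\psi(0)\geq (1-\alpha)^2\geq 1-2\alpha$, that is, $q\leq 2\alpha$. The lower bound of Lemma~\ref{lem:bound_Fg_F_Lambda} gives $1-n^2 q\leq F_\G\leq 1-\beta$ in case (b), hence $q\geq \beta/n^2$. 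The two cases are therefore separated by the gap $\beta/n^2-2\alpha=\varepsilon>0$. This is exactly the hypothesis of the theorem.

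Next, I would draw $N$ Bell samples via Algorithm~\ref{alg:bell_sampling}. Let $\widehat q$ be the fraction of samples with $\lambda\neq 0$. I would accept iff $\widehat q\leq \tau$, with the threshold at the midpoint $\tau\coloneqq \alpha+\beta/(2n^2)$. The indicators are Bernoulli with mean $q$. A plain Hoeffding bound would give $N=\mathcal{O}(\varepsilon^{-2}\log(1/\delta))$, which is worse than claimed. To reach the stated scaling I would instead use multiplicative Chernoff (or Bernstein) bounds, whose variance term is proportional to $q$.

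In case (a), $\Pr[\widehat q>\tau]\leq \exp(-N\,\Omega((\tau-q)^2/\tau))$. We have $\tau-q\geq \varepsilon/2$, and $\tau=2\alpha+\varepsilon/2=\mathcal{O}(\varepsilon)$ because $\alpha=\mathcal{O}(\varepsilon)$. The exponent is therefore $\Omega(N\varepsilon)$. In case (b), $\Pr[\widehat q\leq\tau]\leq\exp(-N(q-\tau)^2/(2q))$. The ratio $(q-\tau)^2/q$ is increasing in $q$ for $q\geq\tau$, so its minimum is attained at $q=\beta/n^2$. There it equals $(\varepsilon/2)^2/(\beta/n^2)$, and $\beta/n^2=\varepsilon+2\alpha=\mathcal{O}(\varepsilon)$, so this is $\Omega(\varepsilon)$. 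Both error probabilities are thus at most $\delta$ once $N=\mathcal{O}(\varepsilon^{-1}\log(1/\delta))$. Since $\varepsilon^{-1}=n^2/(\beta-2\alpha n^2)$, this is the bound in Eq.~\eqref{eq:tolerant_test_lemma9_complexity}.

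The main obstacle is obtaining the $1/\varepsilon$ rather than $1/\varepsilon^2$ scaling. This rests on the regime $\alpha=\mathcal{O}(\varepsilon)$: it keeps both the threshold $\tau$ and the relevant values of $q$ of order $\varepsilon$, so relative-error concentration applies. Care is needed to state the Chernoff bounds in the form that handles $q$ anywhere in $[\beta/n^2,1]$ in case (b). If the monotonicity argument for that range is awkward, I would fall back on the additive Chernoff--KL bound $\exp(-N\,\mathrm{KL}(\tau\Vert q))$. The fidelity inequality $(1-\alpha)^2\geq1-2\alpha$ and the conversion between the two lemmas are routine.
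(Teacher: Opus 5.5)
Your proposal is correct and follows the paper's proof. It uses the same statistic (the fraction of Bell outcomes with $\lambda=0$, or equivalently your $\widehat q$), the same gap $\beta/n^2-2\alpha=\varepsilon$ taken from both sides of Lemma~\ref{lem:bound_Fg_F_Lambda}, and the same midpoint threshold. The only difference is the concentration step: you derive the $\mathcal{O}(\varepsilon^{-1}\log(1/\delta))$ bound in the high-acceptance regime yourself via multiplicative Chernoff bounds, and your monotonicity of $(q-\tau)^2/q$ correctly covers all of case (b); the paper instead invokes Lemma~9 of Ref.~\cite{girardi2025gaussian} for that step.
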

\begin{proof}
Run Algorithm~\ref{alg:bell_sampling} for $k$ rounds and let $X_i \coloneqq \mathbf{1}\{\lambda_i = 0\}\in\{0,1\}$ be the indicator of observing $\lambda = 0$ in round $i$. The $X_i$ are i.i.d.\ with $\mathbb{E}[X_i] = p_\psi(0) = (F_0^\Lambda(\psi))^2$. By Lemma~\ref{lem:bound_Fg_F_Lambda}, case (a) of Definition~\ref{def:gaussianity_test_tolerant} forces $p_\psi(0)\geq (1-\alpha)^2 \geq 1-2\alpha \eqqcolon p_A$, while case (b) forces $p_\psi(0)\leq 1-\beta/n^2 \eqqcolon p_B$. By assumption, the gap $p_A - p_B = \varepsilon$ is positive, and $1 - p_A = 2\alpha = \mathcal{O}(\varepsilon)$, placing the test in the high-acceptance regime. The test outputs \emph{accept} if and only if $\widehat{p}_0 \coloneqq k^{-1}\sum_i X_i \geq (p_A+p_B)/2$. Lemma~9 of Ref.~\cite{girardi2025gaussian} then gives that
\begin{equation}
    k \;=\; \mathcal{O}\!\left(\frac{1}{\varepsilon}\,\log\!\left(\frac{1}{\delta}\right)\right)
\end{equation}
samples suffice for success probability at least $1-\delta$; substituting $\varepsilon = \beta/n^2 - 2\alpha$ gives Eq.~\eqref{eq:tolerant_test_lemma9_complexity}.
\end{proof}

We note that an alternative tolerant Gaussianity test can be constructed using the second moment $\nGLambda(\psi)$ in place of $p_\psi(0)$, by combining the two-sided bound of Lemma~\ref{lem:bound_Fg_Lambda} with the adaptive estimator of Lemma~\ref{lem:adaptive_estimate_M_Lambda}. The $\nGLambda$-based variant, however, requires the stricter promise gap $\alpha < \beta^2/(4n^2)$ rather than $\alpha < \beta/(2n^2)$. Finally, we remark that the sample complexity of either variant could be further improved by sharpening the underlying inequalities relating $\nGLambda$ or $p_\psi(0)$ to the Gaussian fidelity $F_\G$. 

\section{Certifying matchgate-invariant state $2$-designs}
\label{sec:designs_cert}

The lower bounds of Sec.~\ref{sec:designs_lower} tell us how many non-Gaussian gates must be present in any state design. The complementary question is whether, given a candidate ensemble, one can \emph{certify} from experimental data that it is in fact an approximate $2$-design. In general the corresponding decision problem is not known to admit an efficient quantum algorithm; partial complexity-theoretic characterization in this direction is provided in Ref.~\cite{nakata2025computational}. In this section we show that, for a natural class of structured ensembles --- the matchgate-invariant ones --- the certification problem reduces to an efficient task, solvable from $\mathrm{poly}(n)$ Bell samples. The key technical object is the ensemble-averaged bridge spectral distribution
\begin{equation}\label{eq:p_E_def}
    p_{\mathcal{E}}(\lambda) \;\coloneqq\; \mathbb{E}_{\psi\sim\mathcal{E}}\bigl[p_\psi(\lambda)\bigr],
\end{equation}
which encodes the second-order matchgate-invariant content of $\mathcal{E}$ and is directly sampled by ensemble Bell sampling.

\begin{defn}[Matchgate-invariant ensemble]\label{def:matchgate_invariant_ensemble}
An ensemble $\mathcal{E}$ of even pure states on $n$ qubits is \emph{matchgate-invariant} if its ensemble-averaged two-copy state is invariant under the matchgate twirl:
\begin{equation}\label{eq:matchgate_invariant_def}
    \mathcal{T}^{(2)}_{M_n}\!\bigl(\mathbb{E}_{\psi\sim\mathcal{E}}[\psi^{\otimes 2}]\bigr) \;=\; \mathbb{E}_{\psi\sim\mathcal{E}}[\psi^{\otimes 2}].
\end{equation}
\end{defn}

A natural and broad family of matchgate-invariant ensembles is given by \emph{matchgate orbits}: starting from any ensemble $\mathcal{E}_0$ of even pure states, draw $\psi\sim\mathcal{E}_0$ and conjugate by an independent Haar-random matchgate. The resulting ensemble is matchgate-invariant by construction, since averaging over Haar-random matchgates is precisely the matchgate twirl $\mathcal{T}^{(2)}_{M_n}$ applied to $\mathbb{E}_{\psi_0\sim\mathcal{E}_0}[\psi_0^{\otimes 2}]$.

For any matchgate-invariant ensemble, Definition~\ref{def:matchgate_invariant_ensemble} together with Lemma~\ref{lem:two_copy_ave_matchgate} gives the expansion
\begin{equation}\label{eq:p_E_spectral}
    \mathbb{E}_{\psi\sim\mathcal{E}}\bigl[\psi^{\otimes 2}\bigr] \;=\; \mathcal{T}^{(2)}_{M_n}\!\bigl(\mathbb{E}_{\psi\sim\mathcal{E}}[\psi^{\otimes 2}]\bigr) \;=\; \sum_{\lambda\in\mathcal{A}} p_{\mathcal{E}}(\lambda)\,\frac{\Pi_\lambda}{\binom{2n}{n+\lambda/2}}.
\end{equation}
Combining with the analogous decomposition of the Haar-twirled state in Eq.~\eqref{eq:Haar_twirl_decomp} yields $D(\mathcal{E})$ in closed form:
\begin{equation}\label{eq:design_distance_TV}
    D(\mathcal{E}) \;=\; \mathrm{TV}\!\left(p_{\mathcal{E}},\, p_{\mathrm H}\right)
    \;=\; \tfrac{1}{2}\sum_{\lambda\in\mathcal{A}}\bigl|p_{\mathcal{E}}(\lambda) - p_{\mathrm H}(\lambda)\bigr|.
\end{equation}

Certifying that $\mathcal{E}$ is an $\epsilon$-approximate $2$-design therefore reduces to estimating $\mathrm{TV}(p_{\mathcal{E}}, p_{\mathrm H})$ from Bell-sampling data. The corresponding measurement primitive is the natural ensemble extension of Algorithm~\ref{alg:bell_sampling}: at each round, draw a fresh $\psi\sim\mathcal{E}$ and run Bell sampling on $\psi^{\otimes 2}$; marginalizing over the ensemble draw, the outcome is distributed as $p_{\mathcal{E}}$.

\begin{algorithm}[H]
\caption{Ensemble Bell sampling for the bridge spectral distribution}
\label{alg:ensemble_bell_sampling}
\begin{flushleft}
\textbf{Input:} $N\geq 1$, and oracle access to an ensemble $\mathcal{E}$ of even pure states.\\
\textbf{Output:} eigenvalues $(\lambda_1,\dots,\lambda_N)\in (8\mathbb{Z})^N$ drawn i.i.d.\ from $p_{\mathcal{E}}$. The empirical ensemble-averaged bridge spectral distribution is $\widehat p_{\mathcal{E}}(\lambda) \coloneqq N^{-1}\sum_{k=1}^N \mathbf{1}\{\lambda_k=\lambda\}$.
\end{flushleft}
\begin{algorithmic}[1]
\For{$k=1,\dots,N$}
    \State Draw $\psi_k \sim \mathcal{E}$ and prepare two fresh copies of $\psi_k$.
    \State Run Algorithm~\ref{alg:bell_sampling} once on $\psi_k^{\otimes 2}$ to obtain an eigenvalue $\lambda_k$.
\EndFor
\State \Return $(\lambda_1,\dots,\lambda_N)$.
\end{algorithmic}
\end{algorithm}

From this protocol we address two complementary certification tasks. The first is \emph{approximate $2$-design testing}: deciding whether $D(\mathcal{E})\leq \alpha$ or $D(\mathcal{E})\geq \alpha+\epsilon$. The second is \emph{exact $2$-design testing}: deciding whether $D(\mathcal{E})=0$ or $D(\mathcal{E})\geq \epsilon$, which admits a tighter sample bound. For the approximate task, our route is to first estimate $D(\mathcal{E})$ to additive accuracy $\epsilon$; the resulting sample complexity is the content of the next lemma.

\begin{lem}[Efficient estimation of $D(\mathcal{E})$]\label{lem:estimate_design_distance}
For any matchgate-invariant ensemble $\mathcal{E}$ of even pure states, $D(\mathcal{E})$
can be estimated to additive accuracy $\epsilon$ with probability at least $1-\delta$ using
\begin{equation}\label{eq:est_design_dist_N}
    N \;=\; \mathcal{O}\!\left(\frac{\sqrt {n\log(1/\epsilon)} + \log(1/\delta)}{\epsilon^2}\right)
\end{equation}
runs of Algorithm~\ref{alg:ensemble_bell_sampling}.
\end{lem}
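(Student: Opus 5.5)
We will estimate $D(\mathcal{E})$ through the plug-in estimator $\widehat D \coloneqq \mathrm{TV}(\widehat p_{\mathcal{E}}, p_{\mathrm H})$. Here $\widehat p_{\mathcal{E}}$ is the empirical distribution returned by Algorithm~\ref{alg:ensemble_bell_sampling}, and $p_{\mathrm H}$ is known in closed form (Lemma~\ref{lem:weight_haar}). By Eq.~\eqref{eq:design_distance_TV} and the triangle inequality for total variation, $|\widehat D - D(\mathcal{E})|\leq \mathrm{TV}(\widehat p_{\mathcal{E}},p_{\mathcal{E}})$. It therefore suffices to learn $p_{\mathcal{E}}$ in total variation to accuracy $\epsilon$. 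A naive bound would use the support size $|\mathcal{A}| = 2\lfloor n/4\rfloor+1 = \Theta(n)$ and the standard $\mathcal{O}((|\mathcal{A}|+\log(1/\delta))/\epsilon^2)$ learning rate, which gives $\mathcal{O}(n/\epsilon^2)$. To reach the claimed $\sqrt{n\log(1/\epsilon)}$ we must exploit that only an effective support of width $\mathcal{O}(\sqrt{n\log(1/\epsilon)})$ matters.

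\textbf{Step 1: truncation.} Set $T\coloneqq c\sqrt{n\log(1/\epsilon)}$ with $c$ chosen via Lemma~\ref{lem:p_H_subgauss} so that $p_{\mathrm H}(|\lambda|>T)\leq \epsilon/8$. Let $S\coloneqq \mathcal{A}\cap[-T,T]$. Since $\mathcal{A}\subset 8\mathbb{Z}$, we have $|S| = \mathcal{O}(\sqrt{n\log(1/\epsilon)})$. Instead of the full plug-in, use the coarsened estimator
\begin{equation}
\widehat D \coloneqq \tfrac12\sum_{\lambda\in S}\bigl|\widehat p_{\mathcal{E}}(\lambda)-p_{\mathrm H}(\lambda)\bigr| + \tfrac12\bigl|\widehat p_{\mathcal{E}}(S^c)-p_{\mathrm H}(S^c)\bigr|,
\end{equation}
which lumps all outcomes outside $S$ into a single bin. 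Call $D_S$ the same expression evaluated with $p_{\mathcal{E}}$ in place of $\widehat p_{\mathcal{E}}$. By the triangle inequality on the lumped term, $D_S\leq D(\mathcal{E})$. Conversely, $D(\mathcal{E})-D_S\leq \tfrac12\sum_{\lambda\notin S}(p_{\mathcal{E}}+p_{\mathrm H}) - \tfrac12|p_{\mathcal{E}}(S^c)-p_{\mathrm H}(S^c)| = \min\{p_{\mathcal{E}}(S^c),p_{\mathrm H}(S^c)\}\leq p_{\mathrm H}(S^c)\leq\epsilon/8$. The truncation bias is thus controlled by the Haar tail alone, with no assumption on $\mathcal{E}$.

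\textbf{Step 2: learning the coarsened distribution.} Now $|\widehat D - D_S|\leq \mathrm{TV}$ between the coarsened empirical and true distributions, which live on $|S|+1 = \mathcal{O}(\sqrt{n\log(1/\epsilon)})$ outcomes. The expectation of this TV is at most $\tfrac12\sqrt{(|S|+1)/N}$ by Cauchy--Schwarz on the per-bin standard deviations. The TV is also a function of $N$ i.i.d.\ samples with bounded differences $1/N$, so McDiarmid's inequality gives concentration $\sqrt{\log(1/\delta)/N}$. Requiring the sum to be at most $\epsilon/2$ yields $N = \mathcal{O}\bigl((\sqrt{n\log(1/\epsilon)}+\log(1/\delta))/\epsilon^2\bigr)$, which is Eq.~\eqref{eq:est_design_dist_N}.

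The main obstacle is Step~1, namely arranging the truncation so that the bias depends only on the Haar tail and not on the unknown $p_{\mathcal{E}}$, which may have heavy tails. The lumped-bin identity above handles this. The remaining care is in checking that the constant in Lemma~\ref{lem:p_H_subgauss} (the factor $4+2^{2-n}$) only shifts $c$. We also need that $T\leq 2n$ or else $S=\mathcal{A}$, in which case the bound holds trivially since then $|\mathcal{A}|$ is already smaller than the target support size.
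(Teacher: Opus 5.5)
Your proposal is correct and follows essentially the same route as the paper. Both truncate $\mathcal{A}$ at $T=\mathcal{O}(\sqrt{n\log(1/\epsilon)})$ using the sub-Gaussian Haar tail (Lemma~\ref{lem:p_H_subgauss}), control the bulk empirical error by Cauchy--Schwarz plus McDiarmid, and absorb the tail into a single aggregate statistic whose bias is at most $p_{\mathrm H}(\mathcal{A}_{\mathrm{tail}})$.

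The only difference is cosmetic. You lump the tail into one bin, $\tfrac12|\widehat p(S^c)-p_{\mathrm H}(S^c)|$, which biases the estimate downward by at most $\min\{p_{\mathcal{E}}(S^c),p_{\mathrm H}(S^c)\}$ and lets one McDiarmid bound on the coarsened distribution cover bulk and tail together. The paper instead uses the upper bound $\tfrac12(\widehat p(\mathcal{A}_{\mathrm{tail}})+p_{\mathrm H}(\mathcal{A}_{\mathrm{tail}}))$, with upward bias $R\in[0,p_{\mathrm H}(\mathcal{A}_{\mathrm{tail}})]$ and a separate Hoeffding bound for the tail fluctuation.
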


The estimation procedure underlying Lemma~\ref{lem:estimate_design_distance} is summarized as Algorithm~\ref{alg:estimate_design_distance}.

\begin{algorithm}[H]
\caption{Estimator for $D(\mathcal{E})$}
\label{alg:estimate_design_distance}
\begin{flushleft}
\textbf{Input:} accuracy $\epsilon\in(0,1)$, failure probability $\delta\in(0,1)$, oracle access to a matchgate-invariant ensemble $\mathcal{E}$.\\
\textbf{Output:} $\widehat D$ satisfying $|\widehat D - D(\mathcal{E})|\leq\epsilon$ with probability at least $1-\delta$.
\end{flushleft}
\begin{algorithmic}[1]
\State Set bulk threshold $T \gets \bigl\lceil\sqrt{4n\log(9/\epsilon)} \bigr\rceil$ and partition $\mathcal{A}_{\mathrm{bulk}}\gets\mathcal{A}\cap[-T,T]$, $\mathcal{A}_{\mathrm{tail}}\gets\mathcal{A}\setminus\mathcal{A}_{\mathrm{bulk}}$, with bulk size $K\gets|\mathcal{A}_{\mathrm{bulk}}|=2\lfloor T/8\rfloor+1$.
\State Set sample count $N \gets \bigl\lceil (2K + 9\log(4/\delta))/\epsilon^2\bigr\rceil$.
\State Run Algorithm~\ref{alg:ensemble_bell_sampling} for $N$ rounds to obtain $\lambda_1,\dots,\lambda_N\sim p_{\mathcal{E}}$.
\State Form the empirical distribution $\widehat p_\lambda \gets N^{-1}\sum_{j=1}^N \mathbf{1}\{\lambda_j=\lambda\}$ for $\lambda\in\mathcal{A}_{\mathrm{bulk}}$, and $\widehat p(\mathcal{A}_{\mathrm{tail}}) \gets N^{-1}\sum_{j=1}^N \mathbf{1}\{\lambda_j\in\mathcal{A}_{\mathrm{tail}}\}$.
\State Compute the bulk estimate $\widehat{D}_{\mathrm{bulk}}\gets \tfrac{1}{2}\sum_{\lambda\in\mathcal{A}_{\mathrm{bulk}}} |\widehat p_\lambda - p_{\mathrm H}(\lambda)|$ using the closed form Eq.~\eqref{eq:p_H} for $p_{\mathrm H}$.
\State Compute the tail upper bound $\widehat{D}^{\mathrm{up}}_{\mathrm{tail}}\gets \tfrac{1}{2}\bigl(\widehat p(\mathcal{A}_{\mathrm{tail}}) + p_{\mathrm H}(\mathcal{A}_{\mathrm{tail}})\bigr)$ using the closed form Eq.~\eqref{eq:p_H} for $p_{\mathrm H}$.
\State \Return $\widehat D \gets \widehat{D}_{\mathrm{bulk}} + \widehat{D}^{\mathrm{up}}_{\mathrm{tail}}$.
\end{algorithmic}
\end{algorithm}

\begin{proof}

Algorithm~\ref{alg:ensemble_bell_sampling} provides i.i.d.\ samples from $p_{\mathcal{E}}$, suggesting an immediate strategy by estimating $D(\mathcal{E})=\mathrm{TV}(p_{\mathcal{E}},p_{\mathrm H})$ from its empirical version. By a standard result on learning discrete distributions in total variation distance~\cite{canonne2020shortnote}, this naive estimator has sample complexity $\mathcal{O}(n/\epsilon^2)$, due to the support size $|\mathcal{A}|=\mathcal{O}(n)$. We improve this to $\mathcal{O}(\sqrt {n\log(1/\epsilon)}/\epsilon^2)$ by exploiting the concentration of $p_{\mathrm H}$ on an interval of width $\mathcal{O}(\sqrt n)$ around the origin, as established in Lemma~\ref{lem:p_H_concentration}. It thus suffices to estimate the empirical distribution accurately only on this much smaller bulk.

From the tail bound of Lemma~\ref{lem:p_H_subgauss},
\begin{equation}
    p_{\mathrm H}\bigl(|\lambda|>T\bigr) \;\leq\; (4+2^{2-n})\,e^{-T^2/(4n)},
\end{equation}
we fix the truncation threshold
\begin{equation}
    T = \sqrt{4n\log(9/\epsilon)},
\end{equation}
so that $p_{\mathrm H}(|\lambda|>T) \leq \epsilon/2$ for sufficiently large $n$. 

Partition the support into
\begin{equation}
    \mathcal{A}_{\mathrm{bulk}} \coloneqq \mathcal{A}\cap[-T,T],
    \qquad
    \mathcal{A}_{\mathrm{tail}} \coloneqq \mathcal{A}\setminus\mathcal{A}_{\mathrm{bulk}},
    \qquad
    K \coloneqq |\mathcal{A}_{\mathrm{bulk}}| \;=\; \mathcal{O}\!\bigl(\sqrt{n\log(1/\epsilon)}\bigr).
\end{equation}
This partition decomposes the total variation distance as $D(\mathcal{E}) = D_{\mathrm{bulk}} + D_{\mathrm{tail}}$, where
\begin{equation}
    D_{\mathrm{bulk}} \coloneqq \tfrac{1}{2}\!\sum_{\lambda\in\mathcal{A}_{\mathrm{bulk}}}\!|p_{\mathcal{E}}(\lambda)-p_{\mathrm H}(\lambda)|,
    \qquad
    D_{\mathrm{tail}} \coloneqq \tfrac{1}{2}\!\sum_{\lambda\in\mathcal{A}_{\mathrm{tail}}}\!|p_{\mathcal{E}}(\lambda)-p_{\mathrm H}(\lambda)|.
\end{equation}
The bulk piece $D_{\mathrm{bulk}}$ will be estimated directly from the empirical distribution; the tail piece $D_{\mathrm{tail}}$ admits, via the identity $|a-b|=(a+b)-2\min(a,b)$, the exact decomposition
\begin{equation}\label{eq:tail_identity}
    D_{\mathrm{tail}}
    \;=\; \tfrac{1}{2}\bigl(p_{\mathcal{E}}(\mathcal{A}_{\mathrm{tail}}) + p_{\mathrm H}(\mathcal{A}_{\mathrm{tail}})\bigr)
    \;-\; R,
    \qquad R \coloneqq \!\!\sum_{\lambda\in\mathcal{A}_{\mathrm{tail}}}\!\!\min\bigl(p_{\mathcal{E}}(\lambda),p_{\mathrm H}(\lambda)\bigr),
\end{equation}
with $0 \leq R \leq p_{\mathrm H}(\mathcal{A}_{\mathrm{tail}}) \leq \epsilon/2$. The strategy is therefore to estimate the explicit upper bound $D^{\mathrm{up}}_{\mathrm{tail}}\coloneqq \tfrac{1}{2}\bigl(p_{\mathcal{E}}(\mathcal{A}_{\mathrm{tail}}) + p_{\mathrm H}(\mathcal{A}_{\mathrm{tail}})\bigr)$ from the empirical distribution, leaving a deterministic bias given by the residual $R$ of at most $\epsilon/2$.

Concretely, we run Algorithm~\ref{alg:ensemble_bell_sampling} for $N$ rounds to obtain $\lambda_1,\dots,\lambda_N\sim p_{\mathcal{E}}$, form the empirical distribution $\widehat p_\lambda \coloneqq N^{-1}\sum_j\mathbf{1}\{\lambda_j=\lambda\}$, and define the estimator
\begin{equation}
    \widehat D \;\coloneqq\; \widehat{D}_{\mathrm{bulk}} + \widehat{D}^{\mathrm{up}}_{\mathrm{tail}},
    \qquad
    \widehat{D}_{\mathrm{bulk}} \coloneqq \tfrac{1}{2}\!\!\!\sum_{\lambda\in\mathcal{A}_{\mathrm{bulk}}}\!\!\!|\widehat p_\lambda - p_{\mathrm H}(\lambda)|,
    \qquad
    \widehat{D}^{\mathrm{up}}_{\mathrm{tail}} \coloneqq \tfrac{1}{2}\bigl(\widehat p(\mathcal{A}_{\mathrm{tail}}) + p_{\mathrm H}(\mathcal{A}_{\mathrm{tail}})\bigr).
\end{equation}
Note that $p_{\mathrm H}(\mathcal{A}_{\mathrm{tail}})$ is a deterministic quantity, computed from Eq.~\eqref{eq:p_H} (or upper-bounded via Eq.~\eqref{eq:pH_subgauss}); only $\widehat p$ depends on the sampling data. We have the error decomposition
\begin{equation}\label{eq:est_decomp_design}
    \widehat D - D
    \;=\; \bigl(\widehat{D}_{\mathrm{bulk}} - D_{\mathrm{bulk}}\bigr)
    \;+\; \bigl(\widehat{D}^{\mathrm{up}}_{\mathrm{tail}} - D^{\mathrm{up}}_{\mathrm{tail}}\bigr)
    \;+\; R,
\end{equation}
which we bound term by term.

For the bulk, the triangle inequality yields
\begin{equation}
    |\widehat{D}_{\mathrm{bulk}} - D_{\mathrm{bulk}}|
    \;\leq\; \tfrac{1}{2}\!\!\!\sum_{\lambda\in\mathcal{A}_{\mathrm{bulk}}}\!\!\!|\widehat p_\lambda - p_{\mathcal{E}}(\lambda)|
    \;=:\; \mathrm{TV}_{\mathrm{bulk}}(\widehat p, p_{\mathcal{E}}).
\end{equation}
We bound the expected value of each summand as
\begin{equation}
    \mathbb{E}|\widehat p_\lambda - p_{\mathcal{E}}(\lambda)| \leq \sqrt{\mathbb{E}[(\widehat p_\lambda - p_{\mathcal{E}}(\lambda))^2]} = \sqrt{\frac{1}{N} p_{\mathcal{E}}(\lambda)(1-p_{\mathcal{E}}(\lambda))}\leq\sqrt{p_{\mathcal{E}}(\lambda)/N},
\end{equation}
where we used Jensen's inequality in the first inequality. By the Cauchy--Schwarz inequality,
\begin{equation}
    \mathbb{E}\bigl[\mathrm{TV}_{\mathrm{bulk}}(\widehat p, p_{\mathcal{E}})\bigr]
    \;\leq\; \tfrac{1}{2}\!\!\!\sum_{\lambda\in\mathcal{A}_{\mathrm{bulk}}}\!\!\!\sqrt{p_{\mathcal{E}}(\lambda)/N}
    \;\leq\; \tfrac{1}{2}\sqrt{K/N}.
\end{equation}
The map $(\lambda_1,\dots,\lambda_N)\mapsto \mathrm{TV}_{\mathrm{bulk}}(\widehat p,p_{\mathcal{E}})$ has bounded differences $\leq 1/N$, so McDiarmid's inequality~\cite{McDiarmid1989} gives, with probability at least $1-\delta/2$,
\begin{equation}\label{eq:bulk_concentration}
    |\widehat{D}_{\mathrm{bulk}} - D_{\mathrm{bulk}}|
    \;\leq\; \tfrac{1}{2}\sqrt{K/N} + \sqrt{\tfrac{\log(4/\delta)}{2N}}.
\end{equation}

For the tail, the deviation 
$\widehat{D}^{\mathrm{up}}_{\mathrm{tail}} - D^{\mathrm{up}}_{\mathrm{tail}} = \tfrac{1}{2}\bigl(\widehat p(\mathcal{A}_{\mathrm{tail}}) - p_{\mathcal{E}}(\mathcal{A}_{\mathrm{tail}})\bigr)$ can be bounded by Hoeffding's inequality, yielding 
\begin{equation}\label{eq:tail_concentration}
    |\widehat{D}^{\mathrm{up}}_{\mathrm{tail}} - D^{\mathrm{up}}_{\mathrm{tail}}|
    \;\leq\; \tfrac{1}{2}\sqrt{\tfrac{\log(4/\delta)}{2N}},
\end{equation}
with probability at least $1-\delta/2$. The residual term $R$ in Eq.~\eqref{eq:est_decomp_design} is deterministic and satisfies $0\leq R\leq \epsilon/2$.

Combining Eqs.~\eqref{eq:bulk_concentration} and \eqref{eq:tail_concentration}, by the triangle inequality, we obtain that with probability at least $1-\delta$,
\begin{equation}
    |\widehat D - D|
    \;\leq\; \tfrac{1}{2}\sqrt{K/N} \;+\; \tfrac{3}{2}\sqrt{\tfrac{\log(4/\delta)}{2N}} \;+\; \tfrac{\epsilon}{2}.
\end{equation}
Requiring the right-hand side to be at most $\epsilon$ and applying $\sqrt{x}+\sqrt{y}\leq\sqrt{2(x+y)}$ to the two square-root terms yields the sufficient condition
\begin{equation}
    N \;\geq\; \frac{2K + 9\log(4/\delta)}{\epsilon^2}
    \;=\; \mathcal{O}\!\left(\frac{\sqrt {n\log(1/\epsilon)} + \log(1/\delta)}{\epsilon^2}\right),
\end{equation}
which is Eq.~\eqref{eq:est_design_dist_N}.
\end{proof}

Exploiting the estimator from Lemma~\ref{lem:estimate_design_distance} immediately yields a tolerant tester for the approximate $2$-design property, with the same $\widetilde{\mathcal{O}}(\sqrt n/\epsilon^2)$ scaling.

\begin{thm}[Testing of approximate $2$-design]\label{thm:test_design_distance}
Let $\mathcal{E}$ be a matchgate-invariant ensemble of even pure states, where it is promised that
\begin{align*}
\mathrm{either}\quad (a)& \,\,D(\mathcal{E}) \leq \alpha \,,\\
\mathrm{or}\quad (b)& \,\, D(\mathcal{E}) \geq \beta\,,
\end{align*} 
for  $0\leq \alpha<\beta\leq 1$. Then, there exists a Bell-sampling-based algorithm that, using
\begin{equation}\label{eq:test_design_dist_N}
    N \;=\; \mathcal{O}\!\left(\frac{\sqrt{n\,\log\!\bigl(1/(\beta-\alpha)}\bigr) + \log(1/\delta)}{(\beta-\alpha)^2}\right)
\end{equation}
runs of Algorithm~\ref{alg:ensemble_bell_sampling}, distinguishes case ($a$) and ($b$) with probability at least $1-\delta$.

\end{thm}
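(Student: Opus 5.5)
The plan is to reduce the test to the estimator of Lemma~\ref{lem:estimate_design_distance} and threshold its output at the midpoint of the promise gap. Set $\epsilon' \coloneqq (\beta-\alpha)/2$ and run Algorithm~\ref{alg:estimate_design_distance} with accuracy $\epsilon'$ and failure probability $\delta$, obtaining $\widehat D$. Output ``(a)'' if $\widehat D < (\alpha+\beta)/2$ and ``(b)'' otherwise.

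For correctness, note that Lemma~\ref{lem:estimate_design_distance} gives $|\widehat D - D(\mathcal{E})|\leq \epsilon'$ with probability at least $1-\delta$. Condition on this event. In case (a) we have $\widehat D \leq \alpha + \epsilon' = (\alpha+\beta)/2$. The boundary case of equality needs a tie-breaking convention: either accept (a) on $\widehat D\le(\alpha+\beta)/2$, or run the estimator with a slightly smaller accuracy such as $(\beta-\alpha)/3$, which only changes constants. In case (b) we have $\widehat D \geq \beta - \epsilon' = (\alpha+\beta)/2$, so the same caveat applies. Using accuracy $(\beta-\alpha)/3$ makes both inequalities strict, since the two cases then map to disjoint intervals $[0,\alpha+(\beta-\alpha)/3]$ and $[\beta-(\beta-\alpha)/3,1]$. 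Thresholding at the midpoint then decides correctly.

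The sample count follows by substituting $\epsilon = (\beta-\alpha)/3$ into Eq.~\eqref{eq:est_design_dist_N}. Constant factors inside the logarithm and in the denominator are absorbed, since $\log(3/(\beta-\alpha)) = \mathcal{O}(\log(1/(\beta-\alpha)))$ for $\beta-\alpha\le 1$ bounded away from the degenerate case. This yields Eq.~\eqref{eq:test_design_dist_N}.

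The only mild obstacle is inherited from Lemma~\ref{lem:estimate_design_distance}. There, the tail truncation $p_{\mathrm H}(|\lambda|>T)\le\epsilon/2$ was asserted ``for sufficiently large $n$''. For small $n$, I would note that $|\mathcal{A}|=\mathcal{O}(n)$ is itself bounded, so the naive empirical TV estimator with $\mathcal{O}((n+\log(1/\delta))/\epsilon^2)$ samples already satisfies the bound up to constants. Hence the stated complexity holds for all $n$.
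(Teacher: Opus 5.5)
Your proposal is correct and follows the paper's proof: run the estimator of Lemma~\ref{lem:estimate_design_distance} at accuracy proportional to $\beta-\alpha$ and threshold the output at $(\alpha+\beta)/2$. Your choice of accuracy $(\beta-\alpha)/3$ is slightly cleaner than the paper's $(\beta-\alpha)/2$. With the paper's rule, \emph{output (a) if and only if $\widehat D\le(\alpha+\beta)/2$}, the boundary value $\widehat D=(\alpha+\beta)/2$ is misclassified in case (b), and your choice removes this at the cost of constants only.
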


\begin{proof}
Set $\epsilon =(\beta-\alpha)/2$ and run the estimator from Lemma~\ref{lem:estimate_design_distance} at additive accuracy $\epsilon$ and failure probability $\delta$; this requires $N$ runs of Algorithm~\ref{alg:ensemble_bell_sampling} as in Eq.~\eqref{eq:test_design_dist_N}. Let $\widehat D$ denote the resulting estimate. We then make the following decision: if $\widehat D \leq (\alpha+\beta)/2$ then we output (a), otherwise (b). 

By Lemma~\ref{lem:estimate_design_distance}, $|\widehat D - D| \leq \epsilon = (\beta-\alpha)/2$ with probability at least $1-\delta$. Conditioning on this event:
\begin{itemize}
    \item If $D \leq \alpha$, then $\widehat D \leq \alpha + \tfrac{\beta-\alpha}{2} = \tfrac{\alpha+\beta}{2}$, so the algorithm outputs (a);
    \item If $D \geq \beta$, then $\widehat D \geq \beta - \tfrac{\beta-\alpha}{2} = \tfrac{\alpha+\beta}{2}$, so the algorithm outputs (b).
\end{itemize}
This completes the proof.
\end{proof}

For the exact $2$-design test, we obtain a tighter bound, as shown below.

\begin{thm}[Testing of exact $2$-design]\label{thm:nontolerant_test_design}
Let $\mathcal{E}$ be a matchgate-invariant ensemble of even pure states where it is promised that
\begin{align*}
\mathrm{either}\quad (a)& \,\,D(\mathcal{E}) = 0 \,,\\
\mathrm{or}\quad (b)& \,\, D(\mathcal{E}) \geq \epsilon\,,
\end{align*} 
for $\epsilon>0$. Then, there exists a Bell-sampling-based algorithm that uses
\begin{equation}\label{eq:nontolerant_test_N}
    N \;=\; \mathcal{O}\!\left(\frac{n^{1/4}}{\epsilon^2}\,\log(1/\delta)\right)
\end{equation}
runs of Algorithm~\ref{alg:ensemble_bell_sampling} to distinguish case ($a$) and ($b$) with probability at least $1-\delta$.
\end{thm}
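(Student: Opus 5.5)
Since $\mathcal{E}$ is matchgate-invariant, Eq.~\eqref{eq:design_distance_TV} reduces the problem to classical identity testing. We receive i.i.d.\ samples from an unknown distribution $p_{\mathcal{E}}$ on $\mathcal{A}$, and must decide whether $p_{\mathcal{E}}=p_{\mathrm H}$ or $\mathrm{TV}(p_{\mathcal{E}},p_{\mathrm H})\geq\epsilon$. The reference $p_{\mathrm H}$ is known exactly from Lemma~\ref{lem:weight_haar}. The samples are supplied by Algorithm~\ref{alg:ensemble_bell_sampling}.

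The plan is to invoke a sample-optimal identity tester for a known distribution. The Valiant--Valiant / Diakonikolas--Kane type tester uses $\mathcal{O}(\|p_{\mathrm H}\|_{2/3}^{1/2}/\epsilon^2)$ samples for constant success probability (up to removing an $\epsilon$-light tail). It can be implemented, for instance, by a modified $\chi^2$ statistic $Z=\sum_\lambda \bigl((X_\lambda-Np_{\mathrm H}(\lambda))^2-X_\lambda\bigr)/p_{\mathrm H}(\lambda)$ restricted to the bulk, plus a separate check that the tail mass is small. Amplification to confidence $1-\delta$ then comes from a median of $\mathcal{O}(\log(1/\delta))$ independent repetitions, giving the $\log(1/\delta)$ factor.

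The main quantitative step, and the place the $n^{1/4}$ comes from, is bounding $\|p_{\mathrm H}\|_{2/3}$. First, by Lemma~\ref{lem:weight_haar} and Stirling, each atom satisfies $p_{\mathrm H}(8k)=\Theta(n^{-1/2}e^{-32k^2/n})$ roughly. The $8\mathbb Z$ filter gives a factor $\approx 8/2=4$ over the raw binomial weight, consistent with the normalization $4/(2^n+2)$. More robustly, Lemma~\ref{lem:p_H_subgauss} gives Gaussian tails at scale $\sqrt n$, and the peak $p_{\mathrm H}(0)=\Theta(1/\sqrt n)$ bounds every atom. Hence
\[
\|p_{\mathrm H}\|_{2/3}=\Bigl(\sum_{\lambda\in\mathcal A}p_{\mathrm H}(\lambda)^{2/3}\Bigr)^{3/2}
\lesssim \Bigl(\sqrt n\cdot n^{-1/3}\Bigr)^{3/2}=n^{1/4}.
\]
The bulk $|\lambda|\lesssim\sqrt n$ contributes $\mathcal O(\sqrt n)$ atoms of size $\mathcal O(n^{-1/2})$. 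The tail contributes a convergent geometric-Gaussian series by Lemma~\ref{lem:p_H_subgauss}. Therefore $\|p_{\mathrm H}\|_{2/3}^{1/2}=\mathcal O(n^{1/8})$.

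I expect the main obstacle to be the mismatch between this estimate and the claimed rate. The $2/3$-norm route gives $n^{1/8}/\epsilon^2$, which is already stronger than $n^{1/4}/\epsilon^2$. So the claimed rate instead matches the simpler bound $\mathcal{O}(\sqrt{K}/\epsilon^2)$ with effective support $K=\mathcal{O}(\sqrt n)$, which is the standard Paninski / collision-based identity tester on a support of size $K$. I will therefore present the cleaner argument.

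\textbf{Step 1.} Truncate to $\mathcal{A}_{\mathrm{bulk}}$ with threshold $T=\Theta(\sqrt{n\log(1/\epsilon)})$ as in Lemma~\ref{lem:estimate_design_distance}, so that $p_{\mathrm H}(\mathcal A_{\mathrm{tail}})\leq\epsilon/4$.

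\textbf{Step 2.} Test the tail mass by Hoeffding using $\mathcal O(\epsilon^{-2})$ samples. If $p_{\mathcal E}(\mathcal A_{\mathrm{tail}})$ is large, reject.

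\textbf{Step 3.} Otherwise the bulk TV distance is at least $\epsilon/2$. Apply a uniform-to-identity reduction (Goldreich's flattening) and run the $\ell_2$/collision tester with $\mathcal{O}(\sqrt{K}/\epsilon^2)=\mathcal{O}(n^{1/4}\log^{1/4}(1/\epsilon)/\epsilon^2)$ samples.

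A remaining technical point is absorbing the $\log^{1/4}(1/\epsilon)$ factor. I would do this by noting that the true effective support, weighted by the $2/3$-norm bound above, is $\mathcal O(\sqrt n)$ without logarithms. Alternatively, citing the $\|\cdot\|_{2/3}$ tester directly yields the stated bound, indeed with room to spare.
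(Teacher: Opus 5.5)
Your reduction to classical identity testing via Eq.~\eqref{eq:design_distance_TV} matches the paper, and so does your estimate $\|p_{\mathrm H}\|_{2/3}=\mathcal{O}(n^{1/4})$. The problem is that you misstate the complexity of the Valiant--Valiant tester. It uses $\mathcal{O}(\|p\|_{2/3}\,\epsilon^{-2})$ samples, not $\mathcal{O}(\|p\|_{2/3}^{1/2}\epsilon^{-2})$. Here $\|p\|_{2/3}=(\sum_i p_i^{2/3})^{3/2}$, evaluated on $p$ with its largest atom and an $\mathcal{O}(\epsilon)$-light tail removed, which only makes it smaller. A quick check: for the uniform distribution on $K$ points, $\|p\|_{2/3}=\sqrt K$, which reproduces the optimal $\Theta(\sqrt K/\epsilon^2)$ uniformity-testing bound. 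Your version would give $K^{1/4}/\epsilon^2$, which beats the known lower bound.

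So there is no ``mismatch'' and no room to spare. Plugging your bound $\|p_{\mathrm H}\|_{2/3}=\mathcal{O}(n^{1/4})$ into the tester gives exactly $\mathcal{O}(n^{1/4}\epsilon^{-2}\log(1/\delta))$, and that is the paper's entire proof. The paper makes the norm bound rigorous in Lemma~\ref{lem:pH_2/3_norm}: it splits the support into blocks $B_k$ of width $2\sqrt{2n}$ and applies H\"older in each block, $\sum_{\lambda\in B_k}p_{\mathrm H}(\lambda)^{2/3}\leq|B_k|^{1/3}p_{\mathrm H}(B_k)^{2/3}$. It then sums the sub-Gaussian block masses from Lemma~\ref{lem:p_H_subgauss}. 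Your remark that the tail ``contributes a convergent series'' needs this step, or a pointwise Stirling bound, because a small tail mass alone does not control $\sum p^{2/3}$.

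Because of the exponent error, you set that route aside and present Steps 1--3 as the proof, but those steps do not reach the stated bound. The truncation must be at $T=\Theta(\sqrt{n\log(1/\epsilon)})$ to push the Haar tail mass below order $\epsilon$. With a heavier tail, $p_{\mathcal E}$ could differ from $p_{\mathrm H}$ only inside the tail while matching its total mass, and Steps 2--3 would not detect this. A flattening-plus-collision tester on the resulting $K=\Theta(\sqrt{n\log(1/\epsilon)})$ bulk points then costs $\mathcal{O}((n\log(1/\epsilon))^{1/4}\epsilon^{-2})$. Any analysis that uses only the support size cannot beat $\sqrt K/\epsilon^2$.

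Your proposed way of absorbing the $\log^{1/4}(1/\epsilon)$ factor, via an ``effective support weighted by the $2/3$-norm'', is not something the collision-tester analysis uses. Removing the logarithm requires a tester whose complexity is governed by $\|p_{\mathrm H}\|_{2/3}$ rather than by a support size, which is exactly the instance-optimal tester you set aside. As written, Steps 1--3 prove the theorem only up to a $\log^{1/4}(1/\epsilon)$ factor. The stated bound follows from your closing fallback sentence once the tester's complexity is corrected to $\|p_{\mathrm H}\|_{2/3}/\epsilon^2$.
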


\begin{proof}
Algorithm~\ref{alg:ensemble_bell_sampling} provides i.i.d.\ samples from $p_{\mathcal{E}}$, so we can directly invoke the optimal identity tester of Ref.~\cite{valiant2017}, which distinguishes $p_{\mathcal{E}} = p_{\mathrm H}$ from $\mathrm{TV}(p_{\mathcal{E}}, p_{\mathrm H})\geq \epsilon$ with probability at least $1-\delta$ using
$\mathcal{O}\bigl(\|p_{\mathrm H}\|_{2/3}\,\epsilon^{-2}\,\log(1/\delta)\bigr)$ samples.
It therefore suffices to bound $\|p_{\mathrm H}\|_{2/3}=\mathcal{O}(n^{1/4})$, which we do in Lemma~\ref{lem:pH_2/3_norm} below.
\end{proof}

\begin{lem}[$\ell^{2/3}$ norm of the Haar bridge spectral distribution]\label{lem:pH_2/3_norm}
There is an absolute constant $C> 0$ such that, for sufficiently large $n$,
\begin{equation}\label{eq:pH_2/3_bound}
    \|p_{\mathrm H}\|_{2/3}^{2/3} \;\coloneqq\; \sum_{\lambda\in\mathcal{A}}\, p_{\mathrm H}(\lambda)^{2/3} \;\leq\; C\,n^{1/6},
\end{equation}
and consequently $\|p_{\mathrm H}\|_{2/3} = \mathcal{O}(n^{1/4})$.
\end{lem}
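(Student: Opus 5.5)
The strategy is to bound each term $p_{\mathrm H}(\lambda)^{2/3}$ pointwise with a Gaussian-type envelope and then sum over the lattice $\mathcal{A}=8\mathbb{Z}\cap[-2n,2n]$, treating the sum as a Riemann sum of step $8$. From the closed form of Lemma~\ref{lem:weight_haar},
\begin{equation*}
p_{\mathrm H}(\lambda)=\frac{4\binom{2n}{n+\lambda/2}}{2^n(2^n+2)}\;\leq\;\frac{4}{4^n}\binom{2n}{n+\lambda/2}.
\end{equation*}
The binomial admits the standard bound $\binom{2n}{n+k}\leq \binom{2n}{n}e^{-k^2/(2n)}$, which follows from $\binom{2n}{n+k}/\binom{2n}{n}=\prod_{j=1}^{k}\frac{n-j+1}{n+j}$ together with $\log\frac{1-(j-1)/n}{1+j/n}\leq -(2j-1)/n$. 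Combining it with the central binomial bound $\binom{2n}{n}\leq 4^n/\sqrt{\pi n}$ gives, with $k=\lambda/2$,
\begin{equation*}
p_{\mathrm H}(\lambda)\;\leq\;\frac{4}{\sqrt{\pi n}}\,e^{-\lambda^2/(8n)}.
\end{equation*}
This pointwise envelope is the only place where real care is needed: the exponential decay must hold uniformly in $\lambda$ across the full range $|\lambda|\leq 2n$. The product argument above is the cleanest route and I expect it to go through directly. An alternative would be the tail bound of Lemma~\ref{lem:p_H_subgauss}, since each single term is dominated by its tail, $p_{\mathrm H}(\lambda)\leq p_{\mathrm H}(|\lambda'|\geq|\lambda|)$. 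That bound, however, lacks the $1/\sqrt n$ prefactor, and that prefactor is precisely what produces the $n^{1/6}$ scaling, so the local bound is needed.

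Raising the envelope to the power $2/3$ gives
\begin{equation*}
p_{\mathrm H}(\lambda)^{2/3}\leq \Bigl(\frac{4}{\sqrt{\pi n}}\Bigr)^{2/3} e^{-\lambda^2/(12n)}.
\end{equation*}
Summing over $\lambda\in 8\mathbb{Z}$ and comparing with the integral (the summand is unimodal, so the sum is at most $1+\tfrac18\int_{\mathbb R}e^{-x^2/(12n)}dx$) yields
\begin{equation*}
\sum_{\lambda\in\mathcal{A}}e^{-\lambda^2/(12n)}\leq 1+\tfrac{1}{8}\sqrt{12\pi n}=\mathcal{O}(\sqrt n).
\end{equation*}
Multiplying the two factors gives $\|p_{\mathrm H}\|_{2/3}^{2/3}\leq C\, n^{-1/3}\, n^{1/2}=C\,n^{1/6}$, and raising to the power $3/2$ yields $\|p_{\mathrm H}\|_{2/3}=\mathcal{O}(n^{1/4})$.
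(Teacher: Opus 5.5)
Your proof is correct and takes a different route from the paper's, but one justification needs repair. The per-factor inequality $\log\frac{1-(j-1)/n}{1+j/n}\leq -(2j-1)/n$ is false. At $j=1$ it reads $-\log(1+1/n)\leq -1/n$, which fails. If it held, it would also give the ratio bound $e^{-k^2/n}$, which fails at $k=1$ because $\frac{n}{n+1}>e^{-1/n}$.

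The bound you actually use is nonetheless true. Write $\frac{n-j+1}{n+j}=1-\frac{2j-1}{n+j}\leq e^{-(2j-1)/(n+j)}\leq e^{-(2j-1)/(2n)}$ for $1\leq j\leq k\leq n$. Multiplying over $j$ gives $\binom{2n}{n+k}\leq\binom{2n}{n}e^{-k^2/(2n)}$. With this fix, the remaining steps all check out:
\begin{itemize}
\item the envelope $p_{\mathrm H}(\lambda)\leq\frac{4}{\sqrt{\pi n}}e^{-\lambda^2/(8n)}$;
\item the unimodal sum--integral comparison;
\item the final exponent bookkeeping $n^{-1/3}\cdot n^{1/2}=n^{1/6}$.
\end{itemize}
In fact your bound holds for every $n\geq 1$.

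The paper never bounds $p_{\mathrm H}$ pointwise. It partitions the support into blocks $B_k$ of width $w=2\sqrt{2n}$ and applies H\"older on each block:
\[
\sum_{\lambda\in B_k}p_{\mathrm H}(\lambda)^{2/3}\leq|B_k|^{1/3}\,p_{\mathrm H}(B_k)^{2/3},
\qquad |B_k|\leq\sqrt n .
\]
It then uses the trivial bound $p_{\mathrm H}(B_0)\leq 1$ and the sub-Gaussian tail estimate of Lemma~\ref{lem:p_H_subgauss}, which gives $p_{\mathrm H}(B_k)\leq(4+2^{2-n})e^{-k^2}$ for $k\geq 1$. Summing over $k$ finishes the argument. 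There the factor $n^{1/6}$ comes from $|B_k|^{1/3}$, not from a $1/\sqrt n$ prefactor.

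So your remark that the tail lemma is insufficient and a local bound is needed holds only for a pointwise strategy. H\"older on blocks turns mass bounds directly into $\ell^{2/3}$ bounds, because a block of about $\sqrt n$ lattice points carries mass at most $1$.

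The paper's route buys generality and reuse. It applies to any distribution on a fixed-spacing lattice whose tails are sub-Gaussian at scale $\sqrt n$, and it needs only the moment-generating-function bound already proven. Your route is a self-contained computation from the closed form of Lemma~\ref{lem:weight_haar}. It gives explicit constants and drops the ``sufficiently large $n$'' caveat, which the paper needs for $|B_k|\leq\sqrt n$.
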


\begin{proof}
Set $w\coloneqq 2\sqrt{2n}$ and partition the support of $p_{\mathrm H}$ into intervals
\begin{equation}
    B_0 \;\coloneqq\; \{\lambda\in 8\mathbb{Z}\,:\,|\lambda|\leq w\},
    \qquad
    B_k \;\coloneqq\; \{\lambda\in 8\mathbb{Z}\,:\,kw<|\lambda|\leq (k+1)w\} \quad\text{for } k\geq 1,
\end{equation}
which form a disjoint cover of $8\mathbb{Z}\cap[-2n,2n]$. Each interval contains at most $|B_k|\leq \tfrac{2w}{8}+1 = \sqrt{n/2}+1 \leq \sqrt n$ points for $n\geq 12$.

By Lemma~\ref{lem:p_H_subgauss},
\begin{equation}
    p_{\mathrm H}(B_k) \;\leq\; p_{\mathrm H}(|\lambda|>kw) \;\leq\; (4+2^{2-n})\,e^{-k^2}
    \qquad\text{for } k\geq 1,
\end{equation}
while $p_{\mathrm H}(B_0)\leq 1$ trivially. By Hölder's inequality,
\begin{equation}
    \sum_{\lambda\in B_k} p_{\mathrm H}(\lambda)^{2/3}
    \;\leq\; |B_k|^{1/3}\,p_{\mathrm H}(B_k)^{2/3}
    \;\leq\; n^{1/6}\,p_{\mathrm H}(B_k)^{2/3}.
\end{equation}
Summing over $k\geq 0$,
\begin{equation}
    \sum_\lambda p_{\mathrm H}(\lambda)^{2/3}
    \;\leq\; n^{1/6}\!\left(1 \,+\, (4+2^{2-n})^{2/3}\!\!\sum_{k\geq 1}\!e^{-2k^2/3}\right)
    \;\leq\; C\,n^{1/6},
\end{equation}
since the series converges to an absolute constant. This proves Eq.~\eqref{eq:pH_2/3_bound}.
\end{proof}

These results, together with those of Sec.~\ref{sec:designs_lower}, give a complete operational picture of fermionic state designs in both the exact and approximate cases.

\emph{Exact designs.} Any $t$-doped Gaussian protocol generating an exact state $2$-design must produce at least one state requiring $\Omega(n)$ non-Gaussian gates (Corollary~\ref{cor:exact_design}), and the property of being an exact $2$-design can be certified from $\mathcal{O}(n^{1/4}/\epsilon^2)$ Bell samples (Theorem~\ref{thm:nontolerant_test_design}).

\emph{Approximate designs.} Any $t$-doped Gaussian protocol generating an $\epsilon$-approximate state $2$-design requires $\Omega(\sqrt{n\log(1/\epsilon)})$ non-Gaussian gates with probability at least $2\epsilon$ over the ensemble (Theorem~\ref{thm:approx_design}, valid for $\epsilon\geq C_*/\sqrt n$); equivalently, any state $\psi$ with $\nGMax(\psi)=o(\sqrt{n\log(1/\epsilon)})$ has its matchgate orbit at trace distance $>\epsilon$ from the Haar-random state. On the certification side, $D(\mathcal{E})$ can be estimated to additive accuracy $\epsilon$ from $\widetilde{\mathcal{O}}(\sqrt n/\epsilon^2)$ Bell samples (Lemma~\ref{lem:estimate_design_distance}), with a corresponding tolerant testing procedure (Theorem~\ref{thm:test_design_distance}).

\section{Discussion and open problems}
\label{sec:conclusion}

\subsection{Discussion}
We have developed a unified resource-theoretic and operational framework for fermionic non-Gaussianity, organized around two structural pillars: the \emph{bridge spectral distribution} $p_\psi(\lambda)$, which encodes the matchgate-invariant content of the two-copy state $\ket{\psi}^{\otimes 2}$, and \emph{Bell sampling}, which gives direct experimental access to this distribution from a constant-depth two-copy Clifford or matchgate circuit. Every empirically accessible non-Gaussianity quantifier in this framework is a closed-form functional of $p_\psi$, and each of our operational applications follows from a structural property of this single object. The second-order matchgate commutant, accessed through Bell sampling, thereby reduces a host of structural and computational questions about fermionic non-Gaussianity to statistical questions about $p_\psi$.

The main results of this work can be summarized as follows. We introduce the bridge degree and prove that it is a bona fide non-Gaussianity monotone under post-selected Gaussian protocols. The resulting structural constraints yield strong no-go theorems for Gaussian conversion and establish that the resource theory of fermionic non-Gaussianity is strictly irreversible in the exact-conversion setting. As a concrete operational consequence, we obtain non-Gaussian gate-count lower bounds for fermionic state designs: any exact state $2$-design contains a state requiring $\Omega(n)$ non-Gaussian gates, and any $\epsilon$-approximate state $2$-design contains states requiring $\Omega(\sqrt{n\log(1/\epsilon)})$ non-Gaussian gates with high probability over the ensemble. We further develop an approximate variant of the bridge degree that is robust under perturbation and admits an efficiently computable lower bound, through which the single-shot magic-state cost of any non-Gaussian state can be lower-bounded directly from Bell-sampling data. We then extend the framework to mixed states via a Choi-based construction, with structural properties analogous to those of the pure-state version. Finally, Bell sampling alone yields two concrete algorithmic protocols, both with polynomial sample complexity: (i) a Gaussianity test with perfect completeness, optimal among two-copy tests with this property; and (ii) a certification protocol for the state $2$-design property of matchgate-invariant ensembles, in both the exact and approximate cases.

\subsection{Open problems}
\label{sec:open_problems}

Our work opens several research directions.

\emph{Connections to classical simulability.} A natural direction concerns the connection of the bridge degree to \emph{classical simulability}. We have shown that the bridge degree lower-bounds the number of non-Gaussian gates required to prepare a state, but whether this gives an efficient simulation cost remains open. Establishing that a small bridge degree implies efficient classical simulation would identify the bridge degree as the unifying complexity parameter of fermionic non-Gaussian computation.

\emph{Experimental applications.} On the experimental side, the bridge degree is a compelling candidate for benchmarking and comparing the non-Gaussianity of fermionic platforms --- an area that has so far lacked such a tool. The hardware efficiency of Bell sampling makes the methods developed here directly applicable across a wide range of fermionic architectures, and we expect them to enable the first experiments witnessing non-Gaussianity on near-term fermionic devices, such as ultracold atoms in optical lattices~\cite{Koepsell2021,Brown2019,Hartke2023, Xu2025,vijayan2020timeresolved,jordens2008mott} and digital quantum computers~\cite{GonzlezCuadra2023,Chien2026,Bojovi2026}.

\emph{Mixed-state algorithm.} While our results provide a clean pure-state theory with efficient measurement protocols, the mixed-state counterpart is comparatively limited: the mixed-state bridge spectral distribution is not directly accessible from copies of the state via Bell sampling. Nevertheless, for global depolarizing noise, we show that error mitigation can reconstruct the ideal bridge spectral distribution from Bell-sampling data. Extending this approach to broader and more realistic noise models --- or developing an efficient quantum algorithm that samples from or estimates the mixed-state bridge spectral distribution --- would extend the experimental reach of the framework to noisy quantum devices. 

\emph{Mixed-state monotones and witnesses.} A central open question is to develop mixed-state non-Gaussianity monotones and witnesses, particularly ones that are experimentally accessible. On the monotone side, no efficiently computable mixed-state monotone is currently known. A related question is whether membership in the convex-Gaussian set can be decided efficiently; a negative answer would rule out any efficiently computable faithful monotone for this set. On the witnessing side, our fidelity witness applies to mixed states, but it requires prior knowledge of a state with which the experimentally prepared state is expected to have high fidelity, for example an intended pure target with an efficient classical description. Constructing state-agnostic witnesses that certify mixed-state non-Gaussianity without such prior information would therefore be especially valuable, following recent developments in non-stabilizerness~\cite{haug2026efficient,warmuz2024magic,macedo2025witnessing} and entanglement~\cite{guhne2009entanglement,tarabunga2025quantifying,rico2024entanglement,rico2025state}.

\emph{Sample-optimal Gaussianity testing.} A natural question in property testing is whether \emph{constant}-copy (independent of $n$) Gaussianity testing is possible. A hint in this direction comes from the parallel with stabilizer-state testing: a similar Bell-sampling-based strategy is known to test stabilizer states with a constant number of copies~\cite{gross2021schur}. Theorem~\ref{thm:test_tdoped} provides partial evidence that the same holds here, achieving such a rate for states generated by few non-Gaussian gates, and motivates the conjecture of that a system-size-independent sample complexity holds for general states. Whether this conjecture holds --- or whether some dependence on $n$ is unavoidable --- remains open. For the tolerant variant, an analogous question concerns the regime of validity: our test is applicable under the assumption that the input state is within $\mathcal{O}(1/n)$ trace distance of an FGS in the accept case. Whether the test can be extended to a constant-gap regime is an intriguing direction.

\emph{Extensions of design properties.} Our state-design results concern ensembles of states produced by $t$-doped Gaussian circuits; a natural extension is to study the unitary-design property of $t$-doped Gaussian circuits themselves, as well as alternative notions of approximation such as relative error. A further direction is to extend the analysis to higher-order matchgate commutants $\mathrm{Com}_k(M_n)$ for $k\geq 3$, whose structure has only recently been characterized~\cite{sierant2026theorymatchgatecommutant,LasMou26,BrDi26}; a natural question is a $k$-copy analogue of the bridge spectral distribution, which would open the door to a structural analysis of higher-order state and unitary designs.

{\it Note added}: While completing this manuscript, we became aware of the independent work of Ref.~\cite{haug2026practicaltestswitnessesfermionic}, which considers the same two-copy Bell-sampling Gaussianity test with perfect completeness and establishes $\mathcal{O}(n^2/\epsilon)$ sample complexity, with the bound subsequently improved to $\mathcal{O}(n/\epsilon)$ in v3 of their preprint. We additionally establish that this test is optimal among two-copy tests with perfect completeness, give system-size-independent bound for states generated by few non-Gaussian gates, and extend it to the tolerant regime. Beyond this overlap, the two works address different aspects of fermionic non-Gaussianity.

\begin{acknowledgments}
We thank Barbara Kraus for comments on the manuscript. We thank Bernhard Jobst, Marc Langer, Sheng-Hsuan Lin, Ra{\'u}l Morral-Yepes,  Frank Pollmann, Piotr Sierant, and Xhek Turkeshi for collaborations on related topics. We thank Oliver Reardon-Leaman for pointing out an error in an earlier version of our proofs of Theorems~\ref{thm:irreversibility} and~\ref{thm:unbounded_irreversibility}. We acknowledge funding from the European Research Council (ERC) under the European Union (ERC, DynaQuant, No. 101169765).
\end{acknowledgments}

\appendix

\section{Connection to generalized Pl\"ucker hierarchies}
\label{app:plucker}

The bridge degree induces a natural hierarchy of even pure states through the nested sets $\mathcal{F}_k \coloneqq \{\psi : \nGMax(\psi)\leq k\}$, with $\mathcal{F}_0$ the pure FGSs and $\mathcal{F}_{4\lfloor n/4\rfloor}$ the full even Hilbert space. A structurally similar hierarchy has been introduced in the \emph{particle-preserving} setting by Ref.~\cite{semenyakin2025classifying}, which we now recall and connect to our framework. Defining the operator~\cite{smirnov2016instanton}
\begin{equation}\label{eq:Omega_op}
    \Omega \;\coloneqq\; \sum_{r=1}^{n}\psi_r\otimes\psi_r^\dagger
\end{equation}
acting on two copies, a pure state $\ket{\psi}$ is a Slater determinant if and only if it satisfies the Pl\"ucker relations~\cite{smirnov2016instanton,miwa2000solitons}
\begin{equation}\label{eq:plucker_standard}
    \Omega\ket{\psi}^{\otimes 2} \;=\; 0.
\end{equation}
Ref.~\cite{semenyakin2025classifying} introduces the \emph{generalized Pl\"ucker relations}
\begin{equation} \label{eq:gen_plucker}
    \Omega^k\ket{\psi}^{\otimes 2} \;=\; 0, \qquad k\geq 1,
\end{equation}
whose solution sets $\mathcal{G}_k$ form a nested chain $\mathcal{G}_1\subseteq\mathcal{G}_2\subseteq\cdots$, recovering the Slater determinants at $k=1$ and exhausting the particle-preserving Hilbert space at $k = n+1$.

Motivated by the parallel with the bridge-degree hierarchy, we define the \emph{Pl\"ucker degree} of $\ket{\psi}$ as the smallest $k$ for which $\ket{\psi}\in\mathcal{G}_{k+1}$. Adapting the techniques developed here for the bridge degree (Theorem~\ref{thm:max_Lambda_monotone}), the Pl\"ucker degree can be shown to be a monotone under particle-preserving Gaussian protocols. In particular, monotonicity under post-selected occupation-number measurements follows from the commutativity $[\Omega,\,P_\mu\otimes P_\mu] = 0$, which implies that the post-selected state is always in $\mathcal{G}_k$ whenever $\ket{\psi}\in\mathcal{G}_k$ --- the particle-preserving counterpart of the relation driving the bridge-degree monotonicity proof. This endows the hierarchy of Ref.~\cite{semenyakin2025classifying} with a resource-theoretic interpretation.

\section{Certifying the bridge degree under global depolarizing noise}
\label{app:depol}

As discussed in the main text, the mixed-state bridge spectral distribution is not directly accessible from Bell sampling on copies of a mixed state $\rho$. Nevertheless, when the mixedness arises from global depolarizing noise acting on an underlying pure state, the bridge spectral distribution of the ideal state can be recovered by error mitigation. Global depolarizing noise is a standard and experimentally well-motivated model, as it captures the effective noise of many noisy quantum processors~\cite{haug2023scalable,urbanek2021mitigating,dalzell2024random,vovrosh2021simple}. In this appendix, we first show that Bell sampling on the noisy state yields a convex combination of the ideal bridge spectral distribution and a known noise contribution. This decomposition can then be inverted to reconstruct the ideal bridge spectral distribution from the measured data. Finally, we present a certification algorithm that uses the error-mitigated distribution to obtain a lower bound on the bridge degree of the underlying pure state.

\subsection{Error mitigation of the bridge spectral distribution}

We consider the globally depolarized state
\begin{equation}\label{eq:depol_model}
    \rho=(1-p)\,\ketbra{\psi}{\psi}+p\,\frac{I}{d},
\end{equation}
where $\ket{\psi}$ is an even pure state and $p\in[0,1]$. Our goal is to recover the bridge spectral distribution of the noise-free state
$\ket{\psi}$ from measurements on the noisy state $\rho$. The Bell-sampling primitive
(Algorithm~\ref{alg:bell_sampling}) applied to two copies of $\rho$ produces outcome
$\lambda$ with probability
\begin{equation}\label{eq:mixed_bell_distribution}
    p_\rho(\lambda)\;\coloneqq\;\Tr\!\bigl[\Pi_\lambda\,\rho^{\otimes 2}\bigr].
\end{equation}
We emphasize that $p_\rho(\lambda)$ is the probability distribution associated with the physical two-copy measurement, and should not be confused with the mixed-state bridge spectral distribution $\tilde p_\rho$ defined in Eq.~\eqref{eq:mixed_bridge_spectral}.

We now show that, for
every even $\lambda\in[-2n,2n]$,
\begin{equation}\label{eq:depol_mixture}
    p_\rho(\lambda)\;=\;\eta\,p_\psi(\lambda)\;+\;(1-\eta)\,p_I(\lambda),
    \qquad \eta\coloneqq(1-p)^{2},
\end{equation}
where $p_I$ is the bridge spectral distribution of the maximally mixed state $I/d$,
\begin{equation}\label{eq:p_I}
    p_I(\lambda)\;\coloneqq\;\Tr\!\bigl[\Pi_\lambda\,(I/d)^{\otimes 2}\bigr]\;=\;\frac{1}{4^{\,n}}\binom{2n}{\,n+\lambda/2\,}.
\end{equation}
To show this, we expand 
\begin{equation}
    \rho^{\otimes2}=(1-p)^2\,\psi^{\otimes2}+(1-p)p\,(\psi\otimes\tfrac{I}{d}+\tfrac{I}{d}\otimes\psi)+p^2\,(\tfrac{I}{d})^{\otimes2},    
\end{equation}
 and take the overlap with $\Pi_\lambda$. The first
term gives $\Tr[\Pi_\lambda\,\psi^{\otimes2}]=p_\psi(\lambda)$ and the last
term gives $p_I(\lambda)$ by definition, while the two cross terms are equal by the exchange symmetry of $\Pi_\lambda$ and reduce
to $\Tr[\Pi_\lambda(\psi\otimes\tfrac{I}{d})]=\tfrac1d\bra{\psi}(\Tr_2\Pi_\lambda)\ket{\psi}$.
Since the eigenbasis of $\Lambda$ consists of tensor products of Bell states
(Lemma~\ref{lem:eigenbasis_bell}), each of which is maximally entangled across the two copies with
single-copy reduced state $I/d$, thus yielding $\Tr_2\Pi_\lambda=(\dim V_\lambda/d)\,I$ where
$\dim V_\lambda=\binom{2n}{n+\lambda/2}$. Hence, the cross term equals
$\dim V_\lambda/d^2=p_I(\lambda)$. Collecting the three contributions then gives
Eq.~\eqref{eq:depol_mixture} with $\eta=(1-p)^2$.

Inverting Eq.~\eqref{eq:depol_mixture}, the bridge spectral distribution of the ideal state $\ket{\psi}$ can be reconstructed from the measured distribution $p_\rho$, yielding the error-mitigated distribution
\begin{equation}\label{eq:algo1_correction}
p_\psi^{\mathrm{mtg}}(\lambda)
=\frac{p_\rho(\lambda)-(1-\eta)\,p_I(\lambda)}{\eta},
\end{equation}
for every $\lambda\in\mathcal{A}$. The parameter $\eta$ can also be estimated directly from the same Bell-sampling data. Since the purity of the noisy state satisfies
$\Tr\rho^{2}=\eta+(1-\eta)/d$, we obtain
\begin{equation}\label{eq:eta_from_purity}
    \eta=\frac{d\,\Tr\rho^{2}-1}{d-1}.
\end{equation}
The purity can be written as $\Tr\rho^{2}=\Tr[\mathbb{S}\rho^{\otimes2}]$, with $\mathbb{S}$ the swap operator, whose expectation value is directly accessible through Bell sampling. Alternatively, $\eta$ can be extracted from any ``forbidden'' sector with $\lambda\notin 8\mathbb{Z}$. Indeed, for an even state, $p_\psi(\lambda)=0$ in these sectors by Lemma~\ref{lem:allowed_eig_Lambda}, and Eq.~\eqref{eq:depol_mixture} simplifies to
$p_\rho(\lambda)=(1-\eta)p_I(\lambda)$, giving $\eta=1-p_\rho(\lambda)/p_I(\lambda)$.
Thus, the entire mitigation procedure consists solely of classical post-processing of the Bell-sampling data. The resulting error-mitigated version of Algorithm~\ref{alg:bell_sampling} is therefore straightforward: perform Bell sampling on two copies of $\rho$, estimate $p_\rho$ and $\eta$, and reconstruct the mitigated distribution using Eq.~\eqref{eq:algo1_correction}.

\subsection{Error-mitigated lower bound on the bridge degree}

We now show how to witness the bridge degree in the presence of global depolarizing noise via error mitigation. Under depolarizing noise, every eigenvalue sector acquires a nonzero contribution from $p_I(\lambda)$, including sectors that are unpopulated in the ideal distribution $p_\psi$. Nevertheless, the support of the noise-free distribution can still be identified from the measured distribution through the criterion
\begin{equation}\label{eq:sector_test}
p_\psi(\lambda)>0
\quad\Longleftrightarrow\quad
p_\rho(\lambda)>(1-\eta)\,p_I(\lambda).
\end{equation}
It follows that the bridge degree of the underlying noise-free state is given by
\begin{equation}\label{eq:noisy_bridge_degree}
    \nGMax(\psi)\;=\;\max\Bigl\{\tfrac{|\lambda|}{2}\ :\ p_\rho(\lambda)>(1-\eta)\,p_I(\lambda)\Bigr\}.
\end{equation}
This provides the error-mitigated analogue of the empirical bridge-degree estimator
$\widehat\Lambda_{\mathrm d}=\tfrac12\max_k|\lambda_k|$ [Eq.~\eqref{eq:bridge_degree_witness}]. Specifically, the occupied sectors of the
noise-free state can be identified by checking whether
$p_\rho(\lambda)>(1-\eta)p_I(\lambda)$. We introduce a statistically robust
finite-sample procedure, summarized in Algorithm~\ref{alg:noise_robust_maxlambda}, which yields a certified lower bound on the bridge degree with high probability.

The key quantity is the empirical residual
\begin{equation}\label{eq:emp_mtg}
    \widehat r(\lambda)
    =
    \widehat p_\rho(\lambda)
    -
    (1-\widehat\eta)p_I(\lambda),
\end{equation}
which estimates the ideal contribution
$r(\lambda)=p_\rho(\lambda)-(1-\eta)p_I(\lambda)=\eta p_\psi(\lambda)$.
A sector $\lambda\in\mathcal A$ is \emph{certified} as populated whenever
$\widehat r(\lambda)$ exceeds a confidence threshold. The error-mitigated bridge-degree estimator
$\widehat{\Lambda}^{\mathrm{mtg}}_{\mathrm d}$ is then defined as the largest
certified sector, which provides a lower bound on the true bridge degree. The following proposition proves the correctness of the procedure.

\begin{algorithm}[H]
\caption{Error-mitigated bridge degree lower bound under global depolarizing noise}
\label{alg:noise_robust_maxlambda}
\begin{flushleft}
\textbf{Input:} $N\geq 1$, access to $2N$ identical copies of an even state $\rho$ on
$n$ qubits, and confidence level $\alpha\in(0,1)$. \\
\textbf{Output:} $\widehat{\Lambda}^{\mathrm{mtg}}_{\mathrm{d}}$, which under the global
depolarizing model of Eq.~\eqref{eq:depol_model} satisfies
$\widehat{\Lambda}^{\mathrm{mtg}}_{\mathrm{d}}\leq\nGMax(\psi)$ with probability at
least $1-\alpha$.
\end{flushleft}
\begin{algorithmic}[1]
\State Run Algorithm~\ref{alg:bell_sampling} on $2N$ copies of $\rho$ to obtain
$(\lambda_1,\dots,\lambda_N)$, and form the empirical distribution
$\widehat p_\rho(\lambda)\gets N^{-1}\sum_{k=1}^N\mathbf 1\{\lambda_k=\lambda\}$.
\State Compute the maximally-mixed distribution $p_I(\lambda)\gets 4^{-n}\binom{2n}{n+\lambda/2}$
[Eq.~\eqref{eq:p_I}] for all even $\lambda\in[-2n,2n]$.
\State From the forbidden sectors form $\widehat Q_f\gets\sum_{\lambda\notin 8\mathbb Z}\widehat p_\rho(\lambda)$ and the known $Q_f^{0}\gets\sum_{\lambda\notin 8\mathbb Z} p_I(\lambda)$, and set $\widehat\eta\gets 1-\widehat Q_f/Q_f^{0}$.
\State Set $S\gets|\mathcal A|$.
\State Initialize the certified set $\mathcal C\gets\varnothing$.
\For{$\lambda\in\mathcal A$}
\State Set the certification margin
$\delta_\lambda\gets\bigl(1+p_I(\lambda)/Q_f^{0}\bigr)\sqrt{\log(S/\alpha)/(2N)}$.
    \If{$\widehat p_\rho(\lambda)-(1-\widehat\eta)\,p_I(\lambda) \;>\; \delta_\lambda$}
        \State $\mathcal C\gets\mathcal C\cup\{\lambda\}$.
    \EndIf
\EndFor
\State \Return $\widehat{\Lambda}^{\mathrm{mtg}}_{\mathrm{d}}\gets
\tfrac12\max\{|\lambda|:\lambda\in\mathcal C\}$, or $0$ if $\mathcal C=\varnothing$.
\end{algorithmic}
\end{algorithm}

\begin{prop}[Validity of the error-mitigated bridge degree lower bound]\label{prop:noise_robust_valid}
Under the global depolarizing model of Eq.~\eqref{eq:depol_model} with $p<1$, the output of
Algorithm~\ref{alg:noise_robust_maxlambda} satisfies
$\widehat{\Lambda}^{\mathrm{mtg}}_{\mathrm{d}}\leq\nGMax(\psi)$ with probability at
least $1-\alpha$.
\end{prop}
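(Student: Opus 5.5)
The plan is to show that, on a single high-probability event, no sector $\lambda\in\mathcal A$ with $p_\psi(\lambda)=0$ is ever certified. Then every certified $\lambda$ satisfies $p_\psi(\lambda)>0$, which gives $|\lambda|/2\le\nGMax(\psi)$ by Definition~\ref{def:maxlambda}. The same conclusion holds trivially when $\mathcal C=\varnothing$.

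The first step is to write the residual error in terms of sampling fluctuations. By Eq.~\eqref{eq:depol_mixture}, $p_\rho(\lambda)-(1-\eta)p_I(\lambda)=\eta\,p_\psi(\lambda)$. For forbidden sectors $\lambda\notin 8\mathbb Z$, Lemma~\ref{lem:allowed_eig_Lambda} gives $p_\psi(\lambda)=0$, so $Q_f\coloneqq\sum_{\lambda\notin8\mathbb Z}p_\rho(\lambda)=(1-\eta)Q_f^0$. This shows that $\widehat\eta$ is the plug-in estimator, with $\widehat\eta-\eta=-(\widehat Q_f-Q_f)/Q_f^0$. Hence
\begin{equation*}
\widehat r(\lambda)-\eta p_\psi(\lambda)=\bigl(\widehat p_\rho(\lambda)-p_\rho(\lambda)\bigr)-\frac{p_I(\lambda)}{Q_f^0}\bigl(\widehat Q_f-Q_f\bigr).
\end{equation*}
Both $\widehat p_\rho(\lambda)$ and $\widehat Q_f$ are empirical means of $\{0,1\}$-valued i.i.d.\ indicators. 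If $|\widehat p_\rho(\lambda)-p_\rho(\lambda)|\le t$ and $|\widehat Q_f-Q_f|\le t$ with $t=\sqrt{\log(S/\alpha)/(2N)}$, then $|\widehat r(\lambda)-\eta p_\psi(\lambda)|\le(1+p_I(\lambda)/Q_f^0)\,t=\delta_\lambda$. In particular, a sector with $p_\psi(\lambda)=0$ has $\widehat r(\lambda)\le\delta_\lambda$ and is not certified.

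The second step is the probability bookkeeping, which I expect to be the only delicate point, since the thresholds are calibrated to $\log(S/\alpha)$. Hoeffding's inequality gives one-sided deviations of size $t$ with probability at most $e^{-2Nt^2}=\alpha/S$.

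Certification failure is one-sided: we only need $\widehat r(\lambda)\le\eta p_\psi(\lambda)+\delta_\lambda$. It therefore suffices to control the upper deviation of $\widehat p_\rho(\lambda)$ and the lower deviation of $\widehat Q_f$. To respect the $S/\alpha$ budget, I would apply a one-sided Hoeffding bound directly to the single empirical mean $\widehat r(\lambda)$. It is an average of i.i.d.\ variables $Y_k=\mathbf 1\{\lambda_k=\lambda\}+\tfrac{p_I(\lambda)}{Q_f^0}\mathbf 1\{\lambda_k\notin8\mathbb Z\}-\tfrac{p_I(\lambda)}{Q_f^0}$. Since the two indicators are mutually exclusive for $\lambda\in\mathcal A$, $Y_k$ lies in an interval of length at most $\max(1,p_I/Q_f^0)\le 1+p_I/Q_f^0$. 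Hoeffding then yields $\Pr[\widehat r(\lambda)-\eta p_\psi(\lambda)>\delta_\lambda]\le\alpha/S$.

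A union bound over the at most $S=|\mathcal A|$ sectors with $p_\psi(\lambda)=0$ gives total failure probability at most $\alpha$. The assumption $p<1$ is used only to ensure $\eta>0$, so that the reconstruction of Eq.~\eqref{eq:algo1_correction} and the criterion in Eq.~\eqref{eq:sector_test} are meaningful; the one-sided argument itself does not need it.
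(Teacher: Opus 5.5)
Your proof is essentially the paper's. The paper likewise writes $\widehat r(\lambda)=\widehat p_\rho(\lambda)-(1-\widehat\eta)p_I(\lambda)$ as a single empirical mean of i.i.d.\ variables of range $1+p_I(\lambda)/Q_f^0$. It applies a one-sided Hoeffding bound at level $\alpha/S$ with exactly the margin $\delta_\lambda$ and union-bounds over the $S$ sectors. It then concludes that every certified sector has $\eta\,p_\psi(\lambda)\geq\widehat r(\lambda)-\delta_\lambda>0$.

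\textbf{A sign slip in your Step 2 must be corrected.} Since $1-\widehat\eta=\widehat Q_f/Q_f^0$, the right per-sample variable is $X_k^{(\lambda)}=\mathbf{1}\{\lambda_k=\lambda\}-\tfrac{p_I(\lambda)}{Q_f^0}\mathbf{1}\{\lambda_k\notin 8\mathbb{Z}\}$, as your own Step 1 identity indicates. Your $Y_k$ has the opposite sign on the forbidden-sector indicator and an extra constant. Its sample mean is therefore $\widehat p_\rho(\lambda)-\tfrac{p_I(\lambda)}{Q_f^0}\bigl(1-\widehat Q_f\bigr)$, which is not $\widehat r(\lambda)$, and its expectation is not $\eta\,p_\psi(\lambda)$. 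So the Hoeffding bound as you wrote it controls the wrong quantity.

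The corrected $X_k^{(\lambda)}$ takes values in $\{1,\,0,\,-p_I(\lambda)/Q_f^0\}$. Its range is exactly $1+p_I(\lambda)/Q_f^0$; mutual exclusivity of the indicators does not reduce it to $\max(1,p_I(\lambda)/Q_f^0)$. Since $1+p_I(\lambda)/Q_f^0$ is precisely the constant that $\delta_\lambda$ is calibrated to, the $\alpha/S$ tail bound and the rest of your argument go through unchanged.
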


\begin{proof}
Define the residual
\begin{equation}
r(\lambda)
\coloneqq
p_\rho(\lambda)-(1-\eta)p_I(\lambda)
=
\eta\,p_\psi(\lambda),
\end{equation}
where the second equality follows from Eq.~\eqref{eq:depol_mixture}. Hence, $r(\lambda)\ge0$, with equality if and only if
$p_\psi(\lambda)=0$ (since $\eta=(1-p)^2>0$).

Applying the Bell-sampling primitive
(Algorithm~\ref{alg:bell_sampling}) $N$ times to two copies of $\rho$ produces eigenvalues
$\lambda_1,\dots,\lambda_N$, whose empirical distribution is
$\widehat p_\rho(\lambda)=N^{-1}\sum_{k=1}^N\mathbf 1\{\lambda_k=\lambda\}$. For each outcome $\lambda_k$, define
\begin{equation}
X_k^{(\lambda)}
=
\mathbf1\{\lambda_k=\lambda\}
-
\frac{p_I(\lambda)}{Q_f^0}
\mathbf1\{\lambda_k\notin8\mathbb Z\},
\end{equation}
where
$Q_f^0=\sum_{\lambda\notin8\mathbb Z}p_I(\lambda)$.
Then
$\widehat r(\lambda)=N^{-1}\sum_{k=1}^N X_k^{(\lambda)}$ is an unbiased estimator for $r(\lambda)$,
since
$\mathbb E[X_k^{(\lambda)}]
=p_\rho(\lambda)-p_I(\lambda)Q_f/Q_f^0
=r(\lambda)$,
where $Q_f=\sum_{\lambda\notin8\mathbb Z}p_\rho(\lambda')=(1-\eta)Q_f^0$.
Moreover,
$X_k^{(\lambda)}\in[-p_I(\lambda)/Q_f^0,\,1]$, so Hoeffding's inequality yields
\begin{equation}
\Pr\!\left(
\widehat r(\lambda)-r(\lambda)>\delta_\lambda
\right)
\le
\exp\!\left(
-\frac{2N\delta_\lambda^2}
{\left(1+p_I(\lambda)/Q_f^0\right)^2}
\right) = \frac{\alpha}{S},
\end{equation}
with
$\delta_\lambda=\left(1+p_I(\lambda)/Q_f^0\right)\sqrt{\log(S/\alpha)/(2N)}$. Applying a union bound over the $S=|\mathcal A|$ sectors, we obtain
$\widehat r(\lambda)\le r(\lambda)+\delta_\lambda$
simultaneously for all $\lambda\in\mathcal A$ with probability at least
$1-\alpha$.

Therefore, whenever
$\widehat r(\lambda)>\delta_\lambda$,
we have
$r(\lambda)\ge\widehat r(\lambda)-\delta_\lambda>0$,
implying $p_\psi(\lambda)>0$.
Hence every certified sector belongs to the support of $p_\psi$, and therefore
\begin{equation}
\widehat{\Lambda}^{\mathrm{mtg}}_{\mathrm d}
=
\frac12\max_{\lambda\in\mathcal C}|\lambda|
\le
\nGMax(\psi)
\end{equation}
with probability at least $1-\alpha$.
\end{proof}

The above confidence statement also implies a finite-sample complexity bound. Let $\lambda$ be a sector with $p_\psi(\lambda)>0$. Since certification requires $\widehat r(\lambda)>\delta_\lambda$, a sufficient condition is $r(\lambda)>2\delta_\lambda$. Using $r(\lambda)=\eta p_\psi(\lambda)$, this gives the requirement
\begin{equation}
N>
\frac{2\left(1+p_I(\lambda)/Q_f^0\right)^2}
{\eta^2p_\psi(\lambda)^2}
\log\frac{S}{\alpha}.
\end{equation}
Thus, resolving a sector with weight $p_\psi(\lambda)$ requires $N=
\mathcal{O}\!\left((1-p)^{-4}p_\psi(\lambda)^{-2}\log(S/\alpha)\right)$
Bell samples. For the purpose of estimating the bridge degree, it is sufficient to certify only the sector with the largest magnitude of \(\lambda\) among the occupied sectors. Hence, the relevant sample complexity is determined by the weight of this extremal value $\lambda_\star$, and scales as
\begin{equation}
N=
\mathcal{O}\!\left((1-p)^{-4}p_\psi(\lambda_\star)^{-2}
\log(S/\alpha)\right).
\end{equation}

The use of Hoeffding's inequality in Algorithm~\ref{alg:noise_robust_maxlambda} is for simplicity and
provides a distribution-independent confidence guarantee. The finite-sample bound can, in principle, be improved by employing sharper inequalities that
exploit the variance of the estimator, such as the Bernstein inequality, potentially reducing the required sample complexity.

\bibliography{bibliography}

\end{document}